\documentclass[preview]{elsarticle}

\listfiles

\usepackage{ifthen}
\usepackage{microtype}
\usepackage{amssymb,amsmath}
\usepackage{amsthm,thmtools,thm-restate}
\usepackage{mathtools}

\declaretheorem{theorem}
\declaretheorem[sibling=theorem]{lemma}
\declaretheorem[sibling=theorem]{corollary}
\declaretheorem[sibling=theorem]{example}
\declaretheorem[sibling=theorem]{remark}

\declaretheorem[sibling=theorem]{proposition}

% \newtheorem{theorem}{Theorem}
% \newtheorem{lemma}[theorem]{Lemma}
% \newtheorem{definition}[theorem]{Definition}
% \declaretheorem[sibling=theorem]{remark}
% \newtheorem{proposition}{Proposition}
% \newtheorem{corollary}[theorem]{Corollary}
% \newtheorem{example}[theorem]{Example}
% \newtheorem{numberedremark}[theorem]{Remark}
% \newtheorem{fact}[theorem]{Fact}

\newcommand{\cupdot}{\mathbin{\mathaccent\cdot\cup}}

\usepackage{xspace}
\usepackage{tikz}
\usepackage[draft]{fixme}
\usepackage{marvosym}
\usepackage{todonotes}
\usepackage{wrapfig}
\usepackage{enumerate}

\usetikzlibrary{shapes}
\usetikzlibrary{backgrounds}
\usetikzlibrary{decorations.pathmorphing}
\usetikzlibrary{decorations.pathreplacing}
\usetikzlibrary{calc}

\newcommand{\ie}{i.e.,\xspace}
\newcommand{\0}{\emptyset}
\newcommand{\sq}{\subseteq}

\newcommand{\DAGw}{DAG\mbox{}-width\xspace}
\newcommand{\Dagw}{DAG\mbox{}-width\xspace}
\newcommand{\dagw}{DAG\mbox{}-width\xspace}
\newcommand{\nmdagw}{non-{}monotone DAG\mbox{}-width\xspace}
\newcommand{\treew}{tree\mbox{}-width\xspace}
\newcommand{\pathw}{path\mbox{}-width\xspace}
\newcommand{\kellyw}{Kelly\mbox{}-width\xspace}
\newcommand{\dpathw}{directed path\mbox{}-width\xspace}
\newcommand{\Dpathw}{Directed path\mbox{}-width\xspace}

\newcommand{\DAGgame}{\dagw game\xspace}

\newcommand{\dpwgame}{\dpathw game\xspace}

\newcommand{\ol}[1]{{\overline{#1}}}
\newcommand{\prefix}{\mathbin{\sqsubseteq}}
\newcommand{\propprefix}{\mathbin{\sqsubset}}

\newcommand{\topordereq}{\mathrel{\tikz \path node[inner sep=0cm] %
at (0,0){$\rightsquigarrow$}%
node[inner sep=0cm] at(-0.03,-0.1){\hspace{0.2em}\underline{\hspace{0.6em}}};}}

\renewcommand{\epsilon}{\varepsilon}
\renewcommand{\phi}{\varphi}
\renewcommand{\theta}{\vartheta}

\renewcommand{\epsilon}{\varepsilon}
\renewcommand{\phi}{\varphi}
\renewcommand{\theta}{\vartheta}

\DeclareMathOperator{\last}{last}
\DeclareMathOperator{\Post}{succ}
\DeclareMathOperator{\Reach}{Reach}
\DeclareMathOperator{\Next}{Next}
\DeclareMathOperator{\col}{\Omega}
\DeclareMathOperator{\act}{act}

\DeclareMathOperator{\tw}{tw}

\DeclareMathOperator{\dpw}{dpw}
\DeclareMathOperator{\dw}{dagw}
\DeclareMathOperator{\nmdw}{nm\text{-}dagw}
\DeclareMathOperator{\ent}{ent}

% Abbreviations for \calX, \frakX, \bbX etc. in math mode.
\makeatletter
\newcommand{\@abbrev}[3]{
  \def\c@a@def##1{
      \if ##1.
        \relax
      \else
        \@ifdefinable{\@nameuse{#1##1}}{\@namedef{#1##1}{#2##1}}
        \expandafter\c@a@def
      \fi
    }
  \c@a@def #3.
}
\@abbrev{bb}{\mathbb}{ABCDEFGHIJKLMNOPQRSTUVWXYZ}
\@abbrev{bf}{\mathbf}{ABCDEFGHIJKLMNOPQRSTUVWXYZabcdefghijklmnopqrstuvwxyz}
\@abbrev{bit}{\boldsymbol}{ABCDEFGHIJKLMNOPQRSTUVWXYZabcdefghijklmnopqrstuvwxyz}
\@abbrev{cal}{\mathcal}{ABCDEFGHIJKLMNOPQRSTUVWXYZ}
\@abbrev{frak}{\mathfrak}{ABCDEFGHIJKLMNOPQRSTUVWXYZabcdefghijklmnopqrstuvwxyz}
\@abbrev{rm}{\mathrm}{ABCDEFGHIJKLMNOPQRSTUVWXYZabcdefghijklmnopqrstuvwxyz}
\@abbrev{scr}{\mathscr}{ABCDEFGHIJKLMNOPQRSTUVWXYZ}
\@abbrev{sf}{\mathsf}{ABCDEFGHIJKLMNOPQRSTUVWXYZabcdefghijklmnopqrstuvwxyz}
\@abbrev{Alg}{\Alg}{ABCDEFGHIJKLMNOPQRSTUVWXYZ}
\@abbrev{Str}{\Str}{ABCDEFGHIJKLMNOPQRSTUVWXYZ}
\@abbrev{set}{\mathbb}{ABCDEFGHIJKLMNOPQRSTUVWXYZ}
\@abbrev{tup}{\tup}{ABCDEFGHIJKLMNOPQRSTUVWXYZabcdefghijklmnopqrstuvwxyz}
\@abbrev{ol}{\overline}{ABCDEFGHIJKLMNOPQRSTUVWXYZabcdefghijklmnopqrstuvwxyz}
\@abbrev{class}{\mathsf}{ABCDEFGHIJKLMNOPQRSTUVWXYZ}

\makeatother

\def\mathbi#1{\textbf{\em #1}}

\DeclareMathOperator{\init}{init}
\DeclareMathOperator{\upd}{upd}
\DeclareMathAlphabet{\mathsc}{OT1}{cmr}{m}{sc}
\newcommand{\pspace}{\ensuremath{\mathsc{Pspace}}\xspace}
\newcommand{\exptime}{\ensuremath{\mathsc{Exptime}}\xspace}
\newcommand{\ptime}{\ensuremath{\mathsc{Ptime}}\xspace}

\newcommand{\aptime}{\ensuremath{\mathsc{APtime}}\xspace}
\newcommand{\apspace}{\ensuremath{\mathsc{APspace}}\xspace}
\newcommand{\aspace}{\ensuremath{\mathsc{ASpace}}\xspace}
\newcommand{\atime}{\ensuremath{\mathsc{ATime}}\xspace}
\newcommand{\Time}{\ensuremath{\mathsc{Time}}\xspace}
\newcommand{\Space}{\ensuremath{\mathsc{Space}}\xspace}

\DeclareMathOperator{\front}{front}

%%%%%%%%%%%%%%%%%%%%%%%%%%%%%%%%%%%%%%%%%%%%%%%%%%
% TIKZ STYLES
%%%%%%%%%%%%%%%%%%%%%%%%%%%%%%%%%%%%%%%%%%%%%%%%%%
\newdimen\arrowsize
\newlength{\arrowlength}
\newlength{\arrowangle}
\newlength{\arrowthickness}

\setlength{\arrowlength}{0.15cm}
\setlength{\arrowangle}{5.5cm}
\setlength{\arrowthickness}{0.07cm}

\pgfarrowsdeclare{slim}{slim}
{
\arrowsize=0.7pt
\advance\arrowsize by .5\pgflinewidth
\pgfarrowsleftextend{-4\arrowsize-.5\pgflinewidth}
\pgfarrowsrightextend{.5\pgflinewidth}
}
{
\advance\arrowsize by .5\pgflinewidth
\pgfsetdash{}{0pt} % do not dash
\pgfsetroundjoin % fix join
\pgfsetroundcap % fix cap
\pgfpathmoveto{\pgfpointpolar{\arrowangle}{\arrowlength}}
\pgfpathlineto{\pgfpointorigin}
\pgfusepathqstroke
\pgfpathmoveto{\pgfpointpolar{\arrowangle}{\arrowlength}}
\pgfpathcurveto{\pgfpointpolar{\arrowangle*1.04}{\arrowthickness+0.5*(\arrowlength-\arrowthickness)}}%
{\pgfpointpolar{\arrowangle*1.1}{\arrowthickness+0.3*(\arrowlength-\arrowthickness)}}
{\pgfpointpolar{180}{\arrowthickness}}
\pgfpathlineto{\pgfpointorigin}
\pgfusepathqfill
\pgfpathmoveto{\pgfpointpolar{-\arrowangle}{\arrowlength}}
\pgfpathlineto{\pgfpointorigin}
\pgfusepathqstroke
\pgfpathmoveto{\pgfpointpolar{-\arrowangle}{\arrowlength}}
\pgfpathcurveto{\pgfpointpolar{-\arrowangle*1.04}{\arrowthickness+0.5*(\arrowlength-\arrowthickness)}}%
{\pgfpointpolar{-\arrowangle*1.1}{\arrowthickness+0.3*(\arrowlength-\arrowthickness)}}
{\pgfpointpolar{180}{\arrowthickness}}
\pgfpathlineto{\pgfpointorigin}
\pgfusepathqfill
}

\tikzset{>=latex}

\tikzstyle{vertex}=[circle,inner sep=2.5,minimum size =2mm,semithick,fill=black!20, draw=black]
\tikzstyle{smallcircle}=[circle,inner sep=1.5,fill=white, draw=black]
\tikzstyle{point}=[circle,inner sep=1,fill=black, draw=black]
\tikzstyle{path}=[-slim,thin,rounded corners] 
\tikzstyle{path1}=[-slim,thin,decorate,%
decoration={snake,amplitude=6pt,segment length=#1,%
pre length=#1/20,post length=#1/20}]
\tikzstyle{brace}=[thin,decorate,decoration=brace]
\tikzstyle{ie}=[thin,dashed,gray]

\tikzstyle{cop}=[star, star points=6, star point height=3.2,inner sep=3,minimum size =2mm,semithick,fill=gray!80, draw=black]
\tikzstyle{ocop}=[star, star points=6, star point height=3.2,inner sep=3,minimum size =2mm,semithick,fill=gray!40,draw=black,densely dotted]
\tikzstyle{robber}=[circle, inner sep=2,minimum size =2mm,semithick, draw=black]
\tikzstyle{sep}=[circle,inner sep=0.1 pt,minimum size = 0.1mm]
\tikzstyle{player1}=[rectangle,draw,thick,inner sep=0pt,minimum size = 5mm]
\tikzstyle{post}=[->, semithick]
\tikzstyle{ind}=[dotted, semithick]

\newcommand{\nmon}{non-\mbox{}monotone\xspace}

\newcommand{\mb}{\mbox{}}

\newcommand{\olpi}{\overline{\pi}}

\newcommand{\olcalG}{\overline{\mathcal{G}}}
\newcommand{\olgame}{\overline{\mathcal{G}}}
\newcommand{\olpisr}{\ensuremath{\overline{\pi}_{fg}}}
\newcommand{\pisr}{\ensuremath{\pi_{fg}}}

\newcommand{\rf}{\ensuremath{\otimes_r f}}

\newcommand{\game}{\calG}
\newcommand{\graphG}{G}
\newcommand{\graphT}{T}
\newcommand{\graphK}{K}
\newcommand{\olgraphG}{\overline{G}}

\newcommand{\perpcdot}{{\perp}{\kern0.1em{\cdot}\kern0.1em}}

\newcommand{\wLOG}{without loss of generality\xspace}
\newcommand{\WLOG}{Without loss of generality\xspace}

\newcommand{\nix}{\ensuremath{\text{--}}\xspace}
\newcommand{\Paths}{\ensuremath{\mathrm{Paths}}\xspace}
\newcommand{\accept}{\texttt{accept}\xspace}
\newcommand{\reject}{\texttt{reject}\xspace}

\newcommand{\arrow}[1]{\stackrel{#1}{\longrightarrow}}
\newcommand{\dt}{\mathrm{det}}
\newcommand{\acc}{\mathrm{acc}}
\newcommand{\rej}{\mathrm{rej}}
\newcommand{\darrow}[1]{\stackrel{#1}{\longleftrightarrow}}

\newcommand{\olzeta}{\overline{\zeta}}

\DeclareMathOperator{\mincol}{mincol}
\DeclareMathOperator{\opt}{opt}
\DeclareMathOperator{\best}{best}
\DeclareMathOperator{\Winner}{Winner}
\DeclareMathOperator{\Hist}{Hist}
\DeclareMathOperator{\Simulate}{Simulate}
\DeclareMathOperator{\Profile}{Profile}

\DeclareMathOperator{\Update}{Update}
\DeclareMathOperator{\topomin}{topomin}
\DeclareMathOperator{\flap}{C}

%%% Local Variables: 
%%% mode: plain-tex
%%% TeX-master: "impinf-struccompl-TCS"
%%% End: 

%%%%%%%%%%%%%%%%%%%%%%%%%%%%%%%%%%%%%%%%%%%%%%%%%%
% BEGIN DOCUMENT
%%%%%%%%%%%%%%%%%%%%%%%%%%%%%%%%%%%%%%%%%%%%%%%%%%

\begin{document}
\begin{frontmatter}
\title{Parity Games, Imperfect Information and Structural Complexity\tnoteref{g}}
\tnotetext[g]{This work was supported by the projects \textit{Games for Analysis and
Synthesis of Interactive
Computational Systems (GASICS)} and \textit{Logic for Interaction (LINT)} of the
\textit{European Science
Foundation}.}

\author[bp]{Bernd~Puchala}\ead{puchala@logic.rwth-aachen.de}
\author[rr]{Roman~Rabinovich}\ead{roman.rabinovich@tu-berlin.de}
\address[bp]{Mathematical Foundations of Computer Science, RWTH Aachen University}
\address[rr]{Logic and Semantics, Technical University Berlin}

% \def\titlerunning{Graph Searching, Parity Games, Imperfect Information}
% \def\authorrunning{B. Puchala \& R. Rabinovich}

% \makeatletter
% \renewcommand{\@fnsymbol}[1]{\@arabic{#1}}
% \makeatother

% \maketitle

\begin{abstract}

  We address the problem of solving parity games with imperfect
  information on finite graphs of bounded structural complexity.  It
  is a major open problem whether parity games with \emph{perfect}
  information can be solved in \ptime. Restricting the structural
  complexity of the game arenas, however, often leads to efficient
  algorithms for parity games. Such results are known for graph
  classes of bounded \treew, DAG-width, \dpathw, and entanglement,
  which we describe in terms of cops and robber games. Conversely, the
  introduction of imperfect information makes the problem more
  difficult, it becomes \exptime-hard. We analyse the interaction of
 both approaches.

  We use a simple method to measure the amount of ``unawareness'' of
  a player, the amount of imperfect information. It turns out that if it
  is unbounded, low structural complexity does not make the problem
  simpler. It remains \exptime-{}hard or \pspace-{}hard even on very
  simple graphs. 

For games with bounded imperfect information we analyse 
the powerset construction, which is commonly used to convert a game of
imperfect information into an equivalent game with perfect
information. This construction 
  preserves boundedness of directed path-{}width and DAG-width, but not of entanglement or of tree-{}width.  Hence, if directed
  path-width or DAG-width are bounded, parity games with bounded imperfect
  information can be solved in \ptime. 
For DAG-width we follow two approaches. One leads to a generalization of
the known fact that perfect information parity games are in \ptime if
DAG-width is bounded. We prove this theorem for \emph{non-monotone}
DAG-width. The other approach introduces a cops and \emph{robbers} game (with
multiple robbers) on directed graphs, considered in~\cite{RicherbyThi09} for
undirected graphs. We show a tight linear bound for the number of
additional cops needed to capture an additional robber. 
\end{abstract}

\begin{keyword}
parity games \sep imperfect information% \sep structural complexity
\sep graph searching games% \sep simulated parity game
\end{keyword}
  
\end{frontmatter}

\section{Introduction}
Parity games play a key role in the theory of verification and
synthesis of state-\mbox{}based systems. They are the
model-\mbox{}checking games for the modal $\mu$-calculus, a powerful
specification formalism for verification problems. Moreover, parity
objectives can express all $\omega$-regular objectives and therefore
capture fundamental properties of non-terminating reactive systems,
cf. \cite{Thomas95}. Such a system can be modeled as a
two-\mbox{}player game (the players are called~$0$ and~$1$) where
changes of the system state correspond to changes of the game
position. Situations where the change of the system can be controlled
correspond to positions of Player~$0$, uncontrollable situations
correspond to positions of Player~$1$. A winning strategy for
Player~$0$ yields a controller that guarantees satisfaction of some
$\omega$-\mbox{}regular specification.

In a parity game, the players move a token along the edges of a
labeled graph by choosing appropriate edge labels, called actions. The
vertices of the graph, called positions, are labeled with natural
numbers and the winner of an infinite play of the game is determined
by the parity of the least color which occurs infinitely often.

The problem to determine, for a given parity game~$\game$ and a
position~$v$, whether Player~$0$ has a winning strategy for~$\game$
from~$v$, is called the strategy problem.  The algorithmic theory of
parity games with perfect information has received much attention
during the past years, cf.~\cite{Jurdzinski00}. 

However, assuming that both players have perfect information about the
history of events in a parity game is not always realistic. For
example, if the information about the system state is acquired by
imprecise sensors or the system encapsulates private states which
cannot be read from outside, then a controller for this system must
rely on the information about the state and the change of the system
to which it has access. A technique to solve the strategy problem in
presence of imperfect information is to track the knowledge of the
game of Player~$0$, thus reducing the problem to a strategy problem
for a game with perfect information on another
graph~\cite{Reif84}. This procedure is often referred to as \emph{powerset
construction} and we call the constructed graph the \emph{powerset graph.}

Such a knowledge tracking is inherently unavoidable and leads to an
exponential lower bound for the time complexity of the strategy
problem for reachability games with imperfect
information~\cite{Reif84} and a super-polynomial lower bound for the
memory needed to implement winning strategies in reachability
games~\cite{BCDHR08,Puchala08}.  

Our goal is to find interesting special cases of the problem that can
be solved in \ptime. A simple, yet effective, approach is to bound the
amount of uncertainty of Player~$0$. This is appropriate in situations where, e.g.,
the imprecision of the sensors or the amount of private information of
the system does not grow when the size of the system grows. Then the
game which results from the powerset construction has polynomial size,
so solving imperfect information parity games reduces to the strategy
problem for parity games with perfect information. However, it is not
known whether the latter problem can be solved efficiently, \ie in
\ptime, and the question whether this is possible remains one of the
most intriguing in game theory.

To obtain a class of parity games with imperfect information that we
can solve in \ptime, we thus have to bound certain other parameters. A
natural approach is to restrict the structural complexity of the game
graphs with respect to an appropriate measure. Several such measures
have proven to be very useful in algorithmic graph theory. Many
problems, including the strategy problem for perfect information
parity games which are intractable in general can be solved
efficiently on classes of graphs where such measures are bounded. It
has been shown that parity games played on graphs of bounded \treew,
\pathw, \dpathw, \dagw or entanglement can be solved in polynomial
time \cite{BerwangerDawHunKreObd12,BG04}.  A natural question is whether
these results can also be obtained for games with imperfect
information. For each of those complexity measures we answer two
questions about parity games on graphs of bounded complexity:
\begin{enumerate}
\item Are the games with (in general \emph{unbounded}) imperfect
  information solvable in \ptime?

\item Are the games with \emph{bounded} imperfect
  information solvable in \ptime?
\end{enumerate}
For two other important measures: directed \treew~\cite{JRST01} and
\kellyw~\cite{HK07} the problem remains open, for directed \treew even in the
case of perfect information.

\paragraph*{Organization and results.}
In Section~\ref{sec_prelim} we introduce the basic notions we use
throughout the paper. In Section~\ref{sec_upi} we consider unbounded
imperfect information. For all complexity measures we work with we
prove that there are classes of graphs $\graphG$ with complexity at
most two such that the size of the powerset graph and its complexity
are both exponential in the size of~$\graphG$. We further show that
the strategy problem even for simpler reachability games with
imperfect information is $\exptime$-\mbox{}hard on graphs with
entanglement and \dpathw at most two. On acyclic graphs, solving
reachability games turns out to be $\pspace$-{}complete. This shows
that bounding the structural complexity of graphs does not
substantially decrease the computational complexity of the strategy
problem, as long as the amount of imperfect information is unbounded.

In Section \ref{sec_bpi} we consider parity games with bounded
imperfect information.  In this case, the graphs which result from the
powerset construction have polynomial size. Thus if the construction
additionally preserves boundedness of appropriate graph complexity
measures, then the corresponding strategy problem is in~$\ptime$. We
obtain that the powerset construction, while preserving neither
boundedness of entanglement nor of \treew, does preserve boundedness
of \dpathw. The case of \dagw is much more involved. However, it is also more
interesting: \dagw is bounded in \dpathw, but not the other way
around. \Dagw (as well as the other measures) can be defined as a
graph searching game where a team of cops tries to capture a robber in
the given graph. The player move alternately. The cops occupy some
vertices and can change their placement arbitrarily in their move.
The robber runs between vertices along cop free paths. The \dagw of a
graph is the minimal number of cops needed to capture the robber in a
\emph{monotone} way, \ie such that the robber can never occupy a
vertex that has already been unavailable for him. The problem with
\dagw is now that while capturing the robber is preserved after the
applying the powerset construction, the monotonicity is not. For
entanglement the monotonicity is not needed, for \dpathw we obtain it
for free: if~$k$ cops capture the robber, then~$k$ cops can also do it
in a monotone way~\cite{Hunter06}.

We discuss three approaches to this problem. One of them fails giving
us an example (see Theorem~\ref{thm_OffhandedCops}) which at least
partially explains the difficulty with monotonicity for \dagw.  Two
other approaches lead to solutions of the problem. The first one is
presented in Section~\ref{sec_simulated_pg}. We prove that parity
games (with perfect information) can be solved efficiently not only if
\dagw is bounded, but also even if \emph{non-{}monotone} \dagw is
bounded. The idea of the proof is from the solution of the strategy
problem for parity games on graphs of bounded \dagw via
\emph{simulated games} by Fearnley and Schewe~\cite{FearnleySch12}. It
turns out that their construction can be used also for the case of
bounded non-{}monotone \dagw.  This relativizes the importance
of monotonicity for \dagw, as the strategy problem for parity games is
the only known to become easier if \dagw is bounded and 
not known to become easier when the more general directed \treew is
bounded.

The other approach, that we pursue in Section~\ref{sec-parity} is a
generalization of the graph searching game for \dagw to a game where
the cops have to capture \emph{multiple} robbers. The robbers
correspond to multiple plays of the parity game with imperfect
information that Player~$0$ considers to be possible in a
position. Thus if the amount of imperfect information is at most~$r$,
we consider the game with~$r$ robbers. The new game also generalizes a
similar game on undirected graphs from~\cite{RicherbyThi09} by
Richerby and Thilikos. Our setting is, however, different, which makes
our main result about the game in a sense more general (we discuss the
connection to the game from~\cite{RicherbyThi09} in
Section~\ref{sec-parity}). We prove that if~$k$ cops can capture~$r$
robbers, then $kr$ cops can capture~$r$ robbers. This is the
technically most involved proof, the main problem is again to preserve
monotonicity. However, this result allows us to preserve monotonicity
also for \dagw while translating a cop strategy from the game with
imperfect information to a game with perfect information. Thus we
establish a connection between imperfect information in parity games
and a multiagent graph searching game.  Interestingly, if the cops
have to capture infinitely many robbers, the game turns out to be
equivalent to the game that characterizes directed \pathw and is also
defined by means of imperfect information. This is the same situation
as in~\cite{RicherbyThi09} for the undirected case.

\section{Preliminaries}\label{sec_prelim}

We assume that the reader is familiar with basic notions from the
graph theory. All graphs in this work are directed, finite and without
multi\mb-edges. (An undirected graph is a graph with a symmetric edge
relation.) For sets $X \subseteq V$, $G-X$ denotes the subgraph of~$G$
induced by the vertices of~$G$ that are not in~$X$. By $\Reach_G(X)$
we denote the set of vertices reachable from~$X$ in~$G$. A strongly
connected component, or simply a \emph{component}, is a maximal subset
of the graph such that, from each vertex to each vertex, there is a
path in that subgraph. If~$U$ is a set of vertices in the
graph~$\graphG$ and $v\notin U$ is a vertex, then $\flap^\graphG_U(v)$
is the component of $\graphG - U$ containing~$v$\label{def_flap}. A
(directed) rooted tree is an orientation of an undirected tree where
all edges are oriented away from a designated vertex, the root. The
depth of a rooted tree is the number of vertices on its longest path.
For a finite sequence~$\pi$ of elements, $\last(\pi)$ denotes the last
element of~$\pi$. If~$v$ is a vertex and~$E$ the set of edges, then
$vE$ is $\{w \mid (v,w)\in E\}$. If $\sim$ is an equivalence relation,
we write $[v]_\sim$ or just $[v]$ for the equivalence class
of~$v$. The set of natural numbers is denoted by~$\omega$.

\subsection{Games}

We consider finite two-player zero-sum games with imperfect
information and perfect recall, \ie any play is won by either of the
players and both players never forget any information that has already
been available for them. The players are called Player~$0$ and
Player~$1$. Formally, a game \emph{arena} is a tuple $\calA =
(V,V_0,E)$ where $(V,E)$ is the \emph{game graph}, and $V_0\subseteq
V$ is the set of positions on which Player~$0$ has to move. Let $A$ be
a finite set of \emph{actions.} A \emph{game} is a tuple $\game =
(V,V_0,(E_a)_{a\in A},v_0,\sim,\Omega)$ where $(V,V_0,\bigcup_{a\in
  A}E_a)$ is an arena with $|vE_a| \le 1$ for all $v\in V$ and $a\in
A$. Thus all edges leaving the same vertex are uniquely labeled and
the player who moves at~$v$ determines the next position by choosing
one of those labels. Furthermore, $v_0\in V$ is the initial position,
and $\col\subseteq V^\omega$ is the \emph{winning condition} for
Player~$0$. For convenience, we define $V_1 = V\setminus V_0$ and $E =
\bigcup_{a\in A}E_a$. The game graph of~$\game$ is $G = (V,E)$. We
write $v \arrow{a} w$ if $(v,w)\in E_a$ and $v \darrow{a} w$ if $(v,
w)\in E_a$ and $(w,v)\in E_a$. For~$v\in V$, $\act(v) = \{a\in A \mid
vE_a\neq \0\}$.  A \emph{play} is a maximal finite or infinite
sequence $v_0a_0v_1a_1v_2a_2\ldots \in (VA)^*V \cup (VA)^\omega$ such
that $(v_i,v_{i+1})\in E_{a_i}$ for all $i\ge 0$. A finite play
$\pi=v_0a_0\ldots v_n$ is won by Player~$i\in\{0,1\}$ if and only if
$v_n\in V_{i-1}$ and $v_nE = \0$. An infinite play~$\pi$ is won by
Player~$0$ if and only if $\pi\in\col$, otherwise it is won by
Player~$1$.

Common winning conditions are \emph{reachability} (Player~$0$ wins a play if it 
reaches a vertex from a given set), \emph{safety} (Player~$0$ wins if the play never reaches
a given set of vertices), or \emph{parity} (the vertices are colored
by linearly ordered colors; 
Player~$0$ wins if the minimal infinitely often seen color is even).

A \emph{history} is a finite prefix $\pi$ of a play with
$\last(\pi)\in V$.  The set of all histories of a
game~$\game$ is~$\calH(\game)$. Now we can define the last component
of a game: $\sim$ is an equivalence relation on~$\calH(\game)$. For
$\pi,\pi'\in\calH(\game)$ we say that Player~$0$ cannot distinguish between
them if $\pi \sim \pi'$.

A \emph{strategy} for Player~$i$ is a partial function $g\colon
(V A)^* V_i \to A$ and if $i=0$, then~$g$ must be based only
on the information available for Player~$0$: if $\pi \sim \pi'$, then
$g(\pi) = g(\pi')$.  Let $\pi = v_0 a_0 v_1 a_1 v_2 \ldots $ be a
history or a play. We say that it is \emph{consistent} with~$g$ if for
all~$j$ with $v_j \in V_i$ we have $a_j = g(v_0 a_0 \ldots a_{j-1}
v_j)$. We call a strategy~$g$ for Player~$i$ \emph{winning} from $v_0$
if Player~$i$ wins every play~$\pi$ in~$\game$ from $v_0$ that is
consistent with~$g$. We are interested only in winning strategies for
Player~$0$, so we consider only games where Player~$1$ has perfect
information. If we introduced imperfect information for both players,
a non-{}winning strategy for Player~$0$ could exist even if there
were no winning counter-{}strategy for Player~$1$.

% Notice that a $\sim_i$-strategy $g$ for Player~$i$ is winning from all positions in a set $U \sq V$ if and only if it is 
% winning from a simulated initial position $v_0$ which belongs to Player~$1-i$ and from which he can secretly choose 
% any position $v \in U$. Moreover, any $\sim_i$-strategy $g$ which is only defined on histories from some initial
% position $v_0$ can be extended to a $\sim_i$-strategy $g'$ with $\dom(g) = \{ \pi \in \calH \mid  \last(\pi) \in V_i\}$ 
% by giving $g'$ appropriate value on histories from some initial position $v_0' \neq v_0$. 
% So in our \emph{antagonistic} two-player setting, it suffices to consider strategies which are winning from single initial
% positions $v_0$ and only defined on histories from $v_0$.

In order to speak about decision problems for games of imperfect
information we have to represent~$\sim$ in a finite way. For that we
consider equivalence relations $\sim^V \subseteq V^2$ and
$\sim^A\subseteq A^2$ on positions and on actions of the game,
respectively, and extend them to~$\sim$.  In this case we also write
$(V,V_0,(E_a)_{a\in A},\sim^v,\sim^A,\Omega)$ instead of
$(V,V_0,(E_a)_{a\in A},\sim,\Omega)$.  Relations $\sim^V$ and $\sim^A$
must satisfy the following conditions. For winning conditions defined
by a coloring of the arena vertices we abuse the notation and denote
by~$\col(v)$ the color of vertex~$v$.
\begin{enumerate}[1.]
\item If $u \sim^Vv$, then $u, v \in V_0$ or $u, v \notin V_0$ (Player~$0$ knows when it is his turn).
\item if for some $v\in V$, $a, b \in \act(v)$ and $a \neq b$, then $a \not \sim^A b$ (Player~$0$ distinguishes available actions).
\item if $u, v \in V_0$ with $u \sim^Vv$, then $\act(u) = \act(v)$ (Player~$0$ knows which actions are available).
\item if $u \sim^Vv$, then $\col(u) = \col(v)$ (game colors are observable for Player~$0$).
\end{enumerate}
The equivalence relation $\sim$ on histories is induced by $\sim^V$
and $\sim^A$ as follows.  For $\pi = v_0 a_0 \ldots a_{n-1} v_n$ and
$\pi' = w_0 b_0 \ldots b_{m-1} w_m \in V(AV)^*$, we have $\pi \sim
\pi'$ if and only if $n = m \text{ and } v_j \sim^V w_j \text{ and }
a_j \sim^A b_j \text{ for all } j\,.$

The \emph{winning region} of Player~$i$ in~$\game$ is the set of 
all positions $v \in V$ such that Player~$i$ has a winning strategy for~$\game$ from~$v$. 

We say that a class $\classC$ of games has \emph{bounded imperfect information}, if there is some 
$r \in \omega$ such that for every game~$\game = (V,V_0,(E_a)_{a\in A},v_0,\sim,\Omega)$ from $\classC$ and for
any position~$v \in V$, the equivalence class $[v]_{\sim^V} :=\{ w \in V \mid  v \sim^V w\}$ 
of~$v$ has size at most~$r$. Notice that the equivalence classes 
$[a]_{\sim^A} := \{ b \in A \mid  a \sim^A b\}$ of actions $a \in A$ may, however, be arbitrarily large.
 If $r=1$, we have a game of perfect information, in which case we omit the component~$\sim$ and the
actions, so a game with perfect information can be formalized as a
tuple $(V,V_0,E,v_0,\Omega)$.  Being in a positon $v\in V$, a player
choses an edge $(v,w)\in E$ and thus determines the next position~$w$.
In this case a play is defined in an
obvious way analogously to a play in the general case as a sequence of positions.

% \begin{remark}
% Consider the interaction between components of a system where the
% behavior\fxwarning{move this remark to the introduction}
% of each component is prescribed by a controller which has to rely on the information available to this 
% component. In such settings it might seem more appropriate to ask for
% a $\sim_0$-strategy for Player~$0$ which is winning against all
% $\sim_1$-strategies of Player~$1$ rather than a winning
% $\sim_0$-strategy for Player~$0$. However, it is easy to see that in
% our perfect recall setting, this is equivalent.
% \end{remark}

\subsection{Powerset Construction}\label{subsec_powerset}
A usual method to solve games with imperfect information is a powerset construction originally suggested
by John H. Reif in \cite{Reif84}. 
The construction turns a game with imperfect information into a \emph{non-\mbox{}deterministic} 
game with perfect information such that the existence of winning strategies for Player~$0$ is preserved.

A \emph{non-\mbox{}deterministic parity game} is defined as a deterministic game, but the condition $|vE_a| \le 1$ is dropped.
Plays, strategies and winning strategies are defined as before. In
particular, a strategy is winning for Player~$0$ if all plays
consistent with it are won by Player~$0$, regardsless which
non\mb-deterministic choices are made. In general, even finite non-\mbox{}deterministic
games are not determined (\ie neither of the players may have a
winning strategy) and hence not equivalent to deterministic games. 
However, for each non-\mbox{}deterministic game $\game$ and each player $i \in \{0,
1\}$, we can construct a deterministic game $\game^i$ such that the existence of
winning strategies for Player~$i$ is preserved. The non-\mbox{}determinism  can be resolved 
by giving player $1-i$ control of non-\mbox{}deterministic choices.
For any $v \in V$ and any $a \in \act(v)$ we add a unique
$a$-successor of $v$ to the game graph which belongs to player $1-i$ and from
which he can choose any $a$-successor of $v$ in the original game graph. The
color of such a new position is the color of its unique predecessor.

Formally, for a parity game $\game = (V, V_0, (E_a)_{a \in A}, v_0,
\sim,\col)$ where ${}\sim{}$ is defined by some $\sim^V$ and $\sim^A$,
we construct the \emph{powerset} game $\olgame = (\olV, \olV_0, (\olE_a)_{a \in A}, \olv_0, \overline{\col})$
with perfect information. Without loss of generality we always assume that
$[v_0] = \{v_0\}$.  For $S \sq V$ and $B \sq A$, let
$\Post_B(S) := \{ v \in V \mid   \text{ there are } s \in S \text{ and
} b \in B \text{ such that } b \in \act(s) \text{ and } v \in sE_b \}$.
The components of $\olgame$ are defined as follows:
\begin{itemize}

\item $\olV = \{ \olv \in 2^V \mid \olv \sq [u] \text{ for some } u\in
  V\}$ and $\olV_0 = \olV \cap 2^{V_0}$;

\item for all $a\in A$, $\olE_a = \{ (\olv,\olw) \mid  \olw =
  \Post_{[a]}(\olv) \cap [u] \text{ for some } u\in \Post_{[a]}(\olv)\}$;

\item $\olv_0 = \{v_0\}$;

\item $\col(\olv) = \col(v)$ for some $v \in \olv$ (note that colors
  are observable).

\end{itemize}

One can see that this construction preserves winning strategies for
Player~$0$.  We will always assume that the graph
game~$\olG$ of~$\olcalG$, the \emph{powerset graph}, is only the
part of the graph reachable from~$\{v_0\}$.  The following lemma,
whose proof is straightforward, states the key property for the
correctness of the construction.

\begin{lemma}\label{lemma_powerset}
For each history $\overline{\pi} = \overline{v}_0 a_1 \overline{v}_1 \ldots a_n \overline{v}_n$ in $\olgame$
and all $u_n \in \olv_n$, there is a history $\pi = u_0 a_1' u_1 \ldots a_n' u_n$ in $\game$
such that $u_i \in \olv_i$ and $a_i'\sim^A a_i$ for all $i$. 
\end{lemma}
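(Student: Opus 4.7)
The plan is to argue by induction on the length~$n$ of the history $\overline{\pi}$, but with the twist that we build the witnessing history $\pi$ in~$\game$ \emph{backwards}, starting from the prescribed endpoint $u_n\in\overline{v}_n$. The reason for working backwards is that the definition of $\overline{E}_{a}$ only guarantees the existence of a predecessor with a compatible action for each element of the successor set; it does not let us pick a successor in $\overline{v}_i$ for an arbitrary element of $\overline{v}_{i-1}$.

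For the base case $n=0$ there is nothing to do, since $\overline{v}_0=\{v_0\}$ forces $u_0=v_0$, and the empty prefix of actions trivially satisfies the claim.

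For the inductive step, assume the lemma for all histories of length less than~$n$ and fix $u_n\in\overline{v}_n$. By the definition of $\overline{E}_{a_n}$, there exists $u\in\Post_{[a_n]}(\overline{v}_{n-1})$ such that $\overline{v}_n=\Post_{[a_n]}(\overline{v}_{n-1})\cap[u]$. In particular $u_n\in[u]$ and $u_n\in\Post_{[a_n]}(\overline{v}_{n-1})$, so by definition of $\Post_{[a_n]}$ there exist a vertex $u_{n-1}\in\overline{v}_{n-1}$ and an action $a_n'\in[a_n]$ (\ie $a_n'\sim^A a_n$) with $u_n\in u_{n-1}E_{a_n'}$, that is, $u_{n-1}\arrow{a_n'}u_n$ in~$\game$. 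Applying the induction hypothesis to the truncated history $\overline{v}_0 a_1\ldots a_{n-1}\overline{v}_{n-1}$ with the chosen endpoint $u_{n-1}\in\overline{v}_{n-1}$ yields a history $u_0 a_1' u_1\ldots a_{n-1}' u_{n-1}$ in~$\game$ with $u_i\in\overline{v}_i$ and $a_i'\sim^A a_i$ for all $i<n$. Concatenating with the edge $u_{n-1}\arrow{a_n'}u_n$ gives the desired history $\pi$.

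There is no real obstacle: one only has to keep in mind that the inductive step must start from $u_n$ and reconstruct $u_{n-1}$, not the other way around, and that the use of condition~2 from the definition of admissible $\sim^A$ is not needed here (we merely need $a_n'\sim^A a_n$, not equality). The argument is purely a chase through the definitions of $\overline{E}_{a}$ and $\Post_{[a]}$.
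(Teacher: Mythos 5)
Your proof is correct: the backward induction, peeling off the last edge via the definition of $\overline{E}_{a_n}$ and $\Post_{[a_n]}$ to recover a predecessor $u_{n-1}\in\overline{v}_{n-1}$ with a $\sim^A$-equivalent action, is exactly the intended argument. The paper omits the proof as ``straightforward,'' so there is nothing to contrast with; your observation that one must construct $\pi$ backwards from $u_n$ (since the construction only guarantees predecessors, not successors) is precisely the one small point worth making explicit.
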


\subsection{Graph searching games}\label{subsec_graph_searching_games_def}

In this section we introduce several measures for structural
complexity of graphs, which we define by means of graph searching
games. The actions play no role here, so we may assume that the edges
are not labeled and the players choose an outgoing edge to determine
their move.  Hereby Player~$0$ does not see which edge was chosen by
Player~$1$, he can only distinguish between positions. The games are
played by a robber and a team of~$k$ cops where~$k$ is a parameter of
the game. In a position, the robber occupies a vertex and each of the
cops either also occupies a vertex or is outside of the graph.  In a
move, the cops announce their next placement. Then the robber chooses
a new vertex that is reachable from his current vertex via paths that
do not contain any vertices occupied by cops. In the next position,
the robber is on his new vertex and the cops are placed as they have
announced. The cops try to capture the robber, \ie to reach a position
where he has no legal move. If they never capture him, the robber
wins.  Modifications of this basic game define a complexity measure of
a graph by the \emph{cop number}: the least number of cops needed to capture the
robber.\footnote{\DAGw, \treew and \dpathw are usually defined in
  terms of graph decompositions.}

\paragraph*{\DAGw} A \emph{\DAGgame} (or the cops and robber game)
$\game_k(G)$ is a game with
perfect information~\cite{BerwangerDawHunKreObd12}.  The game is
played on a directed graph $G=(V,E)$, which is different from the game
graph, by two players. 
Cop positions are of the form $(U,v)$ where
$U\subseteq V$ is the set of at most $k$ vertices occupied by cops (if $|U|<k$,
we say that the rest of the cops is outside of the graph) and $v\in V\setminus
U$ is the vertex occupied by the robber. Robber positions are of the form
$(U,U',v)$ where $U$ and $v$ are as before and $U'\subseteq V$ is the set of at
most $k$ vertices announced by the cops that will be occupied by them in the
next position. From a position $(U,v)$, the cops can move to a robber position
$(U,U',v)$. From a position $(U,U',v)$, the robber can move to a cop position
$(U',v') $ where $v'\in\Reach_{G-(U\cap U')}(v) \setminus U'$. In the first
move, the robber is placed on any vertex, \ie the first move is $\perp \to
(\emptyset,v)$ for any $v\in V$. Hereby $\perp$ is an additional dummy first
position of any play.

A play of a \DAGgame is \emph{(robber-)monotone} if
the robber cannot occupy any vertex that has been already unavailable
for him. 
Formally, the play contains no position $(U, U', v)$ such that some
$u \in U \setminus U'$ is reachable from~$v$ in $G - (U\cap U')$. A
finite play is won by cops if it is monotone. Non{}-monotone plays and infinite plays are won by
the robber.

For a graph~$G$, the least~$k$ such that the cops have a winning strategy for the
game~$\game_k(G)$ is the \emph{\DAGw} $\dw(G)$ of~$G$, defined
in~\cite{BDHK06,Obdrzalek06b}, see
also~\cite{BerwangerDawHunKreObd12}.  The \emph{non-{}monotone \DAGw} $\nmdw(G)$
is the same as \DAGw, but the requirement for the cops to guarantee
monotonicity is dropped. 
We define the \emph{\treew} $\tw(G)$ as $\dw(G^\leftrightarrow)-1$, where
the game is played on the graph
$G^\leftrightarrow=(V,E^\leftrightarrow)$ with $E^\leftrightarrow = \{(v,w) ~|~
(v,w)\in E\text{ or }(w,v)\in
E\}$.

\paragraph*{\Dpathw}
\Dpathw of a graph $G$ is the minimal number of cops minus one that
have a monotone winning strategy against an \emph{invisible} robber
on~$G$. This is a game with imperfect information for the cop player
where cop strategies are functions~$f$ that map sequences of cop
placements to a next placement: $f:(2^V)^*\to 2^V$. In other words,
the \emph{\dpwgame} or the \emph{cops and invisible robber game} is
defined as the cops and robber game, but now the equivalence relation
contains all pairs of positions. We can also define this game a
one{}-player perfect information game if we assume that the robber
occupies every vertex which is considered by the cops to be possibly
occupied. Let $G=(V,E)$ be a graph. Positions of the game have the
form $(U,U',R)$ where $|U|,|U'|\le k$ and $R\subseteq V$. The initial
position is $\perp$ and the next one is $(\0,\0,V)$. From a position
$(U,U',R)$ the cops can move to any position $(U',U'',R')$ where $R' =
\Reach_{G - (U\cap U')}(R) \setminus U''$. A play
$(U_0,U_0,R_0)(U_0,U_1,R_1)(U_1,U_2,R_2)\dots$ is monotone if $R_i$
are monotonically non-{}increasing. The cops win monotone finite plays, the
robber wins (\ie the cops lose) non-{}monotone plays and infinite
plays.  The \emph{\dpathw} of $G$ is the least number~$k$ such that
$k+1$ cops have a winning strategy on~$G$.

Obviously, $\dw(G) \leq \dpw(G)+1$ for any graph~$\graphG$. Moreover
the directed path-width of a graph is not bounded by
its DAG-width, that means, there is a class of directed graphs such
that the DAG-width is bounded and the directed path-width is unbounded
on this class.

\paragraph*{Entanglement} 
In the \emph{entanglement} game~\cite{BG04}, in each position, the
robber is on a vertex $r$ of the graph. In each round, the cop player
may do nothing or place a cop on $r$, either from outside the graph if
there are any cops left or from a vertex $v$ which was previously
occupied by a cop and is then freed.  No matter what the cops do, the
robber must go from his recent vertex $r$ to a new vertex~$r'$, which
is not occupied by a cop along an edge $(r,r') \in E$. If the robber
cannot move, he loses. So formally, the entanglement game on~$\graphG$
is a game with perfect information and a position of the entanglement
game on~$\graphG$ is a tuple $(U, r)$ if it is the cops' turn or a
tuple $(U,U',r)$ if it is
the robber's turn, with $U' = (U\setminus \{v\})\cup\{r\}$ for some
$v\in U$ (the cop is coming from $v$ to $r$) or $U' = U \cup \{r\}$ (a
new cop from outside is coming to $r$).  From $(U, r)$ the cops can
move to a position of the form $(U,U', r')$. On his turn, the robber
can move from $(U,U',r)$ to a position $(U',r')$ where $(r,r')\in E$
and $r'\not\in U'$.  The entanglement of a graph $G$, denoted
$\ent(\graphG)$ is the minimal number $k$ such that $k$ cops win the
entanglement game on~$\graphG$.

It is known that bounded entanglement implies bounded non-{}monotone
\dagw, but not vice versa~\cite{BerwangerDawHunKre06}. It is easy to see that bounded \dpathw implies
bounded \dagw and bounded non-{}monotone \dagw, but not vice versa.

\paragraph*{Using decompositions to solve parity games}
We will measure the complexity of a game by the complexity of its
underlying graph, so, e.g., if $\game = (V,V_0,(E_a)_{a\in A},v_0, \sim,
\Omega)$, then $\dw(\game) = \dw(V,\bigcup_{a\in A}E_a) = \dw(G)$. 

We defined DAG-width, tree-width and directed path-width in terms of
monotone winning strategies.  A monotone winning strategy for $k$ cops
on~$\graphG$ yields a decomposition of~$\graphG$ into parts of
size at most~$k$ which are only sparsely related among each
other. (The particular measure determines what ``sparsely'' precisely
means.) Such decompositions often allow for efficient dynamic
solutions of hard graph problems.

Entanglement is defined in terms of strategies which are not
necessarily monotone and a decomposition in the above sense is known
only for $k = 2$, see~\cite{GKR09}. Nevertheless, parity games can be
solved efficiently on graph classes of bounded entanglement.

\begin{theorem}[\cite{Obdrzalek03,BerwangerDawHunKreObd12,BerwangerGra05}]
Parity games can be solved in \ptime on classes of graphs of bounded
\treew, \dagw, (and hence \dpathw), or entanglement.
\end{theorem}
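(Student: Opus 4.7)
The plan is to treat each complexity measure by exploiting the structure guaranteed by a bounded-size winning cop strategy to decompose the parity game into small subgames whose values can be composed. Since directed path-width bounds DAG-width up to an additive constant, and tree-width equals $\dw(G^\leftrightarrow)-1$, the heart of the argument is the DAG-width and entanglement cases.

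For DAG-width, a monotone winning strategy of $k$ cops on $G$ corresponds to a DAG decomposition of $G$ into bags of size at most $k$ arranged in a dag $T$, such that for every node $t$ of $T$ the part of $G$ lying ``below'' $t$ is separated from the rest of $G$ by the bag $B_t$. I would perform dynamic programming from the sinks of $T$ upwards: at each node $t$ I compute a table indexed by boundary profiles on $B_t$, where a profile records, for each entry vertex $v\in B_t$, either the minimum priority Player~$0$ can force along a play that eventually leaves the subdag below $t$ through a designated exit vertex in $B_t$, or the minimum infinitely occurring priority she can force on a play that stays forever in the subdag. Since $|B_t|\le k$, the number of profiles is polynomially bounded for fixed $k$ and fixed number of priorities $d$, and combining the profiles of the children of $t$ reduces to solving a small parity game of size $O(k\cdot d)$. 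Traversing $T$ yields an overall polynomial-time algorithm.

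For entanglement no decomposition theorem is known when $k\ge 3$, so I would instead follow the direct game-theoretic approach of Berwanger and Grädel: analyze the $k$-entanglement cop strategy to extract a Player~$0$ parity strategy of bounded memory whenever she wins at all. A cop strategy induces a partition of plays according to the current cop placement, of which there are only $O(n^k)$ many, and between two returns to the same placement one need only track the minimum priority witnessed along the way. This yields a memory structure of polynomial size; the product of the arena with this memory structure is a perfect-information parity game of polynomial size whose winner can be computed by any parity game solver, hence in \ptime for fixed $k$ and $d$.

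The main obstacle will be the entanglement case, because the passage from an entanglement cop strategy to a finite-memory parity strategy is not syntactic: one must argue that the cop placements carry enough information to play the parity objective correctly, which requires a careful analysis of how priorities interact with the robber's moves along plays consistent with the cop strategy. For DAG-width, the analogous subtlety---that the subdags rooted at sibling nodes may share vertices, forcing careful handling of profile composition to avoid inconsistencies---is handled by processing the dag in a topological order compatible with the cop strategy as in the original proofs, and is made algorithmically efficient by the refinement of Fearnley and Schewe that we exploit in Section~\ref{sec_simulated_pg}.
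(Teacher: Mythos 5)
The paper does not prove this theorem itself---it is quoted from the cited references---so your sketch is to be measured against those proofs, which it broadly follows in outline. There is, however, a genuine gap: your argument only yields polynomial time ``for fixed $k$ \emph{and} $d$'', where $d$ is the number of priorities. Fixing $d$ trivializes the statement, since parity games with a bounded number of priorities are solvable in \ptime on \emph{all} graphs (e.g.\ by Zielonka's algorithm in time $n^{O(d)}$); the content of the theorem is polynomiality for fixed width $k$ with $d$ unbounded (up to $n$), and that is what the paper needs downstream. The two places where you silently rely on bounded $d$ are exactly where the cited proofs do their real work. For \dagw, ``combining the profiles of the children reduces to solving a small parity game of size $O(k\cdot d)$'' does not give a polynomial bound when $d$ grows with $n$: a parity game with $\Theta(kd)$ vertices and $\Theta(d)$ priorities is not known to be solvable in time polynomial in $d$. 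Obdr\v{z}\'alek and Berwanger et al.\ instead compose border profiles by an explicit min/max computation over the $d^{O(k^2)}$-many profiles, which is polynomial in $n$ for fixed $k$ because $d\le n$; no small parity game is ever handed to a generic solver.

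For entanglement the gap is essentially circular: after building the polynomial-size memory structure you propose to solve ``the product \ldots\ parity game \ldots\ by any parity game solver, hence in \ptime''---but solving a polynomial-size parity game in \ptime is precisely the open problem the theorem circumvents. The point of the Berwanger--Gr\"adel construction (and of the Fearnley--Schewe simulated games used in Section~\ref{sec_simulated_pg}) is that the cop placements serve for \emph{cycle detection}: a play is stopped as soon as it returns to a vertex currently guarded by a cop, and the winner is read off from the least priority seen since that cop was placed. This replaces the parity condition by a condition under which every play is finite, i.e.\ a reachability-type game on an arena of size $n^{O(k)}$ (equivalently, an alternating algorithm using $O(k(\log n+\log d))$ space), and \emph{that} is what gives \ptime. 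Note also that ``Player~$0$ wins with polynomial memory'' does not by itself yield an algorithm: one still has to compute such a strategy, which again amounts to solving a parity game on the product arena unless its winning condition has been simplified in this way.
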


\paragraph*{Monotonicity costs}
In the following, let $\calM = \{\tw, \dw, \dpw, \ent\}$. We say that
a measure $X \in \calM$ has \emph{monotonicity costs} at most~$f$ for
a function $f : \omega \rightarrow \omega$ if, for any graph~$\graphG$
on which~$k$ cops have a winning strategy for the $X$-game on
$\graphG$, $k + f(k)$ cops have a monotone winning strategy for the
$X$-game on~$\graphG$. We say that~$X$ has \emph{bounded monotonicity
  costs} if there is a function $f : \omega \rightarrow \omega$ such
that~$X$ has monotonicity costs at most~$f$.  Tree-width has
monotonicity costs~$0$, see~\cite{ST93}, and the same holds for directed
path-width,~\cite{Barat06,Hunter06}.  On the contrary, DAG-width does
not have monotonicity costs~$0$: there is a class of graphs $\graphG_n$,
such that $3n-1$ cops have a winning strategy on $\graphG_n$, but $\dw(G_n)
= 4n-2$, see \cite{KO08}. Whether \dagw has bounded monotonicity costs, is
an open problem~\cite{BerwangerDawHunKreObd12,KreutzerOrd08}.

%\input{prelim_multirobbers}

%%% Local Variables: 
%%% mode: latex
%%% TeX-master: "impinf-struccompl-TCS"
%%% End: 

\section{Unbounded imperfect information}\label{sec_upi}

If imperfect information is unbounded, then the powerset
construction can produce a graph which is super-polynomially larger
than the original graph. Moreover, we show that the values of all measures we
consider become unbounded and super-{}polynomial in the size of the given graph.

Let $\calM = \{\tw, \dw, \dpw, \ent\}$ and let~$G_n$ be the undirected 
$n \times n$-grid $G_n = (V_n, E_n)$ with $V_n = \{ (i, j) \mid  1
\leq i, j \leq n\}$ and $\big((i_1, j_1),(i_2,j_2)\big)\in E$ if and
only if $|i_1-i_2| + |j_1-j_2| =1$.
We will need the well-{}known fact that, for any $n >1$, 
we have $X(G_n) \geq n$ for all $X \in \calM$.

\begin{proposition}\label{proposition_exponentialgrowth}
  There is a family of games $\game_n$ with imperfect information such
  that for all $X\in \calM$, $X(\game_n) \leq 2$, but $X(\olgame_n)$
  is super-{}polynomial in the size of $\game_n$, where $\olgame_n$ is
  the powerset graph of~$\game_n$.
\end{proposition}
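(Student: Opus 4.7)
The plan is to exhibit an explicit family $\game_n$ witnessing the gap: the underlying graph is structurally simple ($X(\game_n)\leq 2$ for every $X\in\calM$), yet the powerset construction forces $X(\olgame_n)$ to be super-polynomial in the size of $\game_n$.

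My strategy is to build $\game_n$ as a ``star with self-loops'' gadget: an equivalence class of positions $\{v_1,\ldots,v_n\}$ together with an auxiliary class, linked by $n$ actions $\alpha_1,\ldots,\alpha_n$, where $\alpha_i$ is a self-loop on every $v_j$ with $j\neq i$ and routes $v_i$ out of the class. Each $\alpha_i$ necessarily lies in its own $\sim^A$-class (by the condition on actions at the same position), and the underlying graph is topologically a star with self-loops, so $X(\game_n)\leq 2$ is immediate for every $X\in\calM$ since each of these measures behaves well under disjoint union of constant-size pieces and is invariant under the addition of self-loops.

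With this design, the powerset construction, started from the belief $\{v_1,\ldots,v_n\}$, interprets $\alpha_i$ as the operation $S\mapsto S\setminus\{v_i\}$ on beliefs lying in the main class (the element $v_i$ is routed to the auxiliary class and then dropped by the intersection with $[u]$). Hence every subset $S\subseteq\{v_1,\ldots,v_n\}$ is reachable, and the reachable beliefs together with the $\alpha_i$-transitions form a directed orientation of the Boolean hypercube $Q_n$; in particular, the underlying undirected graph of $\olgame_n$ contains $Q_n$ as a subgraph. Since $Q_n$ contains the $2^{\lfloor n/2\rfloor}\times 2^{\lfloor n/2\rfloor}$ grid $G_{2^{\lfloor n/2\rfloor}}$ as a subgraph (via a reflected Gray-code embedding of a Hamiltonian path into $Q_{\lfloor n/2\rfloor}$ followed by a Cartesian product), and since $X(G_k)\geq k$ for every $X\in\calM$ by the cited fact, we conclude $X(\olgame_n)\geq 2^{\lfloor n/2\rfloor}$, which is super-polynomial in $|\game_n|=O(n)$.

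The two technical obstacles are the following. First, to push \emph{all} four measures up rather than only tree-width, we need a bidirected copy of the hypercube in $\olgame_n$: the naive peeling design above produces a DAG, which is enough to raise tree-width but may leave the directed measures (DAG-width, directed path-width, entanglement) small. I would address this by pairing the original class with a mirror class $\{v_1',\ldots,v_n'\}$ and replacing the sink-routing by an involutive swap $v_i\leftrightarrow v_i'$, so that beliefs can oscillate rather than only shrink, thereby realizing the bidirected hypercube in the reachable part of $\olgame_n$. Second, the initial belief must have size $n$, which is handled either by relaxing the convention $[v_0]=\{v_0\}$ or by prefacing $\game_n$ with a short setup gadget that lifts the belief from a singleton to the full class $\{v_1,\ldots,v_n\}$; if these modifications cannot all be accommodated in a single construction, a disjoint union of constructions tailored to the individual measures still works because each $X\in\calM$ takes the maximum over components of a disjoint union.
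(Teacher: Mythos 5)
Your plan is, in its final (patched) form, essentially the construction the paper uses: $n$ involutive ``bit-flip'' gadgets whose vertices are mutually indistinguishable, so that the reachable beliefs are the binary strings of length $n$, the flip actions realize the bidirected hypercube $Q_n$, and a Gray-code product embedding places a $2^{n/2}\times 2^{n/2}$ grid inside it. You also correctly identify both obstacles the paper has to deal with (the acyclic ``peeling'' variant only raises \treew, not the directed measures; and the initial belief must be inflated from a singleton), and your fixes are the paper's fixes: involutive swaps, and a setup vertex $v_0$ with $n$ indistinguishable actions into the gadgets.

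There is, however, one concrete point where your patched design, as written, would fail. You describe the swap targets $\{v_1',\ldots,v_n'\}$ as a \emph{mirror class}, i.e.\ a second $\sim^V$-class disjoint from $\{v_1,\ldots,v_n\}$. But the powerset edge relation is $\olE_a = \{(\olv,\olw)\mid \olw = \Post_{[a]}(\olv)\cap[u]\text{ for some }u\in\Post_{[a]}(\olv)\}$: the successor set is always re-intersected with a single equivalence class. Under $\alpha_i$ the successor set of the belief $\{v_1,\ldots,v_n\}$ is $\{v_1,\ldots,v_{i-1},v_i',v_{i+1},\ldots,v_n\}$, which meets two classes and therefore \emph{splits} into the two beliefs $\{v_j: j\neq i\}$ and $\{v_i'\}$ --- you never obtain the ``bit-string with bit $i$ flipped'' belief, and the hypercube is not realized. (You even use this splitting deliberately in your first, peeling design, so the mirror design reintroduces exactly the effect you were exploiting before.) The repair is to put all $2n$ vertices $v_1,\ldots,v_n,v_1',\ldots,v_n'$ into one $\sim^V$-class; this is precisely what the paper does by declaring all vertices of its $n$ two-cycles pairwise equivalent, and it is harmless for the conditions on $\sim^V$ since all positions belong to Player~$1$. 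Two smaller caveats: your justification ``$X(\game_n)\leq 2$ because each measure is invariant under adding self-loops'' is not literally true for entanglement (self-loops do change the entanglement game, since the robber is forced to move each round), though the bound $\leq 2$ still holds by a direct argument; and for the lower bound on $\ent(\olgame_n)$ one cannot simply appeal to subgraph-monotonicity of the grid fact --- the paper routes this case through the bound $\nmdw(G)\leq\ent(G)+1$.
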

\begin{proof}
From a very simple graph, we generate a graph containing an undirected square grid 
of super-{}polynomial size as a subgraph.
This is possible because we can consider large equivalence classes of positions and actions. 

Consider a disjoint union of~$n$ directed cycles of length~$2$ with self-loops on each vertex
where any two positions are equivalent. Additionally we have an
initial position such that, by applying the powerset construction from this position, we obtain a set which contains 
exactly one element from each cycle. Continuing the construction, we
obtain sets that represent binary numbers with~$n$ digits and for each
digit we have an action which causes exactly this digit to flip. So,
using the Gray-code, we can create all binary numbers with~$n$
digits by successively flipping each digit. If we do this
independently for the first $n/2$ digits and for the last $n/2$
digits, it is easy to see that the resulting positions are connected
in such a way, that they form an undirected grid 
$\olG_n$ of size $2^{n/2} \times 2^{n/2}$, for which we have
$X(\olG_n) \geq 2^{n/2}$ for any measure $X \in \mathcal{M}$.

To be more precise, for even $n \ge 2$, let $\game_n = (V_n, V_0 = \0,
(E_a^n)_{a\in A_n},\sim_n,\Omega)$ where $\Omega = \0$, $\sim_n$ is induced by $\sim^V_n$ and
$\sim^A_n$ (which we define below) and $G_n = (V_n, E_n= \bigcup_{a \in A_n} E_a^n)$ is the
following game graph. 
The set of vertices is
$\{v_0\} \cup \{ (i, j) \mid 1 \leq i \leq n, j \in \{0,1\}\}$
where~$i$ denotes the number of the cycle and~$j$ is the number of
a vertex in the cycle.
The actions are $A_n = \{ a_i \mid  1 \leq i \leq n\} \cup \{\neg_i \mid  1 \leq i \leq n\}$.
Here the actions $a_i$ lead from $v_0$ to the cycles: $v_0 \arrow{a_i} (0, i)$ for $1 \leq i \leq n$.
Further actions build the cycles:

\begin{itemize}
\item $(i, j) \arrow{\neg_i} (i, 1-j)$ for $1 \leq i \leq n$ and $j \in \{0, 1\}$.

\item $(i, j) \arrow{\neg_k} (i, j)$ for $1 \leq i \leq n$ with $k \neq i$ and $j \in \{0,1\}$.

\end{itemize}

Imperfect information is defined by $(i, j) \sim^V_n (k, l)$ and $a_i \sim^A_n a_k$ 
for any $1 \leq i, j, k, l \leq n$. So each two positions from any two cycles are indistinguishable and 
each two of the actions $a_\sigma^i$ are indistinguishable. 

In Figure \ref{figureprp3}, the game $\game_2$ and the powerset game
$\ol{\game}_n$ are depicted.  The position $\{v_0\}$ of the powerset
game is omitted and a position $\{(0,j_1),(1,j_2)\}$ is represented
as $j_1j_2$.

It is clear that $X(\game_n) \leq 2$ for any measure $X \in
\calM$. Indeed, \dagw is \treew plus one and \treew is one here, because the
underlying graphs are undirected trees. The entanglement game is won
by two cops: the cops force the robber to~$v_0$ and then one of them occupies
$v_0$. The robber goes into some cycle~$i$ and the other cop occupies
$(i,0)$. Then the first cop occupies $(i,1)$. In the cops and
invisible robber game, one cop is placed on $v_0$ and then the two
other cops visit successively every cycle, so $\dpw(\game_n) = 2$.

Performing the powerset construction on $\game_n$ from $v_0$ we obtain the graph $\olG_n$.
Obviously,~$\olG_n$ contains the position $\{(1, 0), \ldots, (n, 0)\}$.
From this position, an undirected square grid of super-{}polynomial
size is constructed as follows. The positions of~$\olG_n$ (except for
$\{v_0\}$) are precisely the sets of vertices of~$G_n$ that contain
exactly one vertex from every cycle of~$G_n$, \ie $\olV_n =
\bigl\{\{v_0\}\bigr\} \cup \bigl\{  \{ (1,j_1),\ldots, (n,j_n)\} \mid
j_i = 0,1  \bigr\}$. Action~$\neg_i$ switches the vertex in the $i$th
cycle and lets the other cycles unchanged. 

Now we observe how the powerset construction orders the positions
of~$\olG_n$ in a square grid.  
We successively apply actions $\neg_i$ for $i \in \{1, \ldots, n/2\}$ to create each vertex 
$\{(1, j_1), \ldots, (n/2, j_{n/2})$, $(1, 0), \ldots, (n, 0)\}$ 
with $j_1, \ldots, j_{n/2} \in \{0, 1\}$. In each step we can change exactly one $j_r$ to $1 - j_r$, so the creation of
all these vertices from $\{(1, 0), \ldots, (n/2, 0), (n/2+1, 0), \ldots, (n, 0)\}$ can, for instance, be done using the usual 
Gray-code for binary numbers: we get the next vertex
by applying $\neg_i$ to the previous vertex $\{(1, j_1), \ldots, (n/2, j_{n/2})$, $(n/2+1, 0), \ldots, (n, 0)\}$,
which changes exactly one position $(i, j_i)$.
This undirected path forms the upper horizontal side of the grid. 
Analogously, by successively applying the actions $\neg_i$ for $i \in \{n/2+1, \ldots, n\}$ we can create each vertex
$\{(1, 0), \ldots, (n/2, 0)$, $(n/2+1, j_{n/2+1}), \ldots, (n, j_n)\}$ with $j_{n/2+1}, \ldots, j_n \in \{0, 1\}$ using
the Gray-code. This undirected path forms the left vertical side of the grid. 

Likewise, given any vertex $\{(1, j_1)$, $\ldots$, $(n/2, j_{n/2})$, $(n/2+1, 0)$, $\ldots$, $(n, 0)\}$ we can
create any vertex $\{(1, j_1)$, $\ldots$, $(n/2, j_{n/2})$, $(n/2+1, j_{n/2+1})$, $\ldots$, $(n, j_n)\}$ 
by successively applying the actions $\neg_i$ for $i \in \{n/2+1, \ldots, n\}$ in the same order as before and 
given any vertex $\{(1, 0)$, $\ldots$, $(n/2, 0)$, $(n/2+1, j_{n/2+1})$, $\ldots$, $(n, j_n)\}$, 
by successively applying the actions $\neg_i$ for $i \in \{1, \ldots, n/2\}$, we can create any vertex
$\{(1, j_1)$, $\ldots$, $(n/2, j_{n/2})$, $(n/2+1, j_{n/2+1})$, $\ldots$, $(n, j_n)\}$.
All these paths form a $2^{n/2} \times 2^{n/2}$-grid and therefore, the \treew of
$\olG_n$ is super-{}polynomial in the size of $\game_n$. Furthermore, using that
$\olG_n$ is undirected one easily checks that for all $X\in \calM\setminus\{\ent\}$,
$X(\olG_n) \ge \tw(\olG_n)$. For entanglement, Berwanger et al. showed
in~\cite{BerwangerDawHunKre06} that non-{}monotone \dagw of a graph
(which is at most its \treew plus one) is
at most its entanglement plus one, so $\ent(\olG_n)\ge n+2$, for $n\ge
3$. 
\end{proof}

\begin{remark}
The super-{}polynomial size of the resulting graph is not needed for
unbounded growth of graph complexity. By the same technique,
replacing~$n$ cycles by two undirected $n$-paths with similar actions and self{}-loops on all
positions leads to an $n\times n$-grid.
\end{remark}

\begin{figure}
\begin{tikzpicture}[bend angle=45,auto, scale=0.8]

\node [player1] (v0) at (0, 0) {$v_0$};

\node [player1] (01) at (-2, -2) {$0$}
	edge [loop left, thick] (01);
\node [player1] (11) at (-2, -4) {$1$}
	edge [loop left, thick] (11);
\node [player1] (02) at ( 2, -2) {$0$}
	edge [loop right, thick] (02);
\node [player1] (12) at ( 2, -4) {$1$}
	edge [loop right, thick] (12);

\draw [post] (v0) to (01);
\draw [post] (v0) to (02);
\draw [post] (01) to [bend left = 45] (11);
\draw [post] (11) to [bend left = 45] (01);
\draw [post] (02) to [bend left = 45] (12);
\draw [post] (12) to [bend left = 45] (02);

\draw [ind] (01) to (11);
\draw [ind] (11) to (12);
\draw [ind] (12) to (02);
\draw [ind] (02) to (01);

\node (a1) at (-1.5,-1) {$a_1$};
\node (a2) at (1.5, -1) {$a_2$};
\node (neg1) at (-3,-3) {$\neg_1$};
\node (neg1) at (-1,-3) {$\neg_1$};
\node (neg1) at (3.2,-2) {$\neg_1$};
\node (neg1) at (3.2,-4) {$\neg_1$};
\node (neg2) at (3,-3) {$\neg_2$};
\node (neg2) at (1,-3) {$\neg_2$};
\node (neg2) at (-3.2,-2) {$\neg_2$};
\node (neg2) at (-3.2, -4) {$\neg_2$};

\draw [ind] (a1) to (a2);

\node (00) at ( 6, -1) {$00$};
\node (10) at ( 9, -1) {$10$};
\node (01) at ( 6, -4) {$01$};
\node (11) at ( 9, -4) {$11$};

\draw [-,thick] (00) to (10);
\draw [-,thick] (00) to (01);
\draw [-,thick] (01) to (11);
\draw [-,thick] (11) to (10);

\node (neg1) at (7.5,-0.8) {$\neg_1$};
\node (neg2) at (5.7, -2.5) {$\neg_2$};
\node (neg1) at (7.5,-3.8) {$\neg_1$};
\node (neg2) at (8.7,-2.5) {$\neg_2$};

\end{tikzpicture}
\caption{The game $\game_2$ and the powerset graph $\olG^2$}
\label{figureprp3}
\end{figure}

Proposition~\ref{proposition_exponentialgrowth} shows that Reif's
construction does not help to solve parity games efficiently even if
the game graphs are simple. Before we show that the problem is, in
fact, very hard, let us note that on trees, imperfect information does
not provide additional computational complexity.  The powerset graph
of a tree is again a tree (recall that we delete non-{}reachable
positions) where the set of positions on each level partitions the set
of positions on the same level of the original tree. Thus the new tree
can be computed in polynomial time and is at most as big as the
original tree.

For the following proofs we need the notion of an alternating Turing
machine. An alternating Turing machine~$M = (Q,\Gamma,\Sigma,q_0,\Delta,Q_\acc,Q_\rej)$ is defined as a deterministic
Turing machine, but now the set of non-{}final states is partitioned
in $Q_\dt$, $Q_\exists$ and $Q_\forall$. Whether a  word is accepted by~$M$ is
defined by  game semantics. There are two players, both having perfect
information: the existential
Player~$\exists$ and the universal Player~$\forall$.  If~$M$ is in a
state from~$Q_\dt$, then there is exactly one next configuration as for
deterministic Turing machines. If~$M$ is in a state~$q$ from~$Q_\exists$, the
existential player resolves the non-{}determinism choosing a transition
$(q,a)\to(q',a',s)\in \Delta$ and if~$M$ is in a state
from~$Q_\forall$, the universal player moves. The existential player
tries to accept the input word, the universal player aims to reject it
or to drive~$M$ into an infinite computation. A word~$w$ is accepted
by~$M$ if the existential player has a winning strategy from the
initial configuration of~$M$ on~$w$. The complexity classes
$\apspace$, $\aspace(S(n))$, $\aptime$, and $\atime(S(n))$ (for a
function $S\colon\omega\to\omega$) are defined using alternating Turing machines as the classes
$\pspace$, $\Space(S(n))$, $\ptime$, and $\Time(S(n))$ with deterministic Turing machines.
Our proofs are based on the following facts, see for example~\cite{WW86}.
\begin{lemma}\label{lemma_complexity}\
\vspace{-2mm}
\begin{enumerate}[(1)]
\item $\text{\sc{APspace}} = \exptime$.
% \item For all $L$ $\in$ $\text{\sc{Aspace}}(S(n))$ with $S(n) \geq n$ there is an alternating 
% Turing machine with a single tape and space bound $S(n)$ which accepts $L$.
\item $\text{\sc{APtime}} = \pspace$.
% \item For all $L$ $\in$ $\text{\sc{Atime}}(T(n))$ with $T(n) \geq n$ there is an alternating 
% Turing machine with a single tape and time bound $O(T^2(n))$ which accepts $L$.
\end{enumerate}
\end{lemma}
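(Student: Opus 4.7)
The plan is to prove both equalities by the classical alternation-simulation technique due to Chandra, Kozen and Stockmeyer, which is what \cite{WW86} records as a textbook fact. Each equality splits into an upper bound (simulate an alternating machine by a deterministic one) and a lower bound (simulate a deterministic machine by an alternating one), and the latter is always done by a Savitch-style divide-and-conquer on computation histories.

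First I would prove $\apspace \subseteq \exptime$. Given an alternating machine $M$ using space $p(n)$, the number of configurations is $2^{O(p(n))}$. Build the configuration graph explicitly in exponential time, label halting configurations as accepting or rejecting, and then iterate the obvious backward rule (existential configurations accept if some successor accepts; universal ones accept if all successors accept) until a fixed point is reached. Since the configuration graph has exponentially many vertices, the fixed point is reached in exponentially many steps, hence in $\exptime$. The symmetric inclusion $\aptime \subseteq \pspace$ is easier: an alternating polynomial-time machine has a computation tree of polynomial depth, so a deterministic machine can traverse the tree by DFS, storing only the current root-to-leaf path and the choices along it in polynomial space.

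Next I would prove the harder inclusions $\exptime \subseteq \apspace$ and $\pspace \subseteq \aptime$ by the same recursive trick. In both cases we are given a deterministic machine $M$ that runs in time $T$ on input $w$, and we want to check the reachability predicate $R(C, C', t)$: ``$C$ reaches $C'$ in at most $t$ steps of $M$''. The alternating simulator computes $R$ by existentially guessing an intermediate configuration $C''$, then universally branching on the two subcalls $R(C, C'', \lceil t/2 \rceil)$ and $R(C'', C', \lceil t/2 \rceil)$, with base case $t \leq 1$ checked directly from the transition relation. This gives recursion depth $O(\log T)$, and at each level only one configuration $C''$ needs to be held on the work tape. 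For $\pspace \subseteq \aptime$ we have $T = 2^{\mathrm{poly}(n)}$ and configurations of polynomial size, so the recursion depth and the per-level work are both polynomial, yielding alternating polynomial \emph{time}. For $\exptime \subseteq \apspace$ we again have $T = 2^{\mathrm{poly}(n)}$ but now configurations may be exponentially large; here one works at the level of individual tape cells, computing the predicate ``cell $i$ at time $t$ contains symbol $s$'' by the same halving recursion together with a local consistency check among the three cells $i-1, i, i+1$ at the midpoint, so each level of the recursion stores only a polynomial amount of information (the address $(t,i,s)$ plus a constant-size window), yielding alternating polynomial \emph{space}.

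The main obstacle, in both lower-bound directions, is the bookkeeping: one must verify that across the $O(\log T)$ levels of the recursion the machine indeed re-uses its space (for $\apspace$) or stays within a polynomial total number of alternating steps (for $\aptime$), and that the existential/universal role assignment is compatible with the deterministic semantics one is simulating. Once the recursion is set up cleanly this is routine, but it is the only place where something could go wrong.
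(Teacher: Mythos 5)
The paper does not actually prove this lemma: it is invoked as a classical fact about alternating complexity classes (the Chandra--Kozen--Stockmeyer theorems) and attributed to the literature via \cite{WW86}, so any self-contained proof already goes beyond what the paper does. Three of your four inclusions are the standard arguments and are fine in outline: $\apspace \subseteq \exptime$ by fixed-point evaluation on the exponential-size configuration graph, $\aptime \subseteq \pspace$ by depth-first traversal of the polynomial-depth computation tree, and $\pspace \subseteq \aptime$ by the Savitch-style halving of the reachability predicate on full, polynomial-size configurations (there the recursion depth is $O(\log T)=\mathrm{poly}(n)$ and each level writes one configuration, so every root-to-leaf path is polynomially long).

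The direction $\exptime \subseteq \apspace$ is broken as you describe it. You cannot combine the halving recursion with a three-cell local consistency check ``at the midpoint'': the content of cell $i$ at time $t$ is determined by cells $i-1,i,i+1$ at time $t-1$, hence, after unrolling to time $t/2$, by a window of roughly $t$ cells at the midpoint, which is exponentially wide; three cells at time $t/2$ determine nothing about cell $i$ at time $t$. (And you cannot instead halve on whole configurations, since these may be exponentially large and so cannot be written on a polynomially bounded work tape.) The correct tableau argument decrements time by exactly one step: to verify that cell $i$ holds symbol $s$ at time $t$, existentially guess the three symbols in cells $i-1,i,i+1$ at time $t-1$, check local consistency with the transition function, and universally branch to verify each of the three guessed claims. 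The recursion then has depth $T = 2^{\mathrm{poly}(n)}$, which is harmless for alternating \emph{space} precisely because the universal branching replaces the recursion stack: at any moment the machine stores only the current triple $(t,i,s)$ and a constant-size window, i.e.\ polynomially many bits. Your closing remark that the only danger is ``bookkeeping'' misses that this is exactly the point where your chosen recursion fails.
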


\begin{theorem}\label{theorem_exptimehard}
The following problem is $\exptime$-hard. 
Given an imperfect information reachability game $\game $ with $\ent(\game) \leq 2$
and $\dpw(\game) \leq 3$ and a position~$v_0$, does Player~0 have a
winning strategy from~$v_0$ in $\game$?
\end{theorem}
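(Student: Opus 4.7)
The plan is to reduce from acceptance of alternating polynomial space Turing machines, which by Lemma~\ref{lemma_complexity}(1) is \exptime-complete. Given an ATM $M$ with space bound $p(n)$ and input $w$, I will construct in polynomial time a reachability game $\game_{M,w}$ with imperfect information such that Player~$0$ wins from a designated position $v_0$ if and only if $M$ accepts $w$, and such that the underlying graph~$\graphG$ satisfies $\ent(\graphG) \le 2$ and $\dpw(\graphG) \le 3$.

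The simulation idea is standard: Player~$0$ plays the existential moves of~$M$ and Player~$1$ plays both the universal moves and an adversarial role that challenges Player~$0$ about the current tape content. The full configuration of~$M$ is \emph{not} reflected in the topology of the game graph; instead, it is reflected in the equivalence classes of positions reached by the play. Concretely, the game graph is a long, essentially linear sequence of small cell-gadgets, one per tape cell, joined to a small state/head-gadget. In each gadget Player~$1$ may ``challenge'' Player~$0$ by moving, through indistinguishable actions, to one of several copies of the cell which encode the candidate symbol there; Player~$0$ must, based only on the history he can observe (i.e.\ on the simulated history of $M$), respond with the action that corresponds to the correct symbol. If Player~$0$'s commitment is inconsistent with any simulation of~$M$, Player~$1$ can steer the play into a sink that Player~$0$ cannot escape to the reachability target; otherwise the play progresses to the next step of~$M$, so that by the usual powerset argument, a winning strategy for Player~$0$ corresponds exactly to an accepting existential strategy in~$M$. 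Thus the full exponentially-many configurations of $M$ show up only in $\overline{\game}_{M,w}$, not in~$\game_{M,w}$ itself.

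The structural bounds come from keeping the game graph ``almost linear''. The cell-gadgets are chained in a long directed path (with a single back-edge to the state/head-gadget that implements the next transition of~$M$), and each gadget contains only a constant number of vertices and edges. On such a near-linear graph, two cops win the entanglement game: one stays on the ``current boundary'' of the cell-gadget the robber is in, while the other follows him through the internal constant-size structure of that gadget, using only a handful of placements to trap him. For directed path-width, a decomposition moving a sliding window of width at most~$4$ (so $\dpw \le 3$) along the chain of gadgets covers all edges, including the single back-edge which is handled by keeping the head-gadget in every bag of the window while traversing the chain. Because both arguments depend only on the fact that each gadget is of constant size and that the global topology is a path with one extra feedback edge, the bounds $\ent \le 2$ and $\dpw \le 3$ can be checked directly from the construction.

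The main obstacle will be the joint design of the cell-gadgets and the imperfect information relations so that the powerset game $\overline{\game}_{M,w}$ really computes the configuration graph of~$M$, while each individual gadget stays small enough to meet the stated width bounds. Three constraints must be reconciled: (i)~Player~$0$ must be forced to track the content of every tape cell and not only the ones Player~$1$ chooses to challenge, which requires the action-equivalence classes $[a]_{\sim^A}$ to be large while the position classes $[v]_{\sim^V}$ stay small; (ii)~the reachability condition must distinguish any ``cheating'' play from an honest accepting simulation, which is handled by giving Player~$1$ the option of entering a verification sub-gadget at any cell; and (iii)~these sub-gadgets must not destroy the linear shape of the graph, which is achieved by letting them feed back into the same cell vertex rather than adding new transitions across the chain. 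Once the gadgets are pinned down, the remaining verification (polynomial size, correctness of the simulation, and the structural width bounds) is routine.
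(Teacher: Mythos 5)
Your plan correctly identifies the complexity-theoretic source ($\apspace = \exptime$) and the general prover--verifier flavour, but it is a plan rather than a proof: the whole content of the theorem is the gadget construction, and you explicitly defer it (``once the gadgets are pinned down, the remaining verification \dots is routine''). The deferred design problem is not routine, and the architecture you sketch runs into a concrete obstacle. You store the tape content only in the knowledge sets of Player~$0$ (i.e.\ in the powerset states), not in any actual position. But the winning condition of a reachability game is a property of the \emph{actual} play, so the polynomial-size graph must be able to adjudicate, at some concrete vertex, whether the symbol Player~$0$ announces is consistent with $M$'s transition function; if the true content of a cell is encoded nowhere in a single position, there is nothing for the target set to test, and it is equally unclear what stops Player~$1$ from challenging with a \emph{wrong} candidate symbol and sending an honest Player~$0$ to the losing sink. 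The paper's proof resolves exactly this with a locality trick your proposal lacks: Constructor announces the run symbol by symbol (so the configuration lives in the play history, not in powerset states), and Verifier may \emph{once}, invisibly to Constructor, move to a position that explicitly stores an index $i$ and the three symbols at positions $i,i+1,i+2$ of the current configuration; the consistency check $f(a_1,a_2,a_3)=b$ is then hard-wired into which terminal positions are winning. Only a constant-size window of the tape ever needs to sit inside a position, which is what keeps the graph polynomial while making cheating detectable by a reachability condition.

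A second, sharper problem is your design constraint~(i), that the position classes $[v]_{\sim^V}$ ``stay small''. Powerset states are by definition subsets of a single $\sim^V$-class, so to obtain exponentially many knowledge states some class must have at least polynomial size; and if the classes were genuinely bounded, the powerset graph would have polynomial size and the reachability game would be solvable in \ptime{} outright (cf.\ Section~\ref{sec_bpi} and Proposition~\ref{proposition_directedpathwidth}), contradicting the \exptime-hardness you are trying to prove. The paper's construction accordingly uses position classes of polynomial size (all positions agreeing on the visible components but differing in Verifier's memorized data are equivalent). Finally, note that the paper's graph is not a chain of cell gadgets but a layered DAG with all back edges returning to one of two roots (whence $\ent\le 2$ and $\dpw\le 2$ by parking cops on the roots), and that the universal states of $M$ need separate care: Player~$0$ cannot secretly spot-check Player~$1$'s universal choices, so the last branching choice is recorded explicitly in the position, doubling the graph. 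Your width analysis would have to be redone for whatever gadgets you eventually construct.
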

\begin{proof}
By Lemma~\ref{lemma_complexity}, for any $L \in \exptime$, 
there is an alternating Turing machine $M = (Q, \Gamma, \Sigma, q_0, \Delta)$ 
with only one tape and space bound $n^k$ for some $k \in \omega$, where~$n$ is the size of the input, 
that recognizes~$L$. As usual,~$Q$ is the set of states, $\Gamma$ and
$\Sigma$ are the input and the tape alphabets with $\Gamma \subseteq \Sigma$,
$q_0$ is the initial state, and $\Delta$ is the transition relation.
First assume that~$M$ is deterministic. We describe the necessary changes to prove the general
case later.

Let $A = \Sigma \cupdot (Q \times \Sigma) \cupdot \{ \#\}$. Then each configuration~$C$ of~$M$ is described by a word 
$C = \# w_1 \ldots w_{i-1} (qw_i) w_{i+1} \ldots w_t  \in A^*$
over $A$ where~$w_j$ is the $j$th symbol on the tape and the
reading head is at symbol number~$i$ (counting from~$0$).  Since~$M$ has space bound $n^k$ 
and we have $k \geq 1$, without loss of generality we can assume that
$|C| = n^k+1$ for all configurations~$C$ of~$M$ on inputs of length~$n$. 
Moreover, for a configuration~$C$ of~$M$ and $1\le i \leq n^k$ the symbol
number~$i$ of the successor configuration~$C'$ only depends on the
symbols number $i-1$, $i$ and $i+1$ of~$C$. So there is a function $f : A^3 \to A$ such that for
any configuration~$C$ of~$M$ and any $i \leq n^k$,
if the symbols number $i$, $i+1$ and~$i+2$ of~$C$ are $a_1$, $a_2$ and
$a_3$, then the symbol number~$i$ of the successor configuration~$C'$ of~$C$ is $f(a_1,a_2,a_3)$.

For each input word $u \in \Gamma^*$ we construct a game~$\game_u$
with imperfect information such that the player called Constructor has
a winning strategy for~$\game_u$ if and only if~$M$ accepts~$u$.
The idea for the game corresponding to~$u$ is the following. Player
Constructor selects symbols from~$A$ such that 
the sequence constructed in this way forms an accepting run of~$M$ on~$u$. In order to check the
correctness of the construction, player Verifier may, at any point during the play, but only
\emph{once}, memorize some $i \in \{1, \ldots, n^k\}$, and $a_i$,
$a_{i+1}$ and $a_{i+2}$ chosen by Constructor within the recent configuration. In the next configuration, Verifier checks the $i$th symbol chosen
by Constructor to be correct according to $a_{i-1}$, $a_i$ and $a_{i+1}$, and the function~$f$. 
If the $i$th symbol proves incorrect, Constructor loses, otherwise,
Verifier loses. If Verifier never checks a transition, Constructor
wins if and only if he reaches an accepting configuration. 
Constructor must not notice when Verifier memorizes the recent position, which defines the imperfect information
in the game. Then Constructor has a winning strategy in the game if and only if~$M$
accepts~$u$. To justify the bounds on the graph complexity measures that we have claimed, 
we define the game more formally.

The set of positions is 
$ \{v_0\} \cup \{C, V\} \times A \times \{0, \ldots, n^k\} \times Q \times \{-,1, \ldots, n^k\} \times A^3$,
so a position has the form $(\sigma, a, i, q, j, a_1,a_2,a_3)$
where~$\sigma$ is the player to move,~$a$ is the recent symbol chosen by Constructor
and~$i$ is the number of~$a$ in the recent configuration.
Furthermore,~$q$ is the last state in~$Q$ chosen by Constructor,
and~$j$ and $a_1, a_2, a_3$ represent the information memorized by
Verifier:~$j$ is the number of the symbol to be verified in the next
configuration, and $a_1$, $a_2$ and~$a_3$ are symbols number $j-1$,
$j$ and $j+1$, respectively . All actions are indistinguishable for Constructor and we
omit them in the description. The sign~$-$ in the four last components
of a position means that Verifier did not memorize the corresponding element.

A play begins in position~$v_0$, which belongs to Verifier. He moves to
a position $(C,\#,0,q_0,-,-,-,-)$ or to position
$(C,\#,0,q_0,j,a_1,a_2,a_3)$ where $1 \le j \le n^k$ and
$a_1, a_2, a_3$ are symbols number $j-1$, $j$ and $j+1$ of the initial
configuration of~$M$ on~$u$. 

As long as Verifier does not memorize any symbol, Constructor moves 
from  position $(C,a,i,q,-,-,-,-)$ with $0\le i<n^k$ to some
position $(V,a',i+1,q',-,-,-,-)$ choosing the next symbol and giving
Verifier the possibility to memorize it. Hereby either $a' = (q',a'') \in Q\times \Sigma$ (for some
$a''\in \Sigma$), or $a' \in\Sigma$ and $q' = q$. As Verifier does not
memorize anything yet, he chooses $(C,a',i+1,q,-,-,-,-)$ as the next
position (the other possible move is to memorize~$a'$).
If $i=n^k$, then the next position is $(C,\#,0,\perp,-,-,-,-)$, \ie
Constructor chooses $\#$ and Verifier does not memorize
it. Hereby~$\perp$ is some fixed state in~$Q$, \ie
the state~$q$ is forgotten in this move. We need this to reduce the
structural complexity of the game graph. A move of
Constructor and an answer of Verifier constitute a round.

Now assume that Verifier decides to memorize the tuple $(a_1,a_2,a_3)$
where~$a_1$ is the current symbol number $i<n^k$, and $a_2$ and $a_3$
are the (yet not determined) symbols that will be chosen in the next
two rounds. Then from a position $(V,a_1,i,q,-,-,-,-)$ Verifier
moves to $(C,a_1,i,q,-,a_1,-,-)$. Then Constructor moves to some
$(V,a_2,i+1,q',-,a_1,-,-)$ (where the update of~$q$ to~$q'$ is as before),
then Verifier moves to $(C,a_2,i+1,q',i+1,a_1,a_2,-)$
and Constructor moves to some $(V,a_3,i',q'',i+1,a_1,a_2,-)$ where
$q'$ is again updated as before and~$i'$ depends on~$i$. 
If $i+2 \le n^k$, then $i' = i+2$.
Otherwise $i+2 = n^k+1$, then $i' = 0$ (and $a_3 = \#$). Verifier moves to
$(C,a_3,i',q'',i+1,a_1,a_2,a_3)$. From this position, the players,
first, finish the current configuration and, second, play in the next
configuration until the position with index~$i+1$ is reached, both in the same way as they played
without any memorized information. Formally, we just substitute in the above
positions the four last elements $(\ldots,-,-,-,-)$ by
$(\ldots,i+1,a_1,a_2,a_3)$.
When a position $(C,b, i+1,s,i+1,a_1,a_2,a_3 )$ is reached, the play
stops and Verifier wins if
and only if $f(a_1,a_2,a_3) = b$. At any \emph{other} position
$(\sigma,a,i,q,j,a_1,a_2,a_3)$ (where $j$ and all $a_k$ can be~$-$),
if~$q$ is accepting, Constructor wins and if~$q$ is rejecting,
Verifier wins. In the remaining case of an infinite play (Verifier
never memorizes anything and no final state is reached), Verifier wins.

Imperfect information is defined by making all positions
$(\sigma,a,i,q,j,a_1,a_2,a_3)$ and
$(\sigma',a',i',q',j',a_1',a_2',a_3')$ indistinguishable for
Constructor if $\sigma = \sigma'$, $a = a'$, $i = i'$, and $q = q'$,
\ie Constructor does not know whether Verifier memorized anything.

It is clear that~$u$ is accepted by~$M$ if and only Constructor has a
winning strategy in the game~$\game_u$. If~$u$ is accepted, then
Constructor just constructs the accepting run of~$M$. If not,~$M$
rejects (as~$M$ recognizes an \exptime language, it always stops). In
order not to lose by reaching a rejecting state, Constructor has to cheat.
However, cheating is not a winning strategy for Constructor
because Verifier can memorize the place in the previous configuration
that does not match the same place in the current configuration and win.

We now analyze the structural complexity of the game graph, see Figure~\ref{fig_G_u}. 
The \emph{main subgame}~$\calS$ consists of positions of the form $(\sigma, a,
i, q, - , - , - , -)$ without memorization that build a DAG with a
unique root $(C,\#,0,\perp,-,-,-,-)$ and $2\cdot n^k +1$ layers. A
layer number~$i$ with an even~$i$ has the form
$(C,a,i,q,-,-,-,-)$. From every such position there is an edge to
every position of the form $(V,a',i+1,q',-,-,-,-)$ of the next
layer. Analogously, from every position of layer number~$i+1$ there is
an edge to every position of layer number~$i+2$. Finally, from every
position of the last layer, there is an edge back to the root
$(C,\#,0,\perp,-,-,-,-)$. This constitutes the only cyclicity in the
graph. Additionally, there are edges from~$v_0$ to every position $(V,a,1,\perp,-,-,-,-)$.

From every of $n^k$ Verifier positions~$P$ in the main subgame and from~$v_0$, Verifier can
start memorizing information. Then the play continues in a \emph{checking
subgame} $\calC_P$ and never returns to the main subgame, so we can consider their
complexities independently. Every checking subgame is again a DAG,
which consists of two sub-DAGs. The first one is a copy of the remaining
part of the main subgame (with changed four last components); the other one is a copy 
of the part of the main subgame which has been played until Verifier
intended to memorize information (again with changed four last
components). There are no outgoing edges from the last level of a
checking subgame.

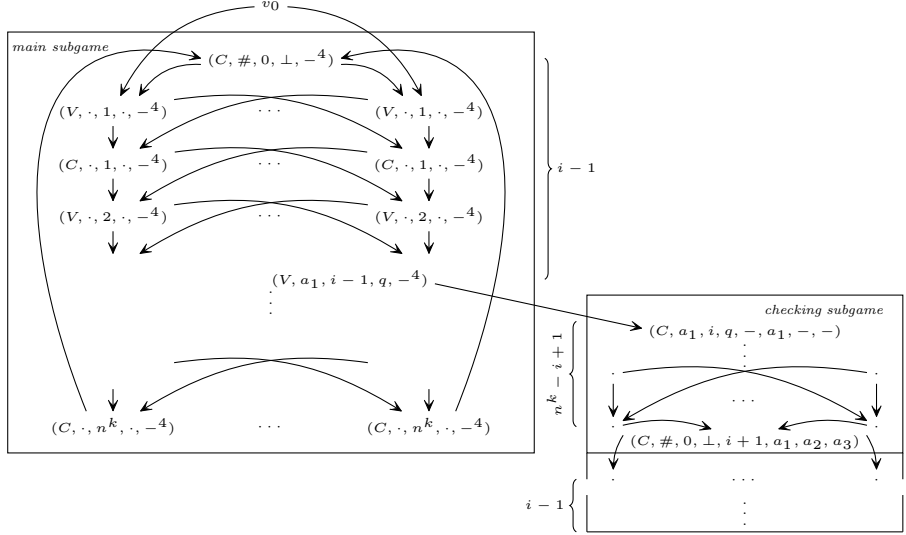
\begin{figure}
\begin{tikzpicture}[scale=0.7]
\tiny
\node (v_0) at (0,0){$v_0$};

% main subgame

\node (root) at (0,-1){$(C,\#,0,\perp,-^4)$};

\node (11) at (-3,-2){$(V,\cdot,1,\cdot,-^4)$};    \node at (0,-2){$\cdots$}; \node  (1s) at (3,-2){$(V,\cdot,1,\cdot,-^4)$};    
\node (21) at (-3,-3){$(C,\cdot,1,\cdot,-^4)$};    \node at (0,-3){$\cdots$}; \node  (2s) at(3,-3){$(C,\cdot,1,\cdot,-^4)$};    
\node (31) at (-3,-4){$(V,\cdot,2,\cdot,-^4)$};    \node at (0,-4){$\cdots$}; \node  (3s) at(3,-4){$(V,\cdot,2,\cdot,-^4)$};    
\node (41) at (-3,-5){\phantom{$(V,\cdot,2,\cdot,-^4)$}};   \node  (4s) at(3,-5){\phantom{$(V,\cdot,2,\cdot,-^4)$}};    
\node (n^k_11) at (-3,-7){\phantom{$(V,\cdot,2,\cdot,-^4)$}};   \node  (n^k_1s) at(3,-7){\phantom{$(V,\cdot,2,\cdot,-^4)$}};    
                                        \node at
                                        (0,-5.5){$\vdots$};

                                              \node (strt_chk) at
                                              (1.5,-5.2){$(V,a_1,i-1,q,-^4$)};

\node (n^k1) at (-3,-8){$(C,\cdot,n^k,\cdot,-^4)$};    \node at (0,-8){$\cdots$}; \node  (n^ks) at (3,-8){$(C,\cdot,n^k,\cdot,-^4)$};

\draw[bend right,-slim] (v_0) to (11); \draw[bend left,-slim] (v_0) to (1s);
\draw[bend right,-slim] (root.185) to (11.30); \draw[bend left,-slim] (root.-5) to (1s.150);

\draw[-slim] (11) to (21); \draw[bend left=20,-slim] (11) to (2s); 
\draw[bend right=20,-slim] (1s) to (21); \draw[-slim] (1s) to (2s);

\draw[-slim] (21) to (31); \draw[bend left=20,-slim] (21) to (3s);
\draw[bend right=20,-slim] (2s) to (31); \draw[-slim] (2s) to (3s);

\draw[-slim] (31) to (41); \draw[bend left=20,-slim] (31) to (4s);
\draw[bend right=20,-slim] (3s) to (41); \draw[-slim] (3s) to (4s); 

\draw[-slim] (n^k_11) to (n^k1); \draw[bend left=20,-slim] (n^k_11) to (n^ks);
\draw[bend right=20,-slim] (n^k_1s) to (n^k1); \draw[-slim] (n^k_1s) to (n^ks); 

\draw[-slim] (n^k1.150)  .. controls (-5,-4) and (-5,0) .. (root.west);
\draw[-slim] (n^ks.30)  .. controls (5,-4) and (5,0) .. (root.east);

\draw (-5,-0.5) rectangle (5,-8.5);
\node at (-4,-0.8){\textit{main subgame}};
\draw[brace] (5.2,-1) -- (5.2,-5.2);
\node at (5.8,-3.1){\tiny $i-1$};

% checking subgame

\node (first) at (9,-6.2){$(C,a_1,i,q,-,a_1,-,-)$};
\draw[-slim] (strt_chk.-3) to (first.178);

\node (dots) at (9,-6.5){$\vdots$};

\node (n^k_11) at (6.5,-7){$\cdot$};   \node  (n^k_1s) at(11.5,-7){$\cdot$};    
\node at (9,-7.5) {$\cdots$};
\node (n^k1) at (6.5,-8){$\cdot$};   \node(n^ks) at(11.5,-8){$\cdot$};

\draw[-slim] (n^k_11) to (n^k1); \draw[bend left=20,-slim] (n^k_11) to (n^ks);
\draw[bend right=20,-slim] (n^k_1s) to (n^k1); \draw[-slim] (n^k_1s) to (n^ks);

\draw (6,-5.5) rectangle (12,-8.5);
\draw (6,-8.5) -- (6,-9); \draw (6,-9.3) -- (6,-10);
\draw (12,-8.5) -- (12,-9); \draw (12,-9.3) -- (12,-10);
\draw (6,-10) -- (12,-10);

\node (root) at (9,-8.3){$(C,\#,0,\perp,i+1,a_1,a_2,a_3)$};
\draw[bend left=15,-slim] (n^k1) to (root);
\draw[bend right=15,-slim] (n^ks) to (root);

\node (11) at (6.5,-9){$\cdot$}; \node at (9,-9) {$\cdots$}; \node (1s) at (11.5,-9){$\cdot$};
\draw[bend right=15,-slim] (root) to (11.90);
\draw[bend left=15,-slim] (root) to (1s.90);

\node at (9,-9.5){$\vdots$};

\draw[brace] (5.8,-8) -- (5.8,-6);
\node[rotate=90] at (5.4,-7){$n^k-i+1$};

\draw[brace] (5.8,-10) -- (5.8,-9);
\node at (5.2,-9.5){$i-1$};

\node at (10.5,-5.8){\textit{checking subgame}};

\end{tikzpicture}
\caption{The game graph of $\game_u$ (with only one checking
  subgame). ``$-^4$'' is short for ``-,-,-,-''.}
\label{fig_G_u}
\end{figure}

It is clear that $\ent(\game_u) \leq 1$ (place the cop on the root
$(C,\#,0,\perp,-,-,-,-)$ and wait until the robber reaches a leaf of
the resulting DAG) and $\dpw(\game) \leq 1$ (place one cop on
the root and capture the robber with the other cop on
the resulting DAG). Notice that we are still considering the special case where~$M$ is deterministic.
Obviously, $\game_u$ can be constructed from a given input $u \in
\Gamma^*$ in polynomial time.

Now consider the general case, where~$M$ is not necessarily
deterministic. We let Constructor play the role of the existential
player and Verifier the role of the universal player. As before,
Constructor writes symbols of the current configuration (now including
existential choices) and Verifier checks that the current
configuration can follow the previous one.  However, if we let Constructor
check universal choices of Verifier in the same way (by privately
remembering a place in the previous configuration), the reduction to
the games does not work. Indeed, it can happen that~$M$ accepts~$u$,
but Constructor has no winning strategy: he does not know which place
in which configuration he should remember. For this reason, we
explicitly remember the last choice of Verifier in the position of the game.

Without loss of generality we can assume that each non-terminal
configuration of~$M$ has exactly two successor configurations. 
If there is a configuration~$C$ with just a single successor configuration, then we add a 
default successor to~$C$ which leads to acceptance if~$C$ is universal
and which leads to rejectance if~$C$ is existential. If there is a configuration 
with $b > 2$ successors, then we replace this $b$-branching by a 
binary branching configuration tree of depth~$b$ by modifying 
the transition function of~$M$ in an appropriate way. Obviously, this construction can be 
done in such a way that it merely increases the state space of~$M$ and the time bound by a 
constant factor, but not the space bound. 

Now, instead of one function~$f$, we have two functions $f_1, f_2 : A^3 \to A$, such that the 
following holds. If~$C$ is a configuration of~$M$, $s \in \{1, 2\}$ and 
$1 \leq i \leq n^k$, and the symbols number~$i$, $i+1$ and 
$i+2$ of~$C$ are $a_1$, $a_2$, $a_3$, then the symbol number~$i+1$ of the successor 
configuration number~$s$ (there are two successor configurations) of~$C$ is $f_s(a_1,a_2,a_3)$. Thus, the
main subgame~$\calS$ and every checking subgame $\calC_P$ are replaced by two
copies $\calS^s$ and $\calC_P^s$ for $s\in\{0,1\}$. Thus every position
except~$v_0$ has an additional component~$0$ or~$1$, which we make the 
first one, so a position has the form
$(s,\sigma,a,i,q,j,a_1,a_2,a_3)$. Intuitively, the previous non-{}deterministic (existential
or universal) choice is memorized in the first component of a position. 

Edges from~$v_0$ to~$\calS$ go now to both copies. Edges from the leaves of~$\calS$ to
its root go now from leaves of both subgames to the roots of both subgames (thus
introducing new cycles). If the state of the current
configuration is universal, the leaf positions now belong to Verifier,
\ie we have positions $(s,\sigma,a,n^k,q,j,a_1,a_2,a_3)$ where $\sigma = C$ if~$q$
is existential and $\sigma = V$ if~$q$ is universal. The edges are thus
$(s,P,a,n^k,j,a_1,a_2,a_3)$ to $(s',C,\#,0,j,a_1,a_2,a_3)$ for
$s,s'\in \{0,1\}$. The edges in the checking subgames are changed
analogously (without introducing new cycles, because there are no
edges from the leaves to the roots).

Imperfect information is defined as before with the additional condition that 
Constructor observes the copy of the subgame in which the
play currently takes place. 

Clearly these modifications merely increase the 
entanglement of the graph from at most~$1$ to at most~$2$ (place two
cops on both roots of~$\calS^0$ and of~$\calS^1$). The \dpathw is now at
most~$2$ (place two cops on the roots and use the third cop to capture
the robber on the resulting DAG).
\end{proof}

\begin{remark}
The (undirected) \pathw and the \treew of the game graph are also
bounded. Both~$\calS^s$ have edges only from one layer to the next one
and from the leaves to both roots. Each layer has $|\{0,1\} \times
\{C,V\} \times A \times Q| = 4\cdot|A|\cdot |Q|$ elements, so
$8\cdot|A|\cdot |Q| + 2$ cops capture the robber in $\calS^s$ by blocking both roots
and occupying one layer after another successively. In $\calC_P$ the
layers are larger and have size at most $4\cdot|A|\cdot|Q|\cdot|A|^3$
(note the last but four component~$j$ is fixed and depends only on~$P$). Hence,
$8\cdot|A|^4\cdot|Q|$ cops capture the robber there. If the robber is
visible, $k_1 = 8\cdot|A|^4\cdot|Q|$ suffice, because if the robber goes to
some~$\calC_P$, then a cop occupies~$P$ and there is no way back for
the robber from~$\calC_P$. If the robber is invisible, the cops search
every $\calC_P$ immediately after occupying~$P$. In the meanwhile, one
layer in~$\calS^s$ must remain blocked, so the cops can get along with 
$k_2 = 8\cdot|A^4|\cdot|Q| + 4\cdot|A|\cdot|Q|$ cops.
Assuming that~$M$ recognizes an \exptime{}-complete problem,we obtain that the strategy problem for
reachability games with imperfect information on graphs of \treew at
most~$k_1$ and \pathw at most~$k_2$ is $\exptime$-hard.
\end{remark}

% \begin{remark}
% Notice that the graph which we have constructed in the proof of Theorem~\ref{theorem_exptimehard} 
% is not strongly connected and that
% imperfect information cuts through different strongly connected components. 
% However, to make the graphs strongly connected it suffices
% to connect each position of Player~$1$ via an undirected edge with some dummy 
% position $\diamondsuit$ which belongs to Player~$0$
% and from which he can choose to go to a terminal position of Player~$1$ immediately. 
% This merely increases both entanglement and 
% directed path width by just $1$ and obviously does not harm the correctness of the construction.
% \end{remark}

The cases of entanglement and directed path-width at most~$1$ remain open for reachability
games, but we can solve them for \emph{sequence-{}forcing} games. 
A sequence{}-forcing condition can be described by a pair $(S, \col)$
where $\col : V \to C\subset \omega$ is a coloring of game positions by natural numbers
and $S \sq \{1, \ldots, r\}^k$ is a set of sequences of length~$k$ for
some $k \in \omega$. Player~$0$ wins an infinite play~$\pi$ of a
sequence-\mbox{}forcing game if for some $i \in\omega$ we have $\col(\pi(i))\col(\pi(i+1)) \ldots \col(\pi(i+k)) \in S$. 
Clearly if~$k$ is fixed, sequence-\mbox{}forcing games can be polynomially reduced to reachability 
games by using a memory which stores the last $k$ colors that have occurred. (Notice that this reduction may, however, 
increase the complexity of the game graph.) In particular, the strategy problem for sequence-\mbox{}forcing games 
with fixed~$k$ is in $\ptime$. On the other side, the strategy problem for sequence-forcing games with imperfect information is 
$\exptime$-hard on graphs of entanglement and directed path-width at most~$1$, already for $k = 3$. 

\begin{theorem}\label{theorem_exponentialblowup}
Sequence{}-forcing games with imperfect information on graphs of
entanglement and \dpathw 
at most~$1$ are \exptime-complete.
\end{theorem}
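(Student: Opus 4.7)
The plan is to prove both directions of \exptime-completeness separately.

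For the upper bound, I would apply the powerset construction of Section~\ref{subsec_powerset} to obtain a perfect-information sequence-forcing game on an arena of size at most $2^{|V|}$. Since $k=3$ is fixed, the sequence-forcing condition reduces in polynomial time to a reachability condition by extending the state with a memory recording the last $k$ colors seen. The resulting perfect-information reachability game is solvable in polynomial time in its own size, and thus in exponential time in the original game.

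For the lower bound, I would revisit the construction of Theorem~\ref{theorem_exptimehard}. In the deterministic case of that construction the game graph already has entanglement and directed path-width at most $1$; the increase to $2$ in the nondeterministic case came solely from \emph{duplicating} the main subgame and the checking subgames into copies $\calS^0,\calS^1$ and $\calC_P^0,\calC_P^1$, which was needed so that Constructor can observe Verifier's last universal branch. The plan is to avoid this duplication by encoding that branch in the coloring: keep a single main subgame $\calS$ and a single checking subgame $\calC_P$ for every relevant position $P$, but color every position with the last nondeterministic branch chosen by Verifier in the current configuration. Since colors are observable for Constructor, this delivers exactly the same information that the subgame index previously carried.

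After this change, every cycle in the graph still traverses the unique root of $\calS$, so deleting that root yields a DAG and therefore $\ent \leq 1$; and placing one cop permanently on the root leaves a layered DAG on which a second cop chasing the invisible robber suffices to sweep the layers in topological order, so $\dpw \leq 1$. To turn the previous reachability condition into a sequence-forcing condition of length $k=3$, I would attach self-loops to the former win/lose sinks of Constructor and to the ``inconsistent-transition'' leaf of each checking subgame, label these sinks by three distinct colors $\texttt{acc}$, $\texttt{rej}$, $\texttt{bad}$, and take $S$ to be the set of length-$3$ color sequences whose last symbol is $\texttt{acc}$. A play that reaches an accepting sink eventually satisfies $S$, while any play trapped on a $\texttt{rej}$ or $\texttt{bad}$ loop never does.

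The main obstacle I expect is verifying that, after collapsing the two copies into one and moving all branch information into colors, Verifier still has a winning response against every cheating strategy of Constructor: the at-most-once memorization discipline of the checking subgame, combined with the observability of colors, must still suffice to force a $\texttt{bad}$-colored loop whenever Constructor violates the transition relation of $M$. This requires carefully re-running the simulation argument of Theorem~\ref{theorem_exptimehard}, now with branches visible via colors rather than via the subgame index.
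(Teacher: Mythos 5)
Your upper bound is fine, but the lower bound has a genuine gap in the step where you claim to collapse the two copies $\calS^0,\calS^1$ into a single subgame by ``moving the branch into the colors.'' A coloring is a function $\col:V\to C$ on vertices, so a position whose color must record the last universal branch is necessarily split into two distinct vertices, one per branch value; the branch must moreover genuinely be part of the position, both because the winner at the final checking position depends on which $f_s$ is applied and because the successor structure of the next configuration depends on it. Hence your construction silently re-introduces exactly the duplication of Theorem~\ref{theorem_exptimehard}: there are still two roots, every configuration boundary still has edges from the leaves to both roots, and the same argument as before gives entanglement and \dpathw~$2$, not~$1$. A second warning sign: you use the sequence-forcing condition only to encode reachability of an \texttt{acc}-colored sink, so if your reduction worked it would establish \exptime-hardness of \emph{reachability} games with imperfect information on graphs of entanglement and \dpathw at most~$1$ --- a problem the paper explicitly states remains open.

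The paper's proof takes a different route precisely to get around this. It keeps both copies $\calS^0,\calS^1$ but reroutes \emph{all} edges from the last layers back to the roots through a single new bottleneck position~$0$ (and does the same at configuration boundaries inside the checking subgames); one cop on~$0$ then acyclifies the whole graph, giving $\ent\le 1$ and small \dpathw. The price is that a single vertex~$0$ cannot remember which player is entitled to choose the next branch, and this is where the sequence-forcing condition is used in an essential, non-trivial way: positions~$0$,~$1$,~$2$ get colors $0,1,2$, last-layer positions with universal state get color~$1$ while all others get color~$0$, and $S=\{(0,0,1)\}$ forces Verifier to hand control of the branch choice to Constructor exactly when the current state is existential. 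Your proposal treats the sequence-forcing condition as a disposable wrapper around reachability and therefore misses the one idea that makes the drop from complexity~$2$ to complexity~$1$ possible.
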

\begin{proof}
We modify the proof of Theorem \ref{theorem_exptimehard} as follows. From the nodes on level $2 \cdot n^k + 1$ of
$\calS^s$ for $s\in\{0,1\}$ we do not allow moves directly back to the
roots, but we redirect all edges to a single (new) position $0$, which
is common for both~$\calS^s$, belongs to 
Verifier and has color~$0$. From this position, Verifier may move to
position~$2$, which belongs to Constructor and has color~$2$, or to
position~$1$, which belongs to Verifier and has color~$1$. From~$0$
Constructor chooses whether to proceed in $\calS^0$ or in 
$\calS^1$ and from~$1$ Verifier makes this choice. So as
Constructor does not notice where the play proceeds in the main
subgame or in some checking subgame, the same
construction is performed in the checking subgames at places where the
configurations of~$M$ change and imperfect information is defined accordingly. 
All old positions obtain color~$0$ except for positions
$(s,\sigma,a,i,q,-,-,-,-)$ on the last levels of~$\calS^s$ where~$q$
is universal: they are colored with~$1$.

Now, $S = \{(0, 0, 1)\}$, that means, the unique sequence that Constructor wants to 
enforce is $(0, 0, 1)$. This forces Verifier into giving control back
to Constructor if the state in the recent configuration is
existential. Then the proof of Theorem \ref{theorem_exptimehard}
carries over. Note that a player still wins if his opponent has move,
but is unable to do it, in particular, the players win at their old
winning positions.

If a cop occupies position~$0$ in the modified game, the game graph
becomes acyclic, so the entanglement of the whole graph is~$1$ and its
\dpathw is~$2$.
\end{proof}

Finally, if the we consider acyclic game graphs, the strategy problem for imperfect information reachability games 
is $\pspace$-complete. Notice that acyclic graphs are precisely those having DAG-width $1$.

\begin{theorem}\label{theorem_pspacecomplete}
The strategy problem for reachability games with imperfect information on acyclic graphs is $\pspace$-complete.
\end{theorem}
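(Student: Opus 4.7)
The plan is to prove $\pspace$-completeness by separately establishing membership and hardness, both via the identity $\pspace = \aptime$ from Lemma~\ref{lemma_complexity}.

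For $\pspace$ membership, I would describe an alternating polynomial-time algorithm that simulates the powerset game $\olgame$ on the fly, storing only the current knowledge state $\olv \sq V$ as a polynomial-size bit vector. Starting from $\{v_0\}$, at each step the algorithm existentially guesses an action if $\olv \sq V_0$ and universally guesses it otherwise; it then universally guesses the next knowledge state $\olw = \Post_{[a]}(\olv)\cap [u]$ for some $u \in \Post_{[a]}(\olv)$, reflecting Player~1's private choice of the actual position. Since $G$ is acyclic, every vertex of $\olw$ is a one-step successor of some vertex of $\olv$, so the maximum remaining path length in the knowledge state strictly decreases each round, bounding every simulated play by $|V|$ steps. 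Acceptance is declared as soon as the knowledge set lies entirely within the reachability target (which, being observable, is closed under $\sim^V$). Hence the algorithm uses alternating polynomial time and the problem is in $\aptime = \pspace$.

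For $\pspace$-hardness, I would reduce from an arbitrary language $L \in \pspace = \aptime$ recognized by an alternating Turing machine $M$ running in polynomial time $p(n)$. For an input $u$, I would adapt the construction in the proof of Theorem~\ref{theorem_exptimehard}: Constructor writes successive configurations of $M$ on $u$ symbol by symbol, Verifier may secretly memorize three consecutive symbols and check them against the next configuration, and existential/universal branching of $M$ is encoded as in the general case at the end of that proof. The crucial modification is that only $p(|u|)$ many configuration blocks are needed, so instead of cycling from the leaves of the main subgame back to its root, I would chain $p(|u|)$ fresh copies of the main subgame (each with its own checking subgames) into a linear sequence. The reachability target is placed on positions corresponding to accepting configurations appearing in the last block. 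The resulting game graph is acyclic, of polynomial size, and Constructor wins iff $M$ accepts $u$, which gives the hardness.

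The main obstacle is the bookkeeping in the hardness reduction: I must verify that after unfolding no cycles sneak back in through the checking subgames, that the imperfect-information relation in each layer still prevents Constructor from noticing whether a checking subgame is currently active, and that the reachability target is properly observable so the winning condition is well-defined under imperfect information. Correctness of the simulation argument itself transfers essentially verbatim from the proof of Theorem~\ref{theorem_exptimehard}, so the core work is purely structural.
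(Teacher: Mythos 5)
Your proposal matches the paper's proof in both directions: membership via an alternating polynomial-time simulation of the powerset game, storing only the current knowledge set and using acyclicity (through Lemma~\ref{lemma_powerset}) to bound every simulated play by $|V|$ rounds, and hardness by taking the construction of Theorem~\ref{theorem_exptimehard} for an \aptime machine and unrolling the cycle from the leaves of the main subgames back to their roots into a chain of polynomially many fresh copies, so the graph becomes acyclic. The only deviation is that you place the reachability target solely in the last block, which additionally requires making accepting configurations absorbing so that Constructor can legally pad out the remaining blocks after $M$ halts; the paper sidesteps this by declaring Constructor the winner at \emph{any} position carrying an accepting state, exactly as in Theorem~\ref{theorem_exptimehard}.
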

\begin{proof}
First we prove the membership in $\pspace$. Let~$\game$ be a game on
an acyclic graph with imperfect information and let~$v_0$ be the
initial position. The idea is that carrying out the powerset construction 
on an acyclic graph~$G$ we again obtain an acyclic graph~$\olG$ where
by Lemma \ref{lemma_powerset}, the paths in~$\olG$ are not longer than the paths in~$G$, so we can solve
the reachability game on~$\olG$ by an~$\aptime$ algorithm. Starting from~$\{v_0\}$, we 
proceed as follows. Given a position $\olv \in \olV$ in the corresponding game~$\olG$ with perfect 
information, if $\olv \in \olV_0$, then the existential player guesses
a successor of~$\olv$ and if $\olv \in \olV_1$, then the existential player chooses a successor position of~$\olv$. 
If the computation reaches a leaf node in~$\olV_1$, the algorithm
accepts and if the computation reaches a leaf node 
in~$\olV_0$, the algorithm rejects. The construction of a successor position of some
position~$\olv$ can obviously be done in polynomial time. Moreover, if 
$\olpi = \olv_0, \olv_1, \ldots,  \olv_k$ is any path in~$\olG$, then according to Lemma \ref{lemma_powerset}, 
there is a path $\pi = v_0, v_1, \ldots, v_k$ with 
$v_i \in \olv_i$ for $i \in \{ 0, \ldots, k\}$. Since~$G$ is acyclic,
$k \leq n$. So, the computation stops after at most~$n$ steps.

Conversely, let $L \in \pspace$ be some decision problem. Then, according to Lemma~\ref{lemma_complexity}, 
there is an alternating Turing machine~$M$ with only one tape and time bound~$n^k$ for some $k \in \omega$ that recognizes~$L$. 
We use the same construction as in the proof of Theorem~\ref{theorem_exptimehard}. Since~$M$ has time bound~$n^k$ and only a 
single tape,~$M$ has also space bound~$n^k$. So we can describe configurations of~$M$ in the very same way as in the
proof of Theorem~\ref{theorem_exptimehard} and we can construct a game with positions as before. However, the essential difference 
here is that at a leaf position of~$\calS^s$, the next move does not lead back to the 
top of $\calS^s$ or~$\calS^{1-s}$ (for $s\in\{0, 1\}$), but it leads
to the roots of new copies of~$\calS^0$ and~$\calS^1$. This chain of
copies of~$\calS^s$ stops after~$n^k$ steps. 

If some input~$u$ is accepted by~$M$, then Constructor can prove this by constructing at most~$|u|^k$ configurations,
so winning strategies carry over between the game constructed in the proof of Theorem~\ref{theorem_exptimehard} and 
the game constructed here. Moreover, since the graph we have constructed is acyclic by definition, the proof
is finished.
\end{proof}

%%%%%%%%%%%%%%%%%%%%%%%%%%%%%%%%%%%%%%%%%%%%%%
% SECTION
%%%%%%%%%%%%%%%%%%%%%%%%%%%%%%%%%%%%%%%%%%%%%%

\section{Bounded imperfect information}\label{sec_bpi}

We turn to the case where the size of the equivalence classes of
positions is bounded. We show that \treew and entanglement become
unbounded after the application of the powerset construction, but
non-{}monotone \dagw and \dpathw do not. The more difficult case of
\dagw is treated in Sections~\ref{sec_simulated_pg} and~\ref{sec-parity}.

\subsection{Negative results}\label{subsec_neg_results}

The first observation is that bounded 
tree-width may become unbounded when applying the powerset construction. 
Afterwards we will see, that the same result holds for entanglement.

\begin{proposition}\label{prop_unbounded_tw}
For every~$n > 3$, there are games~$\game^n$, with bounded
imperfect information and \treew and \dpathw~$1$, and \dagw, \kellyw and entanglement~$2$
such that the corresponding powerset games~$\olgame^n$ have unbounded \treew.

\end{proposition}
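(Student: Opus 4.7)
The plan is to construct $\game^n$ as a forest-like game of linear size whose position-equivalence classes all have size at most $2$, but whose powerset graph contains the $n \times n$ undirected grid as a subgraph. Bounded imperfect information will be enforced by pairing each position with a single ``twin''. Unbounded powerset tree-width will be obtained by arranging the action equivalence classes $\sim^A$ (which, crucially, the definition allows to be arbitrarily large) so that reachable knowledge states of the form $\{c^0_{ij},c^1_{ij}\}$ form the grid, while the actual edges that realise the grid transitions stay confined to a forest.

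Concretely, I would take the vertex set to consist of an initial Player-$1$ vertex $v_0$, twin pairs $\{c^0_{ij},c^1_{ij}\}$ for $1\le i,j\le n$ forming the position-equivalence classes, and auxiliary Player-$1$ branching vertices inserted between consecutive grid positions. From $v_0$, I would include many Player-$1$ actions grouped into a single class $[\alpha]$ leading to the various twins, so that after one Player-$1$ move Player $0$'s knowledge is the mixed pair $\{c^0_{11},c^1_{11}\}$. Each subsequent ``right'' and ``up'' move would be implemented by a pair of $\sim^A$-equivalent actions, one living in each layer $k\in\{0,1\}$, chosen so that the post-set of $\{c^0_{ij},c^1_{ij}\}$ is exactly $\{c^0_{i\pm 1,j},c^1_{i\pm 1,j}\}$ or $\{c^0_{i,j\pm 1},c^1_{i,j\pm 1}\}$. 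Checking the complexity of $\game^n$ then amounts to observing that the underlying undirected graph is a disjoint union of short paths rooted at $v_0$ (hence $\tw=1$ and $\dpw=1$), and that $2$ cops suffice in the DAG-width, Kelly-width and entanglement games by holding one cop on $v_0$ (or the active pair) and sweeping with the other.

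For the powerset analysis, I would verify by induction on the grid distance that every pair $\{c^0_{ij},c^1_{ij}\}$ is reachable as an $\olV$-vertex and that the $2n(n-1)$ grid edges appear as $\olE$-edges between these pairs. Since tree-width is monotone under subgraph, the $n\times n$ grid subgraph of $\olG^n$ then forces $\tw(\olgame^n)\ge n$, which is unbounded. The main obstacle, and the place where care will be needed, is to synchronise $\sim^V$ and $\sim^A$ so that the post-set of a mixed pair under a grid-direction action class always lands inside a single equivalence pair (and not in two different pairs, which would collapse the knowledge back to a singleton and destroy the grid), while simultaneously ensuring that the underlying undirected graph picks up no cycles that would push $\tw$ above $1$. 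I expect this to force the introduction of the intermediate Player-$1$ vertices to ``refresh'' the pairing between consecutive grid moves, with the forest property preserved because each such branching vertex has degree $O(1)$ and lies on a tree path rather than a cycle.
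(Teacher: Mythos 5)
Your overall target --- size-$2$ position classes, arbitrarily large action classes, and a large grid inside the powerset graph --- is the same as the paper's, but the specific mechanism you propose cannot work, and the step you defer (``synchronising $\sim^V$ and $\sim^A$'' so that each pair maps onto a single neighbouring pair in every grid direction) is exactly where it breaks. A powerset edge from $\olv$ to $\olw$ exists only if \emph{every} element of $\olw$ receives a $G$-edge, under some action of the chosen class, from an element of $\olv$. So if the $n\times n$ grid is to live on the two-element states $P_{ij}=\{c^0_{ij},c^1_{ij}\}$, then each of the $2n(n-1)$ grid adjacencies forces at least two distinct edges of $G$ (they have distinct heads) running between the corresponding pairs, and these edge sets are pairwise disjoint for distinct adjacencies. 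That is at least $4n(n-1)>2n^2$ edges of $G$ among the $2n^2$ twin vertices; subdividing these connections with auxiliary vertices adds vertices and edges in equal number, so the cyclomatic number is unchanged and the underlying undirected graph has $\Omega(n^2)$ independent cycles --- the claimed $\tw(\game^n)=1$ is unattainable. Worse, in the natural realisations (e.g.\ a shared branching vertex $x_{ij}$ fed by both twins and feeding both twins of each neighbour) the sets $\{c^0_{ij},c^1_{ij},x_{ij}\}$ are connected branch sets realising the grid as a minor of $G$ itself, so the construction gains nothing over the original graph.

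The paper's proof sidesteps this trap: there $G$ is a disjoint union of $n$ undirected paths (plus a root), the two members of each $\sim^V$-class lie on \emph{different} paths and are not adjacent in $G$, and the high-tree-width structure in $\olG$ (a wall) is formed jointly by the singleton states, which inherit the path edges of $G$, and the two-element states, which appear as brand-new connector vertices between paths that are completely disconnected in $G$. The connectivity gain comes from a knowledge state straddling two components of $G$, not from additional edges of $G$. To salvage your approach you would have to let singleton states carry roughly half of the grid's adjacencies and use the mixed pairs only as degree-bounded rungs between otherwise disconnected pieces; as long as every grid vertex is a full pair that must be entered in all four directions, the counting above defeats any forest-shaped $\game^n$.
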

\begin{proof}
The game graph of~$\game^n$ is a disjoint union of~$n$ undirected
paths of length~$n$ together with another vertex~$v_0^n$ and directed
edges from~$v^n_0$ to every other vertex. Imperfect information
connects vertices from neighbor paths.
The graph~$\graphG^4$ (without~$v^4_0$) is shown in Figure~\ref{fig_half_grid} (on the
left). Formally for any even natural number~$n >3$, let~$\calG^n = (V^n,V^n_0,E^n,v^n_0,
\sim^{V,n},\sim^{A,n},\Omega)$ be the following game:
\begin{itemize}
\item~$V^n = V^n_1 = \{v^n_0\} \cup \{ (i, j) \,|\, 1 \le i, j \le n\}$\,,
      \ie~$V^n_0 = \0$\,;

\item actions play no role and we do not consider them;

\item~$E^n = \{\big(v^n_0,(i,j)\big) \mid 1 \le i, j \le n\}\cup \{\big((i,j),
(i+1,j)\big), \big((i+1,j),(i,j)\big) \mid 1 \le i, j \le n\}$\,;

\item $\sim^{A,n} = V^n \times V^n$ (Player~$0$ does not distinguish
  any actions), and for~$i<n$, 
  \begin{itemize}
   \item if~$i$ is odd and~$j$ is even, then~$(i,j)\sim^{V,n} (i+1,j)$\,,
   \item if~$i$ is even and~$j$ is odd, then~$(i,j)\sim^{V,n} (i+1,j)$\,;
  \end{itemize}
\item $\Omega = \0$ (the winning condition does not play any role here). 
\end{itemize}
The values of directed measures for~$\calG^n$ are clear. For
entanglement the strategy is to chase the robber with one cop until he
goes to the right. Then the play proceeds in rounds. In a round one
cop (at the beginning the first cop) is a left bound for the robber
movements. The other cop chases the robber until he goes to the
right. Continuing in this way two cops capture the robber.

The powerset graph~$\olG^n$ has a structure similar to the Gaifman
graph of~$\game^n$. It has the same paths whose vertices have the form~$\{(i,j)\}$,
for~$1
\le i, j \le n$, and are now connected by a gadget consisting of a new vertex
$\big\{(i,j),(i\pm 1,j)\big\}$ (hereby, $\pm 1$ depends on parities of~$i$
and~$j$) and
directed edges going from that vertex to the row above and to the row below. A
connection is in an odd column if the lower row is odd and in an even column if
the lower row is even (starting with the odd row~$1$), see
Figure~\ref{fig_half_grid} (the graph on the right).

Formally,~$\olcalG^n = (\olV^n,\olV^n_0,\olE^n,\olv^n_0)$ (we omit
actions, the absent imperfect information and the winning condition
that play no role) where 
\begin{itemize}
 \item the positions are defined by \[\olV^n =
\big\{ \{v^n_0\}\big \} \cup \big\{ \{(i,j)\} \mid (i,j)\in V^n\big\} \cup 
\big\{ \{(i,j),(i+1,j)\} \mid i+j = 1 \mod 2 \big\}\,,\]

\item no positions belong Player~0:~$\olV^n_0 = \0$\,,

\item the moves are 
  \begin{align*}
  \olE^n = 
&\Big\{
  \big(v^n_0,\{(i,j)\}\big) \mid 1 \le i, j \le n\} \\
  \cup &\big\{
          \big(v^n_0,\{(i,j),(i+1,j)\}\big) 
	 \mid i+j = 0 \mod 2
        \big\} \\
  \cup &\big\{ 
	  \big( \{(i,j), (i+1,j)\}, \{(i+b, j+c)\} \big) \mid b\in\{0,1\},
c\in\{1,-1\}\text{, and } \\
	 &\qquad \{(i+b,j+c)\}\in\olV^n,(i,j)\sim^{V,n}(i+1,j)
	\big\}
\Big\}\,,
  \end{align*}
	and

\item the starting position is~$\olv^n_0 = \{v^n_0\}$\,.
\end{itemize}

We show that~$\olG^n$ has an~$(m\times m)$-grid as minor where~$m = n$ if
$n$ is even and~$m = n-1$ if~$n$ is odd. We cut off~$\olv^n_0$, and if~$n$ is
odd, we cut off the~$n$th column. Further, we delete edges 
$\big( \{(i,j),(i+1,j)\},  \{(i+b,j+1)\} \big)$ if~$i$ is odd and 
$\big( \{(i,j),(i-1,j)\},  \{(i+b,j+1)\} \big)$ if~$i$ is even. The result is
shown in Figure~\ref{fig_half_grid} (the graph on the right). Now, the
directions of edges are forgotten, \ie instead of edges 
$\big\{ \{(i,j),(i+1,j)\},  \{(i+b,j+c)\} \big\}$ we have edges 
$\big\{ \{(i,j),(i+1,j)\},  \{(i+b,j+1)\} \big\}$. We obtain a
\emph{wall-\mbox{}graph} defined in~\cite{Kreutzer09} where it is shown that
such graphs have high \treew. Indeed, we contract edges~$\big( (i,j),(i,j+1) \big)$
for all~$i$ and add~$j$.
The result is an~$(m/2 \times n-1)$-grid, from which it is easy to obtain an
$(m\times m)$-grid by further edge contractions. It is well known that
the \treew of an~$(m\times m)$-grid is~$m$.
\end{proof}

Note that if we consider the whole game structure, \ie the Gaifman graph of the
given game, it is almost of the same shape as the powerset graph and its \treew is
unbounded as well. In fact, we will see in
Corollary~\ref{cor_bounded_Gaifman} that the \treew of a
powerset graph is bounded in the \treew of the Gaifman graph of the
given game.

\begin{figure}

\begin{tikzpicture}[bend angle=45,auto, scale=0.4,thick]
\tiny
\node (offset) at (-9,0) {};

\node (11) at (-6, -6) {(1, 1)};
\node (12) at (-3, -6) {(1, 2)};
\node (13) at (0, -6) {(1, 3)};
\node (14) at (3, -6) {(1, 4)};

\draw [-] (11.east) to (12.west);
\draw [-] (12.east) to (13.west);
\draw [-] (13.east) to (14.west);

\node (21) at (-6, -4) {(2, 1)};
\node (22) at (-3, -4) {(2, 2)};
\node (23) at (0, -4) {(2, 3)};
\node (24) at (3, -4) {(2, 4)};

%\draw [-] (41.north) to (12.south);

\draw [-] (21.east) to (22.west);
\draw [-] (22.east) to (23.west);
\draw [-] (23.east) to (24.west);

\draw [ind] (12.north) to (22.south);
\draw [ind] (14.north) to (24.south);

\node (31) at (-6, -2) {(3, 1)};
\node (32) at (-3, -2) {(3, 2)};
\node (33) at (0, -2) {(3, 3)};
\node (34) at (3, -2) {(3, 4)};

% \draw [-] (24.north) to (31.south);

\draw [-] (31.east) to (32.west);
\draw [-] (32.east) to (33.west);
\draw [-] (33.east) to (34.west);

\draw [ind] (21.north) to (31.south);
\draw [ind] (23.north) to (33.south);

\node (41) at (-6, 0) {(4, 1)};
\node (42) at (-3, 0) {(4, 2)};
\node (43) at (0, 0) {(4, 3)};
\node (44) at (3, 0) {(4, 4)};

%\draw [-] (43.north) to (14.south);

\draw [-] (41.east) to (42.west);
\draw [-] (42.east) to (43.west);
\draw [-] (43.east) to (44.west);

\draw [ind] (32.north) to (42.south);
\draw [ind] (34.north) to (44.south);

% ------ 2nd picture ------%

\node (11) at (-6+13, -6) {(1, 1)};
\node (12) at (-3+13, -6) {(1, 2)};
\node (13) at (0+13, -6) {(1, 3)};
\node (14) at (3+13, -6) {(1, 4)};

\draw [-] (11.east) to (12.west);
\draw [-] (12.east) to (13.west);
\draw [-] (13.east) to (14.west);

\node (21) at (-6+13, -4) {(2, 1)};
\node (22) at (-3+13, -4) {(2, 2)};
\node (23) at (0+13, -4) {(2, 3)};
\node (24) at (3+13, -4) {(2, 4)};

\draw [-] (21.east) to (22.west);
\draw [-] (22.east) to (23.west);
\draw [-] (23.east) to (24.west);

\node (1222) [sep] at (-6+13, -5) {$\cdot$}; 
\draw [-slim] (1222.north) to (21.south);
\draw [-slim] (1222.south) to (11.north);

\node (1424) [sep] at (0+13, -5) {$\cdot$}; 
\draw [-slim] (1424.north) to (23.south);
\draw [-slim] (1424.south) to (13.north);

\node (31) at (-6+13, -2) {(3, 1)};
\node (32) at (-3+13, -2) {(3, 2)};
\node (33) at (0+13, -2) {(3, 3)};
\node (34) at (3+13, -2) {(3, 4)};

\draw [-] (31.east) to (32.west);
\draw [-] (32.east) to (33.west);
\draw [-] (33.east) to (34.west);

\node (2131) [sep] at (-3+13, -3) {$\cdot$}; 
\draw [-slim] (2131.north) to (32.south);
\draw [-slim] (2131.south) to (22.north);

\node (2333) [sep] at (3+13, -3) {$\cdot$}; 
\draw [-slim] (2333.north) to (34.south);
\draw [-slim] (2333.south) to (24.north);

\node (41) at (-6+13, 0) {(4, 1)};
\node (42) at (-3+13, 0) {(4, 2)};
\node (43) at (0+13, 0) {(4, 3)};
\node (44) at (3+13, 0) {(4, 4)};

\draw [-] (41.east) to (42.west);
\draw [-] (42.east) to (43.west);
\draw [-] (43.east) to (44.west);

\node (3242) [sep] at (-6+13, -1) {$\cdot$}; 
\draw [-slim] (3242.north) to (41.south);
\draw [-slim] (3242.south) to (31.north);

\node (3444) [sep] at (0+13, -1) {$\cdot$}; 
\draw [-slim] (3444.north) to (43.south);
\draw [-slim] (3444.south) to (33.north);

\end{tikzpicture}
\caption[Game graph~$\graphG^4$ and a subgraph of its
powerset graph~$\olcalG^4$\,.]{Game graph~$\graphG^4$ (without~$v_0$) and a
subgraph of its
powerset graph~$\olG^4$\,.}
\label{fig_half_grid}
\end{figure}
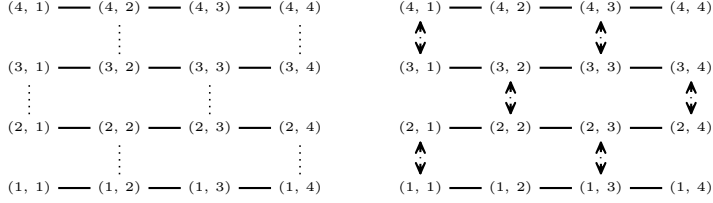

\begin{proposition}\label{prop_unbounded_entanglement}
For every~$n > 3$, there are games~$\game^n$, with bounded
imperfect information such that~$\ent(\game^n) = 2$ and 
the corresponding powerset games~$\olcalG^n$ have unbounded entanglement.
\end{proposition}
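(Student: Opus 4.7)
The plan is to reuse the games $\calG^n$ constructed in the proof of Proposition~\ref{prop_unbounded_tw}. The strategy exhibited there shows $\ent(\calG^n) \le 2$, while $\ent(\calG^n) \ge 2$ follows from the presence of several undirected paths connected through the root $v_0^n$ in the game graph. Since the imperfect-information relation and the powerset construction are exactly the same, it remains to show that the entanglement of the powerset game $\olcalG^n$ is unbounded.

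To that end, we exploit the structural analysis already carried out in Proposition~\ref{prop_unbounded_tw}: the powerset graph $\olG^n$ admits an $(m \times m)$ undirected grid as a minor with $m = \Omega(n)$, obtained by deleting certain edges and contracting the degree-two gadget vertices of the form $\{(i,j),(i\pm 1,j)\}$. The key additional ingredient, which is precisely the same one used in Proposition~\ref{proposition_exponentialgrowth}, is the inequality of Berwanger, Dawar, Hunter and Kreutzer~\cite{BerwangerDawHunKre06}: for every digraph $H$, $\nmdw(H) \le \ent(H) + 1$, equivalently $\ent(H) \ge \nmdw(H) - 1$. Because on bidirected graphs monotonicity is for free, $\nmdw$ of an $m \times m$ undirected grid equals $\tw + 1 = m + 1$, so the grid minor has entanglement at least $m$.

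The main obstacle is then to transfer this lower bound from the grid minor back to $\olG^n$, since in general entanglement is not monotone under minors. We handle this directly by examining the specific minor operations involved. Vertex and edge deletions are unproblematic: any cop strategy on $\olG^n$ restricts to a cop strategy on an induced subgraph using at most the same number of cops. The only contraction performed is of the gadget vertices $\{(i,j),(i\pm 1,j)\}$, which are degree-two intermediate vertices; a routine simulation shows that a cop placed on the contracted image can be mimicked by a cop on the gadget vertex in $\olG^n$ (and conversely the robber's moves through such a contracted edge correspond to passing through the gadget vertex uncaught). Thus any winning cop strategy on $\olG^n$ projects to one on the grid minor without using more cops, yielding $\ent(\olG^n) \ge m = \Omega(n)$, which finishes the proof.
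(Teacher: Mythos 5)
Your plan breaks down exactly where you locate ``the main obstacle'': the grid minor exhibited in the proof of Proposition~\ref{prop_unbounded_tw} is a minor of the \emph{underlying undirected graph} of $\olG^n$ --- that proof explicitly says ``the directions of edges are forgotten'' before the contractions --- and entanglement, unlike \treew, is destroyed by this symmetrization. In $\olG^n$ the gadget vertices $\{(i,j),(i\pm 1,j)\}$ have only \emph{outgoing} edges into the two adjacent paths (their only incoming edge is from $\{v_0^n\}$), so there is no directed path from one path of singleton vertices to another. Consequently the nontrivial strongly connected components of $\olG^n$ are just the $n$ undirected paths of singletons, and the robber, after at most two moves, is confined forever to a single undirected path, where two cops win the entanglement game (the cops simply wait until he commits). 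Hence $\ent(\olG^n) = 2$ for all $n$: the claim you are trying to establish for these particular powerset graphs is not merely unproven but false, so the ``routine simulation'' projecting a cop strategy onto the grid cannot exist. The inequality $\nmdw(H) \le \ent(H)+1$ is usable in Proposition~\ref{proposition_exponentialgrowth} only because there the powerset graph is itself an undirected grid, with no direction-forgetting needed.

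This is precisely why the paper constructs a different family for this proposition: two full undirected binary trees $T_1,T_2$ of depth $n$ joined by one-way length-two paths from $T_1$ to $T_2$, with the imperfect information placed on the intermediate vertices so that the powerset construction creates connections in \emph{both} directions between corresponding tree vertices. The resulting graph is strongly connected in the relevant sense, and the lower bound $\ent(\olcalG^n) \ge n/2-1$ is then proved not via a width measure at all but by an explicit inductive robber strategy: the robber shuttles between the two trees, maintaining the invariant that the mirror images of his current subtree remain cop-free. Any salvage of a minor-based argument would require a powerset graph whose grid structure is realized by directed cycles, which the construction of Proposition~\ref{prop_unbounded_tw} does not provide.
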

\begin{proof}
The game graph of~$\game^n$ (see Figure~\ref{fig_unbounded_ent}) consists of two
disjoint copies~$T_1$ and~$T_2$ of the full undirected binary tree of
depth~$n$. From a vertex in~$T_1$, a path of length two
leads to the corresponding vertex in~$T_2$ and there are
no paths from~$T_2$ to~$T_1$. The paths from~$T_1$ to~$T_2$ are
supplied with imperfect information in such a way that in the powerset
graph there appear connections also from~$T_2$ to~$T_1$. Thus,
in~$\olcalG_n$, corresponding vertices are now connected in both
directions. 

Let~$n \in \omega$ be even. We define the
game~$\game^n = (V^n,V^n_0,E^n,v^n_0,\sim^{V,n},\sim^{A,n},\0)$
where $\sim^{A,n}$ plays no role, so we do not define it. Let~$\alpha$ be the mapping~$\{0,
1\} \to \{a, b\}$ with~$0 \mapsto a$,~$1 \mapsto b$ and let~$\beta$ be
the mapping~$\{a, b\} \to \{\overline{0}, \overline{1}\}$,~$a \mapsto
\overline{0}$,~$b \mapsto \overline{1}$. We generalize~$\alpha$ to
words:~$\alpha(u_1\dots u_n) = \alpha(u_1)\dots \alpha(u_n)$, and
analogously for~$\beta$.
The components of the game can now be defined as follows.
\begin{itemize}
\item~$V^n = V^n_1 = \{v^n_0\}\cup T_1 \cup T_2  \cup a\{a, b\}^{< n}$
  where $T_1 = 0\{0, 1\}^{< n}$ and $T_2 =  \overline{0}\{\overline{0}, \overline{1}\}^{< n} $ (so~$V^n_0 = \0$)\,;

\item~$E^n$ has edges
  \begin{itemize}
  \item~$v^n_0 \to 0$\,,
  \item~$u \to u0$ and~$u \to u1$ for any~$u \in 0\{0, 1\}^{< n-1}$\,,
  \item~$u \to u$ for any~$u \in 0\{0, 1\}^{< n}$\,,
  \item~$u \overline{0} \to u\overline{0}$ and~$u \overline{1} \to
	u\overline{1}$ for any~$u \in \overline{0}\{\overline{0},
	\overline{1}\}^{< n-1}$\,,
  \item~$u \to \alpha(u)$ and $\alpha(u) \to u$ for any~$u \in 0\{0, 1\}^{< n}$\,,
  \item~$u \to \beta(u)$ for any~$u \in a\{a, b\}^{< n}$\,;
  \end{itemize}
\item~$u \sim^n \beta(\alpha(u))$, for any~$u \in 0\{0, 1\}^{< n}$\,.
\end{itemize}

In the informal description above,~$T_1$ is induced by vertices in
$0\{0,1\}^{n-1}$ and~$T_2$ by vertices in
$\overline{0}\{\overline{0},\overline{1}\}^{n-1}$. Intermediate vertices are
those from~$a\{a,b\}^{n-1}$.

Clearly, tree cops can capture the robber on~$G^n$, so $\tw(G^n) =
2$. Let us convince ourselves that~$\ent{\game^n} = 2$. First, two
cops are needed already on the subgraph induced by~$0$ and~$00$. On the other hand,~$2$ cops
suffice to capture the robber. The strategy is to play on~$T_1$ in a
top-\mbox{}down manner. The robber choses a branch of~$T_1$ and the
cops play on that branch as on the path in the proof of
Proposition~\ref{prop_unbounded_tw}. Finally the robber is forced to visit an intermediate
vertex~$\alpha(u)$ for some~$u$. Note that the cops are placed on the robber vertex in every
move, hence when the robber is on~$\alpha(u)$,~$u$ is occupied by a cop, which
forces the robber to proceed to~$T_2$ in the next move. On~$T_2$, he is captured in the
same way as on~$T_1$.

The powerset graph~$\olG^n$ (see Figure~\ref{fig_unbounded_ent}) has~$\{0\}$
as a position and therefore also~$\{a\}$ and~$\{0, \overline{0}\}$. From
$\{0\}$, one possibility is to remain in~$\{0\}$, another is to go to~$\{a\}$.
In~$\game^n$, from~$a$, there are edges to~$0$ and to~$\overline{0}$, which are
indistinguishable, so, in~$\olG^n$, there is an edge~$\{a\} \to
\{0,\overline{0}\}$. From~$0$, the pebble can return to~$a$ and both from~$0$
and from~$\overline{0}$ it can move to~$0$, so in~$\olG^n$ we have edges
$\{0,\overline{0}\} \to \{a\}$ and~$\{0,\overline{0}\} \to \{0\}$. The
described structure is repeated in the lower levels, because from~$\{u\}$, for
$u\in 0\{0,1\}^{<n-1}$, there is an edge to~$\{u0\}$ and to~$\{u1\}$, and
analogously for~$\{\olu\}$.

Essentially,~$\olG^n$ has the same vertices as~$\game^n$. We can identify
$u\in 0\{0,1\}^{<n}$ with~$\{u\in 0\{0,1\}^{<n}\}$,~$\alpha(u)$ with
$\{\alpha(u)\}$, and~$\beta(\alpha(u))$ with~$\{\beta(\alpha(u))\}$.

It remains to prove that the entanglement of the powerset graphs is unbounded.
We adapt the proof from~\cite{BerwangerGraKaiRab12} for similar graphs and show that
$\ent(\olcalG^n) \ge n/2-1$. In the following, we identify vertices~$u$
and
$\alpha(u)$ for simplicity of explanation, which, obviously, does not change the
entanglement.

We show by induction on~$n$ that for every even~$n$, the robber can starting from vertex~$0$ or from vertex~$\ol{0}$
\begin{itemize}
    \item escape~$n/2-2$ cops and
    \item after the~$(n/2-1)$th cop enters~$\olcalG^n$,
	  \begin{itemize}
	  \item if started in~$0$, reach~$\ol{0}$, and
	  \item if started in~$\ol{0}$, reach~$0$\,.
	  \end{itemize}
\end{itemize}
This suffices to prove unboundedness, as the robber has a winning 
strategy on~$\olcalG^{n+1}$ in this case: he switches between the two subtrees of the
root.

For~$n=2$, it is trivial. Assume that the statement is true for some even~$n$
and consider the situation for~$n+2$. We need two strategies: one for~$0$ as
the starting position and one for~$\ol{0}$. By symmetry, it suffices to
describe only a strategy for~$0$. For a word~$u\in\{0,1\}^{\le n+1} \cup
\{\ol{0},\ol{1}\}^{\le n+1}$, let~$T^{u}$ be the subgraph induced by
the subtree of~$T_1$ rooted at~$u$ and by the corresponding subtree of~$T_2$. The robber can play
in a way such that the following invariant is true.
\begin{itemize}
\item[] If the robber is in~$T^{0xy}$, for~$x,y\in\{0,1\}$, and starts
from~$0y$, there are no cops on~$\{\ol{0},\ol{0x}\}$\,.
\end{itemize}
By induction, it follows from the invariant that~$\ol{0}$ and~$\ol{0y}$ are
reachable for the robber.

At the beginning, the robber goes to the (cop-\mbox{}free) subtree~$T^{000}$
via
the path~$(0,00,000)$ and plays there from~$000$ according to the strategy given
by the induction hypothesis for~$T^{000}$. Also by induction,~$\ol{000}$
remains reachable and thus so is~$\ol{0}$ via~$\ol{00}$. Either that play lasts
for ever (and we are done), or the~$(n/2 - 2)$nd cop comes to~$T^{000}$ and
the robber can reach~$\ol{000}$. While he is doing that, no cops can be placed
outside of~$T^{000}$ as the robber does not leave~$T^{000}$. 

Assume that the robber enters a tree~$T^{0xy}$, for
$x,y\in\{0,1\}$ which is free of cops (which is, in particular, the case at
the beginning). By symmetry, we can assume that~$x=y=0$. Further,
assume without loss of generality
that the robber enters~$T^{000}$ at~$000$. Either the play remains in
$T^{000}$ infinitely long (and we are done), or the~$(n/2-1)$-st cop enters
$T^{000}$ and the robber reaches~$\ol{000}$. Note that while the robber is
moving towards~$\ol{000}$, no cops can be placed outside of~$T^{000}$ as the
robber does not leave~$T^{000}$.

If the last cop is already placed, the robber goes to~$\ol{00}$ and then to
$\ol{0}$, which are not occupied by cops by the invariant, and we are done.
If the last cop is not placed yet, all cops are in~$T^{000}$, so the robber
runs along the path~$\ol{000},\ol{00},00,0,01,010$ to~$T^{010}$. Note that
the vertices~$\ol{0}$ and~$\ol{01}$ are not occupied by cops, so the invariant
is still true. The robber plays as in~$T^{000}$ and so on.
\end{proof}

\begin{figure}

\begin{tikzpicture}[bend angle=45,auto, scale=0.35]

\node (offset) at (-9,0){};

\tiny
\node (v_0) at (-6,0){$v^2_0$};

\node (0) at (-2, 0)  {$\bullet$}
	edge [loop left] (0);
\node (a) at (0, 0)  {$\bullet$};
\node (-0) at (2, 0) {$\bullet$};

\node (00) at (-6, -3) {$\bullet$}
	edge [loop left] (00);
\node (aa) at (-4, -3) {$\bullet$};
\node (-00) at (-2, -3) {$\bullet$};

\node (01) at (2, -3) {$\bullet$}
	edge [loop left] (01);
\node (ab) at (4, -3) {$\bullet$};
\node (-01) at (6, -3)  {$\bullet$};

\draw[-slim,bend left=20] (v_0) to (0.120);

\draw [-] (0.east) to (a.west);
\draw [->] (a.east) to (-0.west);
\draw [ind] (0.north east) to [bend left = 20] (-0.north west);

\draw [-] (0.south) to (00.north);
\draw [-] (0.south) to (01.north);
\draw [-] (-0.south) to (-00.north);
\draw [-] (-0.south) to (-01.north);

\draw [-] (00.east) to (aa.west);
\draw [->] (aa.east) to (-00.west);
\draw [ind] (00.north east) to [bend left = 20] (-00.north west);

\draw [-] (01.east) to (ab.west);
\draw [->] (ab.east) to (-01.west);
\draw [ind] (01.north east) to [bend left = 20] (-01.north west);

% ---- 2nd picture    -------%

\node (v_0) at (-6+17,0){$v^2_0$};

\node (0) at (-2+17, 0)  {$\bullet$}
	edge [loop left] (0);
\node (a) at (0+17, 0)  {$\bullet$};
\node (-0) at (2+17, 0) {$\bullet$};

\node (00) at (-6+17, -3) {$\bullet$}
	edge [loop left] (00);
\node (aa) at (-4+17, -3) {$\bullet$};
\node (-00) at (-2+17, -3) {$\bullet$};

\node (01) at (2+17, -3) {$\bullet$}
	edge [loop left] (01);
\node (ab) at (4+17, -3) {$\bullet$};
\node (-01) at (6+17, -3)  {$\bullet$};

\draw[-slim,bend left=20] (v_0) to (0.120);

\draw [->] (0.east) to (a.west);
\draw      (a.east) to (-0.west);
\draw [<-] (0.north east) to [bend left = 20] (-0.north west);

\draw [-] (0.south) to (00.north);
\draw [-] (0.south) to (01.north);
\draw [-] (-0.south) to (-00.north);
\draw [-] (-0.south) to (-01.north);

\draw [->] (00.east) to (aa.west);
\draw      (aa.east) to (-00.west);
\draw [<-] (00.north east) to [bend left = 20] (-00.north west);

\draw [->] (01.east) to (ab.west);
\draw      (ab.east) to (-01.west);
\draw [<-] (01.north east) to [bend left = 20] (-01.north west);

\end{tikzpicture}
\caption{Game graph $\mathcal{G}_2$ and its powerset graph $\overline{G}_{v_0}^2$.}
\label{fig_unbounded_ent}

\end{figure}
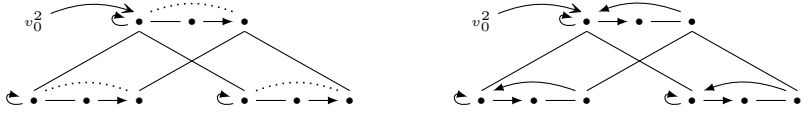

\subsection{Some positive results}\label{subsec_pos_results}

Now we prove that in contrast to tree-width and entanglement, \emph{non-\mbox{}monotone} 
DAG-width is preserved by the powerset construction. 
% Throughout the remaining part of this section, let 
% \[\mathcal{G} = (G, \sim^V, \sim^A) \,\,\text{with}\,\, G = (V, V_0, (f_a)_{a \in A}, \col)\] 
% denote a parity game with bounded imperfect information, i.e. 
% \[ \text{there is some}\,\, r \in \omega \,\,\text{such that}\,\, |[u]| \leq r \,\,\text{for all}\,\, u \in V(G).\]
% Moreover, let 
% \[ \overline{G} = (\overline{V}, \overline{V}_0, (\overline{E}_a)_{a \in A}, \overline{\col})\] 
% denote the powerset graph of $\mathcal{G}$.

\begin{proposition}\label{proposition_nonmonotonedagwidth}
Let~$\game = (V,V_0,E,v_0,\sim,\col)$ be a parity game with imperfect
information such that the size of the~$\sim$-classes is bounded by some~$r$.
If~$\nmdw(G) \le k$, then~$\nmdw(\olG) \le k \cdot r \cdot 2^{r-1}$.
\end{proposition}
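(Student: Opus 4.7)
The plan is to translate a winning $k$-cop strategy $\sigma$ for the non-{}monotone \DAGgame on $G$ into a winning strategy for $k \cdot r \cdot 2^{r-1}$ cops on $\olG$, via a \emph{shadow construction}. The underlying idea is that an $\olG$-robber position $\olv \in \olV$ is a set of size at most $r$; we simulate $|\olv|$ copies of the $G$-game in parallel, one for each ``shadow robber'' $v \in \olv$, running $\sigma$ in each shadow, and then cover each $G$-cop by all $\olG$-positions that contain it.

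Formally, I would maintain the invariant that for every $v$ in the current $\olG$-robber position $\olv$ there is a $G$-history $\pi_v$ ending at $v$ that lifts the $\olG$-history in the sense of Lemma~\ref{lemma_powerset} and is consistent with~$\sigma$. The announced $\olG$-cop placement is then
\[
  \overline U \;:=\; \{\,\olw \in \olV \mid \olw \cap U \neq \emptyset\,\},
  \qquad \text{where} \quad U \;:=\; \bigcup_{v \in \olv} \sigma(\pi_v).
\]
Since $|\sigma(\pi_v)| \le k$ and $|\olv| \le r$, we have $|U| \le kr$; and for each $u \in U$, the $\olG$-positions containing $u$ are the non-{}empty subsets of $[u]$ that contain $u$, of which there are at most $2^{r-1}$. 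Hence $|\overline U| \le k \cdot r \cdot 2^{r-1}$. When the $\olG$-robber moves along $\olv = \olv_0' \to \dots \to \olv_m' = \olw$ in a way that avoids the intersection of the current and announced $\olG$-cop sets, Lemma~\ref{lemma_powerset} supplies, for every $u' \in \olw$, a lifted $G$-path $v_0' \to \dots \to v_m' = u'$ with $v_i' \in \olv_i'$; I extend $\pi_{v_0'}$ by this path to obtain the new shadow $\pi_{u'}$, and I record $v_0'$ as the \emph{parent} of $u'$. The invariant is preserved because, if $\olv_i'$ lies outside the $\olG$-cop intersection, then $\olv_i' \cap U = \emptyset$ or $\olv_i' \cap U' = \emptyset$ (with $U,U'$ the $G$-level unions), so $v_i' \notin U \cap U'$; hence the lifted path is a legal robber move against $\sigma$ in the shadow $\pi_{v_0'}$.

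To show that the constructed strategy is winning, I would argue by contradiction: if the $\olG$-robber were to survive forever, the shadow histories form a forest in which each node has at most $r$ children (since $|\olv| \le r$), so by K\"onig's Lemma some branch is infinite, yielding an infinite $G$-play consistent with $\sigma$ and contradicting that $\sigma$ is winning on $G$. I expect the main obstacle to be setting up the shadow update rigorously: the lifted paths supplied by Lemma~\ref{lemma_powerset} must be chosen so that each extended shadow remains consistent with $\sigma$ despite the fact that the parent vertex $v_0'$ depends on the lifting, and the inclusion-based argument about cop intersections given above is what makes this work.
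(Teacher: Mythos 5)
Your proposal is correct and follows essentially the same route as the paper's proof: the paper likewise maintains, for each element of the current $\olG$-robber position, a shadow history on $G$ consistent with the given $k$-cop strategy (organized as a tree $\zeta(\olpi)$ with one branch per shadow vertex), places cops on every $\olG$-vertex meeting the union of the shadow cop sets, verifies legality of the lifted robber moves by the same inclusion argument relating the $\olG$-cop intersection to the shadow cop intersections, obtains the same $k\cdot r\cdot 2^{r-1}$ bound, and concludes via K\"onig's Lemma. No substantive difference.
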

\begin{proof}
First, we describe our proof idea informally. We follow a play on~$G$
that corresponds to a set of at most~$r$ plays on~$\olG$ which Player~0
considers possible in the parity game with imperfect information. We translate
robber moves from~$\olG$ to the plays on~$G$, look for the answers of
the cops prescribed by their winning strategy for~$G$ and translate them
back to~$\olG$ combining them into one single move.

A position in the parity game on~$\olG$ corresponds to at most~$r$ positions
in the parity game on~$G$, so if the robber occupies a vertex~$[v]$
in~$\olG$, we consider, for any~$w\in[v]$, the possibility that the robber
occupies~$w$ in~$G$. Some plays we considered until some position may
prove to be impossible when the play evolves, some
plays may split in multiple plays. For any robber move to~$w\in[v]$, the
strategy for the cops in the game on~$G$ supplies an answer, moving the cops
from~$C_w$ to~$C'_w$. All these moves are translated into a move in which the
cops
occupy precisely the vertices of~$\olG$ that include a vertex from
some~$C'_w$.
These moves of the cop player on~$\olG$ can be realized with~$k \cdot r \cdot
2^{r-1}$ cops. 

The rough argument why robber moves can indeed be translated from~$\olG$ to~$G$
is that, by Lemma~\ref{lemma_powerset}, for any path~$\olu^0,  \olu^1, \dots,
\olu^t$ in $\olG$ and for any~$u^t \in \olu^t$, there is a path~$u^0, u^1,
\dots, u^t$ in~$G$ such that~$u^i \in \olu^i$ for any~$i \in \{0, \ldots,
t\}$. It also follows that if a play is infinite on~$\olG$, then at least one corresponding play
on~$G$ is infinite as well. Hence, if we start from a winning strategy for~$k$ cops for
the game on~$G$, no strategy for the robber can be winning against~$k \cdot r \cdot
2^{r-1}$ cops on~$\olG$.

Now we give a more formal proof.
Let~$f$ be a winning strategy for~$k$ cops for the \dagw game
on~$G$ (positional strategies suffice) and let~$\olg$ be any strategy for the robber for the \dagw game
on~$\olG$. We construct a play~$\olpi_{f\olg}$
on~$\olG$ that is consistent with~$\olg$ (and depends on~$f$), but is won by the cops.
The proof is by induction on the length of the finite prefixes (\ie histories) of~$\olpi_{f\olg}$.
While constructing~$\olpi_{f\olg}$ we simultaneously construct, for every 
history~$\olpi$ of~$\olpi_{f\olg}$ of length~$i$ a finite
tree~$\zeta(\olpi)$ whose branches are histories of length at most~$i$ in
the \dagw game on~$G$, such that the following conditions hold.  Let 
\[\olpi = \perpcdot(\olC_1,\olv_1)(\olC_1,\olC_2,\olv_1)(\olC_2,\olv_2) \dots
(\olC_{i},
\olv_{i})\]
(if it ends in a cop position), or
\[\olpi = \perpcdot(\olC_1,\olv_1)(\olC_1,\olC_2,\olv_1)(\olC_2,\olv_2) \dots
(\olC_i,\olC_{i+1}, \olv_i)\]
(if it ends in a robber position).
\begin{itemize}
\item[(1)] Each history in~$\zeta(\olpi)$ is consistent with~$f$. 

\item[(2)] $\olv_j = \{v\in V \mid \text{ at level } j
\text{ of } \zeta(\olpi)\text{, there is } (C,v)\text{ or } (C,C',v)
\}$ for all $j\le i$.
Moreover, for each~$v \in V$, on each level there is at most one position of the
form~$(C, v)$ or~$(C, C', v)$. 

\item[(3)]\label{item_how_to_place} For all~$j \le i+1$,
$\olC_j = \{ \olw\in \olV \mid  \text{at level }j\text{, there is }
(C,C',v)\text{ or } (C',v) \text{ with }\\ \olw\cap C' \neq \0\}$.

\item[(4)] Let~$\olpi'$ be a prefix of~$\olpi$. If~$\zeta(\olpi')$ has depth~$r$, 
then~$\zeta(\olpi)$ has depth at least~$r$ and up to level~$r$,~$\zeta(\olpi')$ and~$\zeta(\olpi)$
coincide.
\end{itemize}

To begin the induction, consider any play prefix~$\olpi$ of length~$1$, \ie
any possible initial move~$\perp \to (\0, \olw)$ of the robber player.
With~$\olpi$ we
associate the tree~$\zeta(\olpi)$ consisting of the root~$\perp$ with
successors
$(\0, v)$ for~$v \in \olw$. Clearly, conditions (1)--(4) hold. 

For the translation of the robber moves in the induction step, consider a
play prefix \[\olpi = \perpcdot (\olC_1,\olv_1)(\olC_1,
\olC_2,\olv_1)(\olC_2,\olv_2)
\ldots (\olC_i, \olC_{i+1}, \olv_i)\] with~$i \ge 1$ and let, by induction
hypothesis,~$\zeta(\olpi)$ be constructed up to level~$2i$. Consider a robber
move from~$\olv_i$ to~$\olv_{i+1}$, so~$\olv_{i+1} \notin \olC_i$
and~$\olv_{i+1}$ is reachable from~$\olv_i$ in the graph~$\olG -
(\olC_i \cap \olC_{i-1})$. Let~$\olv^0, \olv^1, \ldots, \olv^t$ be
a path from~$\olv_i = \olv^0$ to~$\olv_{i+1} = \olv^t$ in~$\olG - (\olC_i
\cap \olC_{i-1})$. Then by Lemma~\ref{lemma_powerset}, there are~$v  \in
\olv_{i+1}$ and~$u \in \olv_i$, and a path~$u^0, u^1, \ldots, u^t$ from~$u =
u^0$ to~$v= u^t$ in~$G$ with~$u^l \in \olv^l$, for~$l = 0,\ldots, t$.
Let~$W$ the set of all such~$v$. By Conditions~(2) and~(4) for~$\zeta(\olpi)$,
there is some history~$\pi \in \zeta(\olpi)$ which ends in a position
$(C^v_i, C^v_{i+1}, u)$. So,~$C^v_i$ corresponds to~$\olC_i$ and~$C^v_{i+1}$
corresponds to~$\olC_{i+1}$ in the sense of Condition~(3). We now extend~$\pi$
to the history~$\pi\cdot (C^v_{i+1},v)$. The set of all such
histories extended in this way by~$(C^w_{i+1},w)$ for all~$w\in W$ forms the
tree~$\zeta(\olpi)$.

We have to show that each such move to~$(C^v_{i+1}, v)$ is possible, 
\ie that~$v \notin C^v_{i+1}$ and~$v$
is reachable from~$u$ in~$G - (C^v_i \cap C^v_{i+1})$. As~$\olv_{i+1} \notin
\olC_i$, by Condition~(3), we have~$\olv_{i+1} \cap C^v_{i+1} = \0$, which
implies~$v \notin C^v_{i+1}$. Now assume towards a contradiction that~$v$ is not
reachable from~$u$ in~$G - (C^v_i \cap C^v_{i+1})$. Then there is
some~$l \in \{1, \ldots, t\}$ such that~$u^l \in C^v_i \cap C^v_{i+1}$ 
(notice that~$u^0 = u \notin C^v_i \cap C^v_{i+1}$, otherwise the position
with~$u$
would not be legal as is given by induction). Then since~$u^l \in \olv^l$, we
have
$\olv^l \in \olC^v_i \cap \olC^v_{i+1}$, by~(3), which contradicts the fact that
$\olv^0,
\olv^1, \ldots, \olv^t$ is a path in~$\olG - (\olC^v_i \cap \olC^v_{i+1})$.

We check that Conditions~(1)--(4) hold after the construction.
For Conditions~(2), (3) and~(4) this is obvious.
For~(1), since all play prefixes in~$\zeta(\olpi)$ up to level~$2i$ are
consistent
with~$f$ by induction and all extensions of the play prefixes are robber
moves,
all play prefixes in~$\zeta(\olpi)$ are still consistent with~$f$.

To translate the answer of the cops, assume that we have already
constructed~$\zeta(\olpi)$
up to level~$2i+1$, for some~$i\ge 0$. Note that there are at most~$r$ branches
of 
length~$2i+1$. Let~$W$ be the set of robber vertices in the last positions of
those
branches. For any maximal branch~$\pi^v$ of~$\zeta(\olpi)$ ending with a
position 
with robber vertex~$v$ where
\[ \pi^v = \perpcdot (C^v_1,v_1)(C^v_1,C^v_2,v_1)\dots (C^v_i,v_i)\,,\]
$i\ge 1$ and~$v=v_i$, consider the set~$C^v_{i+1} = f((C^v_i,v_i))$ of
positions 
chosen to be occupied by the cops in the next move according to~$f$. 
We define~$\olC_{i+1}$ by 
\[\olC_{i+1} = \{ [u]\in \olV \mid [u] \cap \bigcup_{v\in W}C^v_i \neq \0
\}\,,\]
\ie the cops occupy those~$[w]$ that contain a vertex from some~$C^v_i$.

This yields the play prefix~$\olpi' = \olpi\cdot(\olC_{i+1}, \olC,\olv_{i+1})$
and we associate an extension~$\zeta(\olpi')$ of~$\zeta(\olpi)$ with it.
The extension is obtained by appending position~$(C^v_i,C^v_{i+1},v)$ 
to each branch of length~$2i+1$ ending with a position with robber vertex~$v$.
It is trivial that all Conditions~(1)--(4) hold.

Assume that~$\olpi_{f\olg}$ is infinite, \ie won by the robber. 
Then~$\zeta(\olpi_{f\olg})$ is infinite as well. Since~$\zeta$ is finitely
branching, by K\"onig's Lemma, there is some infinite path~$\pi$
through~$\zeta$.
By Condition~(1),~$\pi$ is a play in the \dagw game on~$G$ which is
consistent with~$f$. Since~$\pi$ is infinite, this contradicts the fact
that~$f$ is a winning strategy for the cop player. 

It remains to count the number of cops used by the cop player 
in~$\olpi_{f\olg}$. Consider any
position~$(\olC_i, \olC_{i+1}, \olv_i)$ occurring in~$\olpi_{f\olg}$.
By Condition~(2), at level~$2i$ of~$\zeta(\olpi)$, there occur at most
$|\olv_i| \le r$ many play prefixes. Each such play prefix is consistent
with~$f$, so at most~$k$ vertices are occupied by the cops. Hence, by
Condition~(3),~$|\olC_{i+1}| \le k \cdot r \cdot 2^{r-1}$ (note that there are
$2^{r-1}$ subsets of set with~$r$ elements which contain a fixed vertex).
\end{proof}

We stress that this strategy translation does not necessarily preserve monotonicity, 
as the following example shows.

\begin{example}\label{example_DAG_monotonicity_not_preserved}
We give an example where the strategy translation from Proposition~\ref{proposition_nonmonotonedagwidth}
does not preserve monotonicity of the cop strategy. Consider the graph~$G$ depicted in 
Figure~\ref{fig_monDAGwidth} and the following monotone (partial) strategy for the cops.
First, put a cop on~$v_0$. If the robber goes to $1$, put a cop on~$1$ and then move the cop from~$1$ to~$3$. 
If the robber goes to~$2$, put a cop on~$5$ and if the robber goes to~$4$, put a new cop on~$4$.
In the game on the powerset graph, consider the following play, which is consistent with
the translated cop strategy. First, the cops occupy $\{v_0\}$. Let the robber go to $\{1,2\}$ 
in which case the cops occupy $\{1,2\}$ and $\{5\}$. Now the robber goes to $\{3,4\}$, so the cop 
from $\{1,2\}$ is removed. At this moment, the vertex $\{1,2\}$ becomes 
available for the robber again, so the translated strategy is non-monotone. Notice that, nevertheless,
$\dw(\overline{G}) = 2$.
\end{example}

\begin{figure}
\begin{tikzpicture}[bend angle=45,auto]
\node (offset) at (-3,0) {};

\node (v_0) at (0,0)	{$v_0$};
\node (1)   at (-1,-1)	{$1$};
\node (2)   at (1,-1)	{$2$};
\node (3)   at (-1,-2)	{$3$};
\node (4)   at (1,-2)	{$4$};
\node (5)   at (2,-1)	{$5$};

\draw [->]  (v_0) to (1);
\draw [->]  (v_0) to (2);
\draw [->]  (1)   to (3);
\draw       (2)   to (4);
\draw       (2)   to (5);
\draw [->]  (4)   to (1);
\draw [ind] (1)   to (2);
\draw [ind] (3)   to (4);

%%%% powerset graph

\node (v_0) at (4,0)	{$\{v_0\}$};
\node (12)  at (4,-1)	{$\{1,2\}$};
\node (34)  at (4,-2)	{$\{3,4\}$};
\node (5)   at (5.5,-1)	{$\{5\}$};
\node (2)   at (7,-1)	{$\{2\}$};
\node (4)   at (5.5,-2)	{$\{4\}$};

\draw [->]  (v_0)   to (12);
\draw       (12)    to (34);
\draw [->]  (12)    to (5);
\draw       (5)     to (2);
\draw [->]  (2)     to (4);
\draw [->]  (4)     to (12);
 
\end{tikzpicture}
\caption{Monotone strategy is translated to a non-monotone one.}
\label{fig_monDAGwidth}
\end{figure}
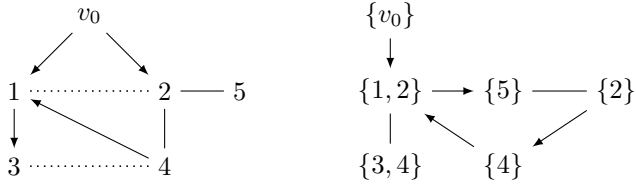

Thus our construction does not guarantee that the \dagw of
the powerset graph is bounded in the \dagw of the original graph and we cannot
conclude that a bound on \dagw allows us to solve parity games in
polynomial time. Although not actually our goal, we can 
consider even stronger conditions on the
structural complexity of given graphs. In the remaining of the section
we show that \emph{\dpathw} is bounded by a construction similar to
that from the proof of Proposition~\ref{proposition_nonmonotonedagwidth} (so the 
conclusion is also stronger). 
%Here we use that \dpathw has zero the monotonicity costs~\cite{Hunter06}.

For \dagw we give two solutions, each leading to a result that is also
of independent interest. In Section~\ref{sec_simulated_pg} we describe
how to use a technique by Fearnley and Schewe
from~\cite{FearnleySch12} for solving parity games on graphs where the
\dagw is not necessarily bounded, but the \emph{non-{}monotone} \dagw
is. Thus we obtain a stronger result: parity games with imperfect
information can be solved in \ptime on classes of graphs of bounded
\nmon \dagw. In particular, this holds for graphs of bounded
\dagw.

In the remaining sections we go still another way to prove the
latter result. Although it is more cumbersome than the solution
following Fearnley and Schewe, we also present it because it
enlightens the connection between bounded imperfect information and
graph searching. It also contains some results on
graph searching that are independent of solving parity games.

\begin{proposition}\label{proposition_directedpathwidth}
Let~$\game = (V,V_0,E,v_0,\sim,\col)$ be a parity game with imperfect
information in which the size of the~$\sim$-classes is bounded by some~$r$.
If~$\dpw(G) \le k$, then~$\dpw(\olG) \le k \cdot 2^{r-1}$.
\end{proposition}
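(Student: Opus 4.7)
The plan is to run the same simulation idea as in the proof of Proposition~\ref{proposition_nonmonotonedagwidth}, but adapted to the directed path-width game. Because the robber is invisible there, the cops already maintain a single set of possible robber positions, so the construction is conceptually much cleaner: there is no need for a branching tree of simulated plays, and monotonicity on~$\olG$ will come for free at the end by Hunter's theorem~\cite{Hunter06} (\dpathw has monotonicity costs $0$). We therefore start from a monotone winning strategy~$f$ for $k+1$ cops in the directed path-width game on~$G$, and we plan to define a strategy $\olf$ on~$\olG$ by simulation, argue that it captures the invisible robber, and then apply Hunter's theorem to $\olG$ to monotonise.

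The strategy $\olf$ is defined as follows. In parallel with the play on $\olG$, we run a shadow play on $G$ using $f$, initialised with $R_0 = \bigcup \olV$. Whenever $f$ prescribes cop placement $U_{i+1} \sq V$ in the shadow, we place the cops on $\olG$ at
\[ \olU_{i+1} \;:=\; \{\, \olv \in \olV \mid \olv \cap U_{i+1} \neq \0 \,\}. \]
For each $v \in V$, the positions $\olv \in \olV$ containing $v$ are subsets of $[v]$ containing $v$, of which there are at most $2^{r-1}$. Hence $|\olU_{i+1}| \leq |U_{i+1}| \cdot 2^{r-1}$, which will give the claimed bound on the cop number.

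The key invariant, proved by induction on~$i$, is $\bigcup \olR_i \sq R_i$, where $\olR_i$ is the robber-set on $\olG$ generated by $\olf$ and $R_i$ is the robber-set in the shadow on~$G$. The base case $\olR_0 = \olV$ and $R_0 = \bigcup \olV$ is immediate. For the inductive step, take any $\olv \in \olR_{i+1}$ and any $v \in \olv$. From $\olv \notin \olU_{i+1}$ we get $\olv \cap U_{i+1} = \0$ and hence $v \notin U_{i+1}$. Also there is a path $\olu^0, \olu^1, \ldots, \olu^t = \olv$ in $\olG - (\olU_i \cap \olU_{i+1})$ with $\olu^0 \in \olR_i$; Lemma~\ref{lemma_powerset} lifts this to a path $u^0, \ldots, u^t = v$ in $G$ with $u^j \in \olu^j$. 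For each $1 \leq j \leq t$, the condition $\olu^j \notin \olU_i \cap \olU_{i+1}$ forces $\olu^j$ to be disjoint either from $U_i$ or from $U_{i+1}$, so in particular $u^j \notin U_i \cap U_{i+1}$. Thus the lifted path avoids $U_i \cap U_{i+1}$ in $G$; combined with $u^0 \in \bigcup \olR_i \sq R_i$ from the induction hypothesis, this yields $v \in R_{i+1}$.

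Since $f$ is winning, $R_T = \0$ for some~$T$, and the invariant then gives $\olR_T = \0$, so $\olf$ captures the invisible robber on~$\olG$ with at most $(k+1) \cdot 2^{r-1}$ cops. Applying Hunter's theorem to $\olG$ converts this to a monotone winning strategy with the same cop number, yielding the stated bound on $\dpw(\olG)$. The main obstacle is the cop-avoidance transfer in the inductive step, where the case split on whether $\olu^j$ misses $\olU_i$ or $\olU_{i+1}$ is essential; everything else is routine cardinality arithmetic plus the invocation of Lemma~\ref{lemma_powerset} and Hunter's result.
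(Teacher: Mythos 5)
Your simulation is the same one the paper uses: the identical cop-placement rule $\olU_{i+1}=\{\olv\in\olV\mid \olv\cap U_{i+1}\neq\0\}$, the identical counting argument (at most $2^{r-1}$ positions of $\olV$ contain a fixed $v$), and the identical key invariant $\bigcup\olR_i\sq R_i$, proved exactly as you do by lifting a path with Lemma~\ref{lemma_powerset} and transferring cop-avoidance coordinatewise. The one genuine divergence is how monotonicity on $\olG$ is obtained. The paper proves directly that the translated play is itself monotone: if a cop were removed from some $\olv\in\olC_i\setminus\olC_{i+1}$ reachable from $\olw\in\olR_i$ along an edge, then by the structure of Reif's construction every $y\in\olv$ has a predecessor $x\in\olw$, and since $\olw\sq R_i$ by the invariant and $\olv$ meets $C_i\setminus C_{i+1}$ by the placement rule, the shadow play on $G$ would already violate monotonicity of $f$. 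You instead accept a possibly non-monotone capture on $\olG$ and invoke the Bar\'at--Hunter result that directed path-width has monotonicity costs zero; this is legitimate (the paper records that fact in its preliminaries) and shortens the argument, at the price of leaning on a nontrivial external theorem and losing the information that this particular translation already preserves monotonicity --- a point the paper cares about, since the failure of exactly this preservation for DAG-width (Example~\ref{example_DAG_monotonicity_not_preserved}) is what forces Sections~\ref{sec_simulated_pg} and~\ref{sec-parity}. One bookkeeping remark: under the paper's convention that $\dpw(G)\le k$ means $k+1$ cops win, your count yields $(k+1)\cdot 2^{r-1}$ cops and hence $\dpw(\olG)\le (k+1)\cdot 2^{r-1}-1$, which is marginally weaker than the stated $k\cdot 2^{r-1}$; the paper's own proof glosses over the same $+1$, and the discrepancy is harmless for the polynomial-time corollary.
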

\begin{proof}
Let~$f$ be a monotone winning strategy for~$k$ cops in the \dpathw game
on~$\game$ and let 
\[\pi = \perpcdot (C_0, C_0, R_0) (C_0, C_1, R_1)\ldots (C_{n-1}, C_n, R_n)\] 
be the unique play which is consistent with~$f$. It is finite, as~$f$
is winning. Recall that the \dpathw game is,
essentially, a one player game and there is a bijection between strategies and
plays, so it suffices to construct a (not necessarily monotone) play 
\[\olpi = \{\perp \} \cdot (\olC_0, \olC_0, \olR_0) (\olC_0, \olC_1, \olR_1)
(\olC_1,
\olC_2,
\olR_2) \ldots\] 
of the game on~$\olG$ that is won by the cops where, for  all
$i$, we have~$|\olC_i| \le k2^{r-1}$. We construct~$\olpi$ inductively by the
length of its finite prefixes such that the following invariant holds.
\begin{enumerate}
\item~$\olC_0 = \0$ and~$\olR_0 = \olV$ (at the beginning, there are no cops in
the graph and the robber occupies the whole graph),
\item~$\olR_n = \0$ (at the end, the robber is captured),
\item~$\olR_{i+1} = \Reach_{\olcalG - (\olC_{i} \cap \olC_{i+1})}(\olR_i)
\setminus \olC_{i+1}$, for all~$i$ (every move is legal),

\item \label{prelast}$\olR_{i+1} \subseteq \olR_i$, \ie the play is monotone,

\item\label{last}~$\bigcup \olR_i \coloneqq \{v\in V \mid \text{ there is some }
\olw \in
\olR_i \text{ with } v\in \olw\} \subseteq R_i$, \ie if the robber occupies a
vertex~$\olw$ in~$\olpi$ and~$v\in \olw$, then the robber occupies~$v$. Note
that vertices (positions) in~$\olpi$ are sets of vertices in~$\pi$.
\end{enumerate}

The last two properties of~$\olpi$ imply the statement of the proposition.
Indeed, by Property~\ref{prelast}, the play is monotone. Furthermore, the robber is
finally captured if and only if the play is finite and ends in a position
$(\olC_{n-1}, \olC_n,\olR_n)$ where~$\olR_n = \0$. Assume that~$\olpi$ is
infinite, then all~$\olR_i \neq \0$, but then there is some~$v$ and~$\olw$ with
$v\in\olw\in\olR_i$ such that~$v$ is not occupied in the~$i$th position
of~$\pi$ (by Property~\ref{last} of the invariant), so~$\pi$ is not winning, but
that contradicts the assumption.

The construction just follows the invariant. Let~$\olpi_0 =
\{\perp\}(\0,\0,\olV)$ and,
for~$i>0$, let~$\olpi_i = \olpi_{i-1} \cdot
(\olC_{i-1}, \olC_i, \olR_i)$ such that, for all~$\olv \in \olV$ we have
$\olv \in \olC_i$ if and only if~$\olv \cap C_i \neq \0$. In other words, we
place a cop on a vertex~$\olv$ in a position of~$\olpi$ if, in the corresponding
position of~$\pi$, we place a cop on some vertex in~$\olv$. As there are
$2^{r-1}$ subsets of~$C_i$ that contain a fixed vertex~$v$, the size of all
$\olC_i$ is at most~$k\cdot 2^{r-1}$.

It remains to show Properties~\ref{prelast} and~\ref{last} of the invariant. 
Assume that the play is not monotone, then there is some~$i$, 
some~$\olw \in \olR_i$ and some~$\olv \in \olC_i \setminus \olC_{i+1}$ such
that~$(\olw,\olv)\in\olE$, \ie a cop was removed from~$\olv$ and the robber
occupies~$\olv$ following one single edge. By Reif's construction, for all~$y\in
\olv$, there is some~$x\in \olw$ with~$(x,y)\in E$. As~$\olw \in \olR_i$, by
induction, we have~$\olw \subseteq R_i$. On the other hand, as~$\olv \in \olC_i
\setminus \olC_{i+1}$, by the construction of~$\olpi$, we have~$\olv\cap C_i\neq
\0$
and~$\olv\cap C_{i+1} = \0$. In other words, in~$\pi$ some vertex~$y$ of~$\olv$
is left by a cop and all vertices of~$\olw$ are occupied by the robber in the
$i$th move. However, the robber can move from~some~$x\in \olw$ to~$y$, which
causes non-\mbox{}monotonicity in~$\pi$,
but we assumed that~$\pi$ is monotone. Thus~$\olR_{i+1} \subseteq \olR_i$.

It remains to prove Property~\ref{last} of the invariant. Assume that it does not
hold and suppose,~$i$ is the least index with~$\bigcup \olR_i \not \subseteq
R_i$. Then there exist some~$\olw \in \olR_i$ and some~$v\in\olw$ such that
$v\notin R_i$. By induction hypothesis, the move from position~$i-1$ to
position~$i$ is monotone, so~$\olw\subseteq \olR_{i-1}$. Then by the choice
of~$i$,~$v\in R_{i-1}$. So we have~$v\in R_{i-1}\setminus R_i$ and thus~$v\in
C_i \setminus C_{i-1}$. By the construction of~$\olpi$, we have~$\olw \in
\olC_i$,
a contradiction to~$\olw \in \olR_i$.
\end{proof}

\begin{corollary}
Parity games with bounded imperfect information can be solved in polynomial time on graphs of bounded
directed path-width.
\end{corollary}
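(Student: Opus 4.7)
The plan is to combine the powerset construction of Section~\ref{subsec_powerset} with Proposition~\ref{proposition_directedpathwidth} and the known polynomial time algorithm for parity games of perfect information on graphs of bounded directed path-width. Given a parity game $\game = (V, V_0, (E_a)_{a \in A}, v_0, \sim, \col)$ with imperfect information bounded by $r$ and with $\dpw(\game) \leq k$, I would first construct the powerset game $\olgame$ with perfect information as described in Section~\ref{subsec_powerset}.

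Next I would argue that $\olgame$ has polynomial size in $|\game|$. Since the size of each $\sim^V$-class is at most $r$ and $r$ is fixed (independent of the input), a position $\olv \in \olV$ is a subset of some equivalence class and therefore has at most $r$ elements. Hence $|\olV| \leq |V| \cdot 2^r$, which is polynomial in $|\game|$. Reif's construction can moreover be carried out in polynomial time, so $\olgame$ can be computed in polynomial time from $\game$, and a winning strategy for Player~$0$ in $\game$ from $v_0$ exists if and only if one exists in $\olgame$ from $\{v_0\}$ (cf.\ Section~\ref{subsec_powerset} and Lemma~\ref{lemma_powerset}).

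Then I would apply Proposition~\ref{proposition_directedpathwidth} to conclude that $\dpw(\olG) \leq k \cdot 2^{r-1}$, so the directed path-width of the powerset graph is bounded by a constant. Finally, by the theorem cited in Section~\ref{subsec_graph_searching_games_def} (parity games on classes of bounded directed path-width can be solved in polynomial time), we can solve $\olgame$ in time polynomial in $|\olgame|$, hence in time polynomial in $|\game|$.

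The argument is almost entirely assembly of earlier results, so there is no real technical obstacle; the only points to check carefully are that $r$ is treated as a fixed parameter (so that both the size of $\olgame$ and its directed path-width bound remain constants times polynomial factors) and that the winning condition of $\olgame$ is indeed a parity condition, which follows from the fact that colors are observable and are inherited from representatives, as specified in the construction in Section~\ref{subsec_powerset}.
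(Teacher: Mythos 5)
Your proposal is correct and is exactly the argument the paper intends: the corollary is stated as an immediate consequence of Proposition~\ref{proposition_directedpathwidth} combined with the polynomial size of the powerset game under bounded imperfect information and the known polynomial-time solvability of perfect-information parity games on graphs of bounded directed path-width. The points you flag for care (treating $r$ as a fixed parameter and the observability of colors making $\olgame$ a genuine parity game) are precisely the right ones and are handled the same way in the paper.
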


Finally, we remark that our direct translation of the robber moves
back to the game on $\graphG$ cannot be immediately applied to the
games which define \kellyw and directed \treew. In the
Kelly-width game, the robber can only move if a cop is about to occupy
his vertex. It can happen that the cops occupy a vertex
$\{v_1,\dots,v_l\}$ in $\olG$ but not all vertices $v_1$, \dots, $v_l$
in $G$. In the directed tree-width game, the robber is not permitted
to leave the strongly connected component in which he currently is,
which again obstructs a direct translation of the robber moves from
$\olG$ back to $G$. Furthermore, it is not known whether parity games
with perfect information can be sovled in polynomial time if directed
\treew
is bounded.

\section{Simulated parity games }\label{sec_simulated_pg}

Simulated parity games were introduced by Fearnley and Schewe
in~\cite{FearnleySch11,FearnleySch12}.  The idea of a simulated parity
game is to decompose the original game into smaller games such that
one can control cycles that appear when the game
pebble revisits a vertex more efficiently. The simulated game (in that
both players have perfect information) starts on a small
subgame~$S$. If a cycle is reached within that subgame, the game stops
and the winner is determined as in the usual parity game. Otherwise
consider the first visited vertex $v\notin S$. One of the players (it
does not matter which one, say, Player~0) gives some promise: he
claims, for every vertex in~$S$, that he can guarantee a certain
value, the best color, in any play from~$v$ to that vertex. More
precisely, for every vertex $w\in S$, he announces a color~$c$ and
asserts that no worse color will be seen in a play from~$v$ to~$w$
if~$w$ will be the first vertex in~$S$ visited from now on. Player~$1$
either accepts for some vertex~$w$ in~$S$, then the game continues
from~$w$ and the minimum color seen since~$v$ is set to~$c$, or he
rejects. In the latter case the game continues in a next small
subgame~$S'$ containing~$v$. This continues in the same way, except that now when the
play leaves~$S'$, the assertions of Player~$0$ for~$S'$ are added to
those for~$S$. A play ends either if it reaches a vertex for that
Player~$0$ promised a color (and Player~$0$ wins if and only if he
could keep his promise), or a cycle is closed (then the parity
condition applies). This idea of closing cycles is similar to the idea
from~\cite{BerwangerGra05} of alternating cycle detection while
playing the entanglement game.

The game is parametrized by two functions: function $\Next$ determines the next subgame $S'$ and
function $\Hist$ forgets some of the promises of Player~$0$ and is used for optimization.
If every play of the simulated game is finite (intuitively, $\Hist$ does not forget too much), then
Player~$0$ wins the original game if and only if he wins the simulated game
(from the same vertex). Examples of $\Next$ and $\Hist$ are given
in~\cite{FearnleySch12}: $\Hist$ forgets every promise except those from the
last subgame and $\Next$ follows a tree decomposition or a DAG decomposition.

If we construct $\Next$ and $\Hist$ such that they fulfill some
conditions on certain classes of graphs, then we can solve parity
games in \ptime on those classes. The conditions are:
\begin{itemize}
\item Every play of the simulated game is finite.
\item There is a data structure to store the promises of Player~$0$ that uses only 
a logarithmic amount of space in the size of the parity game.
\end{itemize}
An alternating Turing machine just plays the simulated game and
determines the winner. In the rest of the section we describe the
simulated game formally and prove that for graph classes where a
bounded number of cops can capture a robber (not necessarily in a
monotone way) we can indeed find appropriate functions $\Next$ and
$\Hist$.

Let $\game = (V,V_0,E,v_0,\Omega)$ be a parity game with perfect
information.
The \emph{significance order} on the set of colors is defined by
$a \prec b$ if~$a$ is better for Player~$0$ than~$b$, \ie~$a$ is even and~$b$ is odd,
or both are even and~$a<b$, or both are odd and~$a>b$. 
For a positional 
strategy~$f$ of Player~$\sigma$, a set~$F$ of vertices and two vertices~$s$ and~$t$, let
$\Paths_f^F(s,t)$ be the set of paths from~$s$ to~$t$ avoiding~$F$
(except of~$t$ if $t\in F$) and consistent with~$f$, \ie if~$f$ is a 
strategy for Player~$\sigma$, then for all consecutive~$v$ and~$w$ on
a path in $\Paths_f^F(s,t)$, if~$v$ is 
a vetrex of Player~$\sigma$, then~$f(v)=w$. For a sequence~$P$ of vertices let $\mincol(P)$ be the 
minimal with respect to ${}<{}$ color appearing in~$P$.
We denote the best possible color that a strategy~$f$ 
guarantees on $\Paths_f^F(s,t)$ by $\best_f^F(s,t)$, \ie 
\begin{align*}
\best_f^F(s,t) = \opt_{P\in\Paths_f^F(s,t)} \mincol(P)
\end{align*}
where $\opt = \max$ if~$f$ is a Player~$0$ strategy and $\opt=\min$
if~$f$ is a Player~$1$ strategy, both with respect to the significance
order.
Let~$C$ be the set of used colors. A \emph{strategy profile} for a set of vertices~$F$ is a function
$\Profile_{f,s}^F\colon F \to C\cup \{\nix\}$ defined~by:
\[ \Profile_{f,s}^F(t) =
\begin{cases}
  \nix 	&	\text{if }\Paths_{f,s}^F(t) = \0,\\
  \best_f^F(s,t) & \text{otherwise.}
\end{cases}
\]
An \emph{abstract profile} is a function $P_s^F\colon F\to C \cup
\{\nix\}$. A profile is what a strategy~$f$ can actually guarantee, an
abstract profile is what Player~$0$ promises when the play leaves the
current subgame. In particular $P_s^F(t) = \nix$ means that~$t$ should
not be reached at all. Of course, Player~$0$ is free to promise something
that cannot be guaranteed by any of his strategies.

When a play of the simulated game returns to a subgame it left in the
past, we must have stored enough information to check whether
Player~$0$ could keep his promise. The data structure for this
is a history, which is a set of records.\footnote{Note that we
  redefined the notion of a history for this Section.} A \emph{record}
is a triple~$(F,c,P)$ where $F\subseteq V$,~$P$ is an
abstract profile for~$F$ and~$c$ is a color in~$C$. 
Hereby~$P$ stores a promise of Player~$0$ and~$c$ is the minimal color
seen since the promise was made.
For a record~$(F,c,P)$ and a color~$c'$ define $\Update((F,c,P),c')$ by
$(F,\min(c,c'),P)$ where the minimum is with respect to ${}<{}$ and $\min(\nix,c') = c'$, for all~$c'\in C$. A
\emph{history} is a set~$\calH$ of records. For a history~$\calH$ and
a color~$c$, we define $\Update(\calH,c')$ by $\{\Update(F,c,P),c')
\mid (F,c,P) \in \calH\}$.

The function $\Next$ maps a tuple $(S,v,\calH)$ where~$S\subseteq
V$, $v\in V\setminus S$ and~$\calH$ is a history to a set~$S'\subseteq V$. $\Hist$ is a history 
updating function: it deletes some elements
from a given history, 
\ie $\Hist(\calH) \subseteq \calH$. The game
$\Simulate_\game(S,\calH,v,\Next,\Hist)$ is played on~$G$ as
follows.

The positions of the simulated game are of the form $(S,v,\calH,\Pi)$
where~$v\in V$ and~$\Pi$ is a sequence of triples~$(u,c,w)$
with~$u,w\in V$ and~$c\in C$.  Hereby~$v$ is the current vertex
and~$\Pi$ stores the simulated play prefix played so far. By abuse of notation
we apply $\mincol$ also to sequences of colors with the obvious meaning and extend it
to sequences of triples $(u,c,w)$:
$\mincol\big((u_1,c_1,w_1),\ldots,(u_n,c_n,w_n)\big) =
\mincol(c_1,\ldots,c_n)$.  Furthermore, if~$\Pi$ has the form
\[(u_1,c_1,w_1)\ldots(u_m,c_m,w_m)(u_{m+1},c_{m+1},w_{m+1})\ldots(u_n,c_n,
,w_n)(u_m,c_m,w_m)\,,\] \ie it ends in a cycle, then define
\[\Winner(\Pi) = \mincol\big((u_m,c_m,w_m),\ldots,(u_n,c_n,w_n)\big)
\mod 2\,.\] 
The game is played in rounds.  Let $(S,v,\calH,\Pi)$ be the current position. A round consists of
the following steps.
\begin{enumerate}
\item\label{step_move} If $v\in V_\sigma$, then Player~$\sigma$ chooses some
  $v'\in vE$.
\item If $v'\in S$ or $v'\in F$ for some $(F,c,P)\in \calH$, then $\Pi'
  = \Pi \cdot (v,\col(v'),v')$, the new position is $(S,v',\calH,\Pi')$ and
  the play continues from Step~\ref{step_check_end}.
\item\label{step_announce_profile} Player~$0$ announces an abstract
  profile~$P'$ for~$v'$ and~$S$.
\item\label{step_believe_or_not} Player~$1$ can play \accept and
  choose some $w\in S$ with $P'(w)\neq \nix$, or play \reject.
  \begin{itemize}
  \item If Player~$1$ chooses \accept, then the next position is
    $(S,w,\calH,\Pi'')$ where $\Pi'' = \Pi' \cdot (v',\min(\col(v'),w),w)$.
  \item If Player~$1$ chooses \reject, then the history and~$S$ are
    updated as follows and the play continues from
    Step~\ref{step_check_end}:
    \begin{itemize}
    \item $\calH' = \Update(\calH,\col(v'))$;
    \item $\calH'' = \calH' \cup \{(F,\col(v'),P')$\};
    \item $\calH''' = \Hist(\calH'')$;
    \item $S' = \Next(S,v',\calH''')$;
    \item The winner is the winner of
      $\Simulate_\game(S',\calH''',v',\Next,\Hist)$.
    \end{itemize}
  \end{itemize}
\item\label{step_check_end} If $v'\in F$ for some $(F,c,P)\in \calH$,
  then the game stops. Let the current~position be $(S,v',\calH,\Pi^*)$.
  Player~$0$ wins the play if $\min(\mincol(\Pi^*),c) \le P(v')$\,;
  Player~$1$ wins the play if either $\min(\mincol(\Pi^*),c) > P(v')$
  or $P(v') = \nix$\,.
\item\label{step_cycle} If $\Pi^*$ ends with a cycle, then the winner
  of the play is $\Winner(\Pi)$.
\end{enumerate}
The initialization round is played as follows. If~$v_0\in S$, then the
play starts in the first regular round. If $v_0\notin S$, Player~$0$
announces an abstract strategy profile~$P$ for~$v_0$ and~$S$ and the
history is initialized with $\calH = \{(v_0,\col(v_0),v_0)\}$. 

\begin{theorem}[\cite{FearnleySch12}]\label{thm_fearnleySchewe}
  Let $\game = (V,V_0,E,v_0, \col)$ be a parity game. Then, for all~$S$, $\calH$, $\Next$, and $\Hist$, if all plays of
  $\Simulate_\game(S,\calH,v_0,\Next,\Hist)$ are finite, then Player~$0$
  has a winning strategy for~$\game$ if and only if Player~$0$ has a
  winning strategy for $\Simulate_\game(S,\calH,v_0,\Next,\Hist)$.
\end{theorem}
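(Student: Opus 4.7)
The plan is to prove both implications by translating winning strategies between $\game$ and $\Simulate_\game(S, \calH, v_0, \Next, \Hist)$. The key correspondence is that every triple $(u, c, w)$ appearing in the play-record $\Pi$ either records a genuine edge of $G$ (when $v'$ remained inside the current scope) or abbreviates a witnessing path in $G$ from $u$ to $w$ whose minimum color is exactly $c$. Under this correspondence, a cycle in $\Pi$ unfolds to a cycle in $G$ with the same $\mincol$, hence the same parity; and a promise check against a record $(F, c, P) \in \calH$ reflects the question whether Player~$0$ can reach $v' \in F$ with an accumulated minimum color at most $P(v')$ along the path through $G$ represented so far.

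For direction $(\Rightarrow)$, fix a positional winning strategy $f$ for Player~$0$ in $\game$. Define a strategy in the simulated game that plays $f$ at every Player-$0$ vertex and, whenever step~\ref{step_announce_profile} requires it, announces the truthful profile $\Profile_{f, v'}^S$. Every play consistent with this strategy unfolds to a play of $\game$ consistent with $f$: accepted jumps $(v', c, w)$ are replaced by a witness in $\Paths_f^S(v', w)$ realizing $\mincol = c$, rejections correspond to the play actually leaving $S$ in $G$, and the $\Hist$/$\Next$ transitions are transparent to $f$. By the finiteness hypothesis the simulated play terminates either in a cycle of $\Pi$ that unfolds to a cycle of $G$ with the same winning parity (so Player~$0$ wins because $f$ is winning in $\game$) or at a promise check, which Player~$0$ passes because the profile he announced was truthful.

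For direction $(\Leftarrow)$, assume a winning strategy $\tau$ in the simulated game and build a strategy $\sigma$ for Player~$0$ in $\game$ that maintains, alongside the actual play, a shadow state $(S, \calH, \Pi)$ exactly as $\tau$ sees it. Player~$0$ plays the moves prescribed by $\tau$; whenever Player~$1$ moves to a vertex $v' \notin S$ and $v' \notin F$ for every $(F, c, P) \in \calH$, Player~$0$ privately queries $\tau$ for the profile $P'$ that would be announced, and treats Player~$1$ as having \emph{rejected}, advancing the shadow state via $\Update$, $\Hist$, and $\Next$. Whenever the play reaches a vertex $w$ with $w \in F$ for a stored $(F, c, P)$, the accumulated minimum color must be at most $P(w)$, for otherwise Player~$1$ would have had a winning response in $\Simulate$ by accepting $w$ at the moment $P'$ was announced, contradicting that $\tau$ wins that simulated position. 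Infinite plays of $\game$ against $\sigma$ correspond to infinite play-records $\Pi$, which are excluded by the finiteness assumption (together with the fact that $\Pi$ is bounded in size only up to the next recorded cycle, and any actual repetition in $G$ eventually produces one).

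The main obstacle is precisely the verification in direction $(\Leftarrow)$: the accept/reject decisions of Player~$1$ are hidden from plays of $\game$, so we must argue uniformly over every accept that Player~$1$ \emph{could} have made. The key device is the retroactive conversion of a bad play of $\game$ against $\sigma$ — either an odd-winning cycle or a finite path that violates a stored promise — into a simulated play in which Player~$1$ makes exactly the critical accept or lets the corresponding $\Pi$-cycle form, thereby beating $\tau$; this contradicts the assumption. Finiteness of all simulated plays makes the induction that drives both translations well-founded, since the recursion depth induced by $\Next$ and the length of each $\Pi$ are bounded along any branch of the simulated game tree.
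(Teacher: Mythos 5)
The paper does not actually prove this statement: it is imported from Fearnley and Schewe \cite{FearnleySch12} and used as a black box, so there is no in-paper proof to compare against. Judged on its own, your sketch follows the natural strategy-translation route, but it contains two genuine gaps. The first is the claim that a cycle in $\Pi$ ``unfolds to a cycle in $G$ with the same $\mincol$, hence the same parity.'' A triple recorded by an accepted jump carries the color $P'(w)$, which in your $(\Rightarrow)$ direction is $\best_f^S(v',w)$, i.e.\ the \emph{worst} color over all $f$-consistent witnesses with respect to the significance order $\prec$ --- and $\prec$ does not coincide with the natural order $<$ that $\mincol$ and the parity condition use. The witness path you splice in for one jump need not realize its recorded color, and different jumps on the cycle are witnessed by different paths, so the unfolded closed walk only has a minimum color at least as good for Player~$0$ as the recorded one. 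You must argue separately that an odd recorded cycle yields an actual $f$-consistent closed walk with odd minimum color (e.g.\ by realizing the segment attaining the recorded minimum by a path that achieves it), and handle the dual case; as written the step fails.

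The more serious gap is in the $(\Leftarrow)$ direction. By Step~5 of the round description, the simulated game \emph{terminates} the moment the token reaches a vertex $w\in F$ for a stored record $(F,c,P)$, whereas the real play of $\game$ continues from $w$. Your construction of $\sigma$ maintains a shadow state and argues that the promise must be satisfied at that moment, but it never says how Player~$0$ plays on afterwards --- yet this is where the whole difficulty lies. The standard repair is to rewind the shadow simulated play to the round in which the profile covering $w$ was announced, pretend Player~$1$ played \texttt{accept} on $w$ there, and continue consulting $\tau$ from the resulting position $(S,w,\calH,\Pi'')$; one must then prove that this rewinding is sound (the color actually accumulated along the real path is at least as good as the recorded $\min(\col(v'),P'(w))$, so the shadow $\Pi$ under-approximates the real play in Player~$1$'s favour) and that an infinite real play against $\sigma$ can be converted into a simulated play consistent with $\tau$ that Player~$1$ wins, which requires selecting a single coherent sequence of accept/reject decisions out of the finitely branching tree of possibilities. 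Your final paragraph names this ``retroactive conversion'' as the key device but gives no construction of it, so the central step of the backward direction is missing.
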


Now we prove the main result of this section. The proof is an
adaptation of the proof of Theorem~$6.4$ from~\cite{FearnleySch11}. We
show that a winning strategy of cops in the cops and robber game
induces functions $\Next$ and $\Hist$ that satisfy the conditions of
Theorem~\ref{thm_fearnleySchewe}. In addition, the resulting play can
be solved in deterministic polynomial time, which leads to an
efficient solution of parity games with perfect information on graphs
where non-{}monotone \dagw is bounded and thus also of parity games
with bounded imperfect information on those graphs.

\begin{theorem}\label{thm_simulated}
Let $\game = (V,V_0,E,v_0,\Omega)$ be a parity game. Let~$k$ cops 
have a  (not necessarily monotone) strategy in the cops and robber game on~$G$ 
that guarantees a capture of the robber. Then there is are functions $\Next$ 
and $\Hist$ ans some $S\subseteq V$ such that for all~$\calH$ and $v_0$ the game $\Simulate_G(S,\calH,v,\Next,\Hist)$ 
has no infinite plays. Furthermore, given~$\game$, in deterministic
polynomial time in the size of~$\game$, we can construct a
representation of the simulated game and solve it.
\end{theorem}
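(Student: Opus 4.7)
The plan is to adapt the Fearnley--Schewe framework, substituting a positional winning cop strategy $\sigma$ for the $k$-cops-and-robber game on~$G$ in place of a DAG-decomposition. Since the cops-and-robber game is a finite-duration reachability game for the cop player on an arena of size at most $|V|^{k+1}$, such a positional $\sigma$ exists by assumption and can be computed in polynomial time (for fixed~$k$) by backward induction on this arena.

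The subgames of the simulation will be indexed by the cops-and-robber positions $(C,v)$ reachable from $(\0,v_0)$ under~$\sigma$. For each such $(C,v)$ we associate a bag $\beta(C,v)$ of bounded size, for instance $C\cup\{v\}$, playing the role of a bag in a DAG-decomposition. The initial subgame is $\beta(\0,v_0)$, and whenever the parity play leaves the current subgame $S=\beta(C,v)$ along an edge to some $v'\notin S$, we set $\Next(S,v',\calH)=\beta(C',v')$ where $C'=\sigma(C,v')$. The function $\Hist$ retains a record $(F,c,P)$ precisely when its associated cops-and-robber ancestor still lies on the current branch: as soon as $\sigma$ withdraws the cops that were responsible for the corresponding subgame, its record is discarded.

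Termination of every simulated play reduces to termination of $\sigma$-consistent cops-and-robber plays. Any infinite simulated play would, by the definition of $\Next$, project onto the sequence of cops-and-robber positions visited by some infinite play consistent with $\sigma$, contradicting that $\sigma$ is winning. Correctness then follows directly from Theorem~\ref{thm_fearnleySchewe}, so Player~$0$ wins $\game$ from $v_0$ iff he wins $\Simulate_\game(\beta(\0,v_0),\0,v_0,\Next,\Hist)$.

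For the polynomial-time construction and solution, the decisive observation is that the set of cops-and-robber positions reachable from $(\0,v_0)$ under $\sigma$ has size at most $|V|^{k+1}$, and since $\sigma$ is winning, this arena is acyclic from $(\0,v_0)$ (otherwise the robber could loop forever). The simulated game therefore factors as a product of a polynomial-size acyclic arena of subgames with the local parity pieces inside each bag (each of size at most $k+1$), and can be solved by bottom-up dynamic programming along this cops-and-robber DAG: for each subgame one tabulates, for every admissible abstract profile announced by Player~$0$, whether he can honor it. I expect the main obstacle to be the precise design of $\Hist$: because $\sigma$ need not be monotone, the same vertex may appear in several bags along a single simulated play, and the bookkeeping linking records to their cops-and-robber ancestors must be tight enough that when a promised vertex is revisited, it is exactly the correct promise that is checked, without the history ever growing super-polynomially.
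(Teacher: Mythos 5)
Your choice of $\Next$ (follow a positional winning cop strategy $\sigma$, so that the subgames are exactly the cop placements reachable under $\sigma$) and your termination argument (an infinite simulated play projects onto an infinite $\sigma$-consistent cops-and-robber play) coincide with the paper's proof and are correct. The genuine gap is in $\Hist$ and, consequently, in the polynomial-time claim. You keep a record as long as ``its cops-and-robber ancestor still lies on the current branch,'' but a non-monotone cop strategy induces no branch structure: the $\sigma$-consistent positions form a path through an acyclic arena, every previously visited position is an ``ancestor,'' and the cops responsible for an old subgame need never be fully withdrawn. As described, the history can therefore accumulate up to one record per visited cops-and-robber position, \ie polynomially many records, so the set of possible histories---and hence the state space over which your bottom-up dynamic program must range---is exponential. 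You correctly identify this as the main obstacle, but you do not resolve it.

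The paper's resolution is the opposite of what you are aiming for. You want to ensure that ``when a promised vertex is revisited, it is exactly the correct promise that is checked''; but Theorem~\ref{thm_fearnleySchewe} requires nothing of $\Hist$ beyond finiteness of all plays, and finiteness is already secured by the projection onto $\sigma$ (a play that never leaves the current subgame and stored records must eventually close a cycle and stop). So one simply defines $\Hist$ to forget everything except the single most recent record. When the pebble later returns to a vertex whose promise has been discarded, the play just continues---harmlessly, since an infinite continuation would still have to cross subgame boundaries infinitely often and would yield an infinite $\sigma$-consistent robber walk. With a one-record history, a position of the simulated game is describable in $O\bigl(k\log|V| + k\log|C| + k\log k\bigr)$ bits, an alternating machine playing the game runs in logarithmic space, and one obtains a deterministic algorithm of running time $\mathcal{O}\bigl(|V|^{k+2}\cdot|C|^{2k}\cdot k^k\bigr)$. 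That single design decision for $\Hist$ is the missing idea in your proposal.
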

\begin{proof}
It is clear that in the cops and robber game, positional strategies
for both players suffice. Let~$f$ be a positional strategy for~$k$
cops that guarantees a capture of the robber. For every position of
the simulated game $\Simulate_G(S,\calH,v,\Next,\Hist)$ with $v\notin
S$ we define $\Next(S,v,\Hist) = f(S,v)$. We define~$S$ to be the
answer of the cops to the first move of the robber to~$v_0$, \ie $S =
f(\0,v_0)$.  The function $\Hist$ forgets all records from the history
up to the last one. If $|\calH| = 1$, then $\Hist(\calH) =
\calH$. Otherwise $|\calH| = 2$, and, by the definition of $\Next$,
there are some $F_1$, $F_2$, $c_1$, $c_2$, $P_1$ and $P_2$ such that
$\calH = \{(F_1,c_1,P_1),(F_2,c_2,P_2)\}$ and $f(F_1,v) = F_2$,
where~$v$ i the current vertex. We set $\Hist(\calH) =
\{(F_2,c_2,P_2)\}$.

We show that all plays of $\Simulate_G(S,\calH,v_0,\Next,\Hist)$ are
finite. Note that since~$f$ can be non-{}monotone, it is
possible that the pebble in the simulated game returns to a vertex it
left in the past, but the play does not stop because we forgot the
promise of Player~$0$ for that vertex.

Assume that there is an infinite play~$\pi$  of
$\Simulate_G(S,\calH,v_0,\Next,\Hist)$. We describe an infinite
play~$\pi'$ of the cops and robber game that is consistent with~$f$. 
Let $(v_1,w_1),(v_2,w_2)\ldots$ be the infinite sequence of all pebble
moves in the simulated game with $v_i\in S_i$ and $w_i\notin S_i$
where~$(S_i)_{i \ge 1}$ is the sequence of the subgames appearing
in~$\pi$. Then~$\pi'$ is the play in which the robber
chooses~$v_0,w_1,w_2,w_3,\ldots$ and the cops play according
to~$f$. Then~$\pi'$ is infinite, which is a contradiction because~$f$
is winning, but we still have to show that~$\pi'$ is well-{}defined, \ie
that all robber moves  are possible.

The first robber move to~$v_0$ is trivially possible. The cops answer
occupying~$S_1 = f(\0,v_0)$. As $w_1\notin S_1$, there is no cop
on~$w_1$. Because the simulated game proceeded from~$v_1$ to~$w_1$
(not necessarily in one move), there is a path from~$v_1$ to~$w_1$, so
robber move to~$w_1$ is possible.  The same argument applies for
all~$w_i$ with~$i>1$.

To solve $\Simulate_\game(S,\calH,v_0,\Next,\Hist)$ we construct an alternating Turing machine that 
just plays the game. This can obviously be done in polynomial time. We have to prove that the Turing machine uses only a
logarithmic amount of space in the size of~$\game$. As the cops and robber game
admits positional winning strategies for both players, we can assume that if a position of the game repeats in a play, then the
robber wins the play. There are $|V|^k$ possible cop placements and at most~$|V|$
possible robber placements, \ie at most $|V|^{k+1}$ positions. 

The data structures are variables $S$, $\calH$, $\Pi$ and $v$.
\begin{itemize}
\item By construction $S$ is a cop placement, so we need $k\cdot \log|V|$ bits to store it.

\item In any position~$\calH$ always contains only one record $(F,c,P)$ with
$|F|\le |V|$ and can be stored using $\log|V|$ bits for $F$, $\log|C|$ bits for
$c$ and $k\cdot \log|C|$ bits for $P$ because $P$ contains at most $k$ pairs
$(w,c_w)$ with $w\in F$ and $c_w\in C\cup\{\nix\}$. One can represent $P$ as a
list of length at most $k$ of colors from $C\cup\{\nix\}$. Thus $\calH$ can be
stored using $\log|V| + k\cdot(\log|C|+1)$ bits.

\item We need $\log|V|$ bits to store $v$.

\item The variable~$\Pi$ is a sequence of at most~$|S|$ tuples $(v,c,w)$. If
$(v_1,c_1,w_1)$ and $(v_2.c_2,w_2)$ are two consecutive tuples in~$\Pi$, then
$w_1 = v_2$, so we only have to remember pairs $(c,w)$ where~$c$ is a color and~$w$
is one of at most~$|S|$ vertices. As $|S|\le k$, we need at most $k\cdot
(\log|C| + \log k)$ bits for~$\Pi$. Note that although in the initialization
round $v\notin S$ is possible, we do not need to memorize~$v$ because in this
case there can be no closed cycle in~$S$ containing~$v$.
\end{itemize}

Summing up, the alternating Turing machine needs at most
$(k+2)\cdot\log|V| + 2k\cdot\log|C|+k(\log k+1)$ bits. This leads to a
deterministic algorithm running in time $\calO(|V|^{k+2}\cdot
|C^{2k}|\cdot k^k)$.
\end{proof}

The powerset construction produces a graph~$\olG$ that is only polynomially
larger than the original graph~$G$. By Proposition~\ref{proposition_nonmonotonedagwidth},
the \nmdagw of~$\olG$ is bounded in the \dagw of~$G$, so as a corollary from 
Theorem~\ref{thm_simulated} we obtain the following result. 
\begin{corollary}
\label{cor_main}
 Parity games with bounded imperfect information can be solved in polynomial time on graphs
of bounded \dagw.
\end{corollary}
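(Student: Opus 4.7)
The plan is to compose the powerset construction from Section~\ref{subsec_powerset} with the simulated parity game algorithm of Theorem~\ref{thm_simulated}, using Proposition~\ref{proposition_nonmonotonedagwidth} as the bridge. Given a parity game $\game$ with $r$-bounded imperfect information and $\dw(G) \le k$ for constants $r$ and $k$, I would first build the powerset game $\olgame$ on the graph $\olG$. Since every $\sim^V$-class has size at most $r$, we have $|\olV| \le |V|\cdot 2^{r-1} + 1$, so $\olG$ is polynomial in $|V|$ and can be constructed in polynomial time.

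The central step is to invoke Proposition~\ref{proposition_nonmonotonedagwidth}, which yields $\nmdw(\olG) \le k\cdot r\cdot 2^{r-1}$, a constant. Consequently a constant number of cops has a (not necessarily monotone) winning strategy in the cops and robber game on $\olG$. Feeding this strategy into Theorem~\ref{thm_simulated} produces functions $\Next$ and $\Hist$ for which every play of the simulated parity game on $\olG$ is finite, together with the associated deterministic algorithm of time $\calO(|\olV|^{O(1)})$, hence polynomial in $|V|$. Correctness of the powerset construction, via Lemma~\ref{lemma_powerset} and the discussion in Section~\ref{subsec_powerset}, guarantees that Player~$0$ wins $\game$ from $v_0$ if and only if Player~$0$ wins the perfect-information game on $\olgame$ from $\{v_0\}$, so solving the latter decides the former.

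The one subtlety, and what I expect to be the main bookkeeping obstacle, is that $\olgame$ is in general non-deterministic: for a single $\olv$ and action $a$, several $a$-successors may coexist in $\olV$. Since Theorem~\ref{thm_simulated} is stated for deterministic parity games, I would apply the standard resolution from Section~\ref{subsec_powerset} and insert, for each such $(\olv, a)$, a fresh Player~$1$ vertex mediating the $a$-move. This enlarges the graph only polynomially, and a winning cop strategy on $\olG$ lifts to the enlarged graph by having the cops mirror their placement at $\olv$ onto the inserted mediator, which raises the non-monotone DAG-width by at most a constant. Beyond this, the argument is conceptually transparent: bounded non-monotone DAG-width is precisely the hypothesis Theorem~\ref{thm_simulated} needs, so the implication via Proposition~\ref{proposition_nonmonotonedagwidth} is essentially immediate.
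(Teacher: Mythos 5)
Your proposal is correct and follows essentially the same route as the paper: apply the powerset construction, invoke Proposition~\ref{proposition_nonmonotonedagwidth} to bound the \nmon \dagw of $\olG$ by $k\cdot r\cdot 2^{r-1}$, and conclude via Theorem~\ref{thm_simulated}. Your explicit treatment of resolving the non-determinism of $\olgame$ is a sensible elaboration of the determinization already described in Section~\ref{subsec_powerset} rather than a departure from the paper's argument.
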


\section{Bounded imperfect information and multiple robbers}\label{sec-parity}

% \fxfatal{include prelim-multirobbers.tex}

In this section we follow another approach to prove Corollary~\ref{cor_main}.
We translate imperfect information bounded by some constant~$r$ 
into a new graph searching game by introducing~$r$ robbers instead of one. 
This game is a generalization of a similar helicopter cops and robber game with multiple robbers
(that we refer to as a helicopter game, for short) defined by Richerby and Thilikos
in~\cite{RicherbyThi09}. The helicopter game is played on an undirected graph by a team 
of~$k$ cops and a gang of~$r$ robbers where~$k$ and~$r$ are parameters of the game. 
The cops move as in the cops and (single) robber game (up to a non-{}essential new kind
of \emph{sliding} cop moves), and each robber moves independently
of the others also as in the game with a single robber. If a robber is captured,
he is taken away from the graph. When all robbers are captured, the cops win.
Infinite plays, in which at least one robber survives for ever, are won by the robbers.
The cops also lose if they perform a non{}-monotone move, \ie if a robber can reach
a vertex that was previously unavailable for the robbers. Richerby and Thilikos show
that the number of additional cops needed to capture a team of robbers with one additional
robber grows at most logarithmically in~$r$.

Our game differs from the helicopter game in three aspects.  First, we
do not allow sliding moves, but this can introduce a difference in the
cop number by at most one.  Second, we play on directed graphs, and we
will see that this permits the robbers to coordinate their efforts in
a new way to escape from the cops. Third, in our game the robbers can
jump to each other, \ie a robber can leave his vertex and play from
the vertex occupied by another robber.  This rule may seem somewhat
unnatural, but we introduce it for several reasons.  First, we will
see that this rule supplies the robbers with more power. In
particular, the logarithmic upper bound from~\cite{RicherbyThi09} does
not hold any more. We however show that the number of additional cops
is bounded in~$r$ and grows at most linearly in~$r$, which is our main
result about graph searching games. Hence the additional power of the
robber gang makes the boundedness result stronger (for the cost of a
worse bound). The second reason to allow the robber jumping is that
this fits our purpose to solve parity games with bounded imperfect
information in polynomial time. Finally, our graph searching game may
be used to model parallel processes that must be served in some
way. Some processes may terminate or may be ultimately served and thus
finished, some can produce new processes if the total number does not
exceed some bound.  Every process corresponds to a robber and
resources used to serve them correspond to the cops. Captured robbers
describe terminated processes and creating new processes is modeled by
multiple robbers running from the vertex of one robber in different
directions. The cop number describes the minimal amount of resources
needed to serve all processes. In our case, processes are possible
plays of a parity game.

The rest of the paper is structured as follows. In
Section~\ref{sec_main_tech} we prove Theorem~\ref{theorem_main_tech},
which states that if~$k$ cops capture one robber on a graph,
then~$k\cdot r$ cops capture~$r$ robbers on that graph. In particular
the number of new cops needed to capture~$r$ robbers is bounded only
in~$r$ and~$k$ by a linear function.  % This generalizes the above mentioned result by Richerby and
% Thilikos to directed graphs, however, our upper bound is linear in~$r$
% while the bound from~\cite{RicherbyThi09} is logarithmic. 
We show in Theorem~\ref{thm_impinf_strict_hierarchy},
Section~\ref{subsec_hierarchy}, that a linear bound is unavoidable in
our setting. As our example graphs are undirected, this is not due to
directed edges in the graphs, but is caused by the ability of the
robbers to jump.

Before we turn to the analysis of games with 
multiple robbers, let us show how we can use Theorem~\ref{theorem_main_tech} to obtain
Corollary~\ref{cor_main}. Given a parity game with imperfect information bounded by~$r$ 
on a graph~$G$ of \dagw~$k$, we find a winning strategy for~$k\cdot r$ cops against~$r$ 
robbers on~$G$. This strategy can be used to construct a winning strategy for $k\cdot r\cdot 2^{r-1}$
cops against one robber on the powerset graph~$\olG$ (so $\dw(\olG) \le k\cdot r\cdot 2^{r-1}$). 
We show how to do this in Lemma~\ref{lemma_rG_to_G}. As the size of~$\olG$ 
is polynomially bounded in the size of~$G$, we can solve the parity game with perfect information
on~$\olG$ in polynomial time in the size of~$G$.

\subsection{Boundedness of \dagw and parity games}

Going to the powerset graph, we associate every play we 
consider to be possible on the original graph (there are at most~$r$ such 
plays) with one robber. Tracking at most~$r$ plays corresponds to playing
against at most~$r$ robbers simultaneously.

\begin{lemma}\label{lemma_rG_to_G}
If~$\dw_r(\graphG) \le k$, then~$\dw{\olgraphG} \le k \cdot 2^{r-1}$. 
\end{lemma}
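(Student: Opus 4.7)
The plan is to lift a monotone winning cop strategy $f$ for the $r$-robber game on $G$ with $k$ cops to a monotone winning cop strategy $\olf$ for the single-robber DAG-width game on $\olG$ using at most $k \cdot 2^{r-1}$ cops. A single $\olG$-play is simulated by an $r$-robber play on $G$ in which two invariants are maintained: (i) the cop placement on $\olG$ equals $\olC = \{\olw \in \olV : \olw \cap C \ne \emptyset\}$, where $C \subseteq V$ is the cop placement dictated by $f$, and (ii) the set $\{u_1, \ldots, u_r\}$ of current robber positions on $G$, viewed as a subset of $V$, equals the current robber vertex $\olv \in \olV$. The size bound $|\olC| \le k \cdot 2^{r-1}$ is immediate, since each $x \in V$ belongs to at most $2^{|[x]|-1} \le 2^{r-1}$ elements of $\olV$, so $|\olC| \le |C| \cdot 2^{r-1} \le k \cdot 2^{r-1}$.

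Cop moves are defined by composition with $f$: the new $G$-cop placement is $C' = f(C, (u_1, \ldots, u_r))$, and the new $\olG$-cop placement is the induced $\olC'$. To lift a robber move on $\olG$ from $\olv$ to some $\olv' \in \Reach_{\olG - (\olC \cap \olC')}(\olv) \setminus \olC'$, I apply Lemma~\ref{lemma_powerset}: along any witnessing path $\olv = \olv_0, \ldots, \olv_m = \olv'$ in $\olG$ avoiding $\olC \cap \olC'$, the lemma produces, for each $w \in \olv'$, a path $v_0, \ldots, v_m = w$ in $G$ with $v_j \in \olv_j$. Since $\olv_j \notin \olC \cap \olC'$ forces $\olv_j \cap C \cap C' = \emptyset$, the lifted path avoids $C \cap C'$. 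Using the jumping rule of the $r$-robber game to first redistribute robbers among the elements of $\olv$ so that each source supplies as many robbers as needed, every $w \in \olv'$ is then covered by one robber, and the invariants survive the combined $G$-move.

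The main obstacle, as Example~\ref{example_DAG_monotonicity_not_preserved} shows, is preserving monotonicity, and this is where the move to $r$ robbers pays off. Suppose, for contradiction, that the simulated $\olG$-play contains a non-monotone configuration $(\olC, \olC', \olv)$ witnessed by some $\olu \in \olC \setminus \olC'$ reachable from $\olv$ in $\olG - (\olC \cap \olC')$. From $\olu \in \olC$ pick $y \in \olu \cap C$; from $\olu \notin \olC'$ we have $\olu \cap C' = \emptyset$, so $y \in C \setminus C'$. Applying Lemma~\ref{lemma_powerset} to the target $y$ along the witness path yields a path in $G$ from some $v \in \olv$ to $y$ whose vertices avoid $C \cap C'$ by the same argument as above. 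Because $v \in \olv = \{u_1, \ldots, u_r\}$, it is a current robber position, so $y \in C \setminus C'$ is reachable from a robber position in $G - (C \cap C')$, contradicting the monotonicity of $f$ in the $r$-robber game on $G$. Finally, an infinite $\olG$-play would produce an infinite $G$-play (each $\olG$-round contributes at least one $G$-round, and since $\olv' \ne \emptyset$ at least one $G$-robber always survives to mirror the $\olG$-robber), contradicting the winning property of $f$. Hence $\olf$ monotonically captures the single robber on $\olG$ with at most $k \cdot 2^{r-1}$ cops, which gives $\dw(\olG) \le k \cdot 2^{r-1}$.
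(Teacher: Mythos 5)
Your proposal is correct and follows essentially the same route as the paper's proof: simulate the single-robber play on $\olG$ by an $r$-robber play on $G$ with exactly the two invariants you state (cops on $\olG$ are the sets meeting the $G$-cop placement; the robber vertex $\olv$ equals the set of $G$-robber positions), lift robber moves via Lemma~\ref{lemma_powerset}, derive monotonicity by pulling a non-monotone witness path back to $G$ to contradict monotonicity of $f$, and count $2^{r-1}$ sets per vertex. The only point you gloss over, which the paper checks explicitly, is that each $w\in\olv'$ satisfies $w\notin C'$ (immediate from $\olv'\notin\olC'$), so the lifted robber move is legal.
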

\begin{proof}
Let~$f$ be a winning strategy for the cops in the game against~$r$ robbers
on~$\graphG$. We follow a play~$\pi$ consistent with~$f$ and a play~$\olpi$
of the game 
against one robber on~$\olgraphG$ simultaneously. Cop moves are translated
from~$\pi$ to~$\olpi$
and robber moves are translated in the opposite direction.
We maintain two invariants. 
\begin{itemize}
\item[] \textbf{(Robbers)} If the robber occupies a vertex~$\olv =
  \{v_1\cdots,v_s\}\in \olV$ in a position of~$\olpi$, then in the
  corresponding position of~$\pi$ (after the same number of moves),
  the robbers occupy the set~$\olv\subseteq V$.

 \item[] \textbf{(Cops)} If the cops occupy a set~$U$ in a postion of~$\pi$,
then, for every~$u\in U$, the cops occupy every~$\olu$ in the
corresponding position of~$\olpi$.
\end{itemize}

Consider any strategy~$\olg$ for the robber player for
the game with one robber on~$\olgraphG$.  We construct a play~$\olpisr$ 
of this game that is consistent with~$\olg$ (and depends on~$f$),
but is winning for the cops. As~$\olg$ is arbitrary, it follows that the cops
have a winning strategy.

We construct~$\olpisr$ by induction in the length of its finite prefixes.
For every finite prefix~$\olpi_i$ of~$\olpisr$ of length~$i$ we define a 
history~$\pi_i$ of a play on~$\graphG$ that is
consistent with~$f$ and has length~$i$. Hereby, for all even~$i\ge 2$,
if~$(\olU_j,\olU_{j+1},\olv_j)$ is the~$j$th position of~$\pi_i$,
then~$(U_j,U_{j+1},\olv_j)$
is the~$j$th position of~$\pi_i$ such that 
$\olU_j = \{\olu\in\olV \mid \olu \cap U_j \neq \0\}$ and
$\olU_{j+1} = \{\olu\in\olV \mid \olu \cap U_{j+1} \neq \0\}$, for all~$j\le i$.

For~$i=0$, let~$\olpi_i = \pi_i = \perp$.
For the translation of a robber move, let~$\olpi_i$ and~$\pi_i$ be constructed
and let the robber move from~$(\olU_i,\olU_{i+1},\olv_i)$
to~$(\olU_{i+1},\olv_{i+1})$
in the game on~$\olgraphG$.
We define~$\olpi_{i+1} = \olpi_i \cdot (\olU_{i+1}, \olv_{i+1})$ and
$\pi_{i+1} = \pi_i \cdot (U_{i+1}, \olv_{i+1})$ and show that going 
from~$\olv_i$ to~$\olv_{i+1}$ is a legal robber move in the game on~$\graphG$.

As the move from~$\olv_i$ to~$\olv_{i+1}$ is legal on~$\olgraphG$, we have
$\olv_{i+1} \notin \olU_{i+1}$ and~$\olv_{i+1}\in \Reach_{\olgraphG - (\olU_i \cap
\olU_{i+1})}(\olv_i)$.
Let~$P = \olv_i, \olv^1, \ldots, \olv^t, \olv_{i+1}$ be a path from~$\olv_i$
to~$\olv_{i+1}$ in 
$\olgraphG - (\olU_i \cap \olU_{i+1})$.
Let~$v \in \olv_{i+1}$. Then by Lemma~\ref{lemma_powerset}, there is some~$u \in
\olv_i$ and a path 
$u = u^0, u^1, \ldots, u^t, v$ in~$\graphG$ with~$u^l \in \olv^l$, for~$l = 0, 
\ldots, t$. 
We have to show that~$v \notin U_{i+1}$ and that~$v$ is reachable from~$u$ 
in~$\graphG - (U_i \cap U_{i+1})$.

First,~$\olv_{i+1} \notin \olU_{i+1}$ and therefore
$\olv_{i+1} \cap U_{i+1} = \0$, which implies~$v \notin U_{i+1}$. Now assume
towards a contradiction that~$v$ is not reachable from~$u$ in~$\graphG - (U_i 
\cap U_{i+1})$. Then there is some
$l \in \{1, \ldots, t\}$ such that~$u^l \in U_i \cap U_{i+1}$. However, 
since~$u^l \in \olv^l$, 
by the induction hypothesis, we have~$\olv^l \in \olU_i \cap \olU_{i+1}$, but
$\olv^1, \ldots, \olv^t$ is a path in~$\olG - (\olU_i \cap \olU_{i+1})$.

To translate the answer of the cops, consider the set~$U = f(U_i,\olv_i)$, which~$f$
prescribes to occupy in the next move, so~$\pi_{i+1} = \olpi_i \cdot (U_i, 
U_{i+1}, \olv_i)$.
Let the next move in~$\olpi$ be defined by~$\olU_{i+1} = \{\olu \in \olV\ \mid 
\olu \cap C \neq \0\}$, and
hence,~$\olpi_{i+1} = \olpi_i \cdot (\olU_i, \olU_{i+1}, \olv_i)$.

Finally, play~$\olpisr$ is the limit of all~$\olpi_i$, \ie the~$0$th position 
of~$\olpisr$ is~$\perp$,
and the~$i$th position is~$(\olU_i,\olU_{i+1},\olv)$, if~$i$ is a positive even 
number,
and~$(\olU_i,\olv_i)$ if~$i$ is odd.

We have to show that~$\olpisr$ is won by the cops, \ie that it is monotone and 
the robber is captured. 
To prove the monotonicity, assume towards a contradiction that the 
play~$\olpisr$ is not monotone, \ie
there is some position~$(\olU_i,\olU_{i+1},\olv_i)$ of~$\olpisr$ such that 
there is some~$\olu \in \olU_i \setminus \olU_{i+1}$ reachable from~$\olv_i$ 
in~$\olgraphG - (\olU_i \cap \olU_{i+1})$. 
Let~$\olv_i, \olv^1, \ldots, \olv^t, \olu$ be a path from~$\olv_i$ to~$\olu$ 
in~$\olgraphG$ with~$\olv^l \notin \olU_i$, for all~$l \in\{1, \ldots, t\}$.
Since~$\olu \in \olU_i\setminus \olU_{i+1}$, by the construction of~$\pi$, 
there is some~$u \in \olu$ with~$u \in U_i\setminus U_{i+1}$.  
Moreover, by Lemma \ref{lemma_powerset}, there is some 
$v_i \in \olv_i$ and a path ~$v_i, v^1, \ldots, v^t, u$ in~$\graphG$ 
with~$v^l \in \olv^l$, for all~$l \in\{1, \ldots, t\}$. By the construction
of~$\pi_i$ 
all~$v^l \notin U_i$, thus~$u$ is reachable from~$v_i$ in~$\graphG - U_i$, 
which 
contradicts the monotonicity of~$f$. Hence,~$\olpisr$ is monotone.

Consider the play~$\pisr$ obtained as a limit of all~$\pi_i$. If~$\olpisr$ is
infinite, then~$\pisr$ is infinite as well, which is impossible, as~$\pisr$
is consistent with~$f$.

Finally, we count the number of cops used by the cop player in~$\olpisr$.
Consider any position~$(\olU_i, \olU_{i+1}, \olv_i)$ occurring in~$\olpisr$. 
Since~$\pisr$ is consistent with~$f$, for the corresponding position~$(U_i, 
U_{i+1}, \olv_i)$ in
$\pisr$, we have~$|U_{i+1}| \le k$. From the construction of~$\olpisr$, it 
follows that
$|\olU_{i+1}| \le k \cdot 2^{r-1}$. Therefore, the robber does not have a 
winning strategy against 
$k \cdot 2^{r-1}$ cops in the game on~$\olgraphG$. By determinacy,~$k \cdot
2^{r-1}$ cops
have a winning strategy.
\end{proof}

\subsection{The multiple robbers game}
Let $G = (V, E)$ be a graph and $k,r\in
\omega$. The~\emph{$k$ cops and~$r$ robbers game} $\game_k^r(G)$ is defined as
follows. A position has the form $(U, R)$ or $(U, U', R)$ where $U, U', R
\subseteq V$ with  $|U|, |U'| \leq k$ and $|R| \leq r$.  Hereby~$U$
and~$U'$ are as in the game with one robber and~$R$ are the vertices
occupied by the robbers. From a cop position $(U, R)$, the cops can move to any
position $(U, U', R)$ as in the game with one robber. From a robber position $(U, U', R)$, the
robbers can move to any position  $(U', R')$ such that $R' \cap U' = \0$ and
each $r' \in R'$ is reachable from some $r \in R$ in $G - (U \cap U')$. In the
first move, the robbers can go from the initial position $\perp$ to any position $(\0, R)$ with $|R|
\leq r$.

Notice that this definition blurs the role of single robbers: first, a robber
can leave the graph and, second, one robber can induce multiple robbers in the
next position. Indeed, there may be distinct $v_1,v_2\in R'$ reachable (in $G-(U\cap U')$) only from
one vertex $v\in R$. In this case, we say informally that
robber $v_1$ \emph{runs} and robber $v_2$ \emph{jumps} if we assume that the
robber on~$v_1$ was on~$v$ before the move and the robber on~$v_2$ was on a
vertex~$w$ with $v_2\notin\Reach_{G-(U\cap U')}(w)$. However, this distinction
is not formalized (we could also swap the roles of~$v_1$ and~$v_2$) and used
only to develop better intuition. 

A play of a cops and multiple robbers game is \emph{(robber-)monotone} if
the play contains no position $(U, U', R)$ such that some
$u \in U \setminus U'$ is reachable from some $r \in R$ in $G - (U\cap U')$. 
Monotone finite plays are won by the cops,  non{}-monotone plays and infinite plays are won by
the robbers.

A \emph{memory strategy} for the cop player in a cops and multiple robbers game
is a memory structure $\calM=(M,\init,\upd)$ together with a strategy function
$f:M\times 2^V\times V \to 2^V$ (for the cop strategy), respectively $f:M\times 2^V\times 2^V \to 2^V$
(for the robber strategy).
Hereby~$M$ is a set of memory states, $\init:V\to M$, respectively $\init:2^V\to M$ 
is the memory initialization function mapping the robbers placement after the first 
move of the robbers to a memory state, and 
$\upd:M\times 2^V\times 2^V\times V\to M$, respectively $\upd:M\times 2^V\times 2^V\times 2^V\to M$
is the memory update function, which maps a memory state and a cop respectively a robber position to a new state.
A memory strategy is positional if $|M|=1$, in which case~$M$ can be omitted.
Winning strategies, plays, histories and consistency are defined in the usual
way, analogously to the case of a single robber.
A cop strategy is monotone, if every play consistent with it is monotone. As the cops have a reachability winning
condition, the cops and multiple robbers games are positionally determined. We will use 
memory strategies because they allow us more intuitive descriptions.

The least~$k$ such that the cops have a winning strategy for the~$k$ cops
and~$r$ robbers game on~$G$ is denoted by $\dw_r(G)$. Note that the \DAGw of a graph $G$ is
$\dw_1(G)$. We define $\tw_r(G)$ analogously to the case of one
robber, \ie $\tw_r(G) = \dw_r(G^\leftrightarrow)-1$ where $\graphG^\leftrightarrow$ is as~$\graphG$, but with
the edge relation replaced by its symmetrical closure. Recall from
Section~\ref{subsec_powerset} that~$\olG$ is the graph obtained from a
graph~$\graphG$ by applying the powerset construction.

As discussed above, to complete our second proof of Corollary~\ref{cor_main}  it
remains to show that Theorem~\ref{theorem_main_tech} holds. We do this in
the next section.

\section{From one robber to \texorpdfstring{$\mathbi{r}$}{\textbf{\textit{r}}} robbers}\label{sec_main_tech}

As the first step we show that we can assume without loss of generality
two restrictions on robbers strategies.  A robber strategy~$f$ is
\emph{isolating} if no two
robbers can reach one another, \ie if in any cop position~$(U,R)$ of any play
consistent with~$f$, for all~$v,w\in R$, we have~$v\notin
\Reach_{\graphG-U}(w)$.
An important special case of this rule is that there can never be two robbers in
the same component. Intuitively, if~$v\in \Reach_{\graphG-U}(w)$, then the robber
on~$v$ 
is redundant: the robbers can place him on~$v$ also in the next move. He can go 
to~$v$ by first jumping to the robber on~$w$ and then running from~$w$ to~$v$.

The second restriction on the robber moves is that each of them leaves his
vertex either if he jumps to another robber (a reason for a jump can be that he
is needed somewhere else) or if otherwise (if he does not jump, but runs) the
destination
of his run would become unreachable for him in the next move. Formally, we
say that a robber strategy~$f$ is \emph{prudent}\label{def_prudent} if, for
each robber move
$(U, U', R)\to (U', R')$ consistent with~$f$, we have~$r'\notin
\Reach_{G - U'}(R)$, for any~$r' \in R' \setminus R$. This is not a proper
restriction to the robber moves either. Indeed, running within the same
component makes no sense, as the set of vertices reachable for the robber does
not change. Running outside of the current component makes even less sense, as
that set becomes smaller.

\begin{lemma}\label{lemma_isolating_prudent}
    If~$r$ robbers have a winning strategy against~$k$ cops, then~$r$
    robbers have an isolating prudent  winning strategy 
    against~$k$ cops.
\end{lemma}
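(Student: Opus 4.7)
The plan is to start from an arbitrary winning strategy $f$ for the $r$ robbers and modify it in two stages: first enforce isolation (giving $f'$), then enforce prudence (giving $f''$), preserving winning at each stage. Throughout, the modifications are carried out by a \emph{shadow simulation}: the new strategy plays on canonically reduced robber configurations while internally tracking, as a shadow, the play that $f$ (respectively $f'$) would have produced. The cop moves are copied faithfully between the real and the shadow play, so the two plays have the same length, and it suffices to translate each shadow robber response into a legal real response and to verify that winning conditions transfer.

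For isolation, given a reachable cop position $(U,R)$ in the shadow play, I would form $R^*\subseteq R$ by quotienting $R$ under the preorder $v \preceq_U w \iff v\in\Reach_{G-U}(w)$: inside each $\preceq_U$-equivalence class pick one representative, and from the resulting transversal keep only the $\preceq_U$-maximal elements. Then $R^*$ is an antichain (hence isolating), and by construction $\Reach_{G-U}(R)=\Reach_{G-U}(R^*)$. Define $f'$ to choose $(R')^*$ whenever $f$ would choose $R'$. This is legal: for any $r'\in(R')^*\subseteq R'$ there is $r\in R$ with $r'\in\Reach_{G-(U\cap U')}(r)$, and some $r^*\in R^*$ with $r\in\Reach_{G-U}(r^*)\subseteq\Reach_{G-(U\cap U')}(r^*)$, so $r'\in\Reach_{G-(U\cap U')}(r^*)$. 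Winning transfers: an infinite shadow play yields an infinite real play; a non-monotone shadow transition $(U,U',R)$ witnessed by $u\in U\setminus U'$ reachable from some $r\in R$ in $G-(U\cap U')$ lifts to a witness for $(U,U',R^*)$ via the same concatenation $r^*\leadsto r\leadsto u$.

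For prudence, I would start from $f'$ and again run $f'$ as a shadow. At each shadow robber response $R'$ from $(U,U',R^*)$, form $\widetilde{R'}$ by deleting every $r'\in R'\setminus R^*$ that is still reachable from $R^*$ in $G-U'$ (the cops-settled graph). These deletions can be iterated to a fixpoint, after which every surviving new robber position is genuinely \emph{escaping} and the move is prudent. Legality is preserved because $\widetilde{R'}\subseteq R'$ and $\widetilde{R'}\cap U'=\varnothing$. The shadow coupling invariant to maintain is that $\widetilde{R'}\subseteq R'$ and every deleted $r'$ lies in $\Reach_{G-U'}(\widetilde{R'}\cup R^*)$, so that any capture or non-monotonicity witness produced later in the shadow against a member of $R'\setminus\widetilde{R'}$ can be rerouted through a member of $\widetilde{R'}$ in the real play; finiteness of the real play would therefore force finiteness and monotonicity of the shadow, contradicting that $f'$ wins.

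The main obstacle is the second stage: deleting robbers from $R'$ shrinks the reachable set for the \emph{next} round, so the coupling between the real and shadow plays must track not just positions but which shadow robbers are ``represented'' by which real robbers across several rounds. The isolating property obtained in the first stage is what makes this bookkeeping tractable, since real robbers live in disjoint reachability strata and each deferred shadow robber can be unambiguously associated with one real ancestor whose reachability envelope contains it until the moment it is actually needed.
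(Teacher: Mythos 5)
Your first stage is essentially the paper's own construction: the paper's single-pass proof applies an operator $\topomin$ that keeps one representative of each reachability-minimal class, which is exactly your antichain of $\preceq_U$-maximal representatives, and the envelope invariant $\Reach_{G-U'}(R'_{\text{shadow}})\subseteq\Reach_{G-U'}(R'_{\text{real}})$ that your lockstep simulation implicitly maintains is the paper's invariant. The genuine gap is in your second stage. Prudence only constrains positions in $R'\setminus R$; it does not require an imprudent robber to disappear. The correct repair --- the one the paper uses via its retraction map $\gamma$, built from a map $\beta$ sending each imprudent $v'$ to some old vertex $\beta(v')\in R$ with $v'\in\Reach_{G-U'}(\beta(v'))$ --- is to leave such a robber \emph{at its old vertex}: then $\Reach_{G-U'}(v')\subseteq\Reach_{G-U'}(\beta(v'))$ and $\beta(v')$ belongs to the new robber set, so the envelope invariant survives the move. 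Deleting $r'$ instead genuinely shrinks the territory. Your coupling invariant asserts $r'\in\Reach_{G-U'}\bigl(\widetilde{R'}\cup R^*\bigr)$, but $R^*$ is the \emph{pre-move} configuration, which the real robbers vacate in this very move; if $r'$ is reachable from $R^*$ only along paths that do not start in $\widetilde{R'}$, then after the move no surviving real robber can reach $r'$ (nor, in general, anything in $\Reach_{G-U'}(r')$), and the deleted robber cannot be re-spawned later by jumping. The shadow strategy may from then on survive precisely by exploiting $\Reach_{G-U'}(r')$ while every real robber is captured, so the constructed strategy need not be winning; and the invariant as stated cannot even be propagated to the next round, since it refers to a set that is no longer occupied.

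A second, smaller point: once you repair stage two by retraction rather than deletion, the retracted robbers sit at old vertices of $R^*$ and may reach, or be reached from, the retained new positions, so isolation can be broken and the $\topomin$ reduction must be re-applied after the retraction. This is why the paper interleaves the two operations inside a single inductive step, taking the new robber set to be $\topomin$ of the $\gamma$-image of $R'$, rather than running two separate passes over the strategy.
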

\begin{proof}
  Given a set of vertices~$U$, we say that~$R$ and~$\hat R$ are
  equivalent, $R\equiv_U \hat R$, if for all $r\in R$ there is some
  $\hat r \in\hat R$ and vice versa, for all $\hat r\in\hat R$ there
  is some $r\in R$ such that $r$ and~$\hat r$ are in the same
  component of $G-U$.

  Let~$f$ be a positional winning strategy for~$r$ robbers in the
  monotone multiple robbers game on~$G$ against~$k$ cops. We construct
  an isolating prudent strategy~$\hat f$ for~$r$ robbers against~$k$
  cops by induction on the play length playing simultaneously a
  play~$\pi$ consistent with~$f$ and a play~$\pi'$ consistent
  with~$\hat f$. We translate a cop move from~$\hat \pi$ to~$\pi$ and
  a robber move from~$\pi$ to~$\hat \pi$ and show the
  following. If~$(U,U',R)\to(U',R')$ is the~$i$th robber move in~$\pi$
  and~$(\hat U,\hat U',\hat R)\to(\hat U',\hat R')$ is the~$i$th
  robber move in~$\hat\pi$, then
\begin{itemize}
\item~$U= \hat U$ and~$U' = \hat U'$, and
\item~$\Reach_{G-U'}(R') \subseteq \Reach_{G-U'}(\hat R')$.
\end{itemize}

Clearly, this implies that~$\hat f$ is a winning strategy.
In the beginning of a play, the first robber move
is translated as in the general case. Cop moves are translated without any
change, so the invariant is not broken.

For the translation of a robber move, consider the topological
order~${}\topordereq{}$ on vertices of~$G-U'$ where $v \topordereq w$
if $w \in \Reach_{\graphG - U'}(v)$. Let~$\alpha\colon 
G\bigl/_{\equiv_{U'}} \to G$ be a choice function
on~$G\bigl/_{\equiv_{U'}}$. For every set of vertices~$R$, let~$\topomin\colon 2^G
\to 2^G$ be defined by $\topomin(R) = \{\alpha(r) \mid r \text{ is }\topordereq
\text{-minimal in }R\}$.

Let~$A = \{v'\in R'\setminus R \mid v' \in \Reach_{G-U'}(R)\}$ and let
$\beta\colon A\to R$ be some function with $v' \in
\Reach_{G-U'}(\beta(v'))$. It is exists by the definition of
robber moves. Let $\gamma\colon R' \to G$ with $\gamma(v') =
\beta(v')$ if $v'\in A$ and $\gamma(v') = v'$ otherwise.
If~$f$ prescribes to move from~$(U,U',R)$ to~$(U',R')$,
then~$\hat f$ prescribes to move from~$(U,U',\hat R)$ to~$(U',\hat
R')$ where $\hat R' = \topomin(\{\gamma(v') \mid v'\in R'\})$.
Then~$\hat f$ is isolating and prudent.
Note that, by the first part of the invariant, strategy~$\hat f$ is well
defined. The invariant follows directly from the construction.

%  Let $f$ be a winning strategy for $r$ robbers in the monotone multiple robbers
% game on a graph $G$ against $k$ cops.
% We construct a strategy~$\hat f$ for~$r$ robbers against~$k$ cops
% by induction on the play length and show simultaneously the following. For each play $\pi$
% which is consistent with $f$ there is a play $\hat\pi$ which is consistent with
% $\hat f$ (and conversely, for each $\hat\pi$ there is some $\pi$), the reachability
% regions of all robbers in both plays are the same. In other words,
% if $(U,U',R)\to(U',R')$ is the $i$th robber move in $\pi$ and if 
% $(U,U',\hat R)\to(U',\hat R')$ is the $i$th robber move in $\hat\pi$
% then $\Reach_{G-U'}(R')=\Reach_{G-U'}(\hat R')$. This is achieved as follows.
% Consider the topological order on vertices of $G-U'$. 
% If $f$ prescribes to move from $(U,U',R)$ to $(U',R')$ then $\hat f$ prescribes to
% move from $(U,U',\hat R)$ to $(U',\hat R')$ where $\hat R'$ is a 
% set of topologically minimal vertices of $R'$ such that $\hat R'$ contains
% only one vertex from any equivalence class of $\equiv_{U'}$. We have to show that (1) such a move
% is possible, \ie $R'\subseteq \Reach_{G-U'}(\hat R)$ and (2) that the invariant
% $\Reach_{G-U'}(R')=\Reach_{G-U'}(\hat R')$ holds. Condition~(1) follows directly from
% the induction hypothesis, that is from $\Reach_{G-U'}(R)$ $=$ $\Reach_{G-U'}(\hat R)$ because $R'$ $\subseteq$ $\Reach_{G-U'}(R)$
% and condition~(2) is clear by construction~of~$\hat f$.
\end{proof}

\subsection{Tree-width and componentwise hunting}\label{sec_tree-width}

Before we prove our main result of this section, 
let us first consider the same problem for the game characterizing \treew. 

\begin{lemma}\label{lemma_Rtreewidth} For all $G$ and $k,r>0$,
if $\tw(G)+1 \leq k$, then $\tw_r(G) +1 \leq r \cdot (k+1)$.
\end{lemma}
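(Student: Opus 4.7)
The plan is to reduce to undirected graphs and then hunt robbers componentwise. Since $\tw(G)+1 = \dw(G^\leftrightarrow)$ and $\tw_r(G)+1 = \dw_r(G^\leftrightarrow)$ by definition, it suffices to prove the following undirected statement: if $k$ cops have a monotone winning strategy against a single robber on an undirected graph $H = G^\leftrightarrow$, then $r(k+1)$ cops have a monotone winning strategy against $r$ robbers on $H$. By Lemma~\ref{lemma_isolating_prudent} I may assume throughout that the robbers play an isolating prudent strategy, so whenever the cops are about to move, the robbers sit in pairwise distinct components $C_1,\dots,C_s$ of $H-U$, with $s\le r$.

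My strategy maintains at most $r$ simultaneous \emph{teams} of cops, one per active component, each team using at most $k+1$ cops. Within an active component $C_i$ a single designated cop of team $i$ (its \emph{gate cop}) sits on the vertex through which the robber entered $C_i$; this vertex lies in the boundary of $C_i$ and will never be released until the robber in $C_i$ is captured. The remaining $k$ cops of team $i$ execute a monotone single-robber strategy on the induced subgraph $H[C_i]$, which has tree-width at most $k-1$ and so admits a monotone winning strategy for $k$ cops. The gate cop is what buys the extra~$+1$: during a transition move of team $i$ from $U_i$ to $U'_i$ the global separator is $\bigl(\bigcup_j U_j\bigr)\cap\bigl(\bigcup_j U'_j\bigr)$, and having the gate cop remain in $U_i\cap U'_i$ ensures that no robber in a neighbouring component can use $C_i$ as a bridge to a cleared region.

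I maintain two invariants: (i)~the cop placement is the disjoint union of the team placements; (ii)~inside each active component the single-robber hunt is monotone. The two delicate robber moves to handle are spawning and jumping. If the robber in $C_i$ spawns into robbers in sub-components $C_{i,1},\dots,C_{i,t}$ of the refined graph, I dissolve team~$i$ and create $t$ fresh teams, one per new sub-component, each starting the single-robber strategy from scratch with~$k+1$ cops; because the robbers remain isolated and the sub-components lie inside the previously maintained $C_i$, monotonicity is preserved. If two robbers end up in one component (after a jump through another robber) the corresponding teams are merged into one team on that component. The total number of simultaneously active teams never exceeds~$r$ by the isolating property, so the overall cop count stays bounded by $r(k+1)$.

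The main obstacle is termination and monotonicity under the spawn/jump dynamics: a priori the robbers could repeatedly spawn and jump so as to restart hunts indefinitely. To rule this out I use that each invocation of the single-robber monotone strategy in $H[C_i]$ strictly shrinks the set of vertices of $C_i$ still reachable by the robber there, together with the fact that spawning strictly partitions a component into smaller ones and jumping (into a cleared area) is blocked by gate cops. Consequently every play is finite and monotone, and the $r(k+1)$ cops win, yielding $\dw_r(H) \le r(k+1)$ and hence $\tw_r(G)+1 \le r(k+1)$, as required.
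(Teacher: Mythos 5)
Your overall plan---pass to $G^\leftrightarrow$, invoke Lemma~\ref{lemma_isolating_prudent}, and let per-component teams of cops run the single-robber strategy---is the same as the paper's (which in fact gets by with $k$ cops per team and no gate cop, proving the stronger bound $\tw_r(G)+1\le r\cdot k$). However, your treatment of spawning has a genuine gap. When the robber in $C_i$ spawns into sub-components $C_{i,1},\dots,C_{i,t}$, you ``dissolve team~$i$'' and start fresh single-robber hunts on the induced subgraphs $H[C_{i,j}]$. Dissolving team~$i$ means releasing its cops, and those cops sit exactly on the boundaries of the $C_{i,j}$ (their placement is what created those sub-components). A freed boundary vertex $u\in U\setminus U'$ of $C_{i,j}$ is adjacent to $C_{i,j}$, hence reachable from the robber inside it in $G-(U\cap U')$; by the definition of monotonicity this single move already loses the play. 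A lone gate cop cannot repair this: the boundary of a component can contain up to $k$ separator vertices (plus older ones), and ``the vertex through which the robber entered $C_i$'' is not well defined---in this game components shrink around the robber rather than being entered through a single vertex. For the same reason, the fresh hunt restricted to $H[C_{i,j}]$ is not a sound move of the global game, since nothing confines the robber to $C_{i,j}$ once the old separator is lifted. (Cross-component ``bridging,'' which your gate cop is meant to block, is not actually the danger: distinct components of $H-U$ are already mutually unreachable.)

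The paper's proof avoids this by never restarting. Normalize $f$ so that it only places cops inside $\Reach_{G-(U\cap U')}(v)$, i.e.\ inside the robber's current component; maintain $r$ teams from the outset, each carrying an $f$-consistent single-robber position $(U_i,v_i)$, with several teams allowed to share the same position (and the same cops) while they track the same robber. When a robber splits, the teams tracking it are redistributed among its offspring and each simply \emph{continues} the same $f$-play from its current position with the new robber vertex; no cop is ever withdrawn except as $f$ dictates. Monotonicity of each team's play is then inherited from $f$, and the invariant $(U\setminus U_i)\cap\Reach_{G-U_i}(v_i)=\emptyset$ guarantees that robber $i$ can never witness another team's cop removals. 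If you wish to keep a restart-based argument, you would have to hold the entire old separator until the new one is in place, which costs far more than one extra cop per team.
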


It follows from this lemma that if \treew is fixed, then parity games
with bounded imperfect information are solvable in polynomial time
because $\tw_r(\graphG) \le r\cdot k$ implies $\dw_r(\graphG) \le
r\cdot k$ . As the \pathw of a graph is always at least its \treew, we obtain the
same result for (undirected) \pathw.

\begin{proof}
Without loss of generality let~$G$ be undirected.
Let~$f$ be a monotone winning strategy for~$k$ cops
in the game on~$G$ against one robber. 
As~$f$ is monotone, we can assume that cops are not placed on vertices that are 
already unavailable for the robber, \ie for a move $(U,v)\to(U,U',v)$ we always have 
$U'\setminus U \subseteq\Reach_{G-(U\cap U')}(v)$. (Otherwise, instead of~$f$, 
consider a strategy that is as~$f$, but never places cops on such vertices. 
This strategy will be still monotone and winning and will use at most~$k$ 
cops.)
We construct a monotone strategy $\otimes_r f$ for $k\cdot r$ cops
in the game on~$G$ with $r$ robbers that is winning 
against each isolating robber strategy. 

Intuitively, the cop player uses~$r$ teams of cops 
with~$k$ cops in each team.
Every team plays independently of each other chasing its own robber
according to~$f$. We maintain the invariant that in each cop position $(U,R)$
that is consistent with $\otimes_r f$, there is a partition $(U_1,\cdots , U_r)$ of
$U$ and an enumeration of $v_1,\cdots,v_r$ of $R$ such that for each $v_i$, 
$(U\setminus U_i) \cap \Reach_{G-U_i}(v_i)=\emptyset$, \ie cops on $U_i$
block $v_i$ from other cops, and that $(U_i,v_i)$ is consistent
with $f$ in the game with one robber. The next move of the cops is
$\otimes_r f(U,R) = \bigcup_{i=1}^r f(U_i,v_i)$. By a simple induction on the length
of a play it is easy to see that the invariant holds, which implies that the cops
monotonically catch all~$r$ robbers.
\end{proof}

The reason why the proof is so simple is that in an \emph{undirected}
graph the set of vertices reachable from a given position is
precisely the connected component which contains these positions. Thus
the strategy~$f$ does not need to place cops on vertices outside the
robber component. For directed graphs, this is not true and the simple
translation of strategies is not possible without certain refinement
any more. Consider the following possible situation. The cops play
simultaneously against all robbers according to a winning strategy~$f$
in the game against one robber as before. A slightly different
variant of this approach (that will be used in the proof of
Theorem~\ref{theorem_main_tech}) is that they choose one of them
(say, occupying some vertex~$v_1$) to play against him further while the cops
of other teams wait for this robber to be caught. The robbers stay in two distinct components 
on~$v_1$ and~$v_2$. The problem is that~$v_2$, may prevent playing against~$v_1$. If~$f$ says
to place a cop on a vertex~$v$ that is reachable from~$v_2$, it may become
impossible to reuse the cop from~$v$ later playing against~$v_1$, although~$f$ 
prescribes to do so: $v_2$ would induce non{}-monotonicity on~$v$. 
% Our solution is to \emph{omit} to
% place the cop on~$v$ and to play against~$v_1$ further according to~$f$. The
% cops from the team of~$v_2$ have the duty to guard every vertex that is
% not guarded by the (absent) cop on~$v$. If the robber on~$v_2$ leaves his 
% vertex and
% jumps (say, to~$v_1$), the cops from his team play according to~$f$ from the
% position where they stopped until they occupy~$v$. Thus the omitted move to~$v$ 
% is
% performed later.

One approach to solve this problem is to change~$f$ such that it does
not prescribe to place cops outside of the robber component. It would
suffice to prove that there is a function $F : \omega \to \omega$ such
that every cop winning strategy~$f$ for~$k$ cops against one robber
can be transformed into a winning strategy~$f'$ for $F(k)$ cops
against one robber that never prescribes to place cops outside of the
robber component.  In other words, strategy~$f'$ should fulfill the
following property: in a position $(U,v)$, if~$C$ is the component of
$G-U$ with $v\in C$, then $f'(U,v) \subseteq C$. However, such a
function~$F$ does not exist, as we will show in
Theorem~\ref{thm_OffhandedCops}. For this proof we need a statement
about cop strategies. In the next lemma we show that any cop
positional winning strategy for the game with one robber can be
modified without using additional cops to obtain a new positional
strategy that obeys the following rules. It does not place a cop on a
vertex that is already unavailable for the robber and always
prescribes to place new cops.  In a graph $G$, for a set $A$ and a
vertex $v\notin A$, let $\front_G(v,A)$ be the inclusion minimal
subset~$B$ of~$A$ such that $\Reach_{G-A}(v) = \Reach_{G-B}(v)$. It is
easy to see that~$B$ is unique and thus well-{}defined.

\begin{lemma}\label{lemma_useless_moves}
  On a graph~$G$, if~$f$ is a positional monotone winning strategy
  for~$k$ cops against one robber, then there is a positional monotone
  winning strategy~$f^*$ for~$k$ cops against one robber, such
  that, for every position $(U,v)$ that appears in a play consistent
  with~$f^*$, we have that $f^*(U,v) \setminus U \neq \0$
  and that any $u \in f^*(U,v) \setminus U$ is reachable
  from~$v$ in $G - U$.
\end{lemma}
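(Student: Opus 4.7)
The plan is to modify $f$ in two steps: first, discard cop placements on vertices unreachable from the robber; second, force every non-terminal move to add at least one new cop.

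In the first step, define
\[ f_1(U, v) := f(U, v) \setminus \bigl\{u \in f(U, v) \setminus U \,:\, u \notin \Reach_{G - U}(v)\bigr\}\,, \]
which drops every newly placed cop outside the robber's reachable region. To verify that $f_1$ remains monotone and winning, I would run it alongside $f$ and maintain the invariant that, at every position, the actual placement $U^*$ produced by $f_1$ and the virtual placement $\tilde U$ produced by $f$ satisfy $U^* \subseteq \tilde U$ and $(\tilde U \setminus U^*) \cap \Reach_{G - U^*}(v) = \emptyset$. Using the monotonicity of $f$ inductively, this invariant yields $\Reach_{G - U^*}(v) = \Reach_{G - \tilde U}(v)$, so the robber faces exactly the same legal moves under $f_1$ as under $f$, and since $U^* \cap U^{*\prime} = \tilde U \cap \tilde U'$ the monotonicity constraint transfers verbatim. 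Any infinite $f_1$-consistent play then induces an infinite $f$-consistent play, contradicting that $f$ is winning.

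In the second step, set $f^*(U, v) := f_1(U, v)$ whenever $f_1(U, v) \setminus U \neq \emptyset$, and $f^*(U, v) := f_1(U, v) \cup \{v\}$ otherwise. Since $v \in \Reach_{G - U}(v)$, the supplementary cop satisfies condition (a); condition (b) then holds by construction, and monotonicity is preserved because adding a cop does not remove any existing one. The main obstacle will be the cop budget: when $|U| = k$ and $f_1(U, v) = U$, adding $v$ would produce $k + 1$ cops. I plan to handle this by showing that at any $f^*$-reachable non-terminal position of this form some cop $u \in U$ lies outside $\Reach_{G - (U \setminus \{u\})}(v)$ and can be dropped without violating monotonicity, freeing a slot for the cop on $v$. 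The justification combines the Step~1 invariant with the fact that a winning positional $f$ cannot stall indefinitely on every $f$-consistent continuation, so the next cop placement $f$ would make can be pulled back positionally to $(U, v)$.
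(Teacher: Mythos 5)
Your Step~1 is essentially the first half of the paper's own argument: the paper runs the reduced play alongside the original one and maintains precisely the invariant $U^* \subseteq U$ together with $\Reach_{G-U}(v) = \Reach_{G-U^*}(v)$. (One small repair there: $U^* \cap U^{*\prime} = \tilde U \cap \tilde U'$ is in general only an inclusion, since a cop dropped at an earlier stage may persist in both virtual placements; what you actually have, and what suffices both for legality of the robber's moves and for the transfer of monotonicity, is that the two intersections leave the robber the same reachable region.)

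The genuine gap is Step~2. Placing an additional cop on the robber's current vertex $v$ is a move that $f$ never prescribes, so at that instant your simulation invariant $U^* \subseteq \tilde U$ is destroyed and nothing ties the continuation to $f$ any more; you also face the dilemma of either never lifting that extra cop (so the budget overruns after several such events) or lifting it later at the risk of recontaminating $v$. Your fallback of deleting some cop $u\in U$ with $u \notin \Reach_{G-(U\setminus\{u\})}(v)$ need not be available: all $k$ cops can simultaneously lie on the front (say, on the out-neighbours of $v$), so that removing any one of them re-exposes it. The paper avoids all of this by observing that when $f$ makes a move placing no new cop in the robber's reachable region, the robber may legally remain on $v$; one therefore iterates $f$ from $(U,v)$ with the robber held fixed, setting $U_0 = U$ and $U_{i+1} = f(U_i,v)$, and stops at the first placement whose front $\front_G(v,\cdot)$ differs from that of its predecessor, \ie at the first move of $f$ that genuinely restricts the robber. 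Declaring that placement to be the single move $f^*(U,v)$ compresses the entire stalling phase into one step; it never exceeds $k$ cops because each $U_i$ is itself an $f$-placement, it terminates because $f$ wins against the robber who sits still, and monotonicity of $f$ forces the front to change only by the arrival of a new cop reachable from $v$, which is exactly the conclusion of the lemma. Your closing remark that the next useful placement of $f$ ``can be pulled back positionally'' points at this construction, but as written it is an unproven patch bolted onto an incorrect primary mechanism rather than the mechanism itself.
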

\begin{proof}
  We construct~$f^*$ by induction on the length of
  the finite prefixes~$\pi$ of plays consistent with~$f$ together
  with finite prefixes~$\pi^*$ of plays consistent with~$f^*$
  such that the following invariant holds:
\begin{itemize}
\item $|\pi^*|\le |\pi|$;

\item if $\last(\pi) = (U,v)$ and $\last(\pi^*)=(U^*,v^*)$, then 
    \begin{itemize}
    \item $v = v^*$,
    \item $U^*\subseteq U$ and
    \item $\Reach_{G-U}(v) = \Reach_{G-U^*}(v)$;
    \end{itemize}

\item if $\last(\pi) = (U,U',v)$ and $\last(\pi^*)=(U^*,{U^*}',v^*)$, then 
    \begin{itemize}
    \item $v = v^*$,
    \item $U^*\subseteq U$, ${U^*}'\subseteq U'$ and
    \item $\Reach_{G-(U\cap U')}(v) = \Reach_{G-(U^* \cap {U^*}')}(v)$. 
  \end{itemize}
\end{itemize}
Notice that the invariant immediately implies that~$f^*$ is winning
for the cops.

For a cop position~$(U,v)$, let~$(U=U_0,v),(U_1,v),\ldots,(U_m,v)$ be defined by
the following rule: 
\begin{itemize}
\item $U_0 = U$, $U_1 = f(U_0,v)$ (so $m\ge 1$), and
\item if $\front_G(v,U_{i-1}) \neq \front_G(v,U_i)$, then $U_{i+1} =
f(U_i,v)$, otherwise,~$m = i$ and~$U_{i+1}$ does not exist.
\end{itemize}
Then~$f^*(U,v) = U_m$. Intuitively, we skip all cop moves according
to~$f$ in which new cops are only placed on or removed from vertices
behind the front, \ie on vertices that are not reachable from the
robber vertex because of other cops. The next cop move
according to~$f^*$ is the first move according to~$f$ where the cops
are placed between the robber and the front (when the front changes)
under the assumption that the robber does not move while the cops move
behind the front.

At the beginning, we have~$\pi = \pi^* = \perp$ and the invariant trivially
holds. In general, let~$f^*$ be defined for all positions in plays up to a
certain length. Consider finite histories~$\pi$ and~$\pi^*$ as above. 

Let~$\last(\pi) = (U,U',v)$ and $\last(\pi^*)=(U^*,{U^*}',v)$ and let the robber
move from~$\last(\pi^*)$ to a position~$({U^*}',v')$. Then we extend~$\pi$ by
position~$(U',v')$ and the invariant holds again.

Let~$\last(\pi) = (U,v)$ and $\last(\pi^*)=(U^*,v^*)$. Then the next move of the
cops is to~$f^*(U^*,v)$ and the invariant still holds. As~$U^*\subseteq U$
and~${U^*}'\subseteq U'$, $f^*$ uses at most~$k$ cops. Note that $f^*$
is positional.
\end{proof}

Now we prove that the cops have to place themselves outside of the
robber component.

\begin{theorem}\label{thm_OffhandedCops}
For $n>0$ there are graphs $G_n$ such that $\dw(G_n) \leq 3$ for all~$n$,
but any winning cop strategy which is restricted to place cops only inside the robber
component, uses at least $n+1$ cops.
\end{theorem}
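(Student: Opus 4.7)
The plan is to construct an explicit family $G_n$, exhibit a monotone three-cop winning strategy, and then show that under the restriction every winning strategy must place a distinct cop at each of $n$ key positions.

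\emph{Construction.} I will build $G_n$ as a chain of $n$ elementary gadgets $H_1,\dots,H_n$ glued along shared boundary vertices. Each $H_i$ is a small strongly connected directed subgraph with a distinguished \emph{entry} $e_i$, a distinguished \emph{exit} $x_i$ (identified with $e_{i+1}$), and an auxiliary \emph{pre-block} vertex $p_i$ whose role is to lie on every back-route from $H_i$ into $H_{i-1}$. The gadget is designed so that once the robber reaches $e_i$, the vertex $p_{i+1}$ becomes unreachable from the robber's position in the un-copped subgraph, while the unrestricted cops could already have occupied $p_{i+1}$ before the robber committed to $H_i$. The simplest realization is a two-path diamond from $e_i$ to $x_i$ with a single back-edge routed through $p_i$.

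\emph{Upper bound $\dw(G_n)\le 3$.} I will describe a monotone three-cop strategy that proceeds gadget by gadget, maintaining the invariant that at the beginning of the $i$-th phase one cop sits on $e_i$, one cop sits on $p_{i+1}$ (which lies outside the robber's current component, and is legal precisely because the restriction is not imposed), and the third cop is free to sweep $H_i$. After the robber is either captured inside $H_i$ or driven through $x_i = e_{i+1}$, the cops rotate: release the cop on $e_i$, relabel the cop on $p_{i+1}$ as the new entry cop, and use the released cop to establish the new pre-block on $p_{i+2}$. Monotonicity is preserved because the pre-block severs the only back-route, so no cleared vertex ever becomes reachable for the robber again.

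\emph{Lower bound.} Given any cop strategy $\sigma$ using at most $n$ cops and restricted to placing cops inside the robber's current component, I will exhibit an adversarial robber strategy. By Lemma~\ref{lemma_useless_moves}, we may assume $\sigma$ never wastes cops, so each placement is a genuine block on a reachable vertex. The robber advances from $e_1$ towards $e_n$, always choosing the path inside $H_i$ that still avoids the cops. I maintain the invariant that after the robber has entered $H_i$, at least $i$ of $\sigma$'s cops have been permanently committed to vertices of $H_1\cup\dots\cup H_i$: indeed, each $p_j$ with $j\le i$ lies outside the current robber component, so $\sigma$ could not place a cop there when needed; consequently, to stop the robber from walking back through $H_{j-1}\cup\dots\cup H_1$, $\sigma$ must hold a cop on the shared boundary $e_j$ for every $j\le i$, else monotonicity is violated. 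After the robber reaches $e_n$ one more cop is required to finish the capture, forcing $n+1$ cops.

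\emph{Main obstacle.} The delicate point is engineering $H_i$ so that both bullets hold simultaneously: the pre-block $p_{i+1}$ really must leave the robber's component the moment the robber enters $H_i$ (so the restriction bites), yet the unrestricted strategy must truly get by with only three cops (so no additional structure is inadvertently forced). I expect to resolve this by choosing $H_i$ small enough, with exactly one non-trivial back-route that passes through $p_i$, which simultaneously validates the 3-cop upper bound and the per-gadget lower bound.
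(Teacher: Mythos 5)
There is a genuine gap, and it has two parts. First, your gadget mechanism confuses two different notions. You design $p_{i+1}$ so that it ``becomes unreachable from the robber's position in the un-copped subgraph'' once the robber enters $H_i$. But a vertex that is genuinely unreachable from the robber is irrelevant to the game: by Lemma~\ref{lemma_useless_moves} the cops never need to occupy it, so forbidding them from doing so costs nothing and the restriction does not bite. The restriction in Theorem~\ref{thm_OffhandedCops} only has teeth on vertices that remain \emph{reachable} from the robber (hence dangerous, because they lead back to previously cleared territory) while lying outside his \emph{strongly connected component} of $G-U$ (hence forbidden to the restricted cops). The paper's construction manufactures exactly such vertices: a second, one-way copy of the tree whose vertex $v'$ stays reachable from everything below $v$ and offers a route back to the parent of $v$, but drops out of the robber's component the instant $v$ is blocked. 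Your $p_{i+1}$, as described, is not of this kind, so the cop on $e_i$ in your lower-bound argument is never actually pinned.

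Second, even granting a corrected gadget, your lower-bound invariant (``at least $i$ cops permanently committed after the robber enters $H_i$'') is unsupported. You assert that $\sigma$ must hold a cop on every boundary $e_j$, $j\le i$, ``else monotonicity is violated'' --- but monotonicity only forbids the robber from reaching a vertex a cop has just vacated; if the cops never cleared $H_1,\dots,H_{j-1}$ they owe nothing there. So you must additionally force the cops to sweep the structure in order, one gadget at a time, and a linear chain gives them no reason to do so: a restricted strategy can place a cop in the middle of the chain (still inside the robber's component) and recurse on the half containing the robber, plausibly with $O(\log n)$ or even $O(1)$ cops. The paper closes exactly this hole by giving its trees branching degree $n+1$: the robber can dodge among the many cop-free siblings until the parent itself is occupied, which forces top-down play with no holes, and then the one-way twin of each occupied ancestor ties down one cop per level, yielding the bound $n+1$ from the depth. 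Some analogue of this branching/forcing mechanism is the missing ingredient in your chain construction.
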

\begin{proof}
  The graph~$G_n$ is the disjoint union of an undirected and a
  directed tree, both of the same shape: full trees of branching
  degree and depth~$n+1$, with some additional edges connecting
  the trees, see Figure~\ref{pic_offhanded}.

  Let, for~$i\in\{0,1\}$ and~$m,n>0$, $ A(i,m,n) = (\{1, \dots,
  n\}\times \{i\})^{\le m}$ be the set of all sequences of length at
  most~$m$ over the alphabet~$\{1, \dots, n\}$ labeled with~$i$ (the
  labeling is used to distinguish the trees).  For~$v =
  (v_0,i),\dots,(v_l,i)\in A(i,m,n)$, let~$v'$ be the word
  $(v_0,1-i),\dots,(v_l,1-i) \in A(1-i,m,n)$.

The vertex set of $G_n$ is defined by~$V_n = V^0_n\cup V^1_n$ where 
$V^0_n = A(0,n,n+1)$ and~$V^1_n = A(1,n,n+1)$.

The edges are defined by~$E_n = E^0_n \cup E^1_n\cup E'_n$. Hereby 
\[E^0_n = \{(v, vj),  (vj,v) \mid v \in A(0,n-1,n+1), j \in A(0,1,n+1)\}\,,\] 
\[E^1_n = \{(vj, v) \mid v \in A(1,n-1,n+1), j \in A(1,1,n) \}\,,\text{ and}\]
\[E'_n = \{(v,v') \mid v\in A(0,n,n+1)\} \cup 
  \{(vj, v') \mid v \in A(1,n-1,n+1), j \in A(1,1,n)\}\,.\]

It is easy to see that cops three capture the robber. They occupy both 
roots~$(0,1)$ and~$(1,1)$ in the first move. By symmetry we can assume that 
the robber goes to the left-most subtree. Then the third cop is placed 
on the successor~$(1,2)$ of~$(1,1)$ and then the cop from~$(1,1)$ is moved
to~$(0,2)$. In this manner, the cops work through both trees top-down and the 
robber is captured in some leaf.

For the second statement, define $T^0_n = (V^i_n,E^i_n)$ for
$i=\{0,1\}$ and note that it makes no sense for the cops to leave out
holes, \ie to place cops on subtrees of~$T^0_n$ or~$T^1_n$ rooted at a
vertex~$v\in V^0_n$, respectively, at~$v'$, if~$v$ is reachable from
the robber vertex via some cop free path.  Indeed, due to the high
branching degree, the robber can switch between subtrees of~$v$ going
into those having no cop in them until~$v$ is occupied by a cop. In
that position the cops from other components than that of the robber can be
removed by Lemma~\ref{lemma_useless_moves}. So we can assume that the
cops play top-down, \ie they never leave out holes. Then the robber
strategy is just to stay in the left-most branch. Note that after a
vertex~$v\in V^0_n$ is occupied by a cop, vertex~$v'\in V^1_n$ is not
in the robber component any more. Thus the cops occupy
successively~$(\epsilon,0)$,~$(1,0)$, $(2,0)$, and so on.  In that
way, more and more cops become tied, \ie for every cop on a
vertex~$(j,0)$, there is a cop-{}free path from the robber vertex
to~$(i,0)$.
\end{proof}

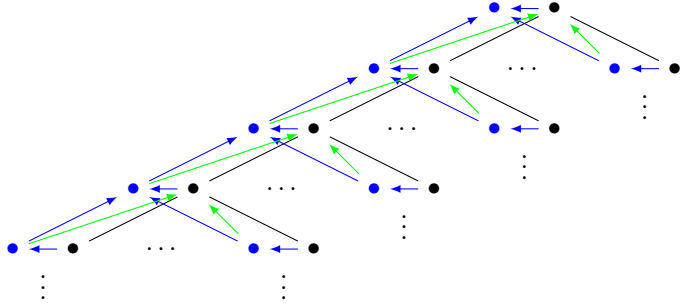
\begin{figure}
\begin{center}
%\vspace{-30pt}
\begin{tikzpicture}[scale=0.8]

\node 		(epsilon)	 at (0,0) {$\bullet$};
\node[blue] 	(epsilon') at (-1,0) {$\bullet$};
\node 		(1) at (-2,-1) {$\bullet$};
\node[blue] 	(1') at (-3,-1) {$\bullet$};
\node 		(11) at (-4,-2) {$\bullet$};
\node[blue] 	(11') at (-5,-2) {$\bullet$};
\node 		(111) at (-6,-3) {$\bullet$};
\node[blue]	(111') at (-7,-3) {$\bullet$};
\node 		(1111) at (-8,-4) {$\bullet$};
\node[blue] 	(1111') at (-9,-4) {$\bullet$};
\node 		(n) at (2,-1) {$\bullet$};
\node[blue] 	(n') at (1,-1) {$\bullet$};
\node 		(1n) at (0,-2) {$\bullet$};
\node[blue] 	(1n') at (-1,-2) {$\bullet$};
\node 		(11n) at (-2,-3) {$\bullet$};
\node[blue] 	(11n') at (-3,-3) {$\bullet$};
\node 		(111n) at (-4,-4) {$\bullet$};
\node[blue] 	(111n') at (-5,-4) {$\bullet$};

\draw (epsilon) to (1);
\draw (epsilon) to (n);
\draw (1) to (11);
\draw (1) to (1n);
\draw (11) to (111);
\draw (11) to (11n);
\draw (111) to (1111);
\draw (111) to (111n);

\draw[<-,blue] (epsilon') to (1');
\draw[<-,blue] (epsilon') to (n');
\draw[<-,blue] (1') to (11');
\draw[<-,blue] (1') to (1n');
\draw[<-,blue] (11') to (111');
\draw[<-,blue] (11') to (11n');
\draw[<-,blue] (111') to (1111');
\draw[<-,blue] (111') to (111n');

\draw[<-,blue] (epsilon') to (epsilon);
\draw[<-,blue] (1') to (1);
\draw[<-,blue] (11') to (11);
\draw[<-,blue] (111') to (111);
\draw[<-,blue] (1111') to (1111);

\draw[<-,blue] (n') to (n);
\draw[<-,blue] (1n') to (1n);
\draw[<-,blue] (11n') to (11n);
\draw[<-,blue] (111n') to (111n);

\draw[->,green] (1') to (epsilon);
\draw[->,green] (n') to (epsilon);
\draw[->,green] (11') to (1);
\draw[->,green] (1n') to (1);
\draw[->,green] (111') to (11);
\draw[->,green] (11n') to (11);
\draw[->,green] (1111') to (111);
\draw[->,green] (111n') to (111);

\node (dots) at (-0.5,-1){\dots};
\node (dots) at (-2.5,-2){\dots};
\node (dots) at (-4.5,-3){\dots};
\node (dots) at (-6.5,-4){\dots};

\node (dots) at (1.5,-1.5){\vdots};
\node (dots) at (-0.5,-2.5){\vdots};
\node (dots) at (-2.5,-3.5){\vdots};
\node (dots) at (-4.5,-4.5){\vdots};
\node (dots) at (-8.5,-4.5){\vdots};

\end{tikzpicture}
\end{center}
\caption{$\dw(G_n)=4$, but 
the robber wins against $n$ cops if they move only into his component.}
\label{pic_offhanded}
\end{figure}

Before we proceed with the case of directed graphs, let us mention
that bounded \treew of the \emph{Gaifman graphs} of given games
already implies that parity games with imperfect information are
solvable in \ptime. The Gaifman graph of a relational structure $S =
(A,R_1,\dots,R_m)$ is the undirected graph with vertices~$A$ and an
edge between vertices~$v$ and~$w$ if $v$ and $w$ appear in the same
tuple in some relation $R_i$ for $i\in\{1,\dots,m\}$. Thus a
$\sim$-equivalence class in a game with imperfect information
induces a clique in the Gaifman graph consisting of all equivalent
vertices. Thus if the \treew of the Gaifman graphs of some games is
bounded, then so is the imperfect information. This implies the
following corollary.
\begin{corollary}\label{cor_bounded_Gaifman}
  Parity games whose Gaifman graphs have bounded \treew can be solved
  in deterministic polynomial time.
\end{corollary}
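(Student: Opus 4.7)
The plan is to derive the corollary directly from Corollary~\ref{cor_main} together with two structural observations about how the Gaifman graph relates to the game graph and to the $\sim^V$-classes.

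First I would argue that bounded \treew of the Gaifman graph forces bounded imperfect information. As noted in the discussion preceding the statement, for each $v \in V$ the equivalence class $[v]_{\sim^V}$ forms a clique in the Gaifman graph, because any two equivalent positions appear together in the tuple defining $\sim^V$. It is a standard fact that a graph of \treew at most $k$ contains no clique of size greater than $k+1$; hence if $\tw(\mathrm{Gaif}(\game)) \le k$ then $|[v]_{\sim^V}| \le k+1$ for every $v \in V$, so the game has imperfect information bounded by $r := k+1$.

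Next I would argue that bounded \treew of the Gaifman graph forces bounded \dagw of the game graph. The symmetric closure $G^{\leftrightarrow}$ of the game graph $G = (V, \bigcup_{a \in A} E_a)$ is a subgraph of the Gaifman graph (the Gaifman graph contains at least every edge $(v,w)$ with $(v,w) \in E$, symmetrized, and possibly more edges coming from the equivalence classes). Since \treew is monotone under taking subgraphs, $\tw(G) = \dw(G^{\leftrightarrow}) - 1 \le \tw(\mathrm{Gaif}(\game)) \le k$, so $\dw(G^{\leftrightarrow}) \le k+1$. A monotone winning cop strategy on $G^{\leftrightarrow}$ remains monotone and winning on $G$ (reachability in $G$ is contained in reachability in $G^{\leftrightarrow}$), so $\dw(G) \le \dw(G^{\leftrightarrow}) \le k+1$.

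Combining the two observations, the given family of parity games has imperfect information bounded by $r=k+1$ and game graphs of \dagw at most $k+1$. Corollary~\ref{cor_main} now yields a deterministic polynomial time algorithm. The only step that requires any care is the clique bound in the first paragraph; the rest is bookkeeping about how the Gaifman graph dominates both the undirected skeleton of the game graph and the imperfect information relation.
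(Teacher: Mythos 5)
Your proposal is correct, and its core is exactly the paper's: the only substantive observation the paper offers for this corollary is that each $\sim^V$-class induces a clique in the Gaifman graph, so bounded \treew of the Gaifman graph bounds the size of the classes and hence the amount of imperfect information. You make explicit a step the paper leaves implicit, namely that the symmetrized game graph is a subgraph of the Gaifman graph, so its \treew (and hence its \dagw) is also bounded; that is worth spelling out and is argued correctly. The one place you diverge is the final invocation: given where the corollary sits in the text, the paper intends to conclude via Lemma~\ref{lemma_Rtreewidth} (the undirected $r$-robbers bound $\tw_r(G)+1 \le r(k+1)$, combined with $\dw_r \le \tw_r + 1$ and Lemma~\ref{lemma_rG_to_G} to bound the \dagw of the powerset graph), whereas you conclude via Corollary~\ref{cor_main}. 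Both routes are sound; yours leans on the heavier machinery of Section~\ref{sec_simulated_pg} (or Theorem~\ref{theorem_main_tech}) but is a one-line citation, while the paper's stays within the elementary componentwise-hunting argument for undirected graphs that immediately precedes the corollary. No gaps.
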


\subsection{Generalization to the directed case}\label{sec_DAG-width}

We are ready to prove our main result of Section~\ref{sec-parity}.

\begin{theorem}\label{theorem_main_tech}
For $k,r>0$, if $\dw(G) \leq k$, then   $\dw_r(G) \leq k \cdot r$. 
\end{theorem}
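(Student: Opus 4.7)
The plan is to construct a winning cop strategy $\otimes_r f$ for $kr$ cops in $\game_{kr}^r(G)$ out of a monotone positional winning strategy $f$ for $k$ cops in $\game_k(G)$. By Lemma~\ref{lemma_isolating_prudent} we may restrict attention to isolating prudent robber strategies, and by Lemma~\ref{lemma_useless_moves} we may assume $f$ places cops only on vertices currently reachable from the robber.

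Partition the $kr$ cops into $r$ teams $T_1,\dots,T_r$ of size $k$. Maintain along a play the following invariant: the current robbers $\{v_1,\dots,v_s\}$ with $s\le r$ are matched to distinct teams $T_{i_1},\dots,T_{i_s}$, and for each matched team $T_{i_j}$ there is a history $\pi_{i_j}$ in $\game_k(G)$ consistent with $f$ that ends at the cop position $(U_{i_j},v_j)$, where $U_{i_j}$ is the current placement of team $T_{i_j}$; unmatched teams sit outside the graph. When it is the cops' turn, each matched team $T_{i_j}$ moves independently from $U_{i_j}$ to $f(U_{i_j},v_j)$, and the global cop move is the union of all team moves. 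When the robbers move from $(U,U',R)$ to $(U',R')$, the prudence and isolating properties let us update the matching: each $v'\in R'$ either descends from a unique previous robber $v_j$ (in which case it inherits team $T_{i_j}$'s history) or arises from a jump (in which case we assign a currently idle team and initialise a fresh $f$-history at $(\emptyset,v')$); teams whose robbers have vanished become idle. Legality ($|U|\le kr$) is immediate, and every consistent play is finite because each team runs a sub-play consistent with the winning strategy $f$.

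The main obstacle is \emph{monotonicity}: when team $T_{i_j}$ vacates a vertex $u\in U_{i_j}\setminus U_{i_j}'$ during a cop move, we must show no robber $v_l$ can reach $u$ in $G-(U\cap U')$, where $U=\bigcup_p U_{i_p}$ and $U'=\bigcup_p U_{i_p}'$. For $v_j$ itself the claim reduces at once to the single-robber monotonicity of $f$, since $G-(U\cap U')$ is a subgraph of $G-(U_{i_j}\cap U_{i_j}')$. For a foreign robber $v_l$ the argument will rely on strengthening the invariant to a \emph{zone-separation} statement: inductively, any path from $v_l$ to a cop held by a different team must cross $U_{i_l}$. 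Under such an invariant, the cops of any non-moving team $T_{i_l}$ sit in both $U$ and $U'$, hence in $U\cap U'$, and block every would-be escape path from $v_l$ to $u$. The core technical work is therefore to prove that zone-separation is preserved under cop moves (where it reduces to applying the single-robber monotonicity of $f$ team by team, together with a careful sequencing of the simultaneous team moves) and under robber moves (where the prudence and isolating properties of the robber strategy are precisely the ingredients needed).
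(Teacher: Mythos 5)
Your plan is the ``independent teams'' translation: each robber gets his own team of $k$ cops running a private copy of $f$, and the global move is the union of the team moves. This is precisely the naive approach that works for undirected graphs (it is the paper's proof of Lemma~\ref{lemma_Rtreewidth}) and that the paper explicitly argues \emph{fails} in the directed case. Everything hinges on your zone-separation invariant --- every path from a foreign robber $v_l$ to a cop of another team crosses $U_{i_l}$ --- and that invariant is false in general. The reason is that in a directed graph the set $\Reach_{G-U_{i_l}}(v_l)$ is not the strongly connected component of $v_l$; the robber $v_l$ can reach many vertices from which he cannot return, and his own team's $k$ cops do not block those one-way escapes. Meanwhile, even after the normalisation of Lemma~\ref{lemma_useless_moves}, the strategy $f$ run by team $T_{i_j}$ against $v_j$ must place cops on vertices outside $v_j$'s component --- Theorem~\ref{thm_OffhandedCops} shows that no bounded-cost strategy can avoid this. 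So team $T_{i_j}$ will place a cop on some vertex $u$ that is reachable from $v_l$ without crossing $U_{i_l}$, and when $T_{i_j}$'s private run of $f$ later vacates $u$ (which it must, since it reuses its $k$ cops), robber $v_l$ recontaminates $u$ and the play is lost. Your suggestion that the preservation of zone-separation ``reduces to applying the single-robber monotonicity of $f$ team by team'' does not go through: single-robber monotonicity of $f$ only protects the team against \emph{its own} robber, and says nothing about reachability from the others. A secondary problem is your treatment of jumps: initialising a fresh $f$-history at $(\emptyset,v')$ forgets all confinement achieved so far, and the fresh team's early moves (few cops placed, large reachability region for $v'$) make the separation failure even more blatant.

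The paper's actual proof abandons team independence entirely. It keeps a sequence of histories $\rho_1 \propprefix \rho_2 \propprefix \dots \propprefix \rho_s$ that are linearly ordered by the prefix relation (all of them prefixes of one evolving $f$-play), actively extends only the longest one, and \emph{omits} any placement prescribed by $f$ on a vertex reachable from a robber attached to a shorter history, recording these omissions in sets $O_i$ that are closed under reachability. Cops of stalled histories are frozen (they cannot be lifted, precisely because lifting them is what breaks monotonicity), and when a stalled history loses its robbers it is caught up move-by-move along $f$ until it merges with the next longer history. If you want to salvage your write-up, the invariants (Omit), (Ext) and Lemma~\ref{lemma_le} are the correct replacement for zone-separation; they are what actually blocks recontamination by foreign robbers, and proving that they survive both cop and robber moves is where the real work of the theorem lies.
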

The rest of the section is devoted to the proof of this theorem.
Let~$f$ be a positional monotone winning strategy for~$k$ cops against
one robber on a directed graph~$G$.  According to
Lemma~\ref{lemma_useless_moves} we can assume \wLOG that for any
history~$\pi'$ consistent with $f$ such that $\last(\pi') = (U, v)$ we
have $f(\pi') \setminus U \neq \0$ and any $u \in f(\pi') \setminus U$
is reachable from~$v$ in $G - U$.  Moreover, due to
Lemma~\ref{lemma_isolating_prudent} it suffices to construct a
strategy $\otimes_r f$ for $r \cdot k$ cops against~$r$ robbers which
is winning against all isolating prudent robber strategies.  First, we
sketch a description of a memory strategy $\otimes_r f :
\calM\times (2^V \times 2^V) \rightarrow 2^V$ and the corresponding
memory structure.

\WLOG we can assume that~$G$ is strongly connected. Indeed, given a winning 
strategy for~$k\cdot r$ cops against~$r$ robbers on every strongly connected
component
of~$\graphG$, we can traverse the graph by applying the strategy to the
topologically
minimal components, then eliminate them and continue in that way until all 
robbers are captured.

\subsection*{Informal description, some invariants and some elements of
  the memory structure}
The cops play in~$r$ teams of~$k$ cops. Consider a position~$(U,R)$ in
a play with~$r$ robbers. With every vertex~$v\in R$ occupied by a
robber, we associate a team of cops~$U_i\subseteq V$ with~$|U_i|\le
k$. With each~$U_i$ we associate a history~$\rho_i$ of the game
against one robber that is consistent with~$f$ such that~$(U_i,v)$ is
the last position of~$\rho_i$. We formulate this as an invariant in
the game with~$r$ robbers:

\medskip
\textbf{(Cons)} Any history~$\rho_i$ is consistent with~$f$.
\medskip

For any position that appears in a play against~$r$ robbers, we keep~$s\le r$
histories~$\rho_i$ in memory and write ~$\rho = \rho_1\cdot \ldots \cdot
\rho_s$.
This sequence of histories is the main part of the memory. 
The following invariant says that, up to the last robber moves, all~$\rho_i$
are linearly ordered~by~$\propprefix$.

\medskip
\textbf{(Lin)}~$\rho_1 \propprefix \rho_2 \propprefix \ldots \propprefix
\rho_s$.
\medskip

Sequence~$\rho$ is constructed and maintained in the memory in the following
way. At the beginning
of a play, we set~$\rho = \rho_1 = \perp$, \ie~$\rho$ consists of one play
prefix
containing only the initial position. When the play with~$r$ robbers goes on,
but
only
one robber is in the graph,~$\rho_1$ grows together with the play with~$r$
robbers
and the latter gets the form 
$\perpcdot(U^1,R^1)\cdots(U^m,R^m)(U^m,U^{m+1},R^m)$ where all~$R_i$ 
are singletons. While playing this part of the play, 
all teams make the same moves according to~$f$. We store the sequence in
the memory
as~$\rho = \rho_1 = \perpcdot (U^1,v^1)\cdots(U^m,v^m)(U^m,U^{m+1},v^m)$ 
where~$\{v^i\}=R^i$ (see Figure~\ref{fig_impinf_one_to_r}).
When more robbers come into the graph, they go into different components
(because
they play according to an isolating strategy)
and the cops choose one of them, say on a vertex~$b_1$.
We associate~$\rho_2 = \rho_1\cdot (U^m,U^{m+1},b_1)$ with that robber 
and store~$\rho = \rho_1\cdot\rho_2$ in the memory.
Note that~$\rho_1$ ends with a robber position.
Assume for a moment that only the robber in~$\flap_U(b_1)$ moves
where~$U$ is the placement of the cops in the position when new
robbers entered the graph.
Then only this robber is pursued by its team of cops according to~$f$, but
cops are not placed on vertices~$v$ if $v\in \Reach_{\graphG-U^m}(R_1)$
where~$R_1$ is the set of robbers distinct from~$b_1$.
The cops belonging to other teams remain idle.
Cop moves are appended to~$\rho_2$, however, without respecting the omitted
placements. To put it differently, let~$W_i$ be the last cop placement
in~$\rho_i$ and 
let~$b_2$ be the last robber vertex in~$\rho_2$.
Then in a position~$(U,R)$ of the game with~$r$ robbers, 
we have 
\[\rf(U,R) = (U\setminus W_2) \cup \big(f(W_2,b_2)\setminus
\Reach_{\graphG-W_1}(b_2)\big)\,.\]
Hereby,~$U\setminus W_2$ are cops from the team associated with~$\rho_1$.
Note that~$\rf$ depends also on the memory state, but we will not write
this
explicitly.
For the memory state update, in~$\rho_2$, not the actual 
move $f(W_2,b_2)\setminus \Reach_{\graphG-W_1}(b_2)$ is stored,
but the intended one, \ie~$f(W_2,b_2)$.
If later new robbers come and occupy different components of~$\flap_U(b_2)$,
we again choose one of them (say, on~$b_3$), create~$\rho_3$ and set~$\rho_3$,
$W_3$ 
and~$b_3$ analogously to~$\rho_2$,~$W_2$ and~$b_2$, and store~$\rho =
\rho_1 \cdot \rho_2,\rho_3$.
Analogously, the cops play according 
to~\[\rf(U,R) = (U\setminus W_2) \cup 
\Big(f(W_3,b_3)\setminus \big(\Reach_{\graphG-W_2}(b_2)\cup
\Reach_{\graphG-W_1}(b_1)\big)\Big)\,.\]

Histories in~$\rho$ are subject to change, so at different points of 
time,~$\rho$ and~$\rho_i$ are different objects, but we will not reflect that
in our notation to avoid unnecessary indexes. It will be always clear from the
context 
what~$\rho$ is.
Note that cops from teams~$\propprefix$-smaller than~$3$ (in general,~$s$)
cannot be removed from their vertices, as, according to~$f$, omitted
placements must be performed first. Hence, taking the cops may infer
non-\mbox{}monotonicity. For example, both cops from~$\rho_1$ in
Figure~\ref{fig_impinf_one_to_r}
cannot be removed before the omitted placement in~$\rho_1$ is performed.
Note also that there may be more than one robber in~$R_i$
associated to a play~$\rho_i$ if~$i<s$ and at most one robber is associated
with~$\rho_s$. 

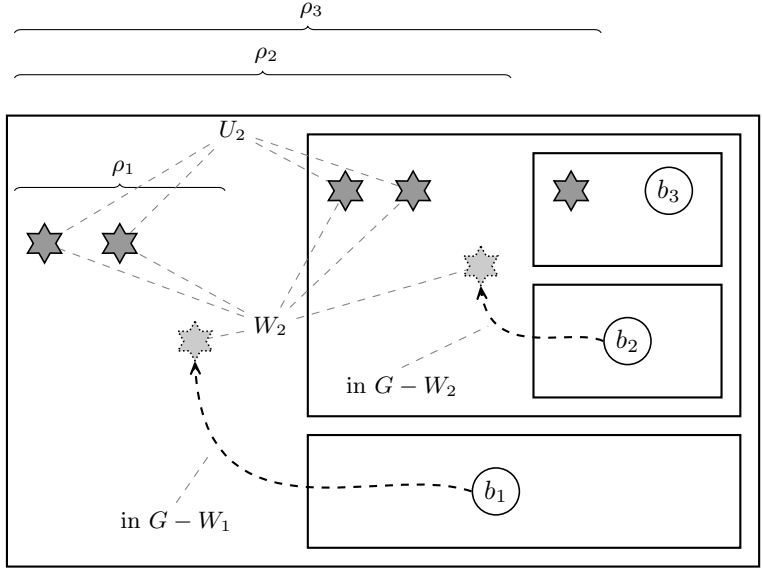
\begin{figure}
\begin{center}
\begin{tikzpicture}[thick]

\draw (-5,-3) rectangle (5,3); % the largest rectangle
\draw (-1,-1) rectangle (4.75,2.75); % upper 2nd largest
\draw (-1,-2.75) rectangle (4.75,-1.25); % lower 2nd largest
\draw (2,1) rectangle (4.5,2.5); % upper 3rd largest
\draw (2,-0.75) rectangle (4.5,0.75); % lower 3rd largest

\node[cop] (cop1) at (-4.5,1.3){};
\node[cop] (cop2) at (-3.5,1.3){};
\node[ocop] (ocop1) at (-2.5,0){};

\node[cop] (cop3) at (-0.5,2){};
\node[cop] (cop4) at (0.4,2){};
\node[ocop] (ocop2) at (1.3,1){};

\node[cop] (cop5) at (2.5,2){};

\node[robber] (robber3) at (3.8,2){$b_3$};
\node[robber] (robber2) at (3.25,0){$b_2$};
\node[robber] (robber1) at (1.5,-2){$b_1$};

\draw[-slim,dashed] (robber1.west) ..controls (0,-1.6) and (-2.4,-2.8) ..
(ocop1.south);
\draw[-slim,dashed] (robber2.west) ..controls (2.5,0.2) and (1.4,-0.3) ..
(ocop2.south);

\small
\node (reach1) at (-2.75,-2.4) {in~$\graphG-W_1$};
\draw[ie] (reach1.north) to (-2.3,-1.5);
\node (reach2) at (0.25,-0.6) {in~$\graphG-W_2$};
\draw[ie] (reach2.north) to (1.4,0.2);

\draw[brace] (-4.9,2) -- (-2.1,2);
\node (rho1) at (-3.45,2.3) {$\rho_1$};
\draw[brace] (-4.9,3.5) -- (1.7,3.5);
\node (rho2) at (-1.55,3.8) {$\rho_2$};
\draw[brace] (-4.9,4.1) -- (2.9,4.1);
\node (rho3) at (-0.95,4.4) {$\rho_3$};

\node (U_2) at (-2,2.8) {$U_2$};
\draw[ie] (U_2) to (cop1);
\draw[ie] (U_2) to (cop2);
\draw[ie] (U_2) to (cop3);
\draw[ie] (U_2) to (cop4);

\node (W_2) at (-1.5,0.2){$W_2$};
\draw[ie] (W_2) to (cop1);
\draw[ie] (W_2) to (cop2);
\draw[ie] (W_2) to (ocop1);
\draw[ie] (W_2) to (cop3);
\draw[ie] (W_2) to (cop4);
\draw[ie] (W_2) to (ocop2);

\end{tikzpicture}
\end{center}
\caption[Memory used by strategy~$\rf$ on the graph~$\graphG$.]{Memory used
by strategy~$\rf$ on the graph~$\graphG$. Squares are 
robber components. Stars denote cop vertices, 
dotted light gray stars denote vertices where cop placements were omitted.}
\label{fig_impinf_one_to_r}
\end{figure}

Now we describe the remaining elements of the memory.
A complete element of the memory structure has the form 
\[\zeta = (\rho_1,R_1,O_1)\cdot \ldots \cdot (\rho_{s-1},R_{s-1},O_{s-1})\cdot
\rho_s\,.\]
Hereby~$\rho_i$ are as before and, for~$i<s$,~$\rho_i$ ends with a robber
position. 
The last robber moves associated with~$\rho_i$ (other robbers may join the
robber from~$\rho_i$) are stored in~$R_i$. Whether~$\rho_s$ ends with a robber
or a cop position depends on the current position in the game with~$r$ robbers:
either both end with a cop position, or both end with a robber position.
Set~$R_i$ represents the vertices occupied by robbers that are associated
with~$\rho_i$.
Elements~$O_i$ are sets of vertices where cops of longer histories are not
placed because, roughly, those vertices are reachable from~$R_i$
in~$\graphG-W_i$. 
However, we will see later that, in fact, sets~$O_i$ are more dynamic.

The strategy we described so far is the strategy from
as constructed in the case of undirected graphs, just this time
with omitted placements of cops. Now we drop the assumption that robbers 
from~$R_i$ stay idle. They may prevent the cops to play against the robber 
from the longest history~$\rho_s$. 
One possibility is that one of them, say the robber from~$b_i\in R_i$, for
some~$i<s$,
jumps to the robber on~$b_s$, in a position~$(U,U',R)$ of the game with~$r$
robbers.
Then both robbers (the one from~$b_s$ and the one 
who jumped to~$b_s$) run to vertices~$b'_i$ and~$b'_s$ in different components
of~$\flap_{U'}^U(b_s)$.\,\footnote{Recall the definition of a~$\flap_{U'}^U(b_s)$ on
Page~\pageref{def_flap}.} 
Now some cops from~$U'$ may be reachable from~$b'_i$ and cannot be removed
as~$f$ may prescribe to play against~$b'_s$ later. Previously,
we used cops from team corresponding to robber~$b'_i$ who remained on~$U_i$
(which is now~$U'$) and cops from team~$b'_s$ pursued~$b'_s$. Thus we have to
reuse 
cops from~$U_i$, but they cannot be just removed before cop placements are
made up 
that were omitted because of the robber on~$b_i$. Instead, we let the cops 
from~$U_i$ play according to~$f$ from~$U_i$ until they occupy the same
vertices as 
cops from~$U_{i+1}$ of the next longer history. While this is done
the cop vertices are stored in~$\rho_i$. Then~$\rho_i$ and~$\rho_{i+1}$ are
merged. 

Note that it does not suffice to catch up all moves between the ends of~$\rho_i$
and ~$\rho_{i+1}$ in one move placing cops as in the last position
of~$\rho_{i+1}$.
The robber may use the absence of the cops in the intermediate positions and run
to 
a vertex such that the resulting placement of that robber and the cops is not
consistent
with~$f$.

There is an other case when the cops have to play in a different way:
the robber corresponding the longest history is captured or jumps away. 
In this case, his component is not reachable for any robber any more, as 
the robbers play according to an isolating strategy. We remove the cops from the
graph placed since the last position in~$\rho_{s-1}$, \ie since the last time
the robbers from~$\rho_{s-1}$ and~$\rho_s$ ran into different components. Then
we choose another robber from $R_{s-1}$ to chase and append a new history
to~$\rho$.

\subsection*{Formal description, the rest invariants and the full memory structure}
Now we present the strategy~$\rf$ and the memory updates formally.
Given a position~$(U,R)$ or~$(U,U',R)$ of the game with~$r$ robbers
and a memory state
\[\zeta = ((\rho_1, R_1,O_1), \ldots,
(\rho_{s-1},R_{s-1},O_{s-1}),\rho_{s})\,,\]
we define the new set~$U' = \rf((U,R),\zeta)$ of vertices
occupied by cops  (if the current position belongs to the cops) and the new
memory state 
\[\zeta' = ((\rho_1', R_1',O_1'), \ldots,
(\rho_{s'-1}',R_{s'-1}',O_{s-1}'),\rho_{s'}')\,.\]

We also maintain some additional invariants. To describe them, we 
define~$W^{-1}_i$,~$W_i$,~$W^i$,~$b_i$,~$U_i$,~$R_i$ and~$O_i$ such that

\begin{itemize}
\item~$\last(\rho_i) = (W^{-1}_i,W_i,b_i)$, for~$i\in\{1,\dots,s-1\}$,
\item~$\last(\rho_s) \in \{ (W_s,b_s),(W_s^{-1},W_s,b_s)\}$,
\item~$U_i = W_i\setminus O^{i-1}$,~$U^i = \bigcup_{j=1}^i U_j$ and 
     ~$W^i = \bigcup_{j=1}^i W_j$, for~$i\in\{1,\dots,s\}$, 
\item~$R^i = \bigcup_{j=1}^i R_j$ and~$O^i = \bigcup_{j=1}^i O_j$,
for~$i\in\{1, \dots, s-1\}$,
\item~$R_s = \{b_s\}$, if~$b_s \in R$ and~$R_s = \0$ otherwise.
\end{itemize}
In other words,~$W_i$ is the placement of the cops in the last position
of the play~$\rho_i$ as it is stored (without respecting that some moves were 
omitted),~$b_i$ is the stored position of the robber in that play (but the
robber may be somewhere else in the play with~$r$ robbers). 
Furthermore,~$U_i\subseteq U\cap W_i$ is the set of cops who are 
indeed placed and belong to~$\rho_i$, and~$O_i$ is the set of vertices
on which we do not place cops from~$\prefix$-greater plays even if~$f$
prescribes to do so.

\paragraph*{Invariants}

\begin{itemize}
 \item[] \textbf{(Robs)} The sets~$R_i$ are pairwise disjoint and~$R =
\bigcup_{i=1}^s R_i$.

 \item[] \textbf{(Cops)}~$U = \bigcup_{i=1}^s U_i$. 

 \item[] \textbf{(Omit)} For all~$i \in \{1, \ldots, s-1\}$,~$R_i \subseteq O_i
= \Reach_{\graphG - W_i}(O_i)$.

 \item[] \textbf{(Ext)} For all~$i \in \{1, \ldots, s-1\}$,~$O_i \subseteq
\Reach_{\graphG - W^{-1}_i}(b_i)$.

\end{itemize}

Conditions~(Omit) and~(Ext) describe what sets~$O_i$ actually are. We
assume that a robber may occupy or reach~$b_i$. From here, he
threatens all vertices that are reachable from~$b_i$
in~$\Reach_{\graphG - W^{-1}_i}(b_i)$, \ie if he is bounded in his
moves only by his own cops~$W^{-1}_i$. Note that~$W^{-1}_i$ are the
cops from the previous position of~$\rho_i$, but the cops~$W_i$ are not placed
yet: the robber can run in~$\graphG-(W^{-1}_i\cap W_i)$, but, as~$f$
is monotone, we can consider $\graphG-W^{-1}_i$ instead
of~$\graphG-(W^{-1}_i\cap W_i)$. In particular, the placement~$R_i$ of
the robbers is reachable from~$b_i$ in $G - W^{-1}_i$. Furthermore,~$O_i$ are closed
under reachability \emph{after} the cops are placed on~$W_i$.

In addition to (Cops), we also assume that, if~$(U,R)$ is a cop position and
$b_s \in R$ 
(the stored vertex of the robber in the longest play is indeed occupied by a robber),
then 
$\last(\rho_s) = (W_s, b_s)$.

The first part of (Omit) together with (Ext) guarantees that the last move of 
each robber who is associated with~$\rho_i$ is consistent with it.

\begin{lemma}\label{lemma_cons}
For all~$i\le s$ and for all~$b \in R_i$,~$\rho_i \cdot (W_i,b)$ is consistent
with~$f$.
\end{lemma}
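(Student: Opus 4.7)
The plan is to reduce the claim, via invariant (Cons), to checking that the one-step extension $\last(\rho_i) \to (W_i,b)$ is a \emph{legal} robber move in the one-robber game on $G$. Consistency with $f$ constrains only cop moves, and the step we append is a robber move, so no further $f$-check is required once $\rho_i$ itself is known to be consistent.

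First I would handle the generic case $i<s$, where $\last(\rho_i)=(W^{-1}_i,W_i,b_i)$ is a robber position. A robber move $(W^{-1}_i,W_i,b_i)\to(W_i,b)$ is legal precisely when (a)~$b\notin W_i$, and (b)~$b\in\Reach_{G-(W^{-1}_i\cap W_i)}(b_i)$. For (b), the chain of invariants is exactly what is needed: (Omit) gives $R_i\subseteq O_i$, and (Ext) gives $O_i\subseteq\Reach_{G-W^{-1}_i}(b_i)$; since $W^{-1}_i\cap W_i\subseteq W^{-1}_i$, we have the monotonicity $\Reach_{G-W^{-1}_i}(b_i)\subseteq\Reach_{G-(W^{-1}_i\cap W_i)}(b_i)$, and $b\in R_i$ therefore lies in the required reachability set. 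For (a), I would use the second half of (Omit), namely $O_i=\Reach_{G-W_i}(O_i)$: reachability in the graph $G-W_i$ obviously avoids $W_i$, so $O_i\cap W_i=\emptyset$, and hence $b\in R_i\subseteq O_i$ forces $b\notin W_i$.

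It remains to treat $i=s$. Here $\rho_s$ may end in a cop position or a robber position. If $\last(\rho_s)=(W_s,b_s)$ is a cop position, then by the definition of $R_s$ together with the side invariant ``if $(U,R)$ is a cop position and $b_s\in R$, then $\last(\rho_s)=(W_s,b_s)$'', we have $R_s\in\{\emptyset,\{b_s\}\}$, and for $b=b_s$ the position $(W_s,b)$ coincides with $\last(\rho_s)$, so the claim is vacuous; if $R_s=\emptyset$ there is nothing to prove. If instead $\last(\rho_s)=(W^{-1}_s,W_s,b_s)$ is a robber position, then the argument for $i<s$ goes through verbatim, provided the invariants (Omit) and (Ext) are stated (or easily extended) for $i=s$ in this configuration; checking this is a matter of reading off the construction of $\rf$ in the cases where a new history is opened or histories are merged.

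I do not expect a real obstacle: the statement is essentially a bookkeeping check, and the invariants (Cons), (Omit), and (Ext) have been designed precisely to make it hold. The only subtlety worth emphasising is the direction of the inclusion when passing from $G-W^{-1}_i$ to $G-(W^{-1}_i\cap W_i)$, which goes the right way because removing fewer vertices enlarges reachability; this is also the place where the monotonicity of $f$ is used implicitly, since (Ext) quantifies over $W^{-1}_i$ rather than $W^{-1}_i\cap W_i$.
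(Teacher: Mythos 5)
Your proposal is correct and follows essentially the same route as the paper: invoke (Cons) for $\rho_i$ itself, then use (Omit) to get $b\in O_i$ and (Ext) to get $b\in\Reach_{G-W_i^{-1}}(b_i)$, which makes the appended robber move legal. You are in fact slightly more explicit than the paper in checking $b\notin W_i$ and in treating the $i=s$ cop-position case, but these are refinements of the same argument, not a different approach.
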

\begin{proof}
By (Omit) we have~$b \in O_i$ and therefore, using (Ext), we obtain that~$b$ is
reachable from
$b_i$ in~$\graphG - W_i^{-1}$. Moreover, as~$\last(\rho_i) = (W_i^{-1}, W_i, b_i)$
and~$\rho_i$ is consistent 
with~$f$ according to (Cons),~$\rho_i \cdot (W_i, b)$ is consistent
with~$f$ as well.
\end{proof}

The next lemma, which follows from the monotonicity of~$f$, states
that every (stored) robber is bounded by his cops on their last vertices and is
not affected by previous placements.

\begin{lemma}\label{lemma_convenience}~
\begin{itemize}
\item[(1)] For any~$i \in \{1, \ldots, s-1\}$ and any~$b \in R_i$,
$\Reach_{\graphG - W_i}(b) = \Reach_{\graphG - W^i}(b)$.
\item[(2)]~$\Reach_{\graphG - W_s}(b_s) = \Reach_{\graphG - W^s}(b_s)$.
\end{itemize}
\end{lemma}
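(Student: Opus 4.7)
The plan is to reduce both parts to a single monotonicity property of any play consistent with $f$: if $\pi$ is such a play with successive cop placements $W^{(1)}, \ldots, W^{(m)}$ and corresponding robber vertices $v^{(1)}, \ldots, v^{(m)}$, then $\Reach_{\graphG - W^{(j+1)}}(v^{(j+1)}) \sq \Reach_{\graphG - W^{(j)}}(v^{(j)})$ for every $j < m$, and consequently every $v \in \bigcup_{j < m} W^{(j)} \setminus W^{(m)}$ satisfies $v \notin \Reach_{\graphG - W^{(m)}}(v^{(m)})$.

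To establish this auxiliary claim I would induct on $j$. At a robber transition $(W^{(j)}, W^{(j+1)}, v^{(j)}) \to (W^{(j+1)}, v^{(j+1)})$, monotonicity of $f$ forbids $v^{(j)}$ from reaching any vertex of $W^{(j)} \setminus W^{(j+1)}$ in $\graphG - (W^{(j)} \cap W^{(j+1)})$. This yields two facts: the path witnessing $v^{(j+1)} \in \Reach_{\graphG - (W^{(j)} \cap W^{(j+1)})}(v^{(j)})$ cannot touch $W^{(j)} \setminus W^{(j+1)}$ and therefore lies entirely in $\graphG - W^{(j)}$; and any path from $v^{(j+1)}$ in $\graphG - W^{(j+1)}$ likewise avoids $W^{(j)} \setminus W^{(j+1)}$ and thus lies in $\graphG - W^{(j)}$ too. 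Concatenating gives the inclusion of reachable sets, and combining this inclusion along the play with monotonicity applied at the very moment $v$ is first removed from the cop team yields the second part of the claim.

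Part~(2) is then immediate: by invariants (Lin) and (Cons), each $W_j$ appears as a cop placement in the monotone play $\rho_s$, so every $v \in W^s \setminus W_s$ has been a cop vertex that was later vacated, and the auxiliary claim gives $v \notin \Reach_{\graphG - W_s}(b_s)$. Since every vertex on a path witnessing reachability is itself in the reach set, this upgrades to the equality $\Reach_{\graphG - W_s}(b_s) = \Reach_{\graphG - W^s}(b_s)$ (the reverse inclusion being trivial).

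For part~(1) I would apply the auxiliary claim to the prefix of $\rho_i$ ending at the cop position $(W^{-1}_i, b_i)$: by (Lin) each $W_j$ with $j < i$ is a cop placement there, so $\Reach_{\graphG - W^{-1}_i}(b_i) \cap (W^{i-1} \setminus W^{-1}_i) = \emptyset$, while disjointness from $W^{-1}_i$ itself is built into the definition of the reach set. Invariants (Ext) and (Omit) then sandwich $\Reach_{\graphG - W_i}(b) \sq O_i \sq \Reach_{\graphG - W^{-1}_i}(b_i)$, so $\Reach_{\graphG - W_i}(b)$ avoids $W^{i-1}$, and since it already avoids $W_i$ it avoids all of $W^i$; this gives the desired equality. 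The main technical obstacle is the auxiliary monotonicity claim itself, and more precisely the observation that the path realising the robber move lies not merely in $\graphG - (W^{(j)} \cap W^{(j+1)})$ but in $\graphG - W^{(j)}$ — a step that in the directed setting genuinely requires exploiting monotonicity at every intermediate vertex of the path, not just at its endpoint.
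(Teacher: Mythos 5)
Your proof is correct and rests on the same mechanism as the paper's: by (Lin) and (Cons) every $W_j$ with $j \le i$ occurs as a cop placement along a single monotone $f$-consistent history, so the robber's reachability set — which only shrinks along such a play — can never re-enter a vacated $W_j$. The only cosmetic difference is in part (1), where you transfer the conclusion from $b_i$ to $b$ via (Omit) and (Ext), whereas the paper invokes Lemma~\ref{lemma_cons} to extend $\rho_i$ by the move to $b$ and argues on that history directly; the two routes are interchangeable, since Lemma~\ref{lemma_cons} is itself derived from (Omit) and (Ext).
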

\begin{proof}
Consider some~$i \in \{1, \ldots, s-1\}$ and some~$b \in R_i$.
As~$W_i \subseteq W^i$, we have~$\Reach_{\graphG - W_i}(b) \supseteq \Reach_{\graphG
- W^i}(b)$, so
assume that the converse inclusion~$\Reach_{\graphG - W_i}(b) \subseteq
\Reach_{\graphG - W^i}(b)$ does
not hold. Then there is some~$u \in W^{i-1} \setminus W_i$ such that~$u \in
\Reach_{\graphG - W_i}(b)$.
Now if~$j \in \{1, \ldots, i-1\}$ such that~$u \in W_j$, then due to (Lin),
$\rho_j \propprefix \rho_i$. Moreover,~$\last(\rho_j) = (W_j^{-1},W_j,b_j)$
and, 
by Lemma~\ref{lemma_cons},~$\rho_i \cdot (W_i,b_i)$ is consistent
with~$f$, 
but as~$\rho_j$ is consistent with~$f$ as well due to (Cons),
$\Reach_{\graphG - W_i}(b) \cap W_j \neq \0$ 
contradicts the monotonicity of~$f$ (which is violated in position
$(W_i^{-1}, W_i, b_i)$). 

For~$b_s$, the argument is the same.
\end{proof}

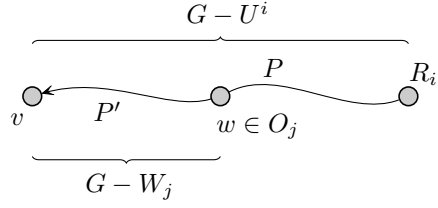
\begin{figure}
\begin{center}
\begin{tikzpicture}[thick]

\node (R_i) [vertex] at (0,0) {};

\node (R_iL) at (0.2,0.3) {$R_i$};

\node (v) [vertex] at (-5,0) {};

\node (vL) at (-5.2,-0.3) {$v$};

\node (P) at (-1.8,0.38) {$P$};
\node (G-U^i) at (-2.4, 1.1) {$\graphG-U^i$};
\draw [brace] (-5,0.7) -- (0,0.7);

\node (P') at (-4,-0.2) {$P'$};
\node (G-W_j) at (-3.7,-1.2) {$\graphG-W_j$};
\draw [brace] (-2.5,-0.8) -- (-5,-0.8);

\node (wL) at (-2,-0.4) {$w \in O_j$};

\draw[path] (R_i) .. controls  (-0.8,-0.3) and (-1.7,0.4) .. (-2.5,0)  ..
controls (-3.2,-0.2) and  (-4,0.3) .. (v) node [vertex,pos=0]{} {};

\end{tikzpicture}
\end{center}
\caption{$v \in \Reach_{\graphG - W_j}(O_j)$ implies~$v \in O_j$ by (Omit)}
\label{figure_lemma_le}
\end{figure}

The following lemma is one of the key arguments for monotonicity of~$\rf$.
It states that the robbers (who are indeed on the graph in the play with~$r$
robbers)
associated with play~$\rho_i$ are bounded by the cops (who are indeed on the
graph
in the play with~$r$ robbers) in a way that they can reach only vertices
in~$O_i$,
which are not occupied by cops from longer plays.
The lemma can be directly derived from~(Omit) without using other invariants.

\begin{lemma}\label{lemma_le}
For~$i\le s-1$,~$\Reach_{\graphG - U^i}(R_i) \subseteq O^i$.
\end{lemma}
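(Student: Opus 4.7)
The plan is to fix a path $P = v_0, v_1, \ldots, v_m$ in $G - U^i$ with $v_0 \in R_i$ and $v_m$ an arbitrary element of $\Reach_{G - U^i}(R_i)$, and to show by induction on the index $l$ along $P$ that every vertex $v_l$ belongs to $O^i$. The base case $v_0 \in O^i$ is immediate from (Omit), which gives $R_i \subseteq O_i \subseteq O^i$.

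For the inductive step, assume $v_l \in O^i$ and fix some $j \le i$ with $v_l \in O_j$. I would split into two cases on the successor $v_{l+1}$. If $v_{l+1} \notin W_j$, then the edge $(v_l, v_{l+1})$ lies in $G - W_j$, so the fixed-point equation $O_j = \Reach_{G - W_j}(O_j)$ from (Omit) forces $v_{l+1} \in O_j \subseteq O^i$. Otherwise $v_{l+1} \in W_j$; but $v_{l+1}$ also lies on a path in $G - U^i$ and hence $v_{l+1} \notin U^i \supseteq U_j$. Combined with the definition $U_j = W_j \setminus O^{j-1}$, this yields $v_{l+1} \in W_j \cap O^{j-1} \subseteq O^{j-1} \subseteq O^i$ (with the convention $O^0 = \emptyset$, which makes this second case vacuous when $j = 1$).

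The whole argument is really just the unfolding of the two defining properties that meet in this lemma: the reachability-closure in (Omit), and the subtraction $U_j = W_j \setminus O^{j-1}$ in the definition of the actually-placed cops. I do not foresee any real obstacle; the only thing that needs a little care is the boundary $j = 1$, handled by the $O^0 = \emptyset$ convention, and the remark that $P$ lies in $G - U^i$ so no $v_l$ is ever in $U_j$ for any $j \le i$ — which is exactly what powers the second case. Since the figure referenced in the excerpt (the path from $R_i$ through some $w \in O_j$, then continued in $G - W_j$) corresponds precisely to the two-case analysis above, this matches the authors' intended picture.
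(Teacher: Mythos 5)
Your proof is correct, and it reaches the conclusion by a genuinely different route than the paper. The paper argues globally on the path $P$: either $P$ avoids $W_i$ entirely (and then (Omit) for index $i$ finishes in one step), or one takes the \emph{minimal} $l\le i$ with $P\cap W_l\neq\emptyset$, observes that any $w\in P\cap W_l$ lies in $O_j$ for some $j<l$ (the same use of $U_l=W_l\setminus O^{l-1}$ as in your second case), and then invokes minimality of $l$ to conclude that the whole suffix of $P$ from $w$ to $v$ avoids $W_j$, so a single application of $O_j=\Reach_{G-W_j}(O_j)$ yields $v\in O_j\subseteq O^i$. You instead walk the path edge by edge and maintain the invariant $v_l\in O^i$, re-choosing the witness index $j$ at every step; this replaces the extremal argument by a purely local one and in fact establishes the slightly stronger fact that \emph{every} vertex of $P$ lies in $O^i$. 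Both arguments rest on exactly the same two ingredients --- the closure $O_j=\Reach_{G-W_j}(O_j)$ from (Omit) and the subtraction $U_j=W_j\setminus O^{j-1}$ in the definition of $U_j$ --- so neither buys extra generality; yours is arguably the cleaner bookkeeping, while the paper's version is the one depicted in Figure~\ref{figure_lemma_le}. One small point you should make explicit in your first case: you need $v_l\notin W_j$ for the edge $(v_l,v_{l+1})$ to survive in $G-W_j$, and this does not follow from $P$ avoiding $U^i$; it follows instead because $O_j=\Reach_{G-W_j}(O_j)$ forces $O_j\cap W_j=\emptyset$. Your treatment of the boundary $j=1$ via $O^0=\emptyset$ matches the paper's implicit convention in $U_1=W_1\setminus O^0$.
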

\begin{proof}
Let~$v \in \Reach_{\graphG - U^i}(R_i)$ and let~$P$ 
be a path from~$R_i$ to~$v$ in~$\graphG - U^i$ as show in
Figure~\ref{figure_lemma_le}. 
If~$v \in \Reach_{\graphG - W_i}(R_i)$, then by (Omit) we have~$v \in
\Reach_{\graphG - W_i}(O_i) = O_i \subseteq O^i$.
Let therefore~$v \notin \Reach_{\graphG - W_i}(R_i)$. Then~$P \cap W_i \neq \0$
and we
consider the minimal~$l \le i$ such that~$P \cap W_l \neq \0$ and some~$w \in P
\cap W_l$. 
As~$P \cap U^i = \0$ we have~$w \notin U^i$ and thus~$w \notin U_l$,
as~$U_l\subseteq U^i$
by the definition of~$U^i$. As~$U_l = W_l\setminus O_{l-1}$, this yields 
$w \in O^{l-1}$, that means,~$w \in O_j$ for some~$j < l$. 
Now~$v$ is reachable from~$w$ in~$\graphG$ via some path~$P' \subseteq P$ and, due
to the minimal 
choice of~$l$,~$P \cap W_j = \0$. Hence,~$P' \cap W_j = \0$, see
Figure~\ref{figure_lemma_le}. 
This yields~$v \in \Reach_{\graphG - W_j}(w) \subseteq \Reach_{\graphG - W_j}(O_j)$
and as, by (Omit), 
$\Reach_{\graphG - W_j}(O_j) = O_j$ it follows that~$v \in O_j \subseteq O^i$.
\end{proof}

Finally, we formulate the fact that the reachability area of a robber is not
restricted by cops of
longer histories as a direct corollary of Lemma~\ref{lemma_le}.

\begin{corollary}\label{corollary_lemma_le}
For all~$i\in\{1, \dots, s-1\}$ and all~$b \in R_i$ we have 
$\Reach_{\graphG - U}(b) = \Reach_{\graphG - U^i}(b)$.
\end{corollary}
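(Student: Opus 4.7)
The plan is to deduce this directly from Lemma~\ref{lemma_le} together with the bookkeeping identity $U_j = W_j \setminus O^{j-1}$. One inclusion is free: since $U^i \subseteq U$ by the definition of $U^i$, we immediately have $\Reach_{\graphG - U}(b) \subseteq \Reach_{\graphG - U^i}(b)$, so all the work is in the reverse inclusion.

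For the nontrivial direction, I would take any $v \in \Reach_{\graphG - U^i}(b)$ and fix a witnessing path $P$ from $b$ to $v$ in $\graphG - U^i$. The goal is to show that $P$ actually avoids all of $U$, not just $U^i$. By (Cops) we have $U = U^i \cup \bigcup_{j=i+1}^{s} U_j$, so it suffices to show that $P$ is disjoint from $U_j$ for every $j > i$. Because $b \in R_i$ and every vertex $w$ on $P$ is reachable from $b$ in $\graphG - U^i$, we have $w \in \Reach_{\graphG - U^i}(R_i)$, and Lemma~\ref{lemma_le} then places $w$ in $O^i$.

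The last step is purely algebraic. For any $j > i$, the definition $U_j = W_j \setminus O^{j-1}$ gives $U_j \cap O^{j-1} = \emptyset$, and since $i \le j-1$ we have $O^i \subseteq O^{j-1}$, hence $U_j \cap O^i = \emptyset$. Therefore no vertex of $P$ lies in any $U_j$ with $j > i$, and combined with $P \cap U^i = \emptyset$ this means $P$ is a path in $\graphG - U$. Hence $v \in \Reach_{\graphG - U}(b)$, as required.

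There is essentially no obstacle here: the whole content is packaged in Lemma~\ref{lemma_le}, and the only thing one has to be careful about is that Lemma~\ref{lemma_le} bounds the reachability set globally (so it applies to every intermediate vertex on $P$, not only to its endpoint $v$), and that the chain of indices $i \le j-1$ gives the monotonicity $O^i \subseteq O^{j-1}$ needed to conclude disjointness from $U_j$.
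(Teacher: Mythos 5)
Your proof is correct and is exactly the argument the paper intends: the paper states this as a ``direct corollary'' of Lemma~\ref{lemma_le} without spelling out the details, and your write-up supplies precisely those details (applying Lemma~\ref{lemma_le} to every vertex of the witnessing path, then using $U_j = W_j \setminus O^{j-1}$ together with $O^i \subseteq O^{j-1}$ and (Cops) to see that the path also avoids $U_j$ for $j > i$). Nothing to add.
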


We proceed with a description of~$\rf$ and the memory update.

\paragraph*{Initial Move}
As we assumed that~$\graphG$ is strongly connected, by
Lemma~\ref{lemma_isolating_prudent}, 
the robbers do not split in the first move. So let the initial move be~$\bot
\rightarrow (\0, \{b\})$.
After the move, the memory state is set to~$\rho = \rho_1 = \big( (\0,b) )$. 
All the invariants hold obviously for~$(\0, \{b\})$ and~$\big( (\0,b) )$.

\medskip

Now we consider some cop position~$(U, R)$ and 
some memory state~$\zeta$ such that all invariants are fulfilled.

\medskip\noindent
\textbf{Move of the Cops.}
In the following, we define the new set~$U' = \rf((U,R),\zeta)$ of vertices
occupied by cops and the new memory state 
\[\zeta' = ((\rho_1', R_1',O_1'), \ldots,
(\rho_{s'-1}',R_{s'-1}',O_{s-1}'),\rho_{s'}')\,.\]

\noindent
\textbf{Case I:}~$b_s\notin R$\\
That means, the robber~$b_s$ which is stored in the longest history is not
on the graph
any more. Hence, if~$s = 1$ (the memory contains only one history),
then that robber has been captured and, as there are no other robbers, 
all the robbers are captured and the cops have won. 
Otherwise, we set~$U' := U^{s-1} = \bigcup_{i = 1}^{s-1} U_i$, \ie
we remove the cops corresponding to the longest history from the graph.
For the memory update, consider~$\rho_{s-1}$ and distinguish two cases:

\begin{itemize}
\item~$R_{s-1} = \emptyset$\\
That means, there are no robbers on the graph that are associated with 
the next longest history. 
The new memory state~$\zeta'$ is obtained from~$\zeta$ by deleting~$\rho_s$ and 
replacing~$(\rho_{s-1}, R_{s-1}, O_{s-1})$ by the history~$\rho_{s-1} \cdot
(W_{s-1},b_s)$. 
Note that we could delete all last plays from~$\rho$ that have no associated
robbers on the graph at once, but, for the ease of proving our invariants,
we do it step by step.

\item~$R_{s-1} \neq \emptyset$\\ 
In this case, we have to select one of the robbers from~$R_{s-1}$ that we want
to
pursue next.
Choose some robber~$b \in R_{s-1}$ and define 
$\tilde{O}_{s-1} \coloneqq \Reach_{\graphG - W_{s-1}}(R_{s-1} \setminus \{b\})$.
Then 
the new memory state~$\zeta'$ is obtained from~$\zeta$ by replacing 
$(\rho_{s-1}, R_{s-1}, O_{s-1})$ by~$(\rho_{s-1}, R_{s-1} \setminus \{b\},
\tilde{O}_{s-1})$ 
and replacing~$\rho_s$ by~$\rho_{s-1} \cdot (W_{s-1}, b)$. 

\end{itemize}

\noindent
\textbf{Case II:}~$b_s\in R$.

\noindent
\textbf{Case II.1:} There is some~$i \in \{1, \dots, s-1\}$ such that~$R_i =
\emptyset$.\\
That means, there is no robber associated with history~$\rho_i$. 
First, consider the next robber move in~$\rho_i$ according to~$\rho_{i+1}$ (note
that~$i < s$,
so~$\rho_{i+1}$ exists).
Consider a vertex~$\tilde{b}_i\in V$ and the suffix~$\eta$ of~$\rho_{i+1}$ such
that 
$\rho_{i+1} = \rho_i\cdot (W_i, \tilde{b}_i)\cdot \eta$. We distinguish three more cases.

\begin{itemize}
\item[\textbf{(a)}]\label{reach_max}~$\rho_{i+1} = \rho_i \cdot (W_i,
\tilde{b}_i) =
\rho_s$, \ie~$\eta$ is empty.\\
In this case,~$\rho_i$ already reached the end of~$\rho_{s}$, but is not deleted
yet.
Indeed, all histories~$\rho_i$, for~$i\le s-1$, end with a robber position.
If~$\eta$
is empty, then~$i=s-1$. Set~$U' \coloneqq U$, \ie the cops stay idle, and update
the
memory by deleting 
$(\rho_i, R_i, O_i)$ from~$\zeta$. 
\end{itemize}

\noindent
For the other cases, we set 
\begin{itemize}
 \item~$\tilde{W}_i \coloneqq f(W_i, \tilde{b}_i)$ and
 \item~$U' \coloneqq \bigcup_{j \neq i}U_j \cup (\tilde{W}_i \setminus O^{i-1})$
\end{itemize} 
to define the next cop move and
\begin{itemize}
 \item~$\tilde{O}_i = (O_i \cap \Reach_{\graphG - W_i}(\tilde{b}_i)) \setminus
\tilde{W}_i$ and
 \item~$\tilde{\rho}_i = \rho_i \cdot (W_i, \tilde{b}_i) \cdot  (W_i,
\tilde{W}_i, \tilde{b}_i)$
\end{itemize} 
for the definition of the memory update.

\begin{enumerate}
\item[\textbf{(b)}]\label{no_reach}~$\tilde{\rho}_i \neq \rho_{i+1}$.\\
That means, we have not reached the end of the next history. In this case,
we replace~$(\rho_i,R_i,O_i)$
by~$(\tilde{\rho}_i,R_i,\tilde{O}_i)$. 

 \item[\textbf{(c)}]\label{reach_next}~$\tilde{\rho}_i = \rho_{i+1}$.\\
The memory update is to replace~$(\rho_{i+1}, R_{i+1}, O_{i+1})$ by 
$(\rho_{i+1}, R_{i+1}, O_{i+1}\cup \tilde{O}_i)$ 
and to remove~$(\rho_i, R_i, O_i)$. Note how we conservatively updated~$O_i$.

\end{enumerate}

\noindent
\textbf{Case II.2:} For all~$i \in \{1, \dots, s-1\}$ we have~$R_i \neq
\emptyset$.\\
In this case, the cops play against the robber from~$\rho_s$.
We define 
\begin{itemize}
\item~$\tilde{W}_s = f(W_s,b_s)$ and
\item~$U' \coloneqq \bigcup_{j < s}U_j \cup (\tilde{W}_s \setminus O^{s-1})$
\end{itemize}
and, for the memory update, we replace~$\rho_s$ by~$\rho_s' = \rho_s \cdot (W_s,
\tilde{W}_s, b_s)$.\\

As a next step, we prove that the cop moves from~$U$ to~$U'$ is monotone,
\ie 
that no robber can reach any vertex from~$U \setminus U'$ in~$\graphG - (U \cap
U')$.

\begin{figure}
\begin{center}
\begin{tikzpicture}[thick]

\node[vertex] (R_l) at (0,0){};
\node[vertex] (tilde-b_i) at (-4,0){};
\node[vertex] (v) at (-5,-1){};
\node[vertex] (w) at (-2,-1){};

\draw[path] (tilde-b_i) .. controls (-3.2,-0.3) and (-2.7,-0.3).. (-1.9,0.2) ..
controls (-1,0.5) and (-0.5,0.2).. (R_l);
\draw[path] (R_l.-90) ..controls (-0,-1) and (-1,-0.5) .. (w)
        ..controls (-2.5,-1.4) and (-4.5,-1.4) .. (v);

\node (R_lText) at (0.2,0.35) {$R_l$};
\node (tilde-b_iText) at (-4.4,0){$\tilde{b}_i$};
\node (vText) at (-5.3,-1.3){$v$};
\node (wText) at (-1,-1.5){$w\in O_j, j<i$};
\node (PText) at (0.2,-0.55){$P$};
\draw[brace] (0.5,0)--(0.5,-1);
\node (G-CcapU') at (2,-0.55){$\graphG-(U\cap U')$};
\node (P') at (-3.4,-1.7){$P'$};
\draw[brace] (-2,-2.1) -- (-5,-2.1);
\node (G-W_j) at (-3.45,-2.7){$\graphG-W_j$};

\end{tikzpicture}
\end{center}
\caption{Robbers from longer histories than~$\rho_i$ cannot cause
non-monotonicity.}
\label{figure_lemma_monotonicity}
\end{figure}
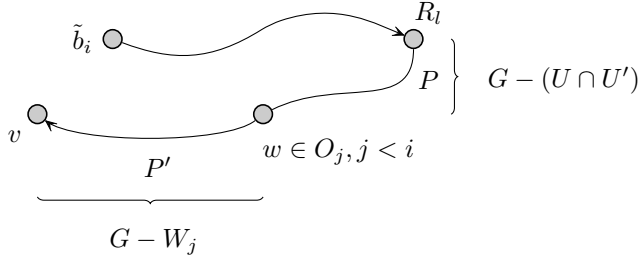

\begin{lemma}\label{lemma_monotonicity_app}
$(U\setminus U')\cap\Reach_{\graphG-(U \cap U')}(R)=\emptyset$.
\end{lemma}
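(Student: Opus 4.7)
The plan is to do a case analysis matching exactly the case split used to define $U'$ from the current state $(U,R,\zeta)$, and in each case identify the set $U\setminus U'$ of vacated cop vertices and exhibit, for every $v\in U\setminus U'$ and every robber $b\in R$, a blocker on every directed path from $b$ to $v$ inside $(U\cap U')$. The two kinds of blockers available to us are: (i) cops from $f$-histories, which block because $f$ itself is monotone on the recorded play; and (ii) the ``firewall'' sets $O^i$, which by Lemma~\ref{lemma_le} and Corollary~\ref{corollary_lemma_le} confine each $R_i$ to $O^i$ and hence separate $R_1\cup\cdots\cup R_i$ from any vertex outside $O^i$.

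First I would dispose of Case~I ($b_s\notin R$), where $U\setminus U'=U_s$. Any cop vertex $v\in U_s$ lies in $W_s$ but not in $O^{s-1}$, while by Lemma~\ref{lemma_le} every robber $b\in R=R^{s-1}$ satisfies $\Reach_{\graphG-U^{s-1}}(b)\subseteq O^{s-1}$; since $U\cap U'\supseteq U^{s-1}$, every path from $b$ to $v$ in $\graphG-(U\cap U')$ either stays in $O^{s-1}$ (and therefore never reaches $v$) or uses a vertex of $U^{s-1}\subseteq U\cap U'$, which is forbidden. Neither the bookkeeping change (deleting $\rho_s$ or shifting a robber into $\rho_s$) affects $U'$, so the argument is the same in both Case~I subcases.

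Next I would handle Case~II.2, which is the ``canonical'' case. Here $U\setminus U'\subseteq U_s\cup (O^{s-1}\cap\tilde W_s)$, since only $U_s$ is replaced by $\tilde W_s\setminus O^{s-1}$. For a robber $b_s\in R_s$ the move $(W_s,b_s)\mapsto (W_s,\tilde W_s,b_s)$ is consistent with $f$ by invariant (Cons), so monotonicity of $f$ gives that no vertex in $W_s\setminus\tilde W_s$ is reachable from $b_s$ in $\graphG-(W_s\cap\tilde W_s)$, and Lemma~\ref{lemma_convenience} lets us replace $W_s$ by $U^s=U\cap U'\cap W^s$. For robbers in $R_j$ with $j<s$, the same firewall argument as in Case~I applies. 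The three subcases of Case~II.1 are of the same flavour: in (a) we have $U'=U$; in (b) and (c) the only possible new vacancies lie in $U_i\cup(O^{i-1}\cap\tilde W_i)$, and the $f$-consistent step $(W_i,\tilde{b}_i)\mapsto(W_i,\tilde W_i,\tilde{b}_i)$ on $\rho_i$ combined with the firewall sets $O^j$ for $j\neq i$ again forbids every would-be bad path.

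The main obstacle I anticipate is Case~II.1, specifically making sure the cop moves performed to ``catch up'' $\rho_i$ to $\rho_{i+1}$ do not release a cop that a robber of some $R_l$ with $l\notin\{i\}$ can now reach. The delicate point is that such a cop may lie in $W_i$ but outside $\tilde W_i$; we must argue via invariant (Omit) that the vertex is either shielded by the retained firewall $O^{l}$ (for $l<i$) or by cops in $W^{i-1}$, using Lemma~\ref{lemma_convenience} to discard irrelevant older cop layers. Once this step is in place, combining the $f$-monotonicity clause with the firewall-confinement clause case by case produces the required empty intersection and completes the proof.
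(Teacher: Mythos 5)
Your overall strategy is the paper's: the same case split following the definition of $U'$, with robbers from shorter histories confined by the firewall sets $O^j$ (via (Omit) and Lemma~\ref{lemma_le}) and the robber of the history currently being extended blocked by monotonicity of $f$ along the recorded $f$-consistent play. Your closing paragraph also correctly isolates the genuinely delicate point --- paths that slip through omitted placements --- and names (Omit) as the tool that traps such a path inside some $O_j$ and forces the target vertex into $O^{i-1}$, contradicting $v\in U_i$. This is exactly how the paper resolves Case~II.1.

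There is, however, one concrete step that fails as written. In Case~II.2 you argue that monotonicity of $f$ excludes reachability of $W_s\setminus\tilde W_s$ from $b_s$ in $\graphG-(W_s\cap\tilde W_s)$, and that Lemma~\ref{lemma_convenience} then lets you replace $W_s$ by $U^s$. Lemma~\ref{lemma_convenience} only gives $\Reach_{\graphG-W_s}(b_s)=\Reach_{\graphG-W^s}(b_s)$; it relates $W_s$ to $W^s$, not to $U^s$. The entire difficulty of this lemma is that $U^s$ may be a proper subset of $W^s$ because of omitted placements, so $\Reach_{\graphG-U^s}(b_s)$ can be strictly larger than $\Reach_{\graphG-W^s}(b_s)$: the robber $b_s$ may escape through a vertex of some $W_j$ that was never actually occupied because it lies in $O^{j-1}$. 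The same omitted-placement argument you develop for Case~II.1 is needed here as well (the paper explicitly notes that the argument for $b_s$ in Case~II.2 is the same as in Case~II.1): such an escape path must first enter some $O_j$, which by (Omit) is closed under reachability in $\graphG-W_j$, so the reached vertex would itself lie in $O^{s-1}$ and hence outside $U_s$. Once you route the Case~II.2 argument through that closure step instead of through Lemma~\ref{lemma_convenience}, your plan coincides with the paper's proof. A minor further point: in Case~II.2 we already have $U\setminus U'\subseteq U_s$, since all $U_j$ with $j<s$ are retained in $U'$; the extra term $O^{s-1}\cap\tilde W_s$ in your inclusion is never part of $U$ and should be dropped.
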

\begin{proof}
We go through the cases defined in the description of the cop move.

\medskip

\emph{Case~I.}
If~$b_s \notin R$ we have~$U' = U^{s-1}$ so, by (Cops),~$U^{s-1} \subseteq U
\cap U'$.
Moreover, (Robs) yields~$R = \bigcup_{i = 1}^{s-1} R_i$ and hence, using
Lemma~\ref{lemma_le}, 
we obtain~$\Reach_{\graphG - (U \cap U')}(R) \subseteq O^{s-1}$. Due to the 
definition of~$U_s$ we have~$O^{s-1} \cap U_s = \0$, which implies
$\Reach_{\graphG - (U \cap U')}(R) \cap U_s = \0$ and thus, by (Cops), the move 
of~$\rf$ is monotone in this case.

\medskip

\emph{Case~II.}
Here we have~$b_s\in R$ and two further cases. 

\medskip

\emph{Case~II.1:} there is some~$i \in \{1, \dots, s-1\}$ such that~$R_i =
\emptyset$.
In Subcase~(a), the cops stay idle, so the move is monotone. Otherwise we have
$U' = \bigcup_{j \neq i} U_j \cup \tilde{U}_i$ with~$\tilde{U}_i = \tilde{W}_i
\setminus O^{i-1}$ where 
$\tilde{W}_i = f(W_i,\tilde{b}_i)$ and~$\rho_{i+1} =
\rho_i\cdot(W_i,\tilde{b}_i)\cdot\eta$ are as above. 
Assume that this move is not monotone, \ie there is some~$v \in U \setminus U'$
with 
$v \in \Reach_{\graphG - (U \cap U')}(R)$. 
Then~$v \in U_i \setminus \tilde{U}_i$, by the definition of~$U'$ and (Cops). 

We distinguish, which robbers can reach~$v$. 
First, consider robbers from 
smaller histories than~$\rho_i$, that means, from the set~$R^{i-1}$.
As~$U^{i-1} \subseteq U \cap U'$, by Lemma~\ref{lemma_le}, we obtain
$\Reach_{\graphG - (U \cap U')}(R^{i-1}) \subseteq O^{i-1}$.
Due to the definition of~$U_i$, we have~$O^{i-1} \cap U_i = \0$ and hence
$v \notin \Reach_{\graphG - (U \cap U')}(R^{i-1})$, \ie no robber from~$R^{i-1}$
can cause non-monotonicity.

As~$R_i = \0$, we have~$v \in \Reach_{\graphG - (U \cap U')}(R^{>i})$ where
$R^{>i} = \bigcup_{l = i+1}^{s-1} R_l \cup \{b_s\}$ is the set of
robbers from longer histories than~$\rho_i$. Consider some path~$P$ from 
$R^{>i}$ to~$v$ in~$\graphG - (U \cap U')$ as shown in
Figure~\ref{figure_lemma_monotonicity}.

First, we show that~$v \notin \Reach_{\graphG - (W_i \cap \tilde{W}_i)}(R^{>i})$,
\ie
that the robber has to visit omitted vertices.
For~$l \in \{i+1, \ldots, s-1\}$ and any~$b \in R_l$, by (Lin),~$\rho_i \cdot
(W_i, \tilde{b}_i)$ is a strict prefix of
$\rho_l \cdot (W_l, b)$ and, by Lemma~\ref{lemma_cons}, both of these histories 
are consistent with~$f$. 
So, by monotonicity of~$f$, any robber~$b \in R_l$ is reachable from
$\tilde{b}_i$ in~$\graphG - W_i$ and hence 
in~$\graphG - (W_i \cap \tilde{W}_i)$. 
Moreover, as we are in Case~II.1 (b) or (c), the same arguments show
that~$b_s$ 
is also reachable from~$\tilde{b}_i$ in~$\graphG - W_i$ and hence in~$\graphG - (W_i
\cap \tilde{W}_i)$. 
Therefore, if~$v \in \Reach_{\graphG - (W_i \cap \tilde{W}_i)}(R^{>i})$, then 
$v \in \Reach_{\graphG - (W_i \cap \tilde{W}_i)}(\tilde{b}_i)$.
But as~$v \in U_i \subseteq W_i$ this contradicts monotonicity of~$f$
since 
$\rho_i \cdot (W_i, \tilde{b}_i) \cdot (W_i, \tilde{W}_i, \tilde{b}_i)$ is
consistent with~$f$. 
Hence,~$v \notin \Reach_{\graphG - (W_i \cap \tilde{W}_i)}(R^{>i})$. 

As the robber visits omitted vertices,~$P \cap (W_i \cap \tilde{W}_i) \neq \0$. 
We consider the minimal~$l \le i$ such that 
$P \cap \widehat{W}_l \neq \0$ where~$\widehat{W}_j = W_j$ for~$j < i$ and 
$\widehat{W}_i = W_i \cap \tilde{W}_i$. We define~$\widehat{U}_j$ analogously.
The meaning of~$\widehat{W}_j$ is that it contains precisely the vertices
occupied 
by cops according to~$\rho_j$ which remained idle in the last move. Let~$w$ be
some 
vertex in~$P \cap \widehat{W}_l$. 
First, as~$w \in P$,~$w \notin U \cap U'$, so (Cops) and the definition of~$U'$
yield~$w \notin \widehat{U}_l$.
Therefore,~$w \in \widehat{W}_l \setminus \widehat{U}_l$ and hence, using the
definitions of~$U_l$ 
and~$\tilde{U}_i$, if~$l = i$, we obtain~$w \in O^{l-1}$, \ie~$w \in O_j$ for
some~$j < l$.
Moreover,~$v$ is reachable from~$w$ in~$\graphG$ via some path~$P' \subseteq P$
and, due to the 
minimal choice of~$l$,~$P' \cap \widehat{W}_j = P' \cap W_j = \0$, so~$v \in
\Reach_{\graphG - W_j}(w)
\subseteq \Reach_{\graphG - W_j}(O_j) = O_j \subseteq O^{i-1}$. The last equality
is due to (Omit).
But as~$O^{i-1} \cap U_i = \0$,~$v \in O^{i-1}$ is a contradiction to~$v \in
U_i$.

Finally, consider Case~II.2, \ie 
for all~$i \in \{1,\dots,s-1\}$ we have~$R_i \neq \emptyset$.
First notice that, due to definition of~$U'$ and (Cops),~$U \setminus U'
\subseteq U_s$.
For robbers other than~$b_s$ the same arguments as in Case~I and Case~II.1,
using (Robs) and Lemma~\ref{lemma_le},
show that they cannot cause non-monotonicity.
The argument for~$b_s$ is the same as in Case~II.1:
assume that~$b_s$ causes non-\mbox{}monotonicity at some vertex~$v$. As~$\rho_s$
is
consistent 
with~$f$ due to (Cons) and~$f$ is monotone,~$b_s$ can reach~$v$ only
via~$O^{s-1}$ (using (Cops)). 
However,~$O^{s-1}$ is closed under reachability in~$\graphG - U$ and~$v$ cannot be
in~$O^{s-1}$, so this is impossible.
\end{proof}

For the cop move, it remains to prove that all invariants still hold after the
move.
We first give a separate lemma for (Robs), (Lin), (Cons) and (Ext) and prove
them quite briefly 
as they can be obtained easily from the induction hypothesis, using the
definition of the cop move. 

\begin{lemma}\label{lemma_cop_invariants}
(Robs), (Lin), (Cons) and (Ext) are preserved by the cop move.
\end{lemma}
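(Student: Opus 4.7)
The plan is to check each of the four invariants separately by walking through the subcases of the cop move defined above. Invariant \textbf{(Robs)} is essentially book-keeping: robber positions are unchanged during a cop move, and in every subcase the partition $(R_1,\dots,R_s)$ either has each $R_i$ preserved, transfers a single robber $b$ from $R_{s-1}$ to a freshly created $R_{s}=\{b\}$ (Case~I with $R_{s-1}\neq\emptyset$), or discards a tuple whose $R_i$-component is already empty (Case~I with $R_{s-1}=\emptyset$, and Cases~II.1.(a) and~(c)); in each instance the disjoint-union equation $R=\bigcup R_i$ persists.

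Invariant \textbf{(Lin)} requires only that each extension or merger respects the strict-prefix order. The non-trivial subcase is~II.1.(b): since $\rho_{i+1}$ is consistent with $f$ and $\rho_i\cdot(W_i,\tilde b_i)$ is a prefix of $\rho_{i+1}$, the next cop placement recorded in $\rho_{i+1}$ must be $f(W_i,\tilde b_i)=\tilde W_i$; hence $\tilde\rho_i$ is a prefix of $\rho_{i+1}$, and it is strict because we are in case~(b) rather than~(c). All other subcases either shorten a history, absorb one into its successor, or extend only $\rho_s$, none of which threatens linearity.

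For \textbf{(Cons)}, every new cop move is defined through $f$. In Case~I with $R_{s-1}\neq\emptyset$, consistency of $\rho_{s-1}\cdot(W_{s-1},b)$ follows directly from Lemma~\ref{lemma_cons}. In Case~II.1.(b), $\tilde\rho_i$ is a prefix of the $f$-consistent history $\rho_{i+1}$, and in Case~II.2 the extension uses $\tilde W_s=f(W_s,b_s)$ by definition. The genuinely delicate subcase is Case~I with $R_{s-1}=\emptyset$, where we append the robber move $(W^{-1}_{s-1},W_{s-1},b_{s-1})\to(W_{s-1},b_s)$ to $\rho_{s-1}$. Since $\rho_{s-1}\propprefix\rho_s$ by the previous \textbf{(Lin)}, the position $(W^{-1}_{s-1},W_{s-1},b_{s-1})$ actually occurs in $\rho_s$, and $b_s$ is a later robber vertex in that single-robber play. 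Monotonicity of $f$ on $\rho_s$ implies both that $b_s\notin W_{s-1}$ (a vertex once occupied by a cop cannot be reoccupied by the robber) and that $b_s\in\Reach_{\graphG-W_{s-1}}(b_{s-1})\subseteq\Reach_{\graphG-(W^{-1}_{s-1}\cap W_{s-1})}(b_{s-1})$, so the appended robber move is legal. This monotonicity argument is the main obstacle in the proof.

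For \textbf{(Ext)}, updates to $O_i$ must stay inside $\Reach_{\graphG-W^{-1}_i}(b_i)$. In Case~I with $R_{s-1}\neq\emptyset$ one checks $\tilde O_{s-1}=\Reach_{\graphG-W_{s-1}}(R_{s-1}\setminus\{b\})\subseteq\Reach_{\graphG-W_{s-1}}(O_{s-1})=O_{s-1}$ using \textbf{(Omit)}, and then old \textbf{(Ext)} applies. In Case~II.1.(b) the new history $\tilde\rho_i$ ends with $(W_i,\tilde W_i,\tilde b_i)$, so its $W^{-1}$ is $W_i$ and its $b$ is $\tilde b_i$, and $\tilde O_i\subseteq\Reach_{\graphG-W_i}(\tilde b_i)$ by definition. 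In Case~II.1.(c) the identification $\tilde\rho_i=\rho_{i+1}$ yields $W_i=W^{-1}_{i+1}$ and $\tilde b_i=b_{i+1}$, so the merged set $O_{i+1}\cup\tilde O_i$ still lies inside $\Reach_{\graphG-W^{-1}_{i+1}}(b_{i+1})$. Other subcases either leave $O_i$ unchanged or remove its tuple, so \textbf{(Ext)} is preserved throughout.
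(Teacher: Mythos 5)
Your proof is correct and follows essentially the same route as the paper's: a case analysis over the subcases of the cop move, invoking Lemma~\ref{lemma_cons}, the prefix relation (Lin), and the monotonicity of~$f$ in exactly the places the paper does (your (Ext) argument for Case~I via closure of $O_{s-1}$ under (Omit) is a trivial variation of the paper's appeal to Lemma~\ref{lemma_cons}). The only quibble is that in the (Cons) argument for Case~I with $R_{s-1}=\emptyset$ your intermediate claim $b_s\in\Reach_{\graphG-W_{s-1}}(b_{s-1})$ is stronger than monotonicity directly yields, since the robber's escape path from $b_{s-1}$ may pass through newly placed cop vertices in $W_{s-1}\setminus W^{-1}_{s-1}$; what is both needed and justified is membership in $\Reach_{\graphG-(W^{-1}_{s-1}\cap W_{s-1})}(b_{s-1})$, which you also state.
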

\begin{proof}
(Robs) follows immediately from the induction hypothesis.
Linearity of~$\propprefix$ is obviously preserved in Case~I, Case~II.1~(a)
and~(b) and in Case~II.2.
In Case~II.1~(b), we have to show that~$\tilde{\rho}_i \propprefix \rho_{i+1}$.
First
notice that~$\rho_i \cdot (W_i, \tilde{b}_i) \propprefix \rho_{i+1}$ as
$\rho_{i+1} = \rho_i \cdot (W_i, \tilde{b}_i) \eta$ and~$\eta \neq \0$.
Furthermore, the first
position in~$\eta$ is~$(W_i, \tilde{W}_i, \tilde{b}_i)$ as~$\rho_{i+1}$ is
consistent with
$f$ by (Cons) and~$\tilde{W}_i = f(W_i, \tilde{b}_i)$.
As~$\tilde{\rho}_i \neq \rho_{i+1}$ it follows that~$\tilde{\rho}_i \propprefix
\rho_{i+1}$.

For (Cons), consider first Case~I. If~$R_{s-1} = \0$, then~$\rho_{s'}' =
\rho_{s-1}\cdot (W_{s-1},b_s)$.
As~$\last(\rho_s) \in \{(W_s^{-1},W_s,b_s), (W_s, b_s)\}$ and~$\rho_{s-1}
\propprefix \rho_s$ 
and due to (Cons) both of these histories
are consistent with~$f$, which is monotone,~$b_s$ is reachable from
$b_{s-1}$ in
$\graphG - (W_{s-1}^{-1} \cap W_{s-1})$, so~$\rho_{s-1} \cdot (W_{s-1},b_s)$ is
consistent
with~$f$. If~$R_{s-1} \neq \0$, then~$\rho_{s-1} \cdot (W_{s-1},b)$ is
consistent
with~$f$ for any~$b \in R_{s-1}$ due to Lemma~\ref{lemma_cons}.
In Case~II.1 (a) and (b), (Cons) follows immediately from the induction
hypothesis.
In Case~II.1~(b), (Cons) follows from (Lin) as~$\rho_{s'}' = \rho_s$
is consistent with~$f$ and~$\tilde{\rho}_i \propprefix \rho_s$. 
Finally, in Case~II.2,~$\rho_s$ is consistent with~$f$ due to (Cons) and 
$W_s' = f(W_s,b_s)$, so~$\rho_{s'}' = \rho_s'$ is consistent with~$f$
as well.

To prove (Ext) first notice that in Case~I, if~$R_{s-1} = \0$, then (Ext)
follows immediately from the induction hypothesis. Moreover, if~$R_{s-1} \neq
\0$, 
then~$s' = s$ and we have to show that~$O_{s-1}' = \tilde{O}_{s-1}
\subseteq \Reach_{\graphG - W_{s-1}^{-1}}(b_{s-1})$. As, by
Lemma~\ref{lemma_cons}, for any~$b' \in R_{s-1}$ the history
$\rho_{s-1}(W_{s-1},b')$ is consistent with~$f$, which is monotone, the
reachability area of any~$b' \in R_{s-1}$ in~$\graphG - W_{s-1}$ is a subset of
the reachability area of~$b_{s-1}$ in~$\graphG - W_{s-1}^{-1}$. Hence, by
definition of~$\tilde{O}_{s-1}$, the statement follows. In Case~II, (Ext)
follows easily from the induction hypothesis, using the definition 
of~$\tilde{O}_i$ in Case~II.1 (b) and (c).
\end{proof}

For the remaining two invariants (Omit) and (Cops), we have two separate lemmas
which we prove 
in greater detail. 
The most interesting cases in the proofs of these two invariants are Cases~II.1
(b) and (c).
The crucial point here is the new set~$\tilde{O}_i$. See
Figure~\ref{figure_lemma_cop_invariants_omit} 
for an illustration.

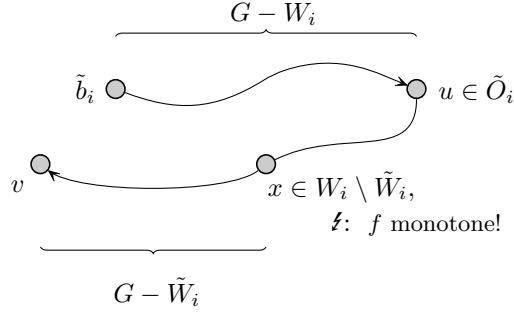
\begin{figure}
\begin{center}
 \begin{tikzpicture}[thick]
\node[vertex] (u) at (0,0){};
\node[vertex] (tilde-b_i) at (-4,0){};
\node[vertex] (v) at (-5,-1){};
\node[vertex] (x) at (-2,-1){};

\draw[path] (tilde-b_i) .. controls (-3.2,-0.3) and (-2.7,-0.3).. (-1.9,0.2) ..
controls (-1,0.5) and (-0.5,0.2).. (u);
\draw[path] (u.-90) ..controls (-0,-1) and (-1,-0.5) .. (x)
        ..controls (-2.5,-1.4) and (-4.5,-1.4) .. (v);

\node (R_lText) at (0.8,0) {$u\in \tilde{O}_i$};
\node (tilde-b_iText) at (-4.4,0){$\tilde{b}_i$};
\draw[brace] (-4,0.7)--(0,0.7);
\node (G-W_i) at (-1.9,1){$\graphG-W_i$};
\node (vText) at (-5.3,-1.3){$v$};
\node (xText) at (-1,-1.3){$x\in W_i\setminus \tilde{W}_i,$};
\draw[brace] (-2,-2.1) -- (-5,-2.1);
\node (G-tildeW_i) at (-3.45,-2.7){$\graphG-\tilde{W}_i$};
\node (contradiction) at (0,-1.8){\Lightning: \small~$f$ monotone!};

 \end{tikzpicture}
\end{center}
\caption{$\tilde{O}_i$ is closed under reachability in~$\graphG - \tilde{W}_i$}
\label{figure_lemma_cop_invariants_omit}
\end{figure}

\begin{lemma}\label{lemma_cop_invariants_omit}
(Omit) is preserved by the cop move.
\end{lemma}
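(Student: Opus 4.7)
The plan is to go through each case in the definition of the cop move and verify both parts of (Omit): that $R_i' \subseteq O_i'$ and that $O_i' = \Reach_{\graphG - W_i'}(O_i')$ for every $i \in \{1,\dots,s'-1\}$. In Case~I with $R_{s-1}=\0$, the tuples with indices below $s-1$ are unchanged, so (Omit) continues to hold trivially. In Case~I with $R_{s-1}\neq\0$, the only new object is $\tilde{O}_{s-1}=\Reach_{\graphG-W_{s-1}}(R_{s-1}\setminus\{b\})$; the inclusion $R_{s-1}\setminus\{b\}\subseteq\tilde{O}_{s-1}$ is immediate and closure under reachability in $\graphG-W_{s-1}$ follows from transitivity. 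Case~II.1~(a) merely deletes a tuple, and Case~II.2 modifies only $\rho_s$, so in both cases all $(\rho_i,R_i,O_i)$ with $i<s$ are unaffected and (Omit) is preserved.

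The interesting cases are II.1~(b) and II.1~(c), where the crux is to show the following claim: $\tilde{O}_i$ is closed under reachability in $\graphG-\tilde{W}_i$. Recall $\tilde{O}_i = (O_i \cap \Reach_{\graphG - W_i}(\tilde{b}_i)) \setminus \tilde{W}_i$ and $\tilde{W}_i = f(W_i,\tilde{b}_i)$, so that $\rho_i\cdot(W_i,\tilde{b}_i)\cdot(W_i,\tilde{W}_i,\tilde{b}_i)$ is consistent with $f$ (by (Cons) and (Lin)). Let $u\in\tilde{O}_i$ and let $v$ be reachable from $u$ in $\graphG-\tilde{W}_i$ via a path $P$.

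The main obstacle is to exclude the possibility that $P$ meets $W_i\setminus\tilde{W}_i$. I would argue by contradiction: take $x\in P\cap(W_i\setminus\tilde{W}_i)$. Since $u\in\Reach_{\graphG-W_i}(\tilde{b}_i)\subseteq \Reach_{\graphG-(W_i\cap\tilde{W}_i)}(\tilde{b}_i)$, and $P$ (hence its initial segment from $u$ to $x$) avoids $\tilde{W}_i\supseteq W_i\cap\tilde{W}_i$, concatenation shows that $x$ is reachable from $\tilde{b}_i$ in $\graphG-(W_i\cap\tilde{W}_i)$. But this contradicts monotonicity of~$f$ at the robber position $(W_i,\tilde{W}_i,\tilde{b}_i)$, since $x\in W_i\setminus\tilde{W}_i$. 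Hence $P$ avoids all of $W_i$; consequently $v\in\Reach_{\graphG-W_i}(u)\subseteq\Reach_{\graphG-W_i}(O_i)=O_i$ by the old (Omit), and also $v\in\Reach_{\graphG-W_i}(\tilde{b}_i)$, while $v\notin\tilde{W}_i$. Therefore $v\in\tilde{O}_i$, proving the claim.

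It remains to read off (Omit) in the two subcases. In case~(b), $W_i'=\tilde{W}_i$ and $R_i'=R_i=\0$, so the claim gives $\tilde{O}_i=\Reach_{\graphG-W_i'}(\tilde{O}_i)$ and $R_i'\subseteq\tilde{O}_i$ trivially. In case~(c), the identification $\tilde{\rho}_i=\rho_{i+1}$ yields $\last(\rho_{i+1})=(W_i,\tilde{W}_i,\tilde{b}_i)$, so the new $W_{i+1}$ equals $\tilde{W}_i$; hence both $O_{i+1}$ (by the inductive (Omit)) and $\tilde{O}_i$ (by the claim) are closed under reachability in $\graphG-W_{i+1}$, and so is their union. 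The inclusion $R_{i+1}\subseteq O_{i+1}\cup\tilde{O}_i$ is immediate from the old (Omit), which completes the verification.
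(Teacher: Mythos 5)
Your proposal is correct and follows essentially the same route as the paper's proof: the same case split, the same key claim that $\tilde{O}_i$ is closed under reachability in $\graphG-\tilde{W}_i$, and the same appeal to monotonicity of~$f$ at the position $(W_i,\tilde{W}_i,\tilde{b}_i)$ to rule out that a path from $\tilde{O}_i$ meets $W_i\setminus\tilde{W}_i$ (the paper phrases this as ``the cop-free path from $\tilde{b}_i$ via $u$ to $v$ in $\graphG-(W_i\cap\tilde{W}_i)$ must already be cop-free in $\graphG-W_i$''). The only point you gloss over is in Case~I with $R_{s-1}\neq\0$, where the ``immediate'' inclusion $R_{s-1}\setminus\{b\}\subseteq\tilde{O}_{s-1}$ actually needs $R_{s-1}\cap W_{s-1}=\0$, which the paper extracts from the old (Omit); this is a one-line fix, not a gap.
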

\begin{proof}
In Case~I, if~$R_{s-1} = \0$, (Omit) follows immediately from the induction
hypothesis, 
so consider Case~II where~$R_{s-1} \neq \0$. We have~$s' = s$,~$W_{s-1}' =
W_{s-1}$ and 
$O_{s-1}' = \tilde{O}_{s-1} = \Reach_{\graphG - W_{s-1}}(R_{s-1} \setminus
\{b\})$. Clearly,
this yields that~$O_{s-1}'$ is closed under reachability in~$\graphG - W_{s-1}$.
Moreover, by (Omit),~$R_{s-1} \subseteq \Reach_{\graphG - W_{s-1}}(O_{s-1})$, so
we have
$R_{s-1} \cap W_{s-1} = \0$ an hence~$R_{s-1} \setminus \{b\} \subseteq 
\Reach_{\graphG - W_{s-1}}(R_{s-1} \setminus \{b\}) = O'_{s-1}$.

Consider Case~II.1. In Case~(a), (Omit) follows immediately from the induction
hypothesis. 
In Case~(b),~$R_i' \subseteq O_i'$ is trivial as~$R_i' = R_i = \0$, so we have
to show that 
$O_i' = \Reach_{\graphG - W_i'}(O_i')$.
We have~$O_i' = \tilde{O}_i = (O_i \cap \Reach_{\graphG - W_i}(\tilde{b}_i))
\setminus \tilde{W}_i$ 
and~$W_i' = \tilde{W}_i = f(W_i, \tilde{b}_i)$. 
Moreover, by the definition of~$\tilde{O}_i$ in this case, we have~$\tilde{O}_i
\cap
\tilde{W}_i = \0$, so 
$\tilde{O}_i \subseteq \Reach_{\graphG - \tilde{W}_i}(\tilde{O}_i)$.

As a next step, we show that~$\tilde{O}_i$ is closed under reachability in
$\graphG - \tilde{W}_i$. Let~$v \in \Reach_{\graphG - \tilde{W}_i}(\tilde{O}_i)$.
Clearly,~$v \notin \tilde{W}_i$. Let~$u \in \tilde{O}_i$ such that~$v$ is
reachable from~$u$ in~$\graphG - \tilde{W}_i$. As~$\tilde{O}_i = (O_i \cap
\Reach_{\graphG - W_i}(\tilde{b}_i)) \setminus W_i$, we have~$u \in \Reach_{\graphG
- W_i}(\tilde{b}_i)$ and~$v \in \Reach_{\graphG - \tilde{W}_i}(u)$. Therefore,
there is a cop-\mbox{}free path from~$\tilde{b}_i$ to~$v$ via~$u$ in~$\graphG -
(W_i \cap
\tilde{W}_i)$. By (Cons),~$\rho_i$ is consistent with~$f$ and~$\tilde{W}_i
= f(W_i, \tilde{b}_i)$, so, as~$f$ is monotone, this path must be
cop-\mbox{}free in~$\graphG - W_i$, see Figure
\ref{figure_lemma_cop_invariants_omit}.
Thus,~$v \in \Reach_{\graphG - W_i}(u)$ and as~$u \in O_i$ (by the definition of
$\tilde O_i$) and~$u \in \Reach_{\graphG - W_i}(\tilde{b}_i)$, we have~$v \in
\Reach_{\graphG - W_i}(O_i)$ and~$v \in \Reach_{\graphG - W_i}(\tilde{b}_i)$.
By (Omit), we have~$\Reach_{\graphG - W_i}(O_i) = O_i$, so~$v
\in O_i \cap \Reach_{\graphG - W_i}(\tilde{b}_i)$ and as~$v \notin \tilde{W}_i$
this yields~$v \in \tilde{O}_i$.

In Case~(c), we have to show that~$R_i' \subseteq O_i'$ 
and that~$O_i'$ is closed under reachability in~$\graphG - W_i'$.
We have~$R_i' = R_{i+1}$,~$O_i' = O_{i+1} \cup \tilde{O}_i$ and~$W_i' =
W_{i+1}$.
By (Omit),~$R_{i+1} \subseteq O_{i+1} \subseteq O_{i+1} \cup \tilde{O}_i$.
Moreover, as in Case~(b),~$\tilde{O}_i$ is closed under reachability in
$\graphG - \tilde{W}_i$ and as~$\tilde{\rho}_i = \rho_{i+1}$ we have~$\tilde{W}_i
= W_{i+1}$.
By (Omit),~$O_{i+1}$ is closed
under reachability in~$\graphG - W_{i+1}$, so the union 
$O_{i+1} \cup \tilde{O}_i$ is closed under reachability in~$\graphG - W_{i+1}$ as
well.
Finally, in Case~II.2, (Omit) follows again from the induction hypothesis.
\end{proof}

\begin{lemma}\label{lemma_cop_invariants_cops}
(Cops) is preserved by the cop move.
\end{lemma}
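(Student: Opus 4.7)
The plan is to proceed by case analysis matching the definition of the cop move, and in each case read off the new values $W_j^{\mathrm{new}}$ and $(O')^{j-1}$ from the memory update, compute $U_j^{\mathrm{new}} = W_j^{\mathrm{new}}\setminus(O')^{j-1}$, and verify the set identity $U' = \bigcup_j U_j^{\mathrm{new}}$ using the already established invariants together with monotonicity of~$f$ and Lemma~\ref{lemma_useless_moves}.

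I would first dispatch the routine cases. In Case~I with $R_{s-1}=\emptyset$, Case~II.1(a) and Case~II.2 the memory update is local: either the top history is dropped, or a single move is appended. Consequently $U_j^{\mathrm{new}}=U_j^{\mathrm{old}}$ for every surviving lower index, and the new top team $U_{s'}^{\mathrm{new}}$ absorbs exactly the new placement prescribed by the definition of $U'$ (namely $\tilde W_s\setminus O^{s-1}$ in~II.2, $W_{s-1}$ re-indexed in~I, and $W_i$ re-indexed in~II.1(a)). In Case~I with $R_{s-1}\neq\emptyset$ the replacement of $O_{s-1}$ by the enlarged $\tilde O_{s-1}$ only shrinks the new $U_s^{\mathrm{new}}$ relative to $U_{s-1}^{\mathrm{old}}$, so no new cops enter the union. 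None of these steps requires more than unwinding definitions.

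The main obstacle is Case~II.1(b), and Case~II.1(c) by the same argument after accounting for the merge of $\rho_i$ into $\rho_{i+1}$. Here $O_i$ is replaced by $\tilde O_i\subseteq O_i$, so $(O')^{j-1}\subseteq O^{j-1}$ for every $j>i$ and thus a~priori $U_j^{\mathrm{new}}\supseteq U_j^{\mathrm{old}}$. The task is to rule out spurious cops, that is, to prove $\bigcup_{j>i} U_j^{\mathrm{new}}\subseteq U'$. Any extra vertex $v$ must lie in $O_i\setminus\tilde O_i=(O_i\cap\tilde W_i)\cup\bigl(O_i\setminus\Reach_{G-W_i}(\tilde b_i)\bigr)$. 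I would show the second summand is disjoint from $W_j$: by (Omit) we have $O_i\cap W_i=\emptyset$, so a cop at $v\in O_i$ must have been placed by $f$ strictly after $\rho_i$ along $\rho_j$; Lemma~\ref{lemma_useless_moves} then forces every such placement into the robber's current reachability area, and by monotonicity of $f$ that area is contained in $\Reach_{G-W_i}(\tilde b_i)$, a contradiction. For the first summand, the very condition $v\notin(O')^{j-1}$ forbids $v\in O_l$ for $l<i$, so $v\notin O^{i-1}$ and hence $v\in\tilde W_i\setminus O^{i-1}=U_i^{\mathrm{new}}\subseteq U'$. The reverse inclusion $U'\subseteq\bigcup_j U_j^{\mathrm{new}}$ is a direct unpacking of the definitions, which completes the verification.
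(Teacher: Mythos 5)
Your proposal is correct and follows essentially the same route as the paper's proof: the routine cases are dispatched by unwinding definitions, and in Case~II.1(b)/(c) the potential spurious cops are exactly the vertices of $O_i\setminus\tilde O_i$, which you eliminate using the same ingredients the paper uses (the definition of $\tilde O_i$, $O_i\cap W_i=\emptyset$ from (Omit), the activeness of $f$ from Lemma~\ref{lemma_useless_moves}, and monotonicity of $f$); the paper merely packages your two summands into a single argument by contradiction, deriving $v\notin\tilde W_i$ first and then $v\notin\Reach_{\graphG-W_i}(\tilde b_i)$. The only quibble is the word ``enlarged'' for $\tilde O_{s-1}$ in Case~I, but your actual conclusion there ($U_{s'}^{\mathrm{new}}\subseteq U_{s-1}^{\mathrm{old}}$, so the union does not grow) is the right one.
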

\begin{proof}
We have to show that
$U' = \bigcup_{j = 1}^{s'} U_j'$ where~$s' \in \{s-1,s\}$ is the length of
$\zeta'$.
Note that, by the definition,~$U_j' = W_j' \setminus (O^{j-1})'$ for~$j = 1,
\ldots,
s'$.

In Case~I, Case~II.1~(a) and Case~II.2, this can easily be obtained using the
induction
hypothesis and the definition of~$U'$. Consider Case~II~(b). 
We have~$s' = s$ and~$O_j' = O_j$, for~$j \neq i$, and~$O_i' = \tilde{O}_i
\subseteq O_i$.
As, moreover,~$W_j' = W_j$ for~$j < i$, we have~$U_j' = U_j$, for~$j < i$. 
Furthermore,~$U_i' = W_i' \setminus (O^{i-1})' = \tilde{W}_i \setminus O^{i-1}$
and, as~$(O^{j-1})' \subseteq O^{j-1}$, for~$j = 1, \ldots, s$, we have 
$U_j \subseteq U_j'$, for~$j > i$. Hence,~$U' \subseteq \bigcup_{j = 1}^s U_j'$
and it 
remains to show~$\bigcup_{j=1}^s U'_j \subseteq U'$.

Towards a contradiction, assume that there is some 
$v \in (\bigcup_{j = 1}^s U_j') \setminus U'$.
Then~$v \in U_j'$, for some~$j > i$, and, as~$v \notin U' \supseteq U_j$, we
have 
$v \in W_j \setminus (O^{j-1})'$, but~$v \notin O^{j-1}$. Since~$O_l' = O_l$ 
for~$l \neq i$, we have~$v \in O_i \setminus O_i'
= O_i \setminus \tilde{O}_i$. So, by the definition of~$\tilde{O}_i$, we have 
$v \in \tilde{W}_i$ or~$v \notin \Reach_{\graphG - W_i}(\tilde{b}_i)$. 
As~$v \notin U'$ we have~$v \notin \tilde{W}_i \setminus O^{i-1}$ and, as 
$v \notin O^{j-1} \supseteq O^{i-1}$, it follows 
that~$v \notin \tilde{W}_i$, so~$v \notin \Reach_{\graphG - W_i}(\tilde{b}_i)$. 
Let~$\rho^* = \widehat{\rho} (W^{-1}, W, b)$ be the shortest prefix of~$\rho_j$ 
such that~$v \in W$. Note that such a prefix exists as~$v \in W_j$.
Due to (Cons),~$\tilde{\rho}_i$ and~$\rho^*$ are consistent with~$f$ and
$f$ is monotone, so since~$v \notin \tilde{W}_i$ we have~$\tilde{\rho}_i
\propprefix \rho^*$
and as~$v \notin \Reach_{\graphG - W_i}(\tilde{b}_i)$, we also have~$v \notin
\Reach_{\graphG - W^{-1}}(b)$. 
However, this is a contradiction to the fact that~$f$ is active.

Finally, in Case~(c), we have~$s' = s-1$, as we delete the~$i$th element
of~$\zeta$. Hence, 
we have a shift of indexes. Accounting for this fact, (Cops) can be proven
analogously to
the Case~(b).
\end{proof}

\medskip\noindent \textbf{Move of the Robbers.} Let~$R'$ be the set of vertices
occupied by robbers after their move. If~$R' = R$, we do not update the memory.
This happens in particular after the cop move in Case~I and in Case~II.1~(a) of
the cop move: in those cases, we do not place new cops on the graph,
so the robbers stay idle because they stick to a prudent strategy. 
We will not consider these cases.

Let~$R' \neq R$ and consider the memory state 
\[\zeta =
\big((\rho_1,R_1,O_1),\dots,(\rho_{s-1},R_{s-1},O_{s-1}),\rho_s\big)\] 
before the robber move from~$R$ to~$R'$. Note that~$b_s \in R$.

We will also need the memory state \[\olzeta =
((\olg_1, \olR_1, \olO_1), \ldots,
(\olg_{\ols-1}, \olR_{\ols-1},
\olO_{s-1}), \olg_{\ols})\] and the set~$U^{-1}$ of
vertices occupied by cops before the last cop moves.

\paragraph*{Assignment of the robbers to histories}
We assign every robber~$b\in R'$ to the shortest history~$\rho_i$
with~$b\in O_i$, which yields the new set~$\tilde{R}_i$ replacing~$R_i$:
\begin{itemize}
 \item[ ] If~$b \in O^{s-1}$, then let 
	 ~$i = \min \{ j \in \{1, \ldots, s-1\} \,|\, b \in O_j\}$ 
	  and assign~$b$ to~$\rho_i$.
	  Otherwise assign~$b$ to~$\rho_s$.

\end{itemize}

The crucial point we have to prove about the memory update after a robber move
is that a robber assigned to a certain history is
consistent with it according to~$f$. For the robbers
assigned to histories~$\rho_i$ with~$i < s$ this follows easily from the
fact that~$\tilde{R}_i \subseteq O_i$, similar as in Lemma~\ref{lemma_cons}. For
the robbers in~$\tilde{R}_s$ this is, however, much more involved. We have to
show that each such robber can be reached from~$\olb_{\ols} =
b_s$ in the graph~$\graphG - \olW_{\ols}$ which then shows that
prolonging the longest history by a move from~$b_s$ to some robber from
$\tilde{R}_s$ yields again an~$f$-history. This property is proved in
the following lemma.

\begin{figure}
\begin{center}
\begin{tikzpicture}[thick]
\node[vertex] (olO_j) at (0,0) {};
\node[vertex] (d) at (-2,0) {};
\node[vertex] (d') at (2,0) {};

\draw[path] (d') ..controls (1.5,-0.5) and (0.5,-0.5) .. (olO_j) .. controls
(-0.5,0.5) and (-1.5,0.5) .. (d);

\node (olO_jText) at (0.2,0.35){$\olO_j$};
\node (dText) at (-2.35,0){$d$};
\node (d'Text) at (2.9,0){$d'\in R_l$};
\node (PText) at (1,-0.2){$P$};
\node (P'Text) at (-1,0.17){$P'$};

\draw[brace] (-0.05,-1) -- (-2,-1);
\draw[brace] (2,-1) -- (0.05,-1);
\node (G-olW_j) at (-0.9,-1.5) {$\graphG-\olW_j$};
\node (G-U^-1) at (1.2,-1.5) {$\graphG-U^{-1}$};

\end{tikzpicture}
\end{center}
\caption{Any~$d \in \tilde{R}_s \setminus \Reach_{\graphG - 
\olW_{\ols}}(\olb_{\ols})$
is in~$\olO^{s-1}$.}
\label{figure_lemma_almost_ext1}
\end{figure}

\begin{lemma}\label{lemma_almost_ext}~$\tilde{R}_s \subseteq \Reach_{\graphG -
\olW_{\ols}}(\olb_{\ols})$. \end{lemma}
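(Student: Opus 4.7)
The plan is to prove the contrapositive: if $d\in R'$ satisfies $d\notin\Reach_{\graphG-\olW_{\ols}}(\olb_{\ols})$, then $d\in O^{s-1}$ (in $\zeta$), so $d$ is assigned to some shorter history and $d\notin\tilde R_s$.

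The first step is to upgrade the witness for $d\in R'$. Since the last cop move is monotone by Lemma~\ref{lemma_monotonicity_app}, no vertex of $U^{-1}\setminus U$ is reachable from $R$ in $\graphG-(U^{-1}\cap U)$, so any witnessing path from $R$ to $d$ can be chosen to avoid $U^{-1}$ entirely. Hence there is some $d''\in R$ and a path $P$ from $d''$ to $d$ lying in $\graphG-U^{-1}$, with $d\notin U^{-1}$ as well.

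The second step is the main geometric observation, splitting on $d''$. If $d''=\olb_{\ols}$, then Lemma~\ref{lemma_convenience}(2) rewrites the hypothesis as $d\notin\Reach_{\graphG-\olW^{\ols}}(\olb_{\ols})$. Using (Cops) on $\olzeta$ and the definition $\olU_{\olj}=\olW_{\olj}\setminus\olO^{\olj-1}$, one has $\olW^{\ols}\setminus U^{-1}\subseteq\olO^{\ols-1}$; because $P$ avoids $U^{-1}$ yet $d$ is not reachable in $\graphG-\olW^{\ols}$, $P$ meets $\olO^{\ols-1}$. Taking the last such vertex $w\in\olO_{\oli}$ with $\oli<\ols$, the suffix of $P$ lies in $\graphG-\olW^{\ols}\subseteq\graphG-\olW_{\oli}$, and (Omit) yields $d\in\Reach_{\graphG-\olW_{\oli}}(\olO_{\oli})=\olO_{\oli}$. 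If instead $d''\in\olR_l$ for some $l<\ols$, the proof of Lemma~\ref{lemma_le} transfers verbatim to $\olzeta$: since $\olU^l\subseteq U^{-1}$, the path $P$ shows $d\in\Reach_{\graphG-\olU^l}(\olR_l)\subseteq\olO^l$. In either case $d\in\olO^{\ols-1}$.

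The third step is to push $d\in\olO^{\ols-1}$ through the cop-move update to $d\in O^{s-1}$ by inspecting the cop-move cases. In Case~II.2 the $O_j$ for $j<s$ are untouched and $s=\ols$, so the conclusion is immediate. The remaining cop-move cases modify or re-index some $O_{\oli}$, and the main obstacle is Case~II.1(b), where $\olO_{\oli}$ is replaced by the smaller $\tilde O_i=(\olO_{\oli}\cap\Reach_{\graphG-\olW_{\oli}}(\tilde b_i))\setminus\tilde W_i$. If the only layer containing $d$ is $\olO_{\oli}$, then $d\in\olO_{\oli}\setminus\tilde O_i$ splits as $d\in\tilde W_i$ (impossible, since $\tilde W_i\subseteq U$ and $d\in R'\subseteq V\setminus U$) or $d\notin\Reach_{\graphG-\olW_{\oli}}(\tilde b_i)$; in the latter case a descent along $P$, using again that any crossing of $\olW_{\oli}$ lies in $\olW_{\oli}\setminus U^{-1}\subseteq\olO^{\oli-1}$ and then iterating the crossing argument of step two, locates $d$ in some $\olO_{\oli^*}$ with $\oli^*<\oli$, a layer that the cop move leaves intact. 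The analogous bookkeeping handles the reindexings in Cases~I, II.1(a), and II.1(c); in each of these the deleted or merged layer is one whose $R$-component is empty, so $d$'s membership can only be lost to a smaller layer that survives. The bulk of the work is this third step, where the combinatorial obstacle is matching $\olO$-membership before the cop move with $O$-membership after, rather than any new reachability insight.
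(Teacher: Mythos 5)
Your steps~1 and~2 are sound and essentially follow the paper's route: you extract a path $P$ from some $d''\in R$ to $d$ in $\graphG-U^{-1}$ via the monotonicity of the last cop move, and you show $d\in\olO^{\ols-1}$ by observing that every vertex of $P$ lying in a set $\olW_j$ must lie in $\olO^{j-1}$ (your ``last crossing'' bookkeeping replaces the paper's choice of a minimal layer index $j$ with $P\cap\olO_j\neq\0$, which is a harmless reorganization, although the paper later reuses that minimality).

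The genuine gap is in step~3, in the decisive subcase (Cases~II.1(b) and~(c)) where $\olO_i$ is replaced by $\tilde{O}_i=(\olO_i\cap\Reach_{\graphG-\olW_i}(\tilde{b}_i))\setminus\tilde{W}_i$ and $d$ lies in no smaller layer. There you must show $d\in\Reach_{\graphG-\olW_i}(\tilde{b}_i)$, and your proposed dichotomy --- ``either $d\in\Reach_{\graphG-\olW_i}(\tilde{b}_i)$ or $P$ crosses $\olW_i$ and we descend to a smaller layer'' --- is not exhaustive: $P$ runs from $d''$ to $d$, not from $\tilde{b}_i$ to $d$, so $P$ can avoid $\olW_i$ entirely while $d$ is still not reachable from $\tilde{b}_i$, and then the descent never starts. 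The paper closes exactly this hole by first proving that $d''$ itself is reachable from $\tilde{b}_i$ in $\graphG-\olW_i$: since $i<l$ (where $d''\in\olR_l$, using $\olR_i=\0$ and $d''\in\olR_l\neq\0$), one has $\tilde{\rho}_i\preccurlyeq\olg_l$, and (Lin), (Cons), (Ext) together with the monotonicity of $f$ give the chain $\tilde{b}_i\rightsquigarrow\olb_l\rightsquigarrow d''$ in $\graphG-\olW_i$; only then does ``$P$ avoids $\olW_i$'' imply $d\in\Reach_{\graphG-\olW_i}(\tilde{b}_i)$. This chain is the real content of the lemma (it is where (Ext) is used) and is absent from your argument. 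A secondary slip: you rule out $d\in\tilde{W}_i$ by claiming $\tilde{W}_i\subseteq U$, but only $\tilde{W}_i\setminus O^{i-1}$ is placed; the correct route is that $d\in\tilde{W}_i$ together with $d\notin U$ forces $d\in O^{i-1}=\olO^{i-1}$, contradicting $d\in\tilde{R}_s$.
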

\begin{proof} Let~$d \in \tilde{R}_s$. As the robbers have moved from~$R$ to
$R'$ in their move, there is some~$d' \in R$ such that~$d$ is reachable from
$d'$ in~$\graphG - (U^{-1} \cap U)$. As we have already shown in
Lemma~\ref{lemma_monotonicity_app}, the move from~$U^{-1}$ to~$U$ was monotone,
so~$d$ is reachable from~$d'$ in~$\graphG - U^{-1}$. Let~$P$ be a path from~$d'$
to~$d$ in~$\graphG - U^{-1}$ and assume that~$d \notin \Reach_{\graphG -
\olW_{\ols}}(\olb_{\ols})$. We show that then~$d
\in O^{s-1}$ in contradiction to~$d \in \tilde{R}_s$ as by the definition of
$\tilde{R}_s$,~$\tilde{R}_s \cap O^{s-1} = \0$. By (Robs) for
$\olzeta$,~$R = \bigcup_{i=1}^{\ols}(\olR_i)$, so there
is some (unique)~$l \le \ols$ with~$d' \in \olR_l$.

First we show~$d \in \olO^{\ols-1}$, see
Figure~\ref{figure_lemma_almost_ext1}. If~$d' \neq \olb_{\ols}$,
then according to (Omit) for~$\olzeta$ we have~$d' \in \olR_l
\subseteq \olO_l$ and as~$d' \in P$, we have~$P \cap \olO^l \neq
\0$. In the other case we have~$d' = \olb_{\ols}$ so~$d \in
\Reach_{\graphG - U^{-1}}(\olb_{\ols})$ and as, by (Cops) for
$\olzeta$,~$\olU_{\ols} \subseteq U^{-1}$, we have~$d
\in \Reach_{\graphG - \olU_{\ols}}(\olb_{\ols})$.
However, by our assumption,~$d \notin \Reach_{\graphG -
\olW_{\ols}}(\olb_{\ols})$, so by the definition of
$\olU_{\ols}$,~$P \cap \olO^{\bar s-1} \neq \0$. Hence,
in any case we have~$P \cap \olO_j \neq \0$ for some~$j \le
\min\{\ols-1,l\}$ and we consider the minimal such~$j$. Then by (Cops)
for~$\olzeta$,~$\olU_j \subseteq U^{-1}$, so~$d$ is reachable
from~$\olO_j$ in~$\graphG - \olU_j$ via a path~$P' \subseteq P$,
see Figure~\ref{figure_lemma_almost_ext1}. So if~$d \notin \Reach_{\graphG -
\olW_j}(\olO_j)$, then, by the definition of~$\olU_j$, we have
$P \cap \olO^{j-1} \neq \0$, which contradicts the minimality of~$j$. Hence,
$d \in \Reach_{\graphG - \olW_j}(\olO_j) = \olO_j$ by
(Omit) for~$\olzeta$.

Now we show that~$d$ is also in~$O^{s-1}$. We distinguish the moves that the
cops may have made. Case~I and Case~II.1~(a) of the cop move do not have to be
considered here as discussed above. If~$\olO_j = O_j$, which in
particular holds in Case~II.2, then~$d \in O_j \subseteq O^{s-1}$. Now assume
that~$\olO_j \neq O_j$, so we are in Case~II.1 (b) or (c). Let~$i$ be as
in these cases. Then for all~$m < i$, we have~$\olO_m = O_m$, so~$j \ge
i$. Moreover, for all~$m > i$,~$\olO_m = O_m$ (in Case~II.(b)) or
$\olO_m \subseteq O_{m-1}$ (in Case~II.(c)), so either~$d \in O^{s-1}$
or~$j \le i$. The remaining case is~$j = i$. Note that in this case,~$j < l$ as
either~$l = \ols$ and~$j \le \ols-1$ or~$l < \ols$. In
the latter case, the reason is that~$j \le l$ and~$\olR_j =
\olR_i = \0$ and~$d' \in \olR_l \neq \0$. We show that~$d \in
\tilde{O}_j$, then by the definition of the memory update~$d \in O_j$ and
hence~$d
\in O^{s-1}$.

By definition,~$\tilde{O}_j = (\olO_j \cap \Reach_{\graphG -
\olW_j}(\tilde{b}_j)) \setminus \tilde{W}_j$ where~$\tilde{W}_j = W_j' =
W_j$ and~$\tilde{b}_j = b_j$. We have already shown that~$d \in \olO_j$.
In order to see that~$d \notin W_j$ notice that~$d \in R'$, and~$U_j \subseteq
U$ according to (Cops), so~$d \notin U_j$. Hence, if~$d \in W_j$, we have~$d \in
\olO^{j-1} = O^{j-1}$ by the definition of~$U_j$, contradicting~$d \in
\tilde{R}_s$. Thus,~$d \notin W_j$ and it remains to show that~$d \in
\Reach_{\graphG - \olW_j}(b_j)$. First notice that since~$j < l$, we have
$\tilde{\rho}_j \preccurlyeq \olg_{j+1} \preccurlyeq
\olg_l$. So as, according to (Cons), all these histories are
consistent with~$f$, which is monotone,~$\olb_l$ is reachable from
$b_j$ in~$\graphG - \olW_j$, see Figure~\ref{figure_lemma_almost_ext2}.
Now if~$l < \ols$,~$d' \in \olR_l$, so by (Ext),~$d'$ is
reachable from~$\olb_l$ in~$\graphG - \olW_l^{-1}$. Moreover,
using again that~$\tilde{\rho}_j \preccurlyeq \olg_l$ are both
consistent with~$f$ and that~$f$ is monotone, this yields that~$d'$ is
reachable from~$\olb_l$ in~$\graphG - \olW_j$. If, on the other
hand,~$l = \ols$, then~$d' = \olb_{\ols} =
\olb_l$, so clearly,~$d'$ is reachable from~$\olb_l$ in the
graph~$\graphG - \olW_j$. Therefore,~$d'$ is reachable from~$b_j$ in the
graph~$\graphG - \olW_j$ and as, by (Cops),~$\olU_j \subseteq
U^{-1}$,~$d$ is reachable from~$d'$ in~$\graphG - \olU_j$ via~$P$. Hence,
if~$d$ is not reachable from~$b_j$ in~$\graphG - \olW_j$, then due to the
definition of~$\olU_j$ there is some vertex from~$\olO^{j-1}$ on
the path~$P$ which contradicts the minimality of~$j$. Hence,~$d \in
\Reach_{\graphG - \olW_j}(b_j)$. \end{proof}

\begin{figure}
\begin{center}
\begin{tikzpicture}[thick]

\node[vertex] (d') at (1,0) {};
\node[vertex] (d) at (-3,0){};
\node[vertex] (olb_l) at (1,-2){};
\node[vertex] (b_j) at (-3,-2){};
\node[vertex] (x) at (-2,0){};

\draw[path] (d') ..controls (-0.2,0.3) and (-0.8,-0.5) .. (x) .. controls
(-2.5,0.2) .. (d);
\draw[path] (olb_l) .. controls (0.8,-1.6) and (1.5,-0.8) .. (d');
\draw[path] (b_j) .. controls (-1.5,-1.3) and (-0.8,-2.6) .. (olb_l);

\node (dInolO_jText) at (-3.9,0){$d\in \olO_j$};
\node (d'InolR_lText) at (2,0){$d'\in \olR_l$};
\node (olb_lText) at (1.5,-2){$\olb_l$};
\node (b_jText) at (-3.9,-2){$b_j=\tilde{b}_j$};

\draw[brace] (3,0) -- (3,-2);
\node (G-olW_j) at (4,-1){$\graphG - \olW_j$};
\node (PText) at (-0.5,0.3){$P$};
\draw[brace] (-2,0.7) -- (1,0.7);
\node (G-olW_j_above) at (-0.4,1.2){$\graphG - \olW_j$};
\draw[brace] (1,-2.7) -- (-3,-2.7);
\node (G-olW_j_below) at (-0.9,-3.3){$\graphG - \olW_j$};
\node[rotate=-15] (xText) at (-1.3,-0.5) {$x\in \olO^{j-1},$};
\node[rotate=-15] (contradiction) at (-1,-1.15) {\Lightning: \small~$j$
minimal!};

\end{tikzpicture}
\end{center}
\caption{If~$i = j < l$, the robber~$\tilde{b}_j$ can still 
reach~$d$ in the graph~$\graphG - \olW_j$ via~$d'$.}
\label{figure_lemma_almost_ext2}
\end{figure}
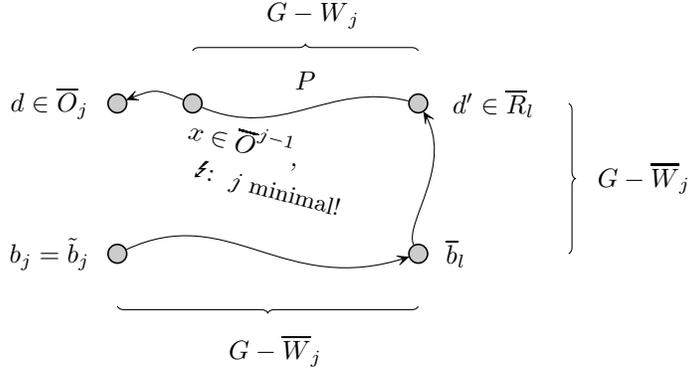

\paragraph*{Memory update}
For the memory update, we distinguish three cases according to the
number of robbers that have been assigned to~$\tilde{R}_s$, and according to 
whether the last position of~$\rho_s$ belongs to the cops or to the robber.
We simplify the case distinction by proving that if we
did not play against the robber in the longest history in the last cop move,
then at most the robber~$b_s = \olb_{\ols}$ can be
consistently associated with~$\rho_s$. 

\begin{lemma}\label{lemma_robbers_position}
If~$\rho_s$ ends with a position of the cop player,
then~$\tilde{R}_s \subseteq \{b_s\}$.
\end{lemma}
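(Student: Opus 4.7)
The plan is first to identify which cop move could have produced a memory state in which $\rho_s$ ends with a cop position. Case~II.2 appends $(W_s,\tilde W_s,b_s)$ to $\rho_s$, which is a robber position, so it is excluded. Cases~I and~II.1(a) place no new cops, so by prudence $R'=R$ and the memory is not updated; these fall outside the scope of a genuine robber move. Hence the last cop move was Case~II.1(b) or~II.1(c); in both, $\rho_s$ is unchanged (modulo the index shift in~(c)), so $\olb_{\ols}=b_s$ and $\olW_{\ols}=W_s$. Moreover Case~II requires $b_s\in R$, so $b_s\in R$ just before the robber move.

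Second, Lemma~\ref{lemma_almost_ext} yields $\tilde R_s\sq\Reach_{\graphG-W_s}(b_s)$. Fix $d\in\tilde R_s$, so $d\in R'$, $d\notin O^{s-1}$, and there is a path $P$ from $b_s$ to $d$ in $\graphG-W_s$. I would first argue that $P$ can be taken inside $\graphG-W^s$, which gives $d\in\Reach_{\graphG-U}(b_s)$ since $U\sq W^s$. If $P$ touches $W^{s-1}\setminus W_s$, let $w$ be the last such vertex of $P$ and let $j<s$ be minimal with $w\in W_j$; the subpath from $w$ to $d$ then lies in $\graphG-W^s\sq\graphG-W_j$. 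If $w\in O^{j-1}$, say $w\in O_l$ for some $l<j$, then by (Omit), $d\in\Reach_{\graphG-W_l}(O_l)=O_l\sq O^{s-1}$, contradicting $d\notin O^{s-1}$. Otherwise $w\in U_j=W_j\setminus O^{j-1}$, and then the existence of a path from $b_s$ to $w\in W_j\setminus W_s$ in $\graphG-W_s$ contradicts the monotonicity of $f$ along $\rho_s$, using $\rho_j\propprefix\rho_s$ together with (Lin) and (Cons).

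With $d\in\Reach_{\graphG-U}(b_s)$ in hand I would split on whether $b_s\in R'$. If $b_s\in R'$, the isolating property at the cop position $(U,R')$ applied to $b_s$ and $d$ forces $d=b_s$. If $b_s\notin R'$, I distinguish $d\in R'\setminus R$ and $d\in R'\cap R$. The first case is ruled out by prudence, since $d\notin\Reach_{\graphG-U}(R)\supseteq\Reach_{\graphG-U}(b_s)$. The second case is ruled out by monotonicity of the cop move (Lemma~\ref{lemma_monotonicity_app}): a path from $b_s$ in $\graphG-U$ is a path in $\graphG-(U^{-1}\cap U)$ and hence cannot cross $U^{-1}\setminus U$, so $d\in\Reach_{\graphG-U}(b_s)$ upgrades to $d\in\Reach_{\graphG-U^{-1}}(b_s)$, contradicting isolating at the preceding cop position $(U^{-1},R)$. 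In all cases $d=b_s$, proving $\tilde R_s\sq\{b_s\}$.

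The most delicate point is the monotonicity argument in the sub-case $w\in U_j$, where (Lin), (Cons) and the monotonicity of $f$ along the play $\rho_s$ must be combined to exclude a path from $b_s$ to $w\in W_j\setminus W_s$ in $\graphG-W_s$; the remaining sub-case $b_s\notin R'$ is similarly subtle because it requires using the cop-move monotonicity of Lemma~\ref{lemma_monotonicity_app} to convert $(U,R')$-isolating into $(U^{-1},R)$-isolating.
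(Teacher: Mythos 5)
Your proposal is correct and follows essentially the same route as the paper: rule out all cop-move cases except II.1(b)/(c) so that $\olW_{\ols}=W_s$ and $\olb_{\ols}=b_s$, apply Lemma~\ref{lemma_almost_ext}, pass from $\Reach_{\graphG-W_s}(b_s)$ to $\Reach_{\graphG-U}(b_s)$, and conclude via prudence. The only differences are that you re-derive Lemma~\ref{lemma_convenience} inline instead of citing it, and that you spell out the final contradiction more carefully than the paper's one-line appeal to prudence (in particular handling a $d\in R'\cap R$ candidate via the isolating property and Lemma~\ref{lemma_monotonicity_app}), which is a harmless refinement.
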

\begin{proof}
Assume that~$\rho_s$ ends with a cop position, \ie~$\rho_s = 
\widehat{\rho}_s (W_s, b_s)$. Then the last cop moves was not as in
Case~II.2. As Case~I and Case~II.1~(a) do not need to be considered as
discussed above, we have~$W_s = \olW_{\ols}$ (and~$b_s =
\olb_{\ols}$). So Lemma~\ref{lemma_almost_ext} yields
$\tilde{R}_s \subseteq \Reach_{\graphG - W_s}(b_s)$. By
Lemma~\ref{lemma_convenience} we have~$\Reach_{\graphG - W_s}(b_s) = \Reach_{\graphG
- W^s}(b_s)$, so~$\tilde{R}_s \subseteq \Reach_{\graphG - W^s}(b_s) \subseteq
\Reach_{\graphG - U}(b_s)$. Since~$b_s \in R$, it follows that~$\tilde{R}_s \not
\subseteq \{b_s\}$ contradicts the assumption that the robbers use a prudent
strategy. 
\end{proof}

There remain two other cases. 

\medskip
\noindent
\textbf{Case 1:}~$\rho_s$ ends with a position of the robber player 
and~$|\tilde{R}_s| \ge 1$.\\

Intuitively, this case means that the last cop move was according to~$\rho_s$
and~$|\tilde{R}_s| \ge 1$. In other words, at least one of the robbers 
from~$R'$ can be consistently associated with~$\rho_s$. 
(As we will see in Lemma~\ref{lemma_robber_invariants}, it follows from~(Cons)
that all robbers from~$\tilde{R}_s$ can be associated with~$\rho_s$.)

We choose one of the robbers~$b \in \tilde{R}_s$ which we pursue further (that
means,~$b$ will be the new robber from the longest history), and add a
new history~$\rho_{s'} = \rho_{s+1}$ extending~$\rho_s$ by the
robber move from~$b_s$ to~$b$. The remaining robbers~$\tilde{R}_s \setminus
\{b\}$
are still associated with~$\rho_s$. The new set~$O'_{s'-1} = O'_s$ 
contains exactly the vertices reachable from~$\tilde{R}_s \setminus \{b\}$ 
in~$\graphG - W_s$. 

Formally, we choose some~$b \in \tilde{R}_s$, define~$\tilde{O}_s =
\Reach_{\graphG
- W_s}(\tilde{R}_s \setminus \{b\})$ and set \[\zeta' = \big( (\rho_1,
\tilde{R}_1, O_1), \dots, (\rho_{s-1}, \tilde{R}_{s-1}, O_{s-1}), (\rho_s,
\tilde{R}_s \setminus \{b\}, \tilde{O}_s),\rho_s \cdot (W_s, b) \big)\,.\]

\noindent
\textbf{Case 2:}~$\rho_s$ ends with a position of the cop player
or~$|\tilde{R}_s|
= 0$.

This case means that either we did not play according to~$\rho_s$,
or we did, but~$b_s$ was captured or returned to an shorter~$\rho_i$. 

We define 
\[\zeta' = \big( (\rho_1, \tilde{R}_1, O_1), \dots, 
(\rho_{s-1}, \tilde{R}_{s-1}, O_{s-1}), \rho_s \big)\,.\]

\paragraph*{Invariants after the robber move}
Now we prove that all invariants still hold after the robber move.

\begin{lemma}\label{lemma_robber_invariants} 
All invariants are preserved by the robber move.
\end{lemma}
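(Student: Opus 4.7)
The plan is to verify each invariant case-by-case for the two memory-update branches of the robber move, exploiting that the cop data $W_i, U_i$ and the sets $O_i$ for $i < s$ are left untouched; in Case~$1$ only the triple $(\rho_s, R_s, O_s)$ is modified and one new history $\rho_s \cdot (W_s, b)$ is appended, while in Case~$2$ only the $R_i$'s are reassigned. The bookkeeping invariants together with (Omit) restricted to indices $i < s$ will reduce to one-line observations: (Robs) from the fact that the assignment rule is a function $R' \to \{1, \dots, s\}$; (Lin) from the strict extension $\rho_s \cdot (W_s, b)$ in Case~$1$; (Cops) from the fact that the $W_i$'s and $O^{i-1}$'s are unchanged, with $\tilde O_s \cap W_s = \emptyset$ built into the definition $\tilde O_s = \Reach_{G - W_s}(\tilde R_s \setminus \{b\})$; and (Omit) for $i < s$ from the assignment rule, which forces $\tilde R_i \subseteq O_i$.

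The substantive content concerns (Cons), (Ext), and (Omit) for the triples that change in Case~$1$, together with the auxiliary condition ``if $b_s \in R'$ then $\last(\rho_s)$ is a cop position''. The first step I would take is the observation that Case~$1$ of the robber move can only arise after a Case~II.$2$ cop move, since Case~II.$2$ is the unique cop branch that leaves $\rho_s$ ending in a robber position; this alignment yields new $W_s^{-1} = \olW_{\ols}$ and new $b_s = \olb_{\ols}$, so Lemma~\ref{lemma_almost_ext} specializes to $\tilde R_s \subseteq \Reach_{G - W_s^{-1}}(b_s)$. With this in hand, (Cons) of the new history $\rho_s \cdot (W_s, b)$ requires $b \notin W_s$ and $b \in \Reach_{G - (W_s^{-1} \cap W_s)}(b_s)$: the reachability is the specialization just stated, and $b \notin W_s$ follows from $b \in R' \subseteq V \setminus U$ (so $b \notin U_s = W_s \setminus O^{s-1}$) together with $b \notin O^{s-1}$ by the assignment rule. (Omit) on $(\tilde R_s \setminus \{b\}, \tilde O_s)$ is then immediate from the definition of $\tilde O_s$ as a reachability set. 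For (Ext), I would take $v \in \tilde O_s$ reached from some $b' \in \tilde R_s \setminus \{b\}$ in $G - W_s$, chain with the $b_s$-to-$b'$ path in $G - W_s^{-1}$ given by Lemma~\ref{lemma_almost_ext}, and use monotonicity of $f$ applied to the $f$-consistent prolongation $\rho_s \cdot (W_s, b')$ (consistent by the same argument as for $b$) to forbid any detour through $W_s^{-1} \setminus W_s$, yielding $v \in \Reach_{G - W_s^{-1}}(b_s)$. The auxiliary condition in Case~$2$ is handled by the same alignment: if $\rho_s$ still ends in a robber position, the previous cop move was Case~II.$2$ and Lemma~\ref{lemma_almost_ext} forces any $b_s \in R'$ into $\tilde R_s$, contradicting the Case~$2$ hypothesis $|\tilde R_s| = 0$; in the remaining sub-case $\rho_s$ ends in a cop position and the condition is preserved trivially.

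The main obstacle will be bridging Lemma~\ref{lemma_almost_ext}, which produces paths avoiding the pre-cop-move cop set $\olW_{\ols}$, with invariants (Cons) and (Ext), which require paths avoiding the post-cop-move quantity $W_s^{-1}$. The key observation that unlocks this is that Case~$1$ can only follow a Case~II.$2$ cop move, forcing the two sets to coincide on the longest-history data. Once that identification is made explicit and combined with monotonicity of $f$ along $\rho_s$ to rule out detours, the remaining checks are direct from the definitions of $\tilde O_s$, from the assignment rule, and from the preservation results already established for the cop move.
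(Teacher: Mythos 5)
Your proof follows essentially the same route as the paper's: the bookkeeping invariants reduce to definitional observations, and the substantive checks --- (Cons), (Omit) and (Ext) for the triples modified in Case~1 --- are settled by specializing Lemma~\ref{lemma_almost_ext} through the identification $W_s^{-1}=\olW_{\ols}$, $b_s=\olb_{\ols}$ (valid because Case~1 can only follow a Case~II.2 cop move) and then invoking monotonicity of $f$ at the position $(W_s^{-1},W_s,b_s)$ to upgrade reachability in $\graphG-(W_s^{-1}\cap W_s)$ to reachability in $\graphG-W_s^{-1}$, exactly as the paper does. The only slip is peripheral and concerns a condition the paper's own proof does not even spell out: in Case~2 it is the assignment rule combined with (Progress) (which gives $b_s\notin O^{s-1}$), not Lemma~\ref{lemma_almost_ext}, that forces any $b_s\in R'$ into $\tilde R_s$.
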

\begin{proof}
(Robs) holds by the definition of the sets~$\tilde{R}_i = R_i'$ and the
construction
of the memory update. (Lin) and (Cops) are obvious.

To prove (Omit), first notice that by (Omit) for~$\zeta$, each set~$O_i$ for~$i
= 1, \ldots, s-1$ is closed under reachability in~$\graphG - W_i$ and as, for~$i =
1, \ldots, s-1$, we have~$O_i' = O_i$ and~$\rho_i' = \rho_i$, the invariant
holds for all~$i = 1, \ldots, s-1 \ge s'-2$. Moreover,~$R_i' = \tilde{R}_i
\subseteq O_i = O_i'$ holds by the definition of the sets~$\tilde{R}_i$ for~$i =
1,
\ldots, s-1$. In particular, in Case~2, there is nothing to show, so consider
Case~1. We have~$s' = s+1$ and~$O_{s'-1}' = O_s' = \tilde{O}_s = \Reach_{\graphG -
W_s}(\tilde{R}_s \setminus \{b\})$, so~$O_s'$ is obviously closed under
reachability in~$\graphG - W_s$ and as~$W_{s'-1}' = W_s$,~$O_s'$ is closed under
reachability in~$\graphG - W_{s'-1}'$. It remains to show that~$R'_{s'-1}
\subseteq O'_{s'-1}$. First, we have~$W_s \cap (\tilde{R}_s \setminus \{b\}) =
\0$. Assume that, to the contrary, there is some~$v \in W_s \cap (\tilde{R}_s
\setminus \{b\})$. Then~$v \notin U$ (as a cop and a robber cannot be on the
same vertex) and according to (Cops) we have~$U = \bigcup_{i = 1}^s U_i$. So~$v
\notin U_s$ and hence, according to the definition of~$U_s$,~$v \in O^{s-1}$,
which contradicts~$v \in \tilde{R}_s$. So, indeed,~$W_s \cap (\tilde{R}_s
\setminus \{b\}) = \0$. Hence, by the definition of~$O_{s'-1}'$ in Case~1, we
have
$R_{s'-1}' = \tilde{R}_s \setminus \{b\} \subseteq \tilde{O}_s = O_{s'-1}'$ and
thus, (Omit) follows.

Notice that, by (Ext) for~$\zeta$,~$O_i \subseteq
\Reach_{\graphG - W_i^{-1}}(b_i)$ for~$i = 1, \ldots, s-1$ and as~$O_i' = O_i$ and
$\rho_i' = \rho_i$ for~$i = 1, \ldots, s-1$, the invariant holds for all~$i = 1,
\ldots, s-1 \ge s' - 2$. In particular, in Case~2, there is nothing to show
and we consider Case~1. First, notice that~$(W_{s'-1}')^{-1} = W_s^{-1} =
\olW_s$ and~$b_{s'-1}' = b_s = \olb_s$, so according to
Lemma~\ref{lemma_almost_ext}, we have~$R_{s'-1}' \subseteq \tilde{R}_s \subseteq
\Reach_{\graphG - W_s^{-1}}(b_s)$. Moreover, by the definition,~$O_{s'-1}' =
\tilde{O}_s = \Reach_{\graphG - W_s}(\tilde{R}_s \setminus \{b\})$. So if~$v \in
\tilde{O}_s$, then~$v$ is reachable from some~$\widehat{b} \in \tilde{R}_s
\setminus \{b\}$ in~$\graphG - W_s$ and as~$\tilde{R}_s \subseteq \Reach_{\graphG -
W_s^{-1}}(b_s)$,~$\widehat{b}$ is reachable from~$b_s$ in~$\graphG - W_s^{-1}$.
Thus,~$v$ is reachable from~$b_s$ in~$\graphG - (W_s^{-1} \cap W_s)$ and, as
$\rho_s = \widehat{\rho}(W_s^{-1}, W_s, b_s)$ is consistent with~$f$ by
(Cons) for~$\zeta$ and~$f$ is monotone, we have~$v \in \Reach_{\graphG -
W_s^{-1}}(b_s)$.

Finally, for (Cons), Case~2 is trivial. For Case~1, as~$\rho_s$ is consistent
with~$f$ by (Cons), it suffices to show that~$b \in \Reach_{\graphG -
W_s^{-1}}(b_s)$. However, we have shown in Lemma~\ref{lemma_almost_ext} that
$\tilde{R}_s \subseteq \Reach_{\graphG - \olW_s}(\olb_s)$ and as
in Case~1 we have~$\olb_s = b_s$ and~$\olW_s = W_s^{-1}$, this
follows from~$b \in \tilde{R}_s$. 
\end{proof}

It remains to show that, first,~$\rf$ uses at most~$r \cdot k$ cops and,
second, playing according to~$\rf$ the cops capture all robbers.

\paragraph*{Using at most~$k\cdot r$ cops}
By~(Cops), the number of cops is bounded by~$|\bigcup_{i=1}^sU_i|$. 
By definition of~$U_i$, we have~$|\bigcup_{i=1}^sU_i|\le |\bigcup_{i=1}^sW_i|$. 
Due to~(Cons), all~$W_i$ have size at most~$k$. Thus we have to show that there 
are at most~$r$ distinct sets~$W_i$.

\begin{lemma}\label{lemma_impinf_number_of_cops}
For any memory state~$\zeta$ consistent with~$\rf$ we have~$|\zeta| \le r+1$
and, 
if~$|\zeta| = r+1$, then~$W_s = W_{s-1}$.
\end{lemma}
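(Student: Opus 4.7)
I would prove the stronger combined statement by induction on the number of moves: after every move of either player, the current memory state $\zeta$ satisfies $|\zeta| \le r+1$, and in case of equality $W_s = W_{s-1}$. The base case is immediate: after the initial robber move we have $\zeta = (\rho_1)$ with $s=1$, so the bound holds and the conditional is vacuous.

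Before running the induction step I would record two structural observations. First, by~(Robs) the sets $R_1,\dots,R_{s-1},R_s$ are disjoint and cover $R$ with $|R_s|\le 1$, so $|R| = \sum_{i<s}|R_i| + |R_s|$. Second, since $\rho_{s-1}$ ends in the robber position $(W_{s-1}^{-1},W_{s-1},b_{s-1})$ and $\rho_s$ properly extends $\rho_{s-1}$, the next position of $\rho_s$ is a cop position of the form $(W_{s-1},b')$, and any further extension of $\rho_s$ begins with a cop move which by Lemma~\ref{lemma_useless_moves} introduces a new cop placement $W' \ne W_{s-1}$. Consequently $W_s = W_{s-1}$ is equivalent to $\rho_s = \rho_{s-1}\cdot(W_{s-1},b)$ for some $b$, equivalently, to $\rho_s$ ending in a cop position.

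For the induction step I would treat each move type. The moves that strictly decrease $s$ (Case~I subcase~1, Case~II.1(a), Case~II.1(c)) are immediate from the IH, as the conditional becomes vacuous. In Case~I subcase~2 the new top is $\rho_{s-1}\cdot(W_{s-1},b)$, so $W_s=W_{s-1}$ holds automatically. In Case~II.1(b) only some $(\rho_i,R_i,O_i)$ with $i<s$ is altered; if $s=r+1$, the IH gives $W_s=W_{s-1}$, which by the second observation means $\rho_s=\rho_{s-1}\cdot(W_{s-1},b_s)$, and this forces $\eta=\emptyset$ for $i=s-1$, putting us in Case~(a) rather than~(b), a contradiction. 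Hence $i<s-1$ and both $\rho_{s-1},\rho_s$ (and therefore $W_{s-1},W_s$) are unchanged. In Case~II.2 the assumptions $R_i\ne\emptyset$ for all $i<s$ and $b_s\in R$ together with the first observation give $|R|\ge s$, ruling out $s=r+1$ since $|R|\le r$. For Robber-Case~1 the precondition that $\rho_s$ ends with a robber position gives, by the second observation, $W_s\ne W_{s-1}$, so the IH forces $s\ne r+1$, hence $s\le r$ and $s'=s+1\le r+1$; moreover the newly appended top $\rho_{s+1}=\rho_s\cdot(W_s,b)$ satisfies new $W_{s'}=W_s$, and the new $(s'-1)$-th play is exactly the old $\rho_s$, so new $W_{s'-1}=W_s$ as well. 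Finally Robber-Case~2 leaves $\rho_s$ and $\rho_{s-1}$ unchanged, so the invariant persists.

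The real obstacle is not any case in isolation but choosing the right ``potential'' coupling $s$ with the structure of $\rho_s$: the extra clause $W_s=W_{s-1}$ must be strong enough to rule out Case~II.1(b) with $i=s-1$ and Case~II.2 at the extremal value $s=r+1$, yet also be re-established automatically after every Case-1 robber move. The equivalence $W_s=W_{s-1}\Leftrightarrow \rho_s$ ends with a cop position, together with Lemma~\ref{lemma_useless_moves}, is precisely what links these two roles and drives the whole argument.
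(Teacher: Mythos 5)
Your overall architecture --- induction over individual cop and robber moves, using (Robs) together with $|R|\le r$ to exclude Case~II.2 at $s=r+1$, and checking that the only size-increasing update (robber Case~1) re-establishes $W_{s'}=W_{s'-1}$ by construction --- is the same as the paper's, and most of the case analysis is correct. The one step I cannot accept as justified is the forward direction of your second structural observation, namely that $W_s=W_{s-1}$ implies $\rho_s=\rho_{s-1}\cdot(W_{s-1},b)$. Lemma~\ref{lemma_useless_moves} gives you that the \emph{first} cop placement of $\rho_s$ beyond $W_{s-1}$, say $W'=f(W_{s-1},\tilde b)$, satisfies $W'\setminus W_{s-1}\neq\emptyset$ and hence $W'\neq W_{s-1}$; it does not rule out that after several further cop moves (each of which may also \emph{remove} cops, since $f$ is merely monotone, not expanding) the last placement $W_s$ of a longer $\rho_s$ coincides with $W_{s-1}$ again. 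Since this equivalence is exactly what you invoke to exclude Case~II.1(b) with $i=s-1$ (and, in your treatment, to handle robber Case~1 via its contrapositive), the proof as written has a hole at its load-bearing joint.

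The repair is small and stays entirely within your framework: strengthen the induction invariant from ``$|\zeta|=r+1$ implies $W_s=W_{s-1}$'' to ``$|\zeta|=r+1$ implies $\rho_s=\rho_{s-1}\cdot(W_{s-1},b_s)$''. This stronger statement is what robber Case~1 actually creates (the new top history is the old top extended by a single cop position, and the old top becomes the $(s'-1)$-st entry), it is preserved by every move available at $s=r+1$ (Case~II.2 is excluded by your counting argument; Case~II.1(b) with $i=s-1$ is excluded because the strengthened hypothesis forces $\eta=\emptyset$ and hence Case~(a); Case~I with $R_{s-1}\neq\emptyset$ rebuilds exactly this shape; the remaining cases either decrease $s$ or leave $\rho_{s-1}$ and $\rho_s$ untouched), and it trivially implies $W_s=W_{s-1}$. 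With this change your observation~2 becomes unnecessary. For what it is worth, the paper's own proof silently relies on the same strengthening when it asserts $\rho'_{s-1}=\rho_{s-1}$ and $\rho'_s=\rho_s$ in Case~II.1(b); you correctly identified that this point needs an argument, but the argument you supply does not close it.
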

\begin{proof}
In the following, we denote by~$\zeta$ the memory state before and 
by~$\zeta'$ the memory state after the cop move (and before the
robber move) and by~$\zeta''$ the memory state after the robber move.

If~$|\zeta| \le r$, then, by inspecting all cases, we can see that 
$|\zeta''| \le r$, or, in Case~1 of the robber move,~$|\zeta''| \le r+1$ and 
$W_s = W_{s-1}$. Consider the case
$|\zeta| = r+1$ and~$W_s = W_{s-1}$. As~$|R| \le r$, it follows from (Robs)
that~$R_i = \0$, for some~$i \in \{1, \ldots, s-1\}$ or~$b_s \notin R$. 

If~$b_s \notin R$, then, after the cop move, we either have~$|\zeta'| = r$
(if~$R_{s-1} = \0)$, or~$|\zeta'| = r+1$ and~$W_{s'}' = W_s' = W_{s-1} =
W_{s-1}' = W_{s'-1}'$ (if~$R_{s-1} \neq \0$). Moreover, in that case the memory
state after the robber moves (which is empty) is the same as after the
cop moves. 

Now assume that~$b_s \in R$ and let~$i \in \{1, \ldots, s-1\}$ be such that~$R_i
= \0$. Then in the cop move, we are in Case~II.1. If we are in Case~II.1~(a) or
in Case~II.1~(c), then we have~$|\zeta'| = r$ after the cop move, so
after the robber move,~$|\zeta''| \le r+1$ holds. If we are in
Case~II.1~(b), then after the cop move, we have~$s' = s$,~$\rho_{s-1}' =
\rho_{s-1}$ and~$\rho_s' = \rho_s$. Hence,~$W_{s-1}' = W_s'$ and, as~$\rho_s$
ends with a cop position (because after the robber move,~$\rho_s$ always ends in
a cop position and Case~II. (b) does not change~$\rho_s$),~$\zeta''$ is
constructed according to Case~3 of the memory update after the robber move.
Hence,~$|\zeta''| = |\zeta'| = r+1$ and~$W_{s''}'' = W_{s'}' = W_{s-1}'
= W_{s''-1}''$.
\end{proof}

\paragraph*{Capturing all robbers} 
To prove that~$\rf$ is winning, we, first, prove that an additional
invariant holds.
\begin{itemize}
\item[] \textbf{(Progress)} For~$i \in \{2, \ldots, s-1\}$,~$R_i \cap O^{i-1} =
\emptyset$ and~$b_s \notin O^{s-1}$ .
\end{itemize}

The invariant expresses that an~$O_i$ can only be a reason not to place 
\emph{any} cops when playing against robbers from smaller histories. Indeed, 
any winning strategy finally places a cop into the robber component, so after 
some omitted placements, some cop is really placed. This is true, in 
particular, for~$i=s$, which guarantees that the set of vertices available to 
the robbers shrinks.

The reason why we have to maintain that property also for the shorter play
prefixes is that when the robber leaves~$b_s$ one of shorter~$\rho_i$ 
becomes the longest one. 

Basically, (Progress) follows from the assumption that the robbers use an
isolating strategy. However, as the sets~$O_i$ are defined with respect to
reachability in~$\graphG - W_i$ and not in~$\graphG - U$, we have to transfer 
that topological incomparability from~$\graphG - U$ to~$\graphG - W_i$.

\begin{lemma}\label{lemma_progress}
(Progress) is preserved by both cop and robber moves.
\end{lemma}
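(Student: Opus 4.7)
The plan is to treat the cop move and the robber move separately, relying on the other invariants already established for the respective target state by Lemmas~\ref{lemma_cop_invariants}, \ref{lemma_cop_invariants_omit}, \ref{lemma_cop_invariants_cops} and~\ref{lemma_robber_invariants}, together with Lemmas~\ref{lemma_convenience} and~\ref{lemma_le} and the hypothesis that the robbers use an isolating prudent strategy.

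First I would handle the robber move. The first half of (Progress), namely $\tilde R_i \cap O^{i-1} = \emptyset$ for $i \in \{2,\dots,s'-1\}$, is essentially built into the assignment rule: by definition, a robber $b \in R'$ is placed into $\tilde R_i$ only if $i$ is the \emph{least} index with $b \in O_i$, which immediately excludes $b$ from $O_j$ for every $j < i$. For the second half, $b_{s'}' \notin (O^{s'-1})'$, Case~2 is immediate because $b_s$ and $O^{s-1}$ are unchanged and the induction hypothesis applies. Case~1 is the only genuinely new content: here the new $b_{s'} = b \in \tilde R_s$ satisfies $b \notin O^{s-1}$ by the min-choice, and since $(O^{s'-1})' = O^{s-1} \cup \tilde O_s$ it remains to prove $b \notin \tilde O_s = \mathrm{Reach}_{G - W_s}(\tilde R_s \setminus \{b\})$.

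This last step is where isolatedness enters, and it will be the main obstacle. The idea is to transfer unreachability from $G-U$ (where isolatedness gives $b \notin \mathrm{Reach}_{G-U}(b')$ for every $b' \in \tilde R_s \setminus \{b\}$) to $G - W_s$. Suppose for contradiction that a path $P$ from some $b'$ to $b$ avoids $W_s$. Since $U_s = W_s \setminus O^{s-1}$ and $U = \bigcup_j U_j$ by (Cops), every vertex of $W_s$ not in $U$ must lie in $O^{s-1}$; similarly, the vertices of $U \setminus W_s$ belong to some $U_j$ with $j < s$, hence to some $W_j$ with $j<s$. I would then analyze the minimal prefix of $P$ that enters $W^{s-1}$ (if any) and use Lemma~\ref{lemma_le} together with (Omit) to conclude that the first such vertex lies in $O^{s-1}$, so that the portion of $P$ from $b'$ to this vertex lies in $G - U$. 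Combining this with the fact that $\tilde O_s$ is closed under reachability in $G - W^s$ (via Lemma~\ref{lemma_convenience} applied to the newly extended longest history $\rho_s \cdot (W_s,b)$, which is consistent with $f$ by (Cons) as established in Lemma~\ref{lemma_robber_invariants}) produces a path from $b'$ to $b$ in $G - U$, contradicting isolatedness.

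Finally I would handle the cop move, which is largely bookkeeping since the cop move leaves $R$ and the actual robber positions untouched. In Case~I one has to check that after deleting $\rho_s$ and either collapsing $R_{s-1}$ or picking a new target $b \in R_{s-1}$ the resulting $b_{s'}$ still avoids the new $O^{s'-1}$; this follows because the induction hypothesis gives $R_{s-1} \cap O^{s-2} = \emptyset$ and the freshly introduced $\tilde O_{s-1}$ is contained in $O_{s-1}$ by (Omit). In Case~II.1, subcase (a) preserves everything, and in (b) and (c) the new $\tilde O_i \subseteq O_i$ is smaller than before and the memory either shifts indices or not, so $R_j \cap O^{j-1} = \emptyset$ for $j \neq i$ and the relation $\tilde R_i = \emptyset$ in this case dispose of $i$ itself. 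In Case~II.2 only $\rho_s$ is extended while $b_s$ and $O^{s-1}$ are untouched, so the statement is immediate from the induction hypothesis. In each case monotonicity of $f$ and the already-proved invariants guarantee that no new intersection with an $O_j$ can arise.
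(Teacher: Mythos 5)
Your treatment of the robber move follows the paper's argument in its essentials: the first half of (Progress) is indeed built into the min-choice assignment rule, and the substantive point $b \notin \tilde{O}_s$ comes from isolation plus a transfer of reachability from $G - W_s$ to $G - U$. The paper performs this transfer in one step via Lemma~\ref{lemma_convenience} applied to the post-move memory state, giving $\Reach_{G - W_s}(\tilde{R}_s \setminus \{b\}) = \Reach_{G - W^s}(\tilde{R}_s \setminus \{b\}) \subseteq \Reach_{G - U^s}(\tilde{R}_s \setminus \{b\}) = \Reach_{G - U}(\tilde{R}_s \setminus \{b\})$. Your detour through ``the minimal prefix of $P$ that enters $W^{s-1}$'' and Lemma~\ref{lemma_le} is both unnecessary and not justified as stated: the first vertex of $P$ lying in $W^{s-1}$ need not avoid $U$, so you cannot conclude that it lies in $O^{s-1}$. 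The actual content of Lemma~\ref{lemma_convenience} is that such a path cannot meet $W^{s-1}$ at all, because monotonicity of $f$ forbids recontamination of earlier cop positions; that is what makes the path avoid $U$ and yields the contradiction with isolation.

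The genuine gap is in the cop move, which you dismiss as ``largely bookkeeping.'' Case~I with $R_{s-1} \neq \0$ is the second place where isolation is essential, exactly symmetric to Case~1 of the robber move: one must show that the newly chosen target $b \in R_{s-1}$ satisfies $b \notin \tilde{O}_{s-1} = \Reach_{G - W_{s-1}}(R_{s-1} \setminus \{b\})$. Your justification --- that $\tilde{O}_{s-1} \subseteq O_{s-1}$ by (Omit) --- is true but useless for this purpose, because $b$ itself lies in $O_{s-1}$ (by (Omit) we have $R_{s-1} \subseteq O_{s-1}$), so containment in $O_{s-1}$ cannot separate $b$ from $\tilde{O}_{s-1}$. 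What is needed is again the isolation argument: $b \notin \Reach_{G - U}(R_{s-1} \setminus \{b\})$ because the robbers play an isolating strategy, transferred to $G - W_{s-1}$ via Lemma~\ref{lemma_convenience} and Corollary~\ref{corollary_lemma_le}. As written, your proof of the cop-move case does not establish (Progress).
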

\begin{proof}
First, consider the situation after the cop move.
In Case~I, we have~$R_j' = R_j$ and~$O_j = O_j'$ for~$j = 1, \ldots, s-2$ 
and hence,~$R_j' \cap (O^{j-1})' = \0$ by (Progress) for~$\zeta$.
Moreover, if~$R_{s-1} = \0$, then~$s' = s-1$, so~$s'-1 = s-2$ and it remains
to show that~$b_{s'} \notin O^{s'-1}$. However, as~$b_{s'} = b_s$ and~$O^{s'-1}
= O^{s-2}$,
this follows immediately from (Progress) for~$\zeta$.

If~$R_{s-1} \neq \0$, then~$s' = s$ and~$R_{s-1}' \subseteq R_s$ so~$R_{s-1}'
\cap (O^{s-2})' = \0$
and~$b_s' = b \notin (O^{s-2})'$ follows again immediately from~$(O^{s-2})' =
O^{s-2}$ and 
(Progress) for~$\zeta$. So it remains to show that 
$b \notin O_{s-1}' = \tilde{O}_{s-1} = \Reach_{\graphG - W_{s-1}}(R_{s-1}\setminus
\{b\})$.
As the robbers play according to an isolating strategy, 
$b \notin \Reach_{\graphG - U}(R_{s-1} \setminus \{b\})$. Assume that
$b \in \Reach_{\graphG - W_{s-1}}(R_{s-1} \setminus \{b\})$. 
Then due to Lemma~\ref{lemma_convenience}, 
$b \in \Reach_{\graphG - W^{s-1}}(R_{s-1} \setminus \{b\})
\subseteq \Reach_{\graphG - U^{s-1}}(R_{s-1} \setminus \{b\})$. Moreover, by
Corollary~\ref{corollary_lemma_le},
$\Reach_{\graphG - U^{s-1}}(R_{s-1} \setminus \{b\})
= \Reach_{\graphG - U}(R_{s-1} \setminus \{b\})$, which is a contradiction.
In Case~II, (Progress) for~$\zeta'$ follows easily from (Progress) for~$\zeta$
using the definition of the memory update.

Now consider the situation after the robber move.
In Case~2, (Progress) holds by the construction of the sets~$\tilde{R}_i = R_i'$
for
$i = 1, \ldots, s$. Moreover, in Case~1,~$R_i' \cap (O^{i-1})' = \0$ holds
for~$i = 1, \ldots, s'-1$ by the construction of the sets~$R_i'$ as well and
$b \notin (O^{s'-2})' = O^{s-1}$ holds by the construction of~$\tilde{R}_s$. 

It remains to show that
$b \notin O_{s'-1}' = \tilde{O}_s = \Reach_{\graphG - W_s}(\tilde{R}_s \setminus
\{b\})$. 
As the robber plays according to an isolating strategy, we have
$b \notin \Reach_{\graphG - U}(\tilde{R}_s \setminus \{b\})$. Assume that
$b \in \Reach_{\graphG - W_s}(\tilde{R}_s \setminus \{b\})$. Then as~$W_s =
W_{s'-1}'$ and 
$\tilde{R}_s \setminus \{b\} = R_{s'-1}'$, Lemma~\ref{lemma_convenience}
for the memory state~$\zeta'$ after the robber move
yields~$b \in \Reach_{\graphG - (W^{s'-1})'}(R_{s'-1}') = \Reach_{\graphG -
W^s}(\tilde{R}_s \setminus \{b\})
\subseteq \Reach_{\graphG - U^s}(\tilde{R}_s \setminus \{b\})$. Moreover,~$U^s =
U$, so 
$b \in \Reach_{\graphG - U}(\tilde{R}_s \setminus \{b\})$, which is a
contradiction.
\end{proof}

We conclude the proof of Theorem~\ref{theorem_main_tech} with the 
following lemma, whose proof uses (Progress) to show that all robbers
are finally captured in any play consistent with~$\rf$.

\begin{lemma}\label{lemma_impimf_winning}
$\rf$ is winning.
\end{lemma}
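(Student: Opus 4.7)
The plan is to argue by contradiction: suppose some isolating prudent robber strategy $g$ produces an infinite play $\pi$ against $\rf$. From $\pi$ I will extract a contradiction either with the fact that $f$ is a winning strategy for the one-robber game, or with the boundedness of the memory stated in Lemma \ref{lemma_impinf_number_of_cops}. The three ingredients are: (i) by (Cons), every history $\rho_i$ appearing in any memory state along $\pi$ is consistent with $f$, hence there are only finitely many possible values for any $\rho_i$ (the set of $f$-consistent histories reachable from $\perp$ forms a finite tree because $f$ is winning); (ii) by Lemma \ref{lemma_impinf_number_of_cops} the memory has length at most $r+1$ throughout $\pi$; and (iii) by (Progress) together with Lemma \ref{lemma_useless_moves}, whenever the cops execute Case~II.2, the set $f(W_s,b_s)\setminus W_s$ contains a vertex reachable from $b_s$ in $G-W_s$, which is not in $O^{s-1}$, and is therefore actually placed on the graph; in particular $\rho_s$ is properly extended.

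The main step is to exhibit a well-founded ranking on memory states that strictly decreases at every step of $\pi$. Let $\mathcal{F}$ denote the (finite) set of $f$-consistent histories reachable from the initial position, well-ordered by extending the prefix order to a strict partial order where longer histories are smaller. Define the rank of a memory state $\zeta=((\rho_1,R_1,O_1),\dots,\rho_s)$ to be the multiset $\{\!\{\rho_1,\dots,\rho_s\}\!\}$ under the Dershowitz-Manna multiset order induced by the above order on $\mathcal{F}$; this is well-founded since $\mathcal{F}$ is finite. I will then verify that each update of $\zeta$ -- both in the cop move and in the robber move -- strictly decreases this multiset rank.

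The routine cases are Case~II.2 (where $\rho_s$ is properly extended, replacing a maximal element by a smaller one, thanks to (Progress)), Case~II.1(b) (where an inner $\rho_i$ is replaced by a strictly longer $f$-consistent history $\tilde{\rho}_i$), Case~II.1(a) and Case~II.1(c) (where the multiset strictly shrinks by removing an element), and the two subcases of Case~I when $R_{s-1}=\emptyset$ (where $\rho_s$ is removed and $\rho_{s-1}$ is replaced by a strictly longer history). For robber moves, Case~2 does nothing to the multiset of histories, and Case~1 adds a new history $\rho_s\cdot(W_s,b)$ which is strictly smaller than $\rho_s$ in the extended order, hence the multiset strictly decreases.

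The genuinely delicate point, and the main obstacle, is the subcase of Case~I with $R_{s-1}\neq\emptyset$: here $\rho_s$ is replaced by $\rho_{s-1}\cdot(W_{s-1},b)$, which may be shorter than the old $\rho_s$. The resolution is that in the multiset order this is still a strict decrease: the old $\rho_s$ was a proper extension of $\rho_{s-1}$ by (Lin), hence strictly smaller than $\rho_{s-1}$ in $\mathcal{F}$, so removing the old $\rho_s$ and inserting a (longer than $\rho_{s-1}$ but perhaps shorter than old $\rho_s$) history $\rho_{s-1}\cdot(W_{s-1},b)$ replaces a minimal element of the multiset by a strictly larger one -- which goes the wrong way. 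To fix this I will refine the rank: pair the multiset with the set of robber vertices $R$, and exploit that in this case $b_s\notin R$ while before the move $b_s\in R$, so $R$ has strictly shrunk by a vertex that cannot be recreated by the robbers (because they play isolatingly and $b_s$ lies in a distinct component from the surviving robbers in $G-U$). Ordering $(\mathrm{rank}(\zeta),|R|)$ lexicographically with $|R|$ weighted appropriately, the rank strictly decreases in every remaining case. Since a well-founded order cannot admit an infinite decreasing chain, $\pi$ must be finite, contradicting the assumption and proving that $\rf$ captures all robbers.
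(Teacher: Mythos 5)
Your overall strategy --- a single well-founded ranking on memory states that strictly decreases at every step --- is genuinely different from the paper's argument, and as executed it has two gaps. First, in Case~1 of the robber move the memory \emph{gains} an element: the new state keeps $\rho_s$ (as the $s$-th entry, now carrying $\tilde{R}_s\setminus\{b\}$) \emph{and} adds $\rho_s\cdot(W_s,b)$. In the Dershowitz--Manna order, adding an element to a multiset strictly \emph{increases} it no matter how small the added element is; a decrease requires \emph{replacing} an element by strictly smaller ones. So your claim that this step decreases the rank is false. (This particular defect is repairable: robber-move Case~1 can only follow cop-move Case~II.2, and the combined round replaces the old $\rho_s$ by two proper extensions of it, which is a genuine multiset decrease --- but you would have to argue round-by-round rather than move-by-move.) Second, and not repairable inside your framework, is cop-move Case~I with $R_{s-1}\neq\emptyset$: the new longest history $\rho_{s-1}\cdot(W_{s-1},b)$ is in general prefix-incomparable with (or a proper prefix of) the discarded $\rho_s$, so the multiset does not decrease; and your lexicographic patch with $|R|$ fails because $|R|$ is not monotone along the play --- robbers may split as well as merge, so $|R|$ can grow back, and it is in any case unchanged during the cop move in which your primary component goes up.

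The deeper point is that no purely combinatorial ranking on memory states can succeed without first exploiting monotonicity and prudence. The paper's proof assumes an infinite play, uses monotonicity of $\rf$ to conclude that the region reachable by the pursued robber eventually stabilizes, and prudence to conclude that the robbers then stop moving; only after this stabilization do the counting arguments ($s$ or $|R_{s-1}|$ decreases in Cases~I and II.1) go through. It then needs a separate argument when Case~II.2 recurs forever: your ingredient~(iii) asserts that the prescribed new cop is ``not in $O^{s-1}$, and is therefore actually placed,'' but this is unjustified --- every vertex of $f(W_s,b_s)\setminus W_s$ may lie in $O^{s-1}$ and be omitted at a given step, even though $\rho_s$ is formally extended in memory. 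The paper instead shows that if no cop is ever again placed in the robber's reachable region, then all placements prescribed by $f$ land in $O^{s-1}$ forever, whence by (Progress) ($b_s\notin O^{s-1}$) the ever-growing $f$-consistent history $\rho_s$ never captures its robber, contradicting that $f$ is winning. This step is absent from your plan and cannot be replaced by the ranking.
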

\begin{proof}
First observe that every cop that is placed on the graph according to the
longest history restricts the set of vertices reachable for the robber
on~$b_s$ because~$f$ is active.

Assume that there is a play
\[\pi = \perpcdot (U_0,R_0)\cdot (U_0,U_1,R_0) \cdot (U_1,R_1) \ldots\]
consistent with~$\rf$ and a position~$(U_j,R_j)$ of~$\pi$ after which 
the set of vertices reachable for the robber in the longest history~$\rho_s$
remains constant. (Due to the 
monotonicity of~$\rf$, it never becomes smaller.) As the robbers play
according to a prudent strategy,~$R_i$ also remains constant.
Let~$b(i)$ be vertex~$b_s$ stored in the memory after move number~$i$.
Then~$\Reach_{\graphG - U_l}(b(j)) = \Reach_{\graphG - U_{l+1}}(b(l+1))$, 
for~$l\ge j$. As the robber strategy is prudent, it follows that~$b(j) = b(l)$,
\ie the robber does not change his vertex after move number~$j$.

It suffices to prove that Case~II.2 appears infinitely often.
If it does, we place new cops on~$f(W_s,b_s)\setminus O^{s-1}$
again and again.
As~$\Reach_{\graphG - U_l}(b(l)) = \Reach_{\graphG - U_{l+1}}(b(l+1))$, for
all~$l\ge j$,
it follows that~$\rf$ never places cops into~$\Reach_{\graphG - U(l)}(b(l))$ 
and thus~$U_l = U_{l+1}$, by the definition of~$U_l$.
Since~$\rf$ places cops according to~$f$, 
it prescribes to place cops only in~$O^{s-1}$.
Therefore,~$b_s$ is never occupied by any cop according to~$f$  
due to the invariant (Progress). Hence,~$f$ is not winning,
which contradicts our assumption.

Assume that after some position, Case~II.2 does not appear.
Then Case~I or Case~II.1 appear infinitely often. In both cases,~$s$ 
does not increase.

In Case I, if~$R_{s-1} = \0$, then
the number~$s$ of histories in~$\zeta$ decreases. 
% Thus Case~I 
% with~$R_{s-1}=\0$ can appear only finitely often in a row and
% then we have still Case~I (as~$g_s$ was deleted and the previous~$s-1$
% is now the greatest index), but~$R_{s-1} \neq \0$.
If~$R_{s-1} \neq \0$, then~$|R_{s-1}|$ decreases.

In Case II.1, histories that are shorter than~$\rho_s$ 
are extended or deleted (which decreases~$s$), if they reach the next play
prefix. 
The length of the longest history in~$\zeta$ is an upper bound for the
growth 
of their lengths. As the robbers do not change their placement,~$|R_{s-1}|$ 
will never increase again.
Together, either~$s$ or~$|R_{s-1}|$ decrease, so Cases~I and~II.1 can appear
only finitely many times. It follows that we have Case~II.2 infinitely many
times,
but that contradicts our assumption.
\end{proof}

This finishes the proof of Theorem~\ref{theorem_main_tech}.

\subsection{Robbers hierarchy, imperfect information and \dpathw}\label{subsec_hierarchy}

In this section we extend the results from~\cite{RicherbyThi09} about the 
dependence of cop number on the number of robbers to our setting.
For the same graph~$\graphG$, increasing the number of robbers induces a 
hierarchy
of cop numbers that are needed to capture the robbers. 
It is clear that less robbers do not demand more cops.
Furthermore, one robber corresponds to the \dagw game and~$|\graphG|$ robbers
to the \dpathw game, hence we have the following scheme:
\[\dw(\graphG) = \dw_1(\graphG) \le \dw_2(\graphG) \le \ldots \le
\dw_{|\graphG|}(\graphG) = \dpw(\graphG)\]
where~$n$ is the number of vertices of~$\graphG$.
In general, \ie on some graphs, this hierarchy does not collapse, because
\pathw is not bounded in \treew. We give explicit lower bounds for the stages.
In a sense, \dagw can be approximated by a refinement of \dpathw,
but there are infinitely many stages of approximation.
This result is analogous to similar results in~\cite{RicherbyThi09}
and in~\cite{FominFraNis09}.

\begin{theorem}\label{thm_impinf_strict_hierarchy}
For every~$k > 0$, there is a class~$\classG^k$ of graphs such that, for
all~$\graphG\in \classG^k$,
we have~$\dw_1(\graphG) = 2\cdot k$ and, for all~$r>0$, there
exists~$\graphG^k_r\in\classG^k$ with 
\begin{enumerate}[(1)]
  \item~$\dpw(\graphG^k_r) = k\cdot(r+1)$, and
  \item\label{cond_dw_i_unbounded} for
all~$i\in\{1,\dots,r\}$,~$\dw_i(\graphG^k_r) \ge \frac{i\cdot (k-1)}{2}$.
\end{enumerate}
\end{theorem}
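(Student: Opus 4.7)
The plan is to construct $\graphG^k_r$ as a cyclic arrangement of $r+1$ copies of a carefully chosen building block $H_k$, following the pattern used in~\cite{RicherbyThi09,FominFraNis09} but adapted so that the robbers' jumping ability produces the linear lower bound. First I would fix $H_k$ to be a graph (essentially a blown-up directed clique on $2k$ vertices, or a bidirectional grid gadget) that simultaneously satisfies $\dw(H_k) = 2k$ and admits an internal ``left'' and ``right'' boundary of $k$ vertices each, so that attaching boundaries across copies is well-behaved. Then $\graphG^k_r$ consists of $r+1$ such copies $H_k^{(0)},\dots,H_k^{(r)}$ arranged in a directed cycle, with the right boundary of $H_k^{(j)}$ feeding into the left boundary of $H_k^{(j+1 \bmod r+1)}$. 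I would take $\classG^k = \{\graphG^k_r \mid r \ge 1\}$ and verify $\dw_1(\graphG^k_r) = 2k$ by observing that each cycle can be broken with $2k$ cops (occupy one boundary) and then every $H_k^{(j)}$ is handled in turn by the same winning strategy as for $H_k$.

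For the upper bound $\dpw(\graphG^k_r) \le k(r+1)$, the invisible-robber cop team blocks all $r+1$ right boundaries simultaneously (that is $k(r+1)$ cops), disconnects the cycle, and sweeps each copy one by one; matching the lower bound $\dpw(\graphG^k_r) \ge k(r+1)$ amounts to showing that an invisible robber with the freedom to be simultaneously ``present'' in every copy forces one $k$-cop barrier per copy (routine, via a reachability argument analogous to the standard $\dpathw$ bound for cycles of cliques).

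For statement~(\ref{cond_dw_i_unbounded}), which is the main obstacle, I would exhibit, for each $i \le r$, an (isolating, prudent) strategy for $i$ robbers defeating fewer than $\tfrac{i(k-1)}{2}$ cops. The idea is that the robbers distribute themselves one per copy $H_k^{(j)}$, and as the cops invade a copy, the robber in that copy either migrates along the cycle or \emph{jumps} onto a neighbour-robber that still has room to run. Using the $2k$-cop lower bound for $\dw(H_k)$, each copy ``costs'' roughly $k-1$ cops before the robber there is forced to move, and coordinating via jumping forces that cost to accumulate across copies rather than be reused: the factor $\tfrac{1}{2}$ comes from amortising, over two consecutive copies, the cops that block the passage between adjacent $H_k^{(j)}$ and $H_k^{(j+1)}$. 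The formal argument is an inductive escape argument on the number $i$ of robbers, matching (and adapting) the technique of~\cite{RicherbyThi09}.

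The hardest part will be the amortisation step in the lower bound: one has to argue that the cops cannot use a ``shared boundary'' to simultaneously pressurise two robbers, and this is exactly where prudence and isolation of the robber strategy (Lemma~\ref{lemma_isolating_prudent}) enter implicitly to prevent trivial collapses. Once this counting is set up, plugging in $i = 1$ recovers $\dw(\graphG^k_r) \ge 2k$ (actually $\ge \tfrac{k-1}{2}$ from~(\ref{cond_dw_i_unbounded}), but the tight value $2k$ is obtained by directly inheriting the lower bound from the block $H_k$ that sits inside as an induced subgraph), confirming that $\dw_1(\graphG^k_r) = 2k$ holds uniformly across $\classG^k$, while $\dw_i$ grows linearly in $i$, witnessing the strict hierarchy between \dagw and \dpathw.
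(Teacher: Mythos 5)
There is a genuine gap: the cyclic construction cannot deliver either lower bound. A directed cycle of $r+1$ blocks has \emph{small} directed path\mbox{}-width, not path\mbox{}-width growing with $r$: the cop team places $k$ cops permanently on one block (or one boundary), which turns the remaining blocks into a linearly ordered chain, and then sweeps it with $O(k)$ further cops; recontamination is impossible because every return path passes through the blocked copy. So $\dpw(\graphG^k_r)$ stays in $O(k)$ independently of $r$, and the claimed ``standard $\dpathw$ bound for cycles of cliques'' does not exist (compare: an undirected cycle has path\mbox{}-width $2$ however long it is). The same breaking-the-cycle-and-sweeping strategy defeats your lower bound (\ref{cond_dw_i_unbounded}) as well: once one copy is permanently blocked, every robber is confined to a suffix of the chain and can only be pushed forward, so jumping to a neighbouring robber never reopens territory behind the sweep; a constant number of cop teams suffices for any number of robbers. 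Your amortisation step therefore has nothing to amortise.

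The missing ingredient is \emph{branching}, and that is exactly what the paper uses: $\graphG^k_r$ is the lexicographic product of the full undirected tree of height $r+1$ and branching degree $\lceil r/2\rceil+2$ with the $k$-clique $K_k$. The tree's depth forces $\dpw(\graphT_r)=r+1$ (decontaminating one subtree of the root requires all cops, after which an unguarded path through the root recontaminates an already cleared sibling, contradicting monotonicity), and multiplying by $K_k$ scales this to $k(r+1)$ via the observation that cops may as well occupy whole cliques. For (\ref{cond_dw_i_unbounded}), the $i$ robbers \emph{tie} cops: whenever a cop lands on a tree vertex $v$, two robbers install themselves in two distinct cop-free subtrees below $v$ (possible because there are at least two robbers per cop and the branching degree is large), and a cop is only released when two further cops become tied, so at least one additional cop is consumed per tree level; this is where the factor $\tfrac{1}{2}$ really comes from, not from sharing boundary guards between adjacent copies. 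Note also that the paper's graphs are undirected, which is the point of the theorem: the linear growth is caused by the robbers' ability to jump, not by edge orientation. If you want to salvage your plan, replace the cycle of copies by a deep, high-degree tree of copies; the rest of your outline (blow up by a clique, tie cops level by level) then essentially reduces to the paper's argument.
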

\begin{proof}

Class~$\classG^k$ consists of graphs~$\graphG^k_r$, for each~$r>0$.
Every~$\graphG^k_r$ 
is the lexicographic product~$\graphT_r\oplus \graphK_k$ of the full undirected 
tree~$\graphT_r$ with branching degree~$\lceil \frac{r}{2}\rceil+2$ and of 
height~$r+1$, with the~$k$-clique~$\graphK_k$. 
In other words,~$\graphG^k_r$ is~$\graphT_r$ where every vertex~$v$ is replaced by
a~$k$-clique~$K(v)$ and if~$(v,w)$ is an edge of~$\graphT_r$, then all
pairs~$(v',w')$
with~$v'\in K(v)$ and~$w'\in K(w)$ are edges of~$\graphG^k_r$.

It is clear that~$\dw_1(\graphG^k_r)$ is~$2\cdot k$:
the cops play as on~$\graphT_r$ occupying~$K(v)$  instead of single tree
vertex~$v$
and leaving~$K(v)$ if~$v$ is left.\footnote{The idea to use the lexicographic
product and 
of the proof is due to~\cite{HunterPhD}.}
We have to show that 
$\dpw(\graphG^k_r) = k(r+1)$ and that~$\dw_i(\graphG^k_r) \ge
\frac{i\cdot(k-1)}{2}$.

We start with \dpathw. A similar proof can be found, for example,
in~\cite{Bodlaender98}.
Note that the branching degree of all~$T_r$ is at least~$3$.
Let us see that the statement follows from~$\dpw(\graphT_r)=r+1$.
First, as for \dagw above, we have~$\dpw(\graphG^k_r) \le k\cdot(r+1)$. 
The statement of the other direction follows from the fact that 
it makes no sense for the cops to occupy only a part of a~$k$-clique.
We formulate that statement as a small lemma.

\begin{lemma}
Every winning strategy~$f$ for~$k(r+1)$ cops can be turned into 
a winning strategy~$f'$ for~$k(r+1)$ cops that always prescribes to occupy 
whole~$k$-cliques.
\end{lemma}
\begin{proof}
Strategy~$f'$ is as follows.
If~$f$ prescribes to occupy only a part of a clique, then~$f'$
does not place any cops in the clique, otherwise~$f$ and~$f'$ are the
same.
Assume that~$f'$ is not winning. Then there is a cop
move~$(U,R)\to(U,U',R)$
such that a path~$P$ from~$R$ to~$U\setminus U'$ exists in~$\graphG - (U \cap U')$.
Consider a path~$P'$ that is as~$P$, but for vertices~$v$ occupied by cops,
it contains a vertex~$w\in K(v)$ that is cop-\mbox{}free. It is clear that such
a vertex~$w$
always exists. Then~$P'$ is an evidence that~$f$ is not monotone, which is
a contradiction
to our assumption.
\end{proof}

We prove~$\dpw(\graphT_r)=r+1$ by induction on~$r$. 
The case~$r=1$ is trivial. If~$r+1$ cops win on~$\graphT_r$,
then~$r+2$ cops win on~$T_{r+1}$ by placing a cop on the root and applying the
strategy for~$r+1$ cops
from the induction hypothesis for every subtree.

The other direction (that~$\dpw(\graphT_r)\ge r+1$) is also proven by induction
on~$r$. The induction base
is clear. Assume that~$\dpw(\graphT_r) \ge r+1$. In~$\graphT_{r+1}$, let the
direct successors of the root be
$v_1,\dots,v_m$ (recall that~$m\ge 3$). All subtrees~$\graphT^i$ rooted
at~$v_i$, for~$i\in \{1,\dots,m\}$,
must be decontaminated (\ie the robber must be expelled from~$\graphT^i$) 
and~$r+1$ cops are needed for that. Assume \wLOG
that~$\graphT^1$ is the first and~$\graphT^2$ is the second decontaminated 
subtree. In some position all $r+1$ cops are in~$T^2$.
However, there is a path from~$T^m$ via the root of the whole tree to~$\graphT^1$.
Thus~$T^1$ 
becomes recontaminated, which contradicts the monotonicity of 
\dpathw~\cite{Hunter06}.

It remains to show that~$k\cdot i$ robbers win against~$\frac{i\cdot(k-1)}{2}$
cops on~$\graphG^k_r$.
We show only that~$i$ robbers win against~$\lfloor \frac{i}{2}\rfloor$ cops
on~$\graphT_r$, the result 
with factor~$k$ follows as above. As in the proof of
Theorem~\ref{thm_OffhandedCops}, we can 
assume that the cops play top-down because the tree has a high branching degree.

The winning strategy for robbers is to tie every cop. A cop is
\emph{tied}
if there is a cop-\mbox{}free path
from a robber to the cop. When a cop is placed on a vertex~$v$, the robbers
occupy two subtrees of~$v$.
As there are at least two robbers for each cop, 
this is always possible. A cop is untied only if two other cops in both 
subtrees chosen by the robbers become tied, so at every tree level at least one 
more cop becomes tied. At the latest when a cop reaches level~$\lfloor
\frac{i\cdot(k-1)}{2}\rfloor$, all cops are tied.
\end{proof}

\section{Discussion and future work}
We analyzed the connection between imperfect information in parity
games and structural complexity of game graphs. If the amount of
imperfect information is unbounded, restricting structural complexity of game
graphs does not lead to lower computational complexity of the strategy
problem. For the case of bounded imperfect information we showed that
some graph complexity measures have unbounded values when performing
the powerset construction, and some are still bounded. As side
effects of our proofs we showed that, first, monotonicity of \dagw is not
necessary for an efficient solution of the strategy problem for
perfect information parity games, and, second, that introducing new
robbers demands only linearly more cops to capture them. We believe
that those results are also of independent relevance.

To complete the picture, it would be interesting to prove that \kellyw and
directed \treew also remain bounded after performing the powerset
construction. For directed \treew it is not known whether perfect
information parity games can be solved in \ptime, so a bound would not
immediately imply an efficient solution of parity games with imperfect
information. It would be also worth attention to analyze which other
variants of the graph searching game with multiple robbers make sense
and what are the differences between them, our version and the games
from~\cite{RicherbyThi09}.

\subsection{Acknowledgments}

We thank {\L}ukasz Kaiser for many inspiring discussions, Tsvetelina
Yonova-{}Karbe and Sebastian Siebertz for the proof reading.

% \bibliographystyle{alpha}
% \bibliography{impinf-struccompl-TCS}

\begin{thebibliography}{BDHK06b}

\bibitem[Bar06]{Barat06}
J.~Bar{\'a}t.
\newblock Directed path-width and monotonicity in digraph searching.
\newblock {\em Graphs and Combinatorics}, 22(2):161--172, 2006.

\bibitem[BCD{\etalchar{+}}08]{BCDHR08}
D.~Berwanger, K.~Chatterjee, L.~Doyen, T.~A. Henzinger, and S.~Raje.
\newblock {S}trategy {C}onstruction for {P}arity {G}ames with {I}mperfect
  {I}nformation.
\newblock In {\em CONCUR 2008 - Concurrency Theory, 19th International
  Conference}, pages 325--339, 2008.

\bibitem[BDH{\etalchar{+}}12]{BerwangerDawHunKreObd12}
Dietmar Berwanger, Anuj Dawar, Paul Hunter, Stephan Kreutzer, and Jan Obdr{\v
  z}{\'a}lek.
\newblock The {D}{A}{G}-width of directed graphs.
\newblock {\em J. Comb. Theory, Ser. B}, 102(4):900--923, 2012.

\bibitem[BDHK06a]{BDHK06}
D.~Berwanger, A.~Dawar, P.~Hunter, and S.~Kreutzer.
\newblock {D}{A}{G}-{W}idth and {P}arity {G}ames.
\newblock In {\em STACS 06}, volume 3884 of {\em LNCS}, pages 524--536, 2006.

\bibitem[BDHK06b]{BerwangerDawHunKre06}
D.~Berwanger, A.~Dawar, P.~Hunter, and S.~Kreutzer.
\newblock {D}{A}{G}-{W}idth and {P}arity {G}ames.
\newblock In {\em STACS 06}, volume 3884 of {\em LNCS}, pages 524--536.
  Springer-Verlag, 2006.

\bibitem[BG05a]{BG04}
D.~Berwanger and E.~Gr{\"a}del.
\newblock Entanglement -- a measure for the complexity of directed graphs with
  applications to logic and games.
\newblock In {\em LPAR 04}, volume 3452 of {\em LNCS}, pages 209--223.
  Springer, 2005.

\bibitem[BG05b]{BerwangerGra05}
D.~Berwanger and E.~Gr{\"a}del.
\newblock Entanglement -- a measure for the complexity of directed graphs with
  applications to logic and games.
\newblock In {\em LPAR 2004}, volume 3452 of {\em LNCS}, pages 209--223.
  Springer, 2005.

\bibitem[BGKR12]{BerwangerGraKaiRab12}
Dietmar Berwanger, Erich Gr{\"a}del, {\L}ukasz Kaiser, and Roman Rabinovich.
\newblock {Entanglement and the Complexity of Directed Graphs}.
\newblock {\em Theoretical Computer Science}, 463(0):2--25, 2012.
\newblock Special Issue on Theory and Applications of Graph Searching Problems.

\bibitem[Bod98]{Bodlaender98}
H.L. Bodlaender.
\newblock A partial k-arboretum of graphs with bounded treewidth.
\newblock {\em Theoretical Computer Science}, 209(1-2):1--45, 1998.

\bibitem[FFN09]{FominFraNis09}
Fedor~V. Fomin, Pierre Fraigniaud, and Nicolas Nisse.
\newblock Nondeterministic graph searching: From pathwidth to treewidth.
\newblock {\em Algorithmica}, 53(3):358--373, 2009.

\bibitem[FS11]{FearnleySch11}
John Fearnley and Sven Schewe.
\newblock Time and space results for parity games with bounded treewidth.
\newblock {\em CoRR}, abs/1112.0221, 2011.

\bibitem[FS12]{FearnleySch12}
John Fearnley and Sven Schewe.
\newblock Time and parallelizability results for parity games with bounded
  treewidth.
\newblock In Artur Czumaj, Kurt Mehlhorn, Andrew~M. Pitts, and Roger
  Wattenhofer, editors, {\em ICALP (2)}, volume 7392 of {\em Lecture Notes in
  Computer Science}, pages 189--200. Springer, 2012.

\bibitem[GKR09]{GKR09}
Erich Gr{\"a}del, {\L}ukasz Kaiser, and Roman Rabinovich.
\newblock {D}irected {G}raphs of {E}ntanglement {T}wo.
\newblock In {\em Proceedings of the 17th International Symposium on
  Fundamentals of Computation Theory}, volume 5699 of {\em LNCS}, pages
  169--181. Springer, 2009.

\bibitem[HK07]{HK07}
P.~Hunter and S.~Kreutzer.
\newblock Digraph measures: Kelly decompositions, games, and orderings.
\newblock In {\em SODA}, pages 637--644, 2007.

\bibitem[Hun06]{Hunter06}
Paul Hunter.
\newblock Losing the +1. or directed path-width games are monotone.
\newblock Online available at
  http://www.comlab.ox.ac.uk/people/paul.hunter/papers/losing.pdf, 2006.

\bibitem[Hun07]{HunterPhD}
P.~Hunter.
\newblock {\em Complexity and Infinite Games on Finite Graphs}.
\newblock PhD thesis, Computer Laboratory, University of Cambridge, 2007.

\bibitem[JRST01]{JRST01}
T.~Johnson, N.~Robertson, P.~D. Seymour, and R.~Thomas.
\newblock Directed tree-width.
\newblock {\em J. Comb. Theory, Ser. B}, 82(1):138--154, 2001.

\bibitem[Jur00]{Jurdzinski00}
Marcin Jurdzi{\'n}ski.
\newblock {\em {G}ames for {V}erification: {A}lgorithmic {I}ssues}.
\newblock PhD thesis, University of Aarhus, 2000.

\bibitem[KO08a]{KO08}
S.~Kreutzer and S.~Ordyniak.
\newblock {D}igraph {D}ecompositions and {M}onotonicity in {D}igraph
  {S}earching.
\newblock In {\em GTCCS, WG 08, Durham}, pages 336--347, 2008.

\bibitem[KO08b]{KreutzerOrd08}
S.~Kreutzer and S.~Ordyniak.
\newblock {D}igraph {D}ecompositions and {M}onotonicity in {D}igraph
  {S}earching.
\newblock In {\em Graph-Theoretic Concepts in Computer Science: 34th
  International Workshop, WG 2008, Durham, UK, June 30 - July 2, 2008. Revised
  Papers}, pages 336--347. Springer-Verlag, 2008.

\bibitem[Kre09]{Kreutzer09}
Stephan Kreutzer.
\newblock On the parameterised intractability of monadic second-order logic.
\newblock In Erich Gr{\"a}del and Reinhard Kahle, editors, {\em CSL}, volume
  5771 of {\em Lecture Notes in Computer Science}, pages 348--363. Springer,
  2009.

\bibitem[Obd03]{Obdrzalek03}
Jan Obdrz{\'a}lek.
\newblock Fast mu-calculus model checking when tree-width is bounded.
\newblock In Warren A.~Hunt Jr. and Fabio Somenzi, editors, {\em CAV}, volume
  2725 of {\em Lecture Notes in Computer Science}, pages 80--92. Springer,
  2003.

\bibitem[Obd06]{Obdrzalek06b}
J.~Obdr\v{z}{\'a}lek.
\newblock Dag-width: connectivity measure for directed graphs.
\newblock In {\em SODA}, pages 814--821, 2006.

\bibitem[Puc]{Puchala08}
B.~Puchala.
\newblock {I}nfinite {T}wo {P}layer {G}ames with {P}artial {I}nformation:
  {L}ogic and {A}lgorithms.
\newblock Diploma Thesis, RWTH Aachen, 2008.

\bibitem[Rei84]{Reif84}
J.~H. Reif.
\newblock {T}he {C}omplexity of {T}wo-player {G}ames of {I}ncomplete
  {I}nformation.
\newblock {\em Journal of Computer and System Sciences}, 29:274--301, 1984.

\bibitem[RT09]{RicherbyThi09}
David Richerby and Dimitrios~M. Thilikos.
\newblock {Graph Searching in a Crime Wave}.
\newblock {\em SIAM J. Discrete Math.}, 23(1):349--368, 2009.

\bibitem[ST93]{ST93}
P.~D. Seymour and R.~Thomas.
\newblock Graph searching and a min-max theorem for tree-width.
\newblock {\em J. Comb. Theory Ser. B}, 58(1):22--33, 1993.

\bibitem[Tho95]{Thomas95}
W.~Thomas.
\newblock {O}n the {S}ynthesis of {S}trategies in {I}nfinite {G}ames.
\newblock In {\em Proceedings of the 12th Annual Symposium on Theoretical
  Aspects of Computer Science, STACS '95}, pages 1--13, 1995.

\bibitem[WW86]{WW86}
K.~Wagner and G.~Wechsung.
\newblock {\em {C}omputational {C}omplexity}.
\newblock D. Reidel Publishing Company, 1986.

\end{thebibliography}
\newcommand{\etalchar}[1]{$^{#1}$}

\end{document}